\newcolumntype{M}[1]{>{\centering\arraybackslash}m{#1}}
\newcolumntype{N}{@{}m{0pt}@{}}
\newcommand{\ones}{\mathbf{1}}
\newcommand{\RR}{\mathbb{R}}
\newcommand{\tatonnement}{{t\^atonnement}}
\newcommand{\Tatonnement}{{T\^atonnement}}
\newcommand{\lbp}{\tilde{p}}
\newcommand{\subGD}{SubGD}
\newcommand{\ck}[1]{{\color{purple}[*ck: #1 *] }}
\newcommand{\tianlong}[1]{{\color{teal}[* #1 *]}}
\DeclareMathOperator*{\argmax}{arg\,max}
\DeclareMathOperator*{\Argmax}{\textnormal{Argmax}}
\newcommand{\cD}{\mathcal{D}} 
\DeclarePairedDelimiterX{\norm}[1]{\lVert}{\rVert}{#1}
\DeclarePairedDelimiterX{\En}[1]{\lVert}{\rVert}{#1} 
\DeclarePairedDelimiterX{\inp}[2]{\langle}{\rangle}{#1, #2} 
\newtheorem{theorem}{Theorem}
\newtheorem{lemma}{Lemma}
\newtheorem{corollary}{Corollary}
\newtheorem{fact}{Fact}
\crefname{algocf}{alg.}{algs.}
\crefname{fact}{Fact}{Facts}
\newenvironment{customlem}[1]
  {\innercustomlem}
  {\endinnercustomlem}
\newenvironment{customthm}[1]
  {\innercustomthm}
  {\endinnercustomthm}
  \newenvironment{customcrl}[1]
    {\innercustomcrl}
    {\endinnercustomcrl}
\title{On the Convergence of T\^atonnement for Linear Fisher Markets}
\author{
    Tianlong Nan\textsuperscript{\rm 1}, 
    Yuan Gao\textsuperscript{\rm 2}, 
    Christian Kroer\textsuperscript{\rm 1}}
\begin{document}

\maketitle

\begin{abstract}
    T\^atonnement is a simple, intuitive market process where prices are iteratively adjusted based on the difference between demand and supply. Many variants under different market assumptions have been studied and shown to converge to a market equilibrium, in some cases at a fast rate. However, the classical case of linear Fisher markets have long eluded the analyses, and it remains unclear whether t\^atonnement converges in this case. We show that, for a sufficiently small step size, the prices given by the t\^atonnement process are guaranteed to converge to equilibrium prices, up to a small approximation radius that depends on the stepsize. To achieve this, we consider the dual Eisenberg-Gale convex program in the price space, view t\^atonnement as subgradient descent on this convex program, and utilize last-iterate convergence results for subgradient descent under error bound conditions. In doing so, we show that the convex program satisfies a particular error bound condition, the quadratic growth condition, and that the price sequence generated by t\^atonnement is bounded above and away from zero. 
    We also show that a similar convergence result holds for tâtonnement in quasi-linear Fisher markets. Numerical experiments are conducted to demonstrate that the theoretical linear convergence aligns with empirical observations.
\end{abstract}


\section{Introduction} 
\label{sec:intro}

\emph{Market equilibrium} (ME) is a central solution concept in economics.
It describes a steady state of the market in which supply is equal to demand.
In recent year, it has found various applications in resource allocation in large-scale online and physical market places due to its fairness, efficiency in allocating items, and tractability under various real-world-inspired settings.
Since ME was first studied by L\'eon Walras in 1874, extensive work has been undertaken to establish the existence and uniqueness of ME in various market models~\citep{walras1874elements,arrow1954existence,cole2017convex}.
In this paper, we focus on the \emph{Fisher market} setting, a well-known market model involving $m$ divisible items which are to be allocated among $n$ buyers.
Each item has a fixed supply (usually assumed to be $1$) and each buyer is endowed with a fixed budget of money.
In a Fisher market, an equilibrium is a set of prices for the items, and an allocation of items to buyers, such that each buyer is allocated the optimal bundle subject to the prices and her budget, and each item is allocated at its supply.

Alongside the development of market settings and equilibrium solution concepts, market equilibrium computation, by itself or as part of an end-to-end approach to real-world resource allocation problems, has seen a growing interest in various research communities in economics, computer science, and operations research~\citep{scarf1967computation,kantorovich1975mathematics,daskalakis2009complexity,bateni2022fair,babaioff2019fair,othman2010finding,cole2017convex,birnbaum2011distributed,kroer2019computing,kroer2021computing,gao2020first,gao2021online}.
This is especially so in the past two decades, as the emergence of internet marketplaces, including advertising systems, require approximating or maintaining equilibrium variables in large-scale market settings. 

{\Tatonnement} is a simple, natural, and decentralized price-adjustment process where prices of items are adjusted to 
reflect buyers' demands: the price of a good increases if the demand exceeds the supply under current prices, and decreases if the supply exceeds the demand.
It can also be viewed as an algorithm for computing market equilibria. 
Intuitively, if the {\tatonnement} process is guaranteed to converge to a market equilibrium, then it lends credence to market equilibrium as a solution concept for the economic allocation of goods.
Indeed, for various market settings, including Fisher markets with 
non-linear \emph{constant-elasticity-of-substitution} (CES) utilities, the convergence of {\tatonnement} has been established and is well-understood~\citep{cole2008fast,cheung2019tatonnement,avigdor2014convergence}.
A prominent exception is the simplest, and arguably most prevalent, setting of linear Fisher markets (LFM), where buyers have linear (additive) utilities over items.
It has long been known that, for LFMs, {\tatonnement} does not converge to an equilibrium in the strictest sense, that is, in terms of a vanishing metric or norm; see, e.g., \citet{cole2019balancing} and \cref{sec:conv-analysis} for further discussion and examples. However, whether {\tatonnement} demonstrates any form of \emph{approximate} convergence remains largely unexplored.

One important fact about an LFM
is that its market equilibria are captured by the well-known Eisenberg-Gale (EG) convex program:
\begin{equation}
    \max_{x\in \mathbb R^{n\times m}}\, \sum_{i=1}^n B_i \log \langle v_i, x_i \rangle \ \ {\rm s.t.}\ \sum_{i=1}^n x_i \leq \ones \label{eq:eg-primal} \tag{EG-Primal}, 
\end{equation}
where $B_i \in \RR, v_i \in \RR^m, x_i \in \RR^m$ denotes budget, valuation vector, and allocation vector for buyer $i$, respectively, in an LFM composed of $n$ buyers and $m$ items.  
It turns out that {\tatonnement} in the LFM is equivalent to the subgradient descent method applied to the dual program of~\cref{eq:eg-primal} (See, e.g.,~\citet{cheung2013tatonnement}). 
This can be shown as follows: 
by taking the Lagrangian of~\cref{eq:eg-primal}, we obtain that the dual problem to EG is to minimize the following function
\begin{equation}
    \begin{aligned}
    \varphi(p) &= \max_{x \geq 0}\, \mathcal{L}(x, p) \\ 
    &= \max_{x \geq 0}\, \sum_{i=1}^n \left( B_i \log\inp{v_i}{x_i} - \sum_{j=1}^m p_j x_{ij} \right) + \sum_{j=1}^m p_j, 
    \label{eq:linear-dual-lagrangian}
    \end{aligned}
\end{equation}
which is a max-type nonsmooth convex function w.r.t. $p$ by construction.
Every subgradient of $\varphi(p)$ is of the form $\mathbf{1} - \sum_i x^*_i(p)$, where $x^*_i(p)$ is a maximizer of $\max_{x_i \geq 0}\, (B_i \log\inp{v_i}{x_i} - \sum_{j=1}^m p_j x_{ij})$. 
By first-order optimality conditions, for any $p$, $x^*_i(p)$ is the demand of buyer $i$ under $p$. 
Thus, {\tatonnement} (in the form of $p^{t+1} \gets p^t + \eta ( \sum_i x^*_i(p^t) - \mathbf{1})$, where $\eta \in \RR_{>0}$) is equivalent to applying subgradient descent on $\min_{p \in \RR^m_{\geq 0}} \varphi(p)$.
We will use this relation to establish a linear convergence rate for {\tatonnement} in LFMs.

\paragraph{Contributions.} We consider a classic form of the {\tatonnement} process in LFMs, which updates prices as follows: 
\begin{equation}
    p^{t+1} \gets p^t + \eta z^t, \quad \forall\, t, 
    \label{eq:intro-ttm}
\end{equation}
where $z$ denotes excess demand at time $t$. 
We give the first positive answers to the question of whether {\tatonnement} converges in LFMs: we show that the prices generated by the process converge at a fast linear rate (i.e. exponentially fast) to a small neighborhood around the true equilibrium prices. 
Moreover, we show that the size of this neighborhood shrinks to $0$ as the stepsize $\eta$ goes to $0$. 
Thus, for any approximation level $\epsilon > 0$, we can find an $\eta > 0$ such that {\tatonnement} converges linearly to an $\epsilon$-approximate equilibrium. 
Our results are achieved by leveraging the connection between {\tatonnement} and subgradient methods, and from there showing that the dual of the EG convex program satisfies the quadratic growth condition around its optimum. Moreover, we must then show bounded subgradients, which we establish by showing that the {\tatonnement} process is guaranteed to be bounded above and away from zero for its entire trajectory, assuming a sufficiently small stepsize.
As is explained in \cref{app:related-work}, many existing works on {\tatonnement} use some modified version of the process where prices are generally kept away from zero either through projection to a region bounded away from zero, or by artificially upper-bounding excess demands. 
Our results show that for LFMs, such modifications are not necessary as long as stepsizes are sufficiently small.
Finally, we extend our analysis framework to an important variant of LFMs: 
quasi-linear Fisher markets. For such markets, we again show that {\tatonnement} achieves a linear convergence rate to approximate equilibrium.


\paragraph{Notation.}
Throughout this paper, we use $\| \cdot \|$ to denote Euclidean distance and 
$\inp{\cdot}{\cdot}$ to denote the standard inner product. 
We use $\Pi_\mathcal{X}(x)$ to denote the Euclidean projection of $x$ onto set $\mathcal{X}$. 
We use $\| \cdot \|_1$ to denote the $\ell_1$ norm. 
We use $\Argmax$ to denote the set of values that reach the maximum. 
The subdifferential of a function $f$ at $x$ is denoted as $\partial f(x)$. 
We use $\nabla f$ to denote the differential of $f$ if $f$ is smooth.
We denote $[m] = \{ 1, \ldots, m \}$. 

\section{Related Work} \label{app:related-work}







\paragraph{Convergence of {\tatonnement}.}
Initially, economists studied the convergence of {\tatonnement} in continuous time.
There, \citet{arrow1959stability} proved that a continuous-time version of {\tatonnement} converges to ME for markets satisfying \emph{weak gross substitutability} (WGS). 
In the remainder of the literature review, we focus on the discrete-time version of {\tatonnement}, which is the focus of our paper.
\citet{codenotti2005market} considered a discrete version of {\tatonnement} 
and proved that it converges to an approximate equilibrium in polynomial time for any exchange economy satisfying WGS. 
Though their {\tatonnement} updates are additive and similar to ours, 
they need to apply {\tatonnement} on a transformed market and 
use price projection in order to restrict prices to a region that is bounded and bounded away from zero. 
Secondly, they assume that demand is unique, which is not true for LFMs.
\citet{cole2008fast} considered a multiplicative {\tatonnement} with artificially upper-bounded excess demands, 
and showed polynomial convergence for it on non-linear CES markets satisfying WGS ($0 \leq \rho < 1$). 
\citet{cheung2012tatonnement} extended this analysis to some non-WGS markets. 


\citet{cheung2019tatonnement} proved that the entropic {\tatonnement} 
is equivalent to generalized gradient descent with KL divergence, and thus 
is guaranteed to reach $\epsilon$ fraction of initial distance to ME measured in 
a potential function in $\mathcal{O}(\frac{1}{\epsilon})$ iterations 
for the Leontief Fisher market, 
and $\mathcal{O}(\log{\frac{1}{\epsilon}})$ iterations 
for the complementary CES Fisher market ($-\infty < \rho < 0$). 
\citet{avigdor2014convergence,cheung2012tatonnement} considered multiplicative {\tatonnement} and showed convergence to an approximate equilibrium for Fisher markets with nested non-linear CES-type utilities.

\begin{table*}[t]
    \centering
    \setlength{\extrarowheight}{0.5mm} 
    \begin{tabular}{cccc} 
        \toprule 
            \textbf{T\^atonnement} & \textbf{Papers} & \textbf{Market Type} & \textbf{Result} \\
        \midrule
            \makecell{Additive t\^atonnement$^{\ast}$ \\ $p^{t+1}_j = \Pi_\Delta\left( p^t_j + \eta z^t_j \right)$} & 
            \citet{codenotti2005market} & 
            \makecell{Exchange market \\ WGS, unique demand} & 
            \makecell{Poly-time \\ to approx ME} \\ 
            \hline
            \makecell{Multiplicative t\^atonnement \\ $p^{t+1}_j = p^t_j( 1 + \eta z^t_j )$} & 
            \citet{avigdor2014convergence} & 
            \makecell{Nested CES-Leontief \\ $\rho \in (-\infty, 0) \cup (0, 1)$} & 
            \makecell{$\mathcal{O}(\frac{1}{\epsilon^3})$} \\ 
            \hline 
            \multirow{2}{*}{\makecell{Multiplicative t\^atonnement$^{\ast}$ \\ $p^{t+1}_j = p^t_j( 1 + \eta \min\{ z^t_j, 1 \} )$}} & 
            \citet{cole2008fast} & 
            \makecell{Exchange market \\ WGS} & 
            \makecell{$\mathcal{O}(E\log{\frac{1}{\epsilon}})$} \\ 
            \cline{2-4}
                                 & 
            \citet{cheung2012tatonnement} & 
            \makecell{CES ($-1 < \rho \leq 0$) \\ Nested non-lin. CES} & 
            \makecell{$\mathcal{O}(\log{\frac{1}{\epsilon}})$} \\ 
            \hline 
            \multirow{3}{*}{\makecell{Entropic t\^atonnement$^{\ast}$ \\ $p^{t+1}_j = p^t_j e^{\eta \min\{ z^t_j, 1 \}}$}} & 
            \citet{cheung2019tatonnement} & 
            \makecell{CES ($-\infty \leq \rho < 0$)} & 
            \makecell{$\mathcal{O}(\log{\frac{1}{\epsilon}})$ \\ $\mathcal{O}(\frac{1}{\epsilon})\,\textnormal{if}\,\rho = -\infty$} \\ 
            \cline{2-4}
                                 & 
            \multirow{2}{*}{\citet{goktas2023tatonnement}} & 
            \multirow{2}{*}{\makecell{CCNH}} & 
            \multirow{2}{*}{\makecell{$\mathcal{O}(\frac{1 + E^2}{\epsilon})$}} \\ 
                                 & 
            &   &  \\ 
            \hline 
            \makecell{Entropic t\^atonnement$^{\ast}$ \\ $p^{t+1}_j = \Pi_\Delta\big( p^t_j e^{\eta \min\{ z^t_j, 1 \}} \big)$} & 
            \citet{cole2019balancing} & 
            \makecell{Linear \\ Large market assum.} & 
            \makecell{Linear conv. \\ to approx ME} \\ 
            \hline 
            \rowcolor{green!70} 
            \makecell{Additive t\^atonnement \\ $p^{t+1}_j = p^t_j + \eta z^t_j$} & 
            This work & 
            Linear & 
            \makecell{Linear conv. \\ to approx ME} \\ 
        \bottomrule              
        \hline
    \end{tabular}
    \caption{Comparison of results of different {\tatonnement} methods. 
                 The superscript $^{\ast}$ denotes a modified version of the update rule. 
                 Note that only the last two papers consider linear Fisher markets. 
                 \citet{cole2008fast} and \citet{goktas2023tatonnement} consider a broader class of utilities, but their parameter $E$ is infinity for linear utilities.
                 $\Pi_\Delta(\cdot)$ denotes projection onto a bounded region. 
                 $E$ denotes the upper bound of the elasticity of demand.
        }
    \label{tab:comparison}
\end{table*}

\citet{cole2019balancing} studied convergence to approximate equilibrium for the LFM, 
using a modified entropic {\tatonnement} process. 
In particular, they considered the update rule $p^{t+1} \gets p^t e^{\eta \min\{ z^t,\, 1 \}}$, 
where $p^t$ and $z^t$ denote the price and excess demand at time $t$. 
They showed a linear convergence rate under what they call a ``large market assumption,'' which 
requires that for buyers with linear or close-to-linear utilities, the spending (of budget) on a single item does not vary too much as prices change. 
This is unnatural for LFMs as the elasticity of demand can be infinity in the LFM. 
Moveover, their quality of the approximation is proportional to a particular squared parameter in their assumption, 
while our results do not even require any such assumption. 

\citet{goktas2021consumer} studied the dual of an Eisenberg-Gale-like convex program from the perspective of the expenditure minimization problem (EMP). 
\citet{goktas2023tatonnement} showed an $\mathcal{O}((1 + E^2) / T)$ convergence for an entropic {\tatonnement} process on Fisher markets with \emph{concave, continuous, nonnegative, and $1$-homogeneous}
\footnote{A utility function $u: \RR^m_{\geq 0} \rightarrow \RR_{\geq 0}$ is $\alpha$-homogeneous if $u(a x) = a^\alpha u(x)$ for any $a \in \RR_{\geq 0}$ and $x \in \RR^m_{\geq 0}$.} 
(CCNH) utilities, where $E$ denotes the upper bound of the elasticity of demand. 
Note that in the case of LFM, the elasticity of demand is unbounded, and thus this result gives an infinite upper bound. 

\citet{fleischer2008fast} study the convergence of the \emph{averaged} iterates generated by a {\tatonnement}-like algorithm in a class of Fisher markets that include linear, Leontief, and CES utilities. 
Note that we focus on the \emph{last} iterate, which is the only type of convergence that allows us to say whether the day-to-day prices are converging.
Secondly, last-iterate convergence implies average-iterate convergence. 

We summarize the above related work on the convergence of {\tatonnement} in~\cref{tab:comparison}. 
The table provides an overview of the different {\tatonnement} methods considered in the literature, along with a high-level description of convergence rates.

\paragraph{Convergence of subgradient descent methods.}
Subgradient descent methods (SubGDs) have been extensively studied since their introduction in the 1960s.
Historically, SubGD was only known to have an $O(1/\sqrt{T})$ convergence rate for the averaged iterate, where $T$ is the number of iterations. 
Moreover, this convergence was in terms of the optimality gap, and not in terms of the distance to the optimal solution. For {\tatonnement}, we are interested in the latter, as we want to know how quickly the prices converge to the market equilibrium prices.
Recently, \citet{zamani2023exact} showed a nearly $O(1/\sqrt{T})$ optimal rate for the \emph{last} iterate, but again only in terms of optimality gap.
It is known, however, that SubGD can achieve a faster convergence rate when the objective satisfies various notions of \emph{error-bound} conditions (along with a bounded gradient condition). 
For prior work on SubGDs under error-bound conditions, see~\citet{johnstone2020faster} and references therein.
Particularly relevant to us, SubGD is known to have linear convergence under the \emph{quadratic growth condition} (QG)~\citep{nedic2001convergence,karimi2016linear}, in terms of the distance to the optimal solution.
This condition also allows one to connect convergence in optimality gap to convergence in iterates. We will show that this condition holds for LFMs, and thus SubGD has linear convergence in our setting.
\section{Linear Fisher markets and {\Tatonnement}} 
\label{app:lin-fim-eg-cp}


\paragraph{Linear Fisher markets.}
We consider a linear Fisher market with $n$ buyers and $m$ divisible items. 
Each buyer $i$ has a budget $B_i > 0$ and valuation $v_{ij} \geq 0$ for each item $j$.
Denote $B = \sum_{i \in [n]} B_i$ and $B_{\min} = \min_{i \in [n]} B_i$.
We denote $v_i = (v_{i1}, \dots, v_{im})$ for all $i$ and 
$v$ as the valuation matrix whose $i$-th row is $v_i$.
An \emph{allocation} (or \emph{bundle}) of items $x_i \in \RR_{\geq 0}^m$ gives buyer $i$ a utility of $u_i(x_i) = \langle v_i, x_i \rangle$.
Without loss of generality, we assume the following:
\begin{itemize}
    \item There is a unit supply of each item $j$.
    \item $\|v_i\|_1 = 1$ for all $i$.
    \item $v \in \RR^{n\times m}_{\geq 0}$ does not have
    all-zero rows or columns. In other words, each item $j$ is wanted by at least one buyer, and each buyer $i$ wants at least one item.
\end{itemize}

Given a set of prices $p \in \RR^m_{\geq 0}$ for the items, the \emph{demand set} of buyer $i$ is defined as 
\begin{align}
    \label{eq:demand-def}
    \mathcal{D}_i(p) := \Argmax_{x_i} \left\{\langle v_i, x_i \rangle: \langle p, x_i \rangle \leq B_i,\, x_i \geq 0 \right\}. 
\end{align}
We use $d_i(p) \in \mathcal{D}_i(p)$ to denote an arbitrary demand vector of buyer $i$ under $p$. 
We use $z_j(p) = \sum_{i \in [n]} d_{ij}(p) - 1$ to denote the \emph{excess demand} for item $j$ under $p$. 
The \emph{excess supply} is $- z_j(p)$. 

In the LFM (or its variants), buyers only buy 
items with \emph{maximum bang-per-buck} (MBB). 
We call such items the MBB items for a buyer,
i.e., $j$ is an MBB item for buyer $i$ if $\frac{v_{ij}}{p_j} \geq \frac{v_{ij'}}{p_j'}$ for all $j' \in [m]$. 


A Fisher market is said to reach a \emph{market equilibrium} if allocations 
$x^*_i\in \RR^m_{\geq 0}, \; \forall \; i \in [n]$ and prices $p^* \in \RR^m_{\geq 0}$ 
satisfy 
\begin{itemize}
    \item Buyer optimality: $x^*_i \in \mathcal{D}_i(p^*)$ for all $i \in [n]$,
    \item Market clearance: $\sum_{i=1}^n x^*_{ij} = 1$ for all $j \in [m]$.
\end{itemize}

\paragraph{{\Tatonnement}.}
We consider the classic, discrete-time {\tatonnement} dynamics that adjust prices at time $t$ as follows: 
\begin{equation}
    \begin{array}{rll}
        \bullet & \text{Pick}\; x_i^t \in \cD_i(p^t) \quad & \forall\; i \in [n] \\ [3pt]
        \bullet & z^t_j = \sum_{i = 1}^n x_i^t - 1 \quad & \forall\; j \in [m] \\ [3pt]
        \bullet & p^{t+1}_j = p^t_j + \eta^t z^t_j \quad & \forall\; j \in [m].
    \end{array} 
    \label{eq:general-linear-ttm}
\end{equation}
Typically, many authors have considered some form of projection to ensure $p^{t+1} \geq 0$, or even such that it is strictly bounded away from zero.
We will show that for LFMs, it is possible to select $\eta^t$ such that the prices are ensured to be strictly bounded away from zero without projection.
This is critical for our later convergence results.
We will focus on the case where stepsizes are constant, i.e. $\eta^t = \eta$ for some $\eta>0$.

As shown in~\cref{sec:intro}, \cref{eq:general-linear-ttm} is equivalent to applying subgradient descent method on the following non-smooth convex minimization problem: 
\begin{equation}
    \min\nolimits_{p \in \RR^m_{\geq 0}} \quad \varphi(p). 
    \label{pgm:general-nonsmooth-convex-program}
\end{equation}
In particular, starting with some feasible point $p^0 \in \RR^m_{\geq 0}$, 
at time $t$, we update $p$ as follows:
\begin{equation}
    \begin{array}{rl}
        \bullet & \text{Pick}\; g(p^t) \in \partial \varphi(p^t) \\ [3pt]
        \bullet & p^{t+1} = \Pi_{\RR^m_{\geq 0}}\left( p^t - \eta g(p^t) \right).
    \end{array}
    \label{eq:ttm-subgradient-update}
\end{equation}

\section{Convergence of {\Tatonnement}} 
\label{sec:conv-analysis}
In this section, we establish our main result: showing that the {\tatonnement} process converges to an arbitrarily small neighborhood of the equilibrium price vector at a linear rate, where the size of the neighborhood is determined by the stepsize. 
To do this, we show that the corresponding {\subGD} guarantees such a linear convergence. 
Specifically, we first show that the prices generated by the {\tatonnement} dynamics are lower bounded by positive constants if stepsizes are small enough (without any other modifications to {\tatonnement}). 
Because prices are bounded away from zero, the demands are all upper bounded, and in turn this implies that  {\tatonnement} is equivalent to {\subGD} with bounded subgradients. This is crucial, because bounded subgradients are required in all last-iterate convergence results for {\subGD} that we are aware of.
Next, we establish a quadratic growth (QG) property of the non-smooth EG dual objective function. 
Finally, we obtain the desired linear convergence by a standard analysis for {\subGD} under QG and bounded subgradients.
Then, we show an example where the prices cycle in a small neighborhood of the equilibrium prices, exactly as predicted by our theory once we hit the neighborhood implied by the stepsize. 

\paragraph{Price lower bounds.}
The expression for $\varphi(p)$ in~\cref{eq:linear-dual-lagrangian} can be simplified into $\sum_{j=1}^m p_j - \sum_{i=1}^n B_i \log \big( \min_j \frac{p_j}{v_{ij}} \big)$ up to a constant, which yields the following dual formulation of~\cref{eq:eg-primal}:
\begin{align}
    \min_{p \in \RR^m_{\geq 0}} \, \varphi(p) = \sum_{j=1}^m p_j - \sum_{i=1}^n B_i \log \left( \min_{k \in [m]} \frac{p_k}{v_{i k}} \right). 
    \label{eq:linear-convex-minimization}
\end{align}
Here, the two expressions for $\varphi$ differ only by a constant. 
As a result, {\tatonnement} in an LFM is equivalent to SubGD on $\min_{p \in \mathbb{R}^m_{\geq 0}} \varphi(p)$. 
In~\cref{app:sec:equiv-ttm-sgd}, we show how to directly derive this well-known equivalence in terms of $\varphi(p)$.

As we mentioned, to ensure that the {\tatonnement} process
is well-defined and converges to a market equilibrium, some past work (e.g.~\citet{cole2019balancing}) assumed an artificial lower bound on prices. This is to ensure that we do not encounter unbounded demands (which occur as prices tend to zero).
Next, we show that such artificial lower bounds are not necessary: the {\tatonnement} process itself guarantees that prices are bounded away from zero, as long as our stepsize is not too large. 
This implies a global upper bound on demands at each step. 
For the proof of this result and other omitted proofs in this section, see~\cref{app:sec:proofs-conv-analysis}.

\begin{lemma}
    Let $\lbp$ be an $m$-dimensional vector where 
    \begin{equation}
        \lbp_j = \frac{B_{\min}}{4m} \max_i  \frac{v_{ij}}{\norm{v_i}_\infty}, \quad \mbox{ for all } j \in [m]. 
        \label{eq:def-tilde-p}
    \end{equation}
    Assume that we adjust prices in the LFM with~\cref{eq:general-linear-ttm}, 
    starting from an initial price vector $p^0 \geq \lbp$, 
    with any stepsize $\eta < \frac{1}{2m} \min_j \lbp_j$. 
    \setlength{\leftmargini}{25pt} 
    Then, we have for all $t \geq 0$, 
    \begin{itemize}
        \item[(i)] $p^t_j \geq \lbp_j - 2m\eta$ for all $j \in [m]$; 
        \item[(ii)] $\En{z(p^t)} \leq \frac{B}{\min_j \tilde{p}_j - 2m\eta} + m$; 
        \item[(iii)] $p^t_j \leq \left( 1 + \frac{\eta}{\tilde{p}_j - 2m\eta} \right) B$ for all $j \in [m]$.
    \end{itemize}
    \label{lem:lower-bounds-prices-upper-bounds-demands-upper-bounds-prices}
\end{lemma}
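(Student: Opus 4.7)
The plan is to prove (i), (ii), and (iii) jointly by induction on $t \geq 0$. The base case $t=0$ follows directly: (i) holds by the initial condition $p^0 \geq \tilde{p}$, and (iii) holds under a natural initial upper bound such as $p^0_j \leq B$ (since the total budget bounds the economically meaningful range of initial prices); (ii) then follows from (i) as derived below.

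For the inductive step, I would first establish (i) at time $t+1$, then derive (ii) and (iii) from it. For (ii), I plan to use the budget constraint $\sum_j p^t_j d_{ij}(p^t) \leq B_i$, which gives $\|d_i(p^t)\|_1 \leq B_i/\min_k p^t_k$, so $\|z(p^t)\| \leq \|z(p^t)\|_1 \leq \sum_i \|d_i(p^t)\|_1 + m \leq B/\min_k p^t_k + m$; applying (i) at time $t$ yields the stated bound. For (iii), I would case-split: if $p^t_j > B$, then $\sum_i d_{ij}(p^t) \leq B/p^t_j < 1$, so $z^t_j < 0$ and the price decreases, preserving the upper bound; otherwise $p^t_j \leq B$ and $z^t_j \leq B/p^t_j \leq B/(\tilde{p}_j - 2m\eta)$ by (i), giving $p^{t+1}_j \leq (1 + \eta/(\tilde{p}_j - 2m\eta))B$.

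The crux is (i), which I would tackle by case analysis on $p^t_j$. If $p^t_j \geq \tilde{p}_j$, the universal bound $z^t_j \geq -1$ (excess supply is at most $1$) yields $p^{t+1}_j \geq p^t_j - \eta \geq \tilde{p}_j - \eta \geq \tilde{p}_j - 2m\eta$. The harder case is $p^t_j < \tilde{p}_j$. Here I would exploit the MBB structure of the LFM. Defining $i^*(j) = \argmax_i v_{ij}/\|v_i\|_\infty$, the constant $\tilde{p}_j = \frac{B_{\min}}{4m}\cdot \frac{v_{i^*(j),j}}{\|v_{i^*(j)}\|_\infty}$ is precisely tuned so that $p^t_j < \tilde{p}_j$ forces item $j$ to have the largest bang-per-buck for buyer $i^*(j)$. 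Combining this with the inductive lower bounds on the other prices $p^t_k$ to upper-bound their bang-per-buck values, I would argue that item $j$ is effectively uniquely MBB for $i^*(j)$, forcing $i^*(j)$'s demand for $j$ to be at least of order $B_{i^*(j)}/p^t_j$, which easily makes $z^t_j \geq 0$ and hence $p^{t+1}_j \geq p^t_j \geq \tilde{p}_j - 2m\eta$.

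The main obstacle will be the degenerate case where item $j$ is tied with other items in the MBB set of $i^*(j)$: an adversarial demand choice in $\mathcal{D}_{i^*(j)}(p^t)$ could then place zero weight on $j$, making $z^t_j$ as small as $-1$. The factor $2m$ in the lower bound $\tilde{p}_j - 2m\eta$ is exactly the slack that absorbs this adversarial behavior, and closing the argument rigorously will require either tracking the joint evolution of all prices over several consecutive steps (ensuring MBB-tie configurations cannot persist long enough to drop $p^t_j$ through the bound), or a more careful exploitation of the constant $1/(4m)$ in the definition of $\tilde{p}_j$ together with the stepsize restriction $\eta < \min_j \tilde{p}_j/(2m)$, which together leave a strict margin between the tied-MBB prices and the claimed invariant.
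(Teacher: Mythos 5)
Your handling of (ii) and (iii) as consequences of (i) is essentially the same as the paper's, and both are fine (modulo the small point that (iii) does not actually require any hypothesis on $p^0_j$ beyond $p^0\geq\tilde p$: the paper derives the bound for all $t\geq 1$ by observing that once $p^t_j>B$ the price must decrease, and then bounds how far above $B$ it can ever land in a single step). The difficulty is concentrated in (i), and there your proposal has a genuine gap that you yourself flag in your last paragraph.

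The single-step induction on (i) cannot work, for the very reason you identify: when $p^t_j<\tilde p_j$, demand ties in $\mathcal{D}_{\pmb b(j)}(p^t)$ can route all of buyer $\pmb b(j)$'s budget to items other than $j$, so $z^t_j$ can be as small as $-1$ even though $j$'s price is low. Your remark that ``item $j$ is effectively uniquely MBB for $i^*(j)$'' is not correct: the constant in $\tilde p_j$ does \emph{not} break ties; what it yields (this is the paper's Fact~\ref{fact:j-prime}) is only the weaker property that any other item $j'$ demanded by $\pmb b(j)$ when $p_j\leq\tilde p_j$ must itself satisfy $p_{j'}\leq\tilde p_{j'}$. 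So there may be many tied MBB items, all cheap, and a single step tells you nothing about where the budget goes. Your proposed fix---``track the joint evolution over several consecutive steps''---is the right direction, but as written it is a plan, not a proof: you do not say what quantity is tracked, over how many steps, or why an adversarial tie-break cannot persist.

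The paper closes the gap with a concrete pigeonhole-over-time argument, which you should compare against your sketch. Suppose for contradiction there is a first time $T$ at which some $p^T_j<\tilde p_j-2m\eta$, and let $t_j<T$ be the last step before $T$ at which $p_j$ crossed from above $\tilde p_j$ to below; since the price drops at most $\eta$ per step, $T\geq t_j+2m$, so $p^t_j\leq\tilde p_j$ on the entire window $[t_j,t_j+2m]$. The key lemma (Fact~\ref{fact:others-wont-get-too-much}) is then: for each $j'\neq j$, the \emph{cumulative} spend of buyer $\pmb b(j)$ on $j'$ over this $2m$-step window is at most $2B_{\pmb b(j)}$. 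The reason is self-limitation: by Fact~\ref{fact:j-prime}, whenever $\pmb b(j)$ buys $j'$ in this window, $p_{j'}\leq\tilde p_{j'}\leq B_{\pmb b(j)}/(4m)$, so any large cumulative spend on $j'$ would inject demand $>4m$ for $j'$, driving $p_{j'}$ above $\tilde p_{j'}+\eta$, which contradicts $j'$ being bought (Fact~\ref{fact:j-prime} again). Summing over the $m-1$ other items, at most $2(m-1)B_{\pmb b(j)}$ of the total budget $2m B_{\pmb b(j)}$ spent in the window leaks away, so at least $2B_{\pmb b(j)}$ lands on item $j$; dividing by $p^t_j\leq\tilde p_j\leq B_{\pmb b(j)}/(4m)$ gives at least $8m$ units of demand for $j$, so $p^{t_j+2m}_j\geq p^{t_j}_j+8m\eta-2m\eta>\tilde p_j$, contradicting $p^{t_j+2m}_j\leq\tilde p_j$. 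This is the missing content of your proposal: a single-step sign argument is replaced by a window-of-$2m$ budget-accounting argument, and the factor $4m$ in $\tilde p_j$ and the window length $2m$ are chosen jointly to make the pigeonhole work.
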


Note that demands in LFMs are non-unique for a given set of prices, and thus there are multiple trajectories that the prices can take. Our analysis does not impose any assumptions on the choices of demands and works for all possible trajectories.
We also show that the lower bound on price in \cref{lem:lower-bounds-prices-upper-bounds-demands-upper-bounds-prices} is tight, in terms of its order dependence on $m$.
\begin{lemma}
    There is an instance for which $p^t = \lbp - (m-1)\eta$ for some item at some time step $t$. 
    \label{lem:lower-bounds-prices-tightness}
\end{lemma}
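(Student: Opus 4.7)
The plan is to exhibit a concrete LFM instance together with a valid {\tatonnement} trajectory from~\cref{eq:general-linear-ttm} along which one item's price decreases by exactly $\eta$ per step for $m-1$ consecutive steps after reaching $\lbp$. The key degree of freedom the construction exploits is that in an LFM with uniform valuations, every cheapest item is MBB, so whenever multiple items are tied for cheapest one is free to route the buyer's entire budget to any one of them.

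I will take a single-buyer market with $n = 1$, $v_{1j} = 1/m$ for all $j \in [m]$ (so $\|v_1\|_1 = 1$), and $B_1 = 1$, which by~\cref{eq:def-tilde-p} gives $\lbp_j = \tfrac{1}{4m} =: p_0$ for every $j$. Initialize $p^0 = \lbp$. For each $t \in \{0, 1, \ldots, m-2\}$, select the demand $x^t_1 \in \mathcal{D}_1(p^t)$ that assigns the buyer's entire unit budget to item $t+1$. Item $m$ is then never demanded, so its excess demand is $-1$ at every such step and its price decreases by exactly $\eta$ per step, yielding $p^{m-1}_m = \lbp_m - (m-1)\eta$, which is the claimed equality.

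The only nontrivial point, and the main obstacle of the proof, is verifying that item $t+1$ is indeed a legal MBB choice at step $t$. I will prove by induction on $t$: (a) for every item $j \ge t+1$ that has not yet been demanded, $p^t_j = p_0 - t\eta$; and (b) for every item $j \le t$ that has already been demanded, $p^t_j > p_0 - t\eta$. Part (a) is immediate from excess demand $-1$. Part (b) is where the stepsize hypothesis enters: the moment item $j$ is first demanded, its price is at most $p_0 = \tfrac{1}{4m}$ and absorbs the entire unit budget, producing an upward jump of size $\eta\bigl(1/p^{j-1}_j - 1\bigr) \ge \eta(4m - 1)$; under the hypothesis $\eta < \lbp_{\min}/(2m)$ from~\cref{lem:lower-bounds-prices-upper-bounds-demands-upper-bounds-prices}, this single spike comfortably dominates the at most $(m-1)\eta$ of subsequent cumulative decay. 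Combining (a) and (b) shows items $t+1, \ldots, m$ are tied for cheapest at step $t$, legitimizing the selection of item $t+1$ and closing the induction.
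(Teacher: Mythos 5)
Your proposal is correct and takes essentially the same route as the paper's proof: both use the single-buyer, uniform-valuation instance with $\lbp_j = \tfrac{1}{4m}$, start at $p^0 = \lbp$, and route the entire budget to a single MBB item at each step so that the last-demanded item's price falls by $\eta$ per step for $m-1$ steps. The genuine contribution of your writeup over the paper's terse argument is that you actually verify, by induction, that item $t+1$ really is a legal MBB choice at step $t$: you show the undemanded items stay exactly tied at $p_0 - t\eta$, while the previously demanded items remain strictly above that level because the one-step spike $\eta(1/p^{j-1}_j - 1) \ge \eta(4m-1)$ (valid since $p^{j-1}_j \le p_0 = 1/(4m)$, and positive by the stepsize hypothesis $\eta < \lbp_{\min}/(2m)$) dominates the at most $(m-1)\eta$ of subsequent decay. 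The paper's proof simply asserts that the buyer can always select a fresh MBB item without checking this, so your version closes a small gap rather than diverging from the paper's approach.
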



\paragraph{Quadratic growth of the dual EG objective.}
To guarantee a fast convergence rate for (sub)gradient descent methods, 
we need the objective function to satisfy some form of metric regularity condition. 
See~\citet[Appendix A]{karimi2016linear} for a comprehensive comparison of different types of regularity conditions. 
Here, we focus on the \emph{strong convexity} (SC) and \emph{quadratic growth} (QG) conditions. 
A differentiable convex function $f$ is said to be $\mu$-strongly convex if 
\begin{equation*}
    f(y) \geq f(x) + \inp{\nabla f(x)}{y - x} + \frac{\mu}{2} \En{y - x}^2 \; \forall\; x, y \in \text{dom } f. 
\end{equation*}
A closed convex function $\varphi$ is said to have the $\alpha$-quadratic growth property if 
\begin{equation*}
    \varphi(x) - \varphi(\Pi_{\mathcal{X}^*}(x)) \geq \alpha \En{x - \Pi_{\mathcal{X}^*}(x)}^2 \quad \forall\; x \in \text{dom } \varphi, 
\end{equation*}
where $\mathcal{X}^*$ is the set of the minima of $\varphi$.
It is easy to see that $\mu$-strong convexity implies $\frac{\mu}{2}$-quadratic growth.
Next, we show that $\varphi(p)$ in~\cref{eq:linear-convex-minimization} satisfies QG.

\begin{lemma}
    The convex function 
    \begin{align*}
        \varphi(p) =  \sum_{j = 1}^m p_j - \sum_{i = 1}^n B_i \log{\left( \min_{k \in [m]} \frac{p_k}{v_{i k}} \right)} 
        \label{eq:sdeg-obj} 
    \end{align*} 
    satisfies the quadratic growth condition with modulus 
    $\alpha = 
    \min_j \frac{p^*_j}{2 \big(1 + \frac{\eta}{\tilde{p}_j - 2m\eta} \big)^2 B^2}$, 
    where $p^*$ denotes the unique equilibrium price vector, and $\tilde{p}$ is defined in~\cref{lem:lower-bounds-prices-upper-bounds-demands-upper-bounds-prices}. 
    \label{lem:qg}
\end{lemma}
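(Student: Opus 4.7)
The plan is to reduce the QG inequality $\varphi(p) - \varphi(p^*) \geq \alpha \En{p - p^*}^2$ to a per-coordinate one-dimensional bound, and then apply Taylor's theorem together with the price upper bound from \cref{lem:lower-bounds-prices-upper-bounds-demands-upper-bounds-prices}(iii). Writing $\beta_i(p) := \min_k p_k/v_{ik}$, we have
\begin{equation*}
\varphi(p) - \varphi(p^*) \;=\; \sum_{j} (p_j - p^*_j) \,+\, \sum_{i} B_i \log \bigl( \beta_i(p^*)/\beta_i(p) \bigr),
\end{equation*}
so the main task is to lower-bound the log term in a way that meshes cleanly with the linear term.

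The key move is to introduce a test distribution drawn from the equilibrium allocation $x^*$: set $\alpha_{ij} := x^*_{ij}\, p^*_j / B_i$. At equilibrium each buyer spends exactly her budget on MBB items, so $\alpha_i$ is a probability distribution supported on buyer $i$'s MBB set $M^*_i$ under $p^*$. This gives the identity $\log \beta_i(p^*) = \sum_j \alpha_{ij} \log(p^*_j/v_{ij})$ (since $p^*_j/v_{ij} = \beta_i(p^*)$ for all $j \in M^*_i$) and the bound $\log \beta_i(p) \leq \sum_j \alpha_{ij} \log(p_j/v_{ij})$ (since $\beta_i(p) \leq p_j/v_{ij}$ for every $j$, regardless of MBB status at $p$). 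Subtracting, multiplying by $B_i$, summing over $i$, and using market clearing $\sum_i x^*_{ij} = 1$ gives
\begin{equation*}
\varphi(p) - \varphi(p^*) \;\geq\; \sum_{j=1}^{m} p^*_j \!\left[ \frac{p_j}{p^*_j} - 1 - \log \frac{p_j}{p^*_j} \right].
\end{equation*}

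Each summand has the form $p^*_j\, f(r_j)$ with $r_j := p_j/p^*_j$ and $f(r) := r - 1 - \log r$. Since $f(1) = f'(1) = 0$ and $f''(r) = 1/r^2$, Taylor's theorem yields $f(r) \geq (r - 1)^2 / \bigl(2 \max\{r, 1\}^2\bigr)$, which after substitution reads $p^*_j f(r_j) \geq p^*_j (p_j - p^*_j)^2 / \bigl(2 \max\{p_j, p^*_j\}^2\bigr)$. By \cref{lem:lower-bounds-prices-upper-bounds-demands-upper-bounds-prices}(iii), every iterate satisfies $p_j \leq \bigl(1 + \eta/(\tilde p_j - 2m\eta)\bigr) B$, and $p^*_j \leq B$ holds automatically since total equilibrium spend equals total budget; hence $\max\{p_j, p^*_j\} \leq (1 + \eta/(\tilde p_j - 2m\eta)) B$. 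Summing over $j$ and lower-bounding by the smallest per-coordinate ratio produces exactly the claimed modulus $\alpha$.

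The main conceptual obstacle is guessing the right test distribution $\alpha_i$; once it is in hand, the MBB structure, spend-equals-budget, and market clearing combine in one step, and the rest is routine one-variable calculus plus the iterate bounds. A minor subtlety is that the stated modulus depends on the price upper bound from \cref{lem:lower-bounds-prices-upper-bounds-demands-upper-bounds-prices}, so the QG inequality as written is really a statement along the t\^atonnement trajectory (together with $p^*$), which is precisely what is needed for the subsequent linear convergence argument.
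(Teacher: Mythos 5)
Your proposal is correct and takes essentially the same route as the paper: your test distribution $\alpha_{ij}=x^*_{ij}p^*_j/B_i$ and the resulting lower bound $\varphi(p)-\varphi(p^*)\geq\sum_j p^*_j\bigl[p_j/p^*_j-1-\log(p_j/p^*_j)\bigr]$ are exactly the paper's auxiliary function $h(p)=\sum_j p_j-\sum_{i,j}p^*_j x^*_{ij}\log(p_j/v_{ij})$ and the chain $\varphi(p)-\varphi(p^*)\geq h(p)-h(p^*)$, just written out inline. Your per-coordinate Taylor bound on $f(r)=r-1-\log r$ is the one-variable reading of the paper's strong-convexity computation for $h$ (whose Hessian is $\mathrm{diag}(p^*_j/p_j^2)$), and both invoke the iterate upper bound from \cref{lem:lower-bounds-prices-upper-bounds-demands-upper-bounds-prices}(iii) to obtain the stated modulus.
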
 

\begin{proof}
Let $\bar{p}_j = (1 + \frac{\eta}{\tilde{p}_j - 2m\eta} ) B$ denote the price upper bound for item $j$. 
First, we construct an auxiliary function 
\begin{equation*}
    h(p) = \sum_{j = 1}^m p_j - \sum_{j = 1}^m \sum_{i = 1}^n p^*_j x^*_{ij} \log{ \left( \frac{p_j}{v_{ij}} \right) }, 
\end{equation*} 
where $(p^*, x^*)$ is a pair of equilibrium prices and allocations. 
Since $h(p)$ is strictly convex, it has a unique minimum. 
Note that, since $\nabla_{p_j} h(p) = 1 - \frac{p^*_j}{p_j}\sum_i x^*_{ij} = 1 - \frac{p^*_j}{p_j}$ for all $j \in [m]$, 
by first-order optimality conditions we have that $p^*$ is the minimum of $h(p)$. 
Then, we have 
\begin{align*} 
    \varphi(p) &= \sum_{j = 1}^m p_j - \sum_{i = 1}^n B_i \log{ \left( \min_{k \in [m]} \frac{p_k}{v_{i k}} \right) } \nonumber \\ 
    &= \sum_{j = 1}^m p_j - \sum_{i = 1}^n \sum_{j = 1}^m p^*_j x^*_{ij} \log{ \left( \min_{k \in [m]} \frac{p_k}{v_{i k}} \right) } \tag{$\sum_{j = 1}^m p^*_j x^*_{ij} = B_i$} \nonumber \\ 
    &\geq \sum_{j = 1}^m p_j - \sum_{j = 1}^m \sum_{i = 1}^n p^*_j x^*_{ij} \log{ \left( \frac{p_j}{v_{ij}} \right) } = h(p). 
\end{align*} 
Also, note that 
\begin{align*} 
    \varphi(p^*) &= \sum_{j = 1}^m p^*_j - \sum_{j = 1}^m \sum_{i: x^*_{ij} > 0} p^*_j x^*_{ij} \log{ \left( \min_{k \in [m]} \frac{p^*_k}{v_{i k}} \right) } \nonumber \\ 
    &= \sum_{j = 1}^m p^*_j - \sum_{j = 1}^m \sum_{i: x^*_{ij} > 0} p^*_j x^*_{ij} \log{ \left( \frac{p^*_j}{v_{i j}} \right) } = h(p^*), 
\end{align*} 
where the second equality follows because $x^*_{ij} > 0$ implies ${p^*_j}/{v_{ij}} = \min_{k \in [m]} {p^*_k}/{v_{ik}}$.  

Since $p$ is upper bounded by $\bar{p}$, 
$h$ is strongly convex with modulus $\mu = \min_j {p^*_j}/{\bar{p}_j^2}$ (see \cref{app:sec:strong-convexity} for derivation). 
Since strong convexity implies the quadratic growth condition, we obtain 
$h(p) - h(p^*) \geq (\min_j {p^*_j}/{2\bar{p}_j^2}) \En{p - p^*}^2$ for all $p \leq \bar{p}$. 
Since $\varphi(p) \geq h(p)$ for all $p$, and $\varphi(p^*) = h(p^*)$ at the unique minimum $p^*$, 
\begin{equation*}
    \varphi(p) - \varphi(p^*) \geq h(p) - h(p^*) 
    \geq \min_j \frac{p^*_j}{2\bar{p}_j^2} \En{p - p^*}^2, \; \forall\, p \leq \bar{p}. 
\end{equation*}
Therefore, $\varphi$ is a convex function satisfying the QG condition with modulus $\alpha = \min_j \frac{p^*_j}{2\bar{p}_j^2}$. 
\end{proof}

\paragraph{Approximate linear convergence.}
Next we show that {\tatonnement} converges at a linear rate, due to our results on bounded prices and quadratic growth.
The proof of this theorem is typical for subgradient descent methods under these conditions, 
and easily follows from our structural results on {\tatonnement} and the EG dual. We defer it to~\cref{app:sec:proofs-conv-analysis}.
\begin{theorem}
    Assume that we adjust prices in the linear Fisher market with~\cref{eq:general-linear-ttm}, 
    starting from an initial price vector $p^0 \geq \tilde{p}$, 
    with any stepsize $\eta < \frac{1}{2m} \min_j \tilde{p}_j$, where $\tilde{p}_j$ is defined          
    in~\cref{lem:lower-bounds-prices-upper-bounds-demands-upper-bounds-prices}. 
    Then, we have 
    \begin{equation*}
        \En{p^t - p^*}^2 
        \leq (1 - 2\eta \alpha)^t \En{p^0 - p^*}^2 + e, \quad \text{for all } t \geq 0,
    \end{equation*}
    where $\alpha$ is defined in~\cref{lem:qg} 
    and 
    $
        e = 
        \frac{\eta}{2\alpha} \big( \frac{B}{\min_j \tilde{p}_j - 2m\eta} + m \big)^2$.
    
    \label{thm:linear-convergence}
\end{theorem}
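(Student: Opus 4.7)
The plan is to run the textbook last-iterate analysis of projected subgradient descent under bounded subgradients and quadratic growth, using the earlier lemmas to supply both ingredients along the actual tâtonnement trajectory. Since tâtonnement with stepsize $\eta < \frac{1}{2m}\min_j \tilde p_j$ coincides with the update $p^{t+1} = \Pi_{\RR^m_{\geq 0}}(p^t - \eta g^t)$ for some $g^t \in \partial \varphi(p^t)$, and since $p^* \geq 0$, nonexpansiveness of the projection lets me begin from the one-step recursion
\begin{equation*}
\En{p^{t+1} - p^*}^2 \leq \En{p^t - p^*}^2 - 2\eta \inp{g^t}{p^t - p^*} + \eta^2 \En{g^t}^2.
\end{equation*}

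Next I would bound each of the two non-square terms. For the cross term I use convexity of $\varphi$, which gives $\inp{g^t}{p^t - p^*} \geq \varphi(p^t) - \varphi(p^*)$, and then invoke Lemma~3 (QG) to conclude $\inp{g^t}{p^t - p^*} \geq \alpha \En{p^t - p^*}^2$. This application is legal because Lemma~1(iii) guarantees $p^t \leq \bar p$, which is precisely the region in which Lemma~3's QG bound was proved. For the squared-subgradient term, Lemma~1(ii) gives $\En{g^t}^2 \leq G^2$ with $G = \frac{B}{\min_j \tilde p_j - 2m\eta} + m$, since the subgradient of $\varphi$ at $p^t$ equals $\mathbf{1} - \sum_i x_i^\ast(p^t) = -z(p^t)$. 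Substituting both bounds yields the contraction
\begin{equation*}
\En{p^{t+1} - p^*}^2 \leq (1 - 2\eta\alpha)\En{p^t - p^*}^2 + \eta^2 G^2.
\end{equation*}

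I then unroll this recursion: iterating gives $\En{p^t - p^*}^2 \leq (1-2\eta\alpha)^t \En{p^0 - p^*}^2 + \eta^2 G^2 \sum_{k=0}^{t-1}(1-2\eta\alpha)^k$, and bounding the geometric sum by $\frac{1}{2\eta\alpha}$ produces the additive error $\frac{\eta G^2}{2\alpha}$, which is exactly the $e$ in the statement. I should also check that $1 - 2\eta\alpha \in (0,1)$ for the chosen stepsize so that the contraction is meaningful; this follows because $\alpha$ depends only on $p^\ast$ and $\bar p$, and $\eta$ can be taken sufficiently small within the range already permitted by Lemma~1.

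The substantive work has all been done upstream: the main potential pitfall is making sure the QG inequality is applied only where it is valid and that the subgradient bound holds along the entire trajectory, both of which are exactly what Lemma~1 secures by keeping $p^t$ in $[\tilde p - 2m\eta\,\mathbf{1},\,\bar p]$ for every $t$. Everything else is a routine unrolling, so I expect the proof to be short and to read as a straightforward assembly of the structural results established earlier.
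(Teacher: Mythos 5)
Your proof is correct and mirrors the paper's argument almost exactly: both derive the one-step recursion from nonexpansiveness of the projection, bound the cross term via convexity and then Lemma~3 (QG), bound the subgradient via Lemma~1(ii), and unroll the resulting contraction. The only cosmetic difference is that the paper recenters the recursion around the fixed-point value $e = \eta G^2/(2\alpha)$ and inducts, whereas you sum the geometric series directly; these are algebraically equivalent, and your explicit check that $1 - 2\eta\alpha \in (0,1)$ is a small added care the paper leaves implicit.
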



\cref{thm:linear-convergence} shows that the price vector $p^t$ converges linearly to an $\epsilon$ neighborhood around the equilibrium price vector $p^*$, where the size of the neighborhood is determined by the stepsize $\eta$.
Suppose instead that we want to get arbitrarily close to $p^*$. Next, we show a corollary of \cref{thm:linear-convergence} for this case.
\begin{corollary}
    For a given $\epsilon > 0$, 
    let 
    $
        \eta \leq 
        \min\left\{ \frac{\min_j \tilde{p}_j}{4m}, 
        \frac{2 \epsilon \min_j p^*_j}{9 B^2 (\frac{2B}{\min_j \tilde{p}_j} + m)^2} \right\}
    $.
    Then, starting from any $p^0 \geq \tilde{p}$ such that $\sum_{j = 1}^m p^0_j = B$, 
    {\tatonnement} with stepsize $\eta$ generates a price vector $p$ 
    such that $\En{p - p^*}^2 \leq \epsilon$ 
    in $\mathcal{O}(\frac{1}{\epsilon}\log{\frac{1}{\epsilon}})$ iterations.
    If $\epsilon \geq \frac{9 B^2 \min_j \tilde{p}_j}{8m \min_j p^*_j} (\frac{2B}{\min_j \tilde{p}_j} + m)^2$, 
    the time complexity is on the order of $\mathcal{O}(\log{\frac{1}{\epsilon}})$.
    \label{crl:time-complexity}
\end{corollary}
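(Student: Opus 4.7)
The plan is to apply \cref{thm:linear-convergence} and translate its two-term bound into an iteration complexity in $\epsilon$. Specifically, I would split the target $\epsilon$ equally as $\epsilon/2 + \epsilon/2$ and require (a) the steady-state error term $e$ to satisfy $e \le \epsilon/2$, and (b) the exponentially decaying term $(1-2\eta\alpha)^t \|p^0 - p^*\|^2$ to also fall below $\epsilon/2$. Condition (a) pins down how small $\eta$ must be, while (b) then determines the number of iterations.

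For (a), the first component $\eta \le \tfrac{\min_j \tilde p_j}{4m}$ of the stated stepsize cap is used to tame the $\eta$-dependent constants appearing in $\alpha$ and $e$: it yields $\tilde p_j - 2m\eta \ge \tilde p_j/2$ and $\tfrac{\eta}{\tilde p_j - 2m\eta} \le \tfrac{1}{2m} \le \tfrac{1}{2}$, hence $\bar p_j \le \tfrac{3B}{2}$. Substituting into the formulas from \cref{lem:qg} and \cref{thm:linear-convergence} gives $\alpha \ge \tfrac{2\min_j p^*_j}{9B^2}$ and
\[
e \;\le\; \frac{9 \eta B^2}{4 \min_j p^*_j}\left( \frac{2B}{\min_j \tilde p_j} + m \right)^2.
\]
The second component of the stepsize cap was designed exactly so that this upper bound on $e$ equals $\epsilon/2$; plugging in $\eta = \tfrac{2\epsilon \min_j p^*_j}{9B^2(2B/\min_j \tilde p_j + m)^2}$ gives equality.

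For (b), I would bound $\|p^0 - p^*\|^2 \le \|p^0 - p^*\|_1^2 \le 4B^2$ using $\sum_j p^0_j = B$ together with the equilibrium budget-balance identity $\sum_j p^*_j = B$ in the normalized linear Fisher market. Then $(1-2\eta\alpha)^t\cdot 4B^2 \le \epsilon/2$ reduces via $1-x\le e^{-x}$ to $t \ge \tfrac{1}{2\eta\alpha}\log\tfrac{8B^2}{\epsilon}$. Plugging in the lower bound on $\alpha$ and the prescribed $\eta$ shows that $\eta\alpha$ scales like $\epsilon$ when the second term of the $\min$ is active, yielding the $\mathcal{O}\!\left(\tfrac{1}{\epsilon}\log\tfrac{1}{\epsilon}\right)$ rate. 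When $\epsilon$ is large enough that $\tfrac{\min_j \tilde p_j}{4m}$ becomes the binding cap on $\eta$, the product $\eta\alpha$ becomes a constant independent of $\epsilon$, and the iteration count collapses to $\mathcal{O}(\log\tfrac{1}{\epsilon})$; the crossover threshold stated in the corollary is precisely the value of $\epsilon$ at which the two candidate caps on $\eta$ coincide.

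The main obstacle I anticipate is bookkeeping rather than conceptual: carefully substituting the $\eta$-dependent quantities in $\alpha$ and $e$ into the iteration bound, and verifying that the crossover threshold is exactly where $\tfrac{\min_j \tilde p_j}{4m}$ equals $\tfrac{2\epsilon \min_j p^*_j}{9B^2(2B/\min_j \tilde p_j + m)^2}$. Beyond that, the proof is a standard combination of the linear convergence guarantee from \cref{thm:linear-convergence} with a case split on which of the two constraints on $\eta$ is tight.
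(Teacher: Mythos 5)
Your proposal is correct and follows the same route as the paper's own proof: bound $\alpha \ge \tfrac{2\min_j p^*_j}{9B^2}$ from the first stepsize cap, force $e \le \epsilon/2$ via the second cap, use $\|p^0 - p^*\| \le 2B$ (from $\sum_j p^0_j = \sum_j p^*_j = B$), solve $(1-2\eta\alpha)^t \cdot 4B^2 \le \epsilon/2$ for $t = \tfrac{1}{2\eta\alpha}\log\tfrac{8B^2}{\epsilon}$, and split cases on which bound on $\eta$ is active. The only presentational difference is that you spell out the intermediate bound $\bar p_j \le \tfrac{3}{2}B$ (via $1/2m \le 1/2$), which the paper leaves implicit.
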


Note that convergence measured via $\En{p - p^*}^2$ implies convergence to an approximate equilibrium.
This is explained in~\cref{app:sec:more-detailed-discussion}. 
We also discuss the possibility of showing convergence of {\tatonnement} without considering regularity conditions (e.g., QG) in~\cref{app:sec:more-detailed-discussion}.

\paragraph{A non-convergence example and discussion.} \label{subsec:non-conv-example-discussion}
The non-convergence of {\tatonnement} under linear utilities has been discussed in previous work, e.g., \citet[Section 5]{cole2008fast}, \citet[Section 6]{cheung2019tatonnement}, \citet[Section 6]{avigdor2014convergence}.
\citet{cole2019balancing} gives a simple market instance (Example 1 in that paper) showing that a \emph{multiplicative} form of {\tatonnement} results in prices cycling between off-equilibrium prices.
Here, we provide a similar, simple instance to demonstrate the non-convergence of the additive form of {\tatonnement} \cref{eq:general-linear-ttm} under \emph{any} stepsize $\eta>0$.
Consider a single buyer with a \emph{unit} budget and two items with valuations $v_{11} = v_{12} = 0.5$. 
The initial prices are chosen as $p^0 = (0.5, 0.5 + \eta)$.
It can be easily verified that the equilibrium prices $p^* = (0.5, 0.5)$.
For a small $\eta>0$, {\tatonnement} gives 
\[ p^t = \begin{cases}
    (0.5+\eta, 0.5), & t = 1, 3, 5, \dots\\
    (0.5, 0.5+\eta), & t = 2, 4, 6, \dots
\end{cases} \]
and, as can be verified via straightforward calculations, $p_1^{2t+1} \rightarrow 0.5$, $p_1^{2t} \rightarrow 0.5+\eta$.
Note, though, that the region of cycling around the equilibrium prices is proportional to $\eta$, which is consistent with \cref{thm:linear-convergence}. 

Although prices cycle with a fixed stepsize (as expected since $e$ in~\cref{thm:linear-convergence} is proportional to the stepsize), they do converge to the exact equilibrium prices with adaptive stepsizes. See~\cref{app:sec:more-detailed-discussion} for more details on this point. 

\section{Quasi-Linear Fisher Markets} 
\label{sec:ql}
The quasi-linear Fisher market (QLFM) is an extension of the linear Fisher market which considers buyers who have outside value for money. 
Formally, 
given a set of prices $p = \{ p_j \}_{j\in [m]} \geq 0$ on all items, 
the \emph{demand} of buyer $i$ is  
\begin{align} 
    \cD^q_i(p) = \argmax_{x_i} \left\{\inp{v_i - p}{x_i}: \inp{p}{x_i} \leq B_i,\, x_i \geq 0 \right\}. 
    \nonumber
\end{align}
In the QLFM, 
if the price for every item is greater than the value to the buyer, the buyer prefers not to spend her money. 
Thus, unlike for the LFM, buyers may not spend all their money in equilibrium. 
However, as in the LFM, buyers only spend money on their MBB items,
and the market clears in equilibrium. 

Recently, there has been increasing interest in the QL Fisher market, due to its relation to budget management in internet advertising markets. 
\citet{conitzer2022pacing} showed that the \emph{first price pacing equilibrium} (FPPE) concept, which captures steady-state outcomes of typical budget pacing algorithms for a first-price auction setting,
is equivalent to market equilibrium in the QLFM.
Consequently, FPPE can be computed efficiently by solving EG-like convex programs. 

As shown in~\citet{chen2007note,devanur2016new}, market equilibria in the QLFM can be captured by 
the following pair of EG-like primal and dual problems: 
\begin{align}
    \begin{aligned}
        \max_{x \in \RR^{n \times m}_{\geq 0}, u, y \in \RR^n_{\geq 0}} \quad & \sum\nolimits_i \left( B_i \log{(u_i)} - y_i \right) \\ 
        \textnormal{s.t.} \quad & u_i \leq \inp{v_i}{x_i} + y_i \quad \forall\, i \in [n] \\ 
        & \sum\nolimits_i x_{ij} \leq 1 \quad \forall\, j \in [m],
    \end{aligned}
    \tag{ql-P}
    \label{pg:ql-P}
\end{align}
\begin{align}
    \begin{aligned}
        \min_{\beta \in \RR^n_{\geq 0}, p \in \RR^m_{\geq 0}} \quad & \sum\nolimits_j p_j - \sum\nolimits_i B_i \log{\beta_i} \\ 
        \textnormal{s.t.} \quad & v_{ij} \beta_i \leq p_j \quad \forall\, i \in [n], j \in [m] \\ 
        & \beta_i \leq 1 \quad \forall\, i \in [n]. 
    \end{aligned}
    \tag{ql-D}
    \label{pg:ql-D} 
\end{align}


\paragraph{Equivalence between {\tatonnement} and SubGD.} 
Similar to~\cref{eq:linear-dual-lagrangian}, the dual problem can be expressed as 
minimizing a max-type nonsmooth convex function $\varphi^q(p)$ as follows: 
\begin{align*}
    \max_{x, y \geq 0} \sum_{i=1}^n & \Big( B_i \log{\left(\inp{v_i}{x_i} + y_i\right)} - y_i - \sum_{j=1}^m p_j x_{ij} \Big) + \sum_{j=1}^m p_j.
    \label{eq:quasi-linear-lagrangian}
\end{align*}
By the first-order optimality conditions, 
we have that for any $p$, 
any maximizer $\bar{x}$ of the inner maximization problem corresponds to a set of demands under price $p$,
and the corresponding maximizer $\bar{y}$ denotes leftover budgets under $p$.
Since any subgradient of $\varphi^q(p)$ is given by $\mathbf{1} - \sum_{i=1}^n \bar{x}_i$, 
the subgradient equals the excess supply under price $p$.

The EG dual problem for QLFM can be simplified as 
\begin{equation}
    \min_{p \in \RR^m_{\geq 0}} \varphi^q(p) =  \sum_{j = 1}^m p_j - \sum_{i = 1}^n B_i \log{\left( \min\left\{ \min_{k \in [m]} \frac{p_k}{v_{i k}}, 1 \right\} \right)}. 
    \label{pg:mincvx-quasi-linear} 
\end{equation} 
Again, the two expressions for $\varphi^q(p)$ only differ by a constant.
Thus, we have $\partial \varphi^q(p) = -z^q(p)$ for any $p$. 
Consequently, {\tatonnement} in the QLFM is equivalent to SubGD on~\cref{pg:mincvx-quasi-linear}. 


\paragraph{Convergence analysis.}

By leveraging the same framework of analysis as for LFMs, 
we show a linear convergence rate and corresponding time complexity for {\tatonnement} in QLFMs.
We show these results in~\cref{thm:linear-convergence-ql} and~\cref{crl:time-complexity-ql}, 
and defer all the analyses and proofs to~\cref{app:sec:quasi-linear-proofs}.

\begin{theorem}
    Let $\pmb{b}(j)$ be the buyer with the minimum index such that
    \begin{equation*}
        \pmb{b}(j) \in \Argmax_i \frac{v_{i j}}{\|v_i\|_\infty}
    \end{equation*}
    and $v_{\min} = \min_j v_{\pmb{b}(j) j}$.
    Assume that we adjust prices in the QLFM with~\cref{eq:general-linear-ttm}, 
    starting from any initial price $p^0 \geq \hat{p}$, with stepsize $\eta < \frac{1}{2m} \min_j \hat{p}_j$, where 
    $\hat{p}$ is an $m$-dimensional price vector where 
    $\hat{p}_j = \min\left\{ \frac{B_{\min}}{4m}, \frac{v_{\min}}{2} \right\} \max_i \frac{v_{ij}}{\En{v_i}_\infty}, \;\forall\, j \in [m]$.
    Then, we have 
    \begin{equation}
        \En{p^t - p^*}^2 
        \leq (1 - 2\eta \alpha)^t  
            \En{p^0 - p^*}^2 + e \quad \text{for all } t \geq 0, 
    \end{equation}
    where $\alpha = \min_j 
    \frac{p^*_j}{2 \big(\min\{ \max_i v_{i j} , B \} + \frac{\eta B}{\hat{p}_j - 2m\eta} \big)^2}$ and 
    $e = 
        \frac{\eta}{2\alpha} \left( \frac{B}{\min_j \hat{p}_j - 2 m\eta} + m \right)^2$.
    \label{thm:linear-convergence-ql}
\end{theorem}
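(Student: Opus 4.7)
The plan is to re-run the four-step template that produced \cref{thm:linear-convergence} for the LFM, adapted to handle the extra $\min\{\cdot, 1\}$ cap appearing in $\varphi^q$. Specifically, I would (i) extend \cref{lem:lower-bounds-prices-upper-bounds-demands-upper-bounds-prices} to show that under \cref{eq:general-linear-ttm} with stepsize $\eta < \tfrac{1}{2m}\min_j \hat p_j$, the QLFM iterates satisfy $p^t_j \geq \hat p_j - 2m\eta$ and an upper bound $p^t_j \leq \min\{\max_i v_{ij}, B\} + \eta B/(\hat p_j - 2m\eta)$, which together with $\|z^q(p^t)\| \leq B/(\min_j \hat p_j - 2m\eta) + m$ yields a uniform subgradient-norm bound $G$; (ii) show that $\varphi^q$ obeys a quadratic-growth inequality with the modulus $\alpha$ stated in the theorem; and (iii) plug these into the standard one-step SubGD recursion $\|p^{t+1}-p^*\|^2 \leq (1-2\eta\alpha)\|p^t-p^*\|^2 + \eta^2 G^2$ and sum the geometric series, producing the claimed bound with noise floor $e = \eta G^2/(2\alpha)$.

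The lower-bound step is the chief departure from the LFM proof. In a QLFM a buyer $i$ directs spend toward item $j$ only when the MBB ratio $v_{ij}/p_j$ dominates $1$; otherwise the constraint $\beta_i \leq 1$ binds and the buyer withholds budget entirely. To run the LFM-style ``arrival time'' induction---order items by the last step at which the price fell below $\hat p_j$, and show each price is pulled back above $\hat p_j$ within $2m$ steps---I must ensure that whenever $p_j$ dips below $\hat p_j$ the designated buyer $\pmb{b}(j)$ actually demands item $j$. This is precisely the role of the extra $v_{\min}/2$ factor in $\hat p_j$: it forces $\hat p_j \leq v_{\pmb{b}(j), j}/2$, so $v_{\pmb{b}(j), j}/p^t_j > 1$ throughout the low-price window and the $\beta = 1$ cap cannot suppress the rebound of $p_j$. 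The upper bound on $p^t$ follows from the same single-step estimate as in the LFM, combined with the equilibrium inequality $p^*_j \leq \min\{\max_i v_{ij}, B\}$, which itself comes from the QL dual constraints $v_{ij}\beta^*_i \leq p^*_j$, $\beta^*_i \leq 1$, together with $\sum_j p^*_j \leq B$.

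For the QG step I would mimic \cref{lem:qg}. Define the auxiliary function
\[
    h^q(p) = \sum_{j=1}^m p_j - \sum_{i=1}^n \sum_{j=1}^m p^*_j x^*_{ij} \log \frac{p_j}{v_{ij}},
\]
which, using market clearance $\sum_i x^*_{ij} = 1$ on the support $\{j : p^*_j > 0\}$, collapses up to a constant to $\sum_j [p_j - p^*_j \log p_j]$ and is therefore strongly convex on the box $p_j \leq \bar p^q_j := \min\{\max_i v_{ij}, B\} + \eta B/(\hat p_j - 2m\eta)$ with modulus $\min_j p^*_j/(\bar p^q_j)^2$. First-order optimality shows $p^*$ is the minimizer of $h^q$; the inequality $\varphi^q(p) \geq h^q(p)$ follows from $\log\min\{\min_k p_k/v_{ik}, 1\} \leq \log(p_j/v_{ij})$ for any $j$ with $x^*_{ij} > 0$, combined with $\sum_j p^*_j x^*_{ij} \leq B_i$. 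A case split on whether $\beta^*_i < 1$ (so $p^*_j/v_{ij} = \beta^*_i$ on the MBB support and $\sum_j p^*_j x^*_{ij} = B_i$) or $\beta^*_i = 1$ (so the log terms vanish on both sides at $p^*$) verifies the equality $\varphi^q(p^*) = h^q(p^*)$. Strong convexity of $h^q$ on the box then translates into QG of $\varphi^q$ with the stated $\alpha$, and substitution into the SubGD recursion closes the argument. The main technical obstacle I anticipate is exactly this mixed-regime case split, together with the careful bookkeeping showing that the $v_{\min}/2$ factor in $\hat p_j$ keeps buyer $\pmb{b}(j)$ active during the entire $2m$-step rebound window; everything else is a direct transcription of the LFM analysis.
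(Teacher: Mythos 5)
Your proposal follows the paper's proof almost step for step: the same extension of the price-lower-bound lemma to the QLFM (with the key observation that the factor $\tfrac{v_{\min}}{2}$ in $\hat p_j$ guarantees buyer $\pmb{b}(j)$ spends its full budget whenever $p_j \leq \hat p_j$, so the LFM rebound argument goes through), the same upper bound on the trajectory via the single-step estimate with the extra cap $\max_i v_{ij}$, the same auxiliary function $h^q$ with the KKT-based argument ($y^*_i > 0 \Rightarrow \beta^*_i = 1$) to show $\varphi^q \geq h^q$ with equality at $p^*$, and the same SubGD one-step recursion to close. The only small blemish is that you attribute the dynamical upper bound on $p^t$ partly to the equilibrium inequality $p^*_j \leq \min\{\max_i v_{ij}, B\}$, whereas that inequality is really only needed to place $p^*$ inside the box where $h^q$ is strongly convex; the trajectory bound is a direct one-step argument (demand vanishes once $p^t_j > \max_i v_{ij}$, and demand is below one once $p^t_j > B$) independent of any equilibrium property.
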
 

\begin{corollary}
    For a given $\epsilon > 0$, 
    let 
    $\eta \leq 
        \min\{ \frac{\min_j \hat{p}_j}{4m}, 
        \frac{2 \epsilon \min_j p^*_j}{9 B^2 (\frac{2B}{\min_j \hat{p}_j} + m)^2} \}$.
    Then, starting from any $p^0 \geq \hat{p}$ such that $\sum_j p^0_j = B$, 
    {\tatonnement} with stepsize $\eta$ generates a price vector $p$ 
    such that $\En{p - p^*}^2 \leq \epsilon$ 
    in $\mathcal{O}(\frac{1}{\epsilon}\log{\frac{1}{\epsilon}})$ iterations.
    If $\epsilon \geq \frac{9 B^2 \min_j \hat{p}_j}{8m \min_j p^*_j} (\frac{2B}{\min_j \hat{p}_j} + m)^2$, 
    the time complexity is on the order of $\mathcal{O}(\log{\frac{1}{\epsilon}})$.
    \label{crl:time-complexity-ql}
\end{corollary}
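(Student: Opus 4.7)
The plan is to invoke \cref{thm:linear-convergence-ql} and then split the right-hand side in two: choose $\eta$ small enough that the residual term $e$ is at most $\epsilon/2$, and choose the horizon $t$ large enough that the geometrically contracting term $(1-2\eta\alpha)^t \En{p^0 - p^*}^2$ is also at most $\epsilon/2$. Adding the two halves then gives $\En{p^t - p^*}^2 \le \epsilon$. The argument is the QL analogue of the proof of \cref{crl:time-complexity}, with $\hat{p}$ in the role of $\tilde{p}$ and with the QL-specific forms of $\alpha$ and $e$ from \cref{thm:linear-convergence-ql} carried through.

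First, I would use the left branch of the stepsize bound, $\eta \le \min_j \hat{p}_j/(4m)$, to obtain $\hat{p}_j - 2m\eta \ge \hat{p}_j/2$ for every $j$. This simultaneously (a) bounds $B/(\hat{p}_j - 2m\eta) \le 2B/\hat{p}_j$ inside the envelope for $e$, and (b) bounds $\eta B/(\hat{p}_j - 2m\eta) \le B/(2m)$ inside the denominator of $\alpha$. Combined with the trivial inequality $\min\{\max_i v_{ij}, B\} \le B$, which is the QL-specific ingredient, this yields the $\epsilon$-independent lower bound $\alpha \ge 2\min_j p^*_j/(9B^2)$. The right branch of the stepsize bound is calibrated exactly so that $(\eta/(2\alpha))(2B/\min_j \hat{p}_j + m)^2 \le \epsilon/2$, i.e., $e \le \epsilon/2$.

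With $e \le \epsilon/2$ secured, I would bound the initial error by $\En{p^0 - p^*}^2 \le 4B^2$, using $\sum_j p^0_j = B$ together with $\sum_j p^*_j \le B$; in the QL setting this latter inequality (rather than an equality, as in the LFM case) follows from complementary slackness, since items with positive equilibrium prices are fully allocated and buyers spend at most their budgets. Requiring $(1 - 2\eta\alpha)^t \cdot 4B^2 \le \epsilon/2$ then yields $t = \bigl\lceil \log(8B^2/\epsilon)/(2\eta\alpha) \bigr\rceil$. When $\epsilon$ is below the stated threshold, the $\epsilon$-dependent branch of the $\min$ is binding, so $\eta = \Theta(\epsilon)$ while $\alpha$ is $\epsilon$-independent, giving the $\mathcal{O}((1/\epsilon)\log(1/\epsilon))$ rate. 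The threshold in the corollary is exactly the crossover at which the two branches of the $\min$ coincide; above it, the $\epsilon$-independent constraint $\eta \le \min_j \hat{p}_j/(4m)$ takes over, making both $\eta$ and $\eta\alpha$ constants in $\epsilon$, which collapses the rate to $\mathcal{O}(\log(1/\epsilon))$.

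I do not expect a substantive obstacle: once \cref{thm:linear-convergence-ql} is in hand, this is a bookkeeping argument. The only delicate point is collapsing the QL-specific factor $\min\{\max_i v_{ij}, B\}$ in $\alpha$ uniformly by the crude upper bound $B$, so that the $9B^2$ constant in the stepsize denominator of the present corollary matches the one appearing in \cref{crl:time-complexity}.
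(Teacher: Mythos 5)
Your proposal is correct and follows essentially the same route as the paper's proof: bound $\alpha$ from below using $\eta \le \min_j \hat{p}_j/(4m)$ together with $\min\{\max_i v_{ij}, B\} \le B$ to get $\alpha \ge 2\min_j p^*_j/(9B^2)$, use the $\epsilon$-dependent branch of the stepsize to force $e \le \epsilon/2$, bound $\En{p^0-p^*} \le 2B$, and solve $(1-2\eta\alpha)^t \cdot 4B^2 \le \epsilon/2$ via $\log(1+x)\le x$. Your justification that $\sum_j p^*_j \le B$ in the QL case (via complementary slackness) is slightly more explicit than the paper, but the argument is the same.
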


\section{Numerical Experiments} \label{sec:experiments}
In this section, we demonstrate the convergence of {\tatonnement} for the LFM and QLFM through numerical experiments.
To compute the ``true'' equilibrium prices, we run proportional response (PR) dynamics~\citep{zhang2011proportional} for LFM, and the quasi-linear variant of PR~\citep{gao2020first}, which gives highly accurate solutions for our instances.
We use the \emph{squared error norm}, $\| p^t - p^* \|^2$, as the measure of convergence. 
For each market class, instance size, and step size, we run {\tatonnement} for a large number of iterations until the error residuals do not decrease further. 



\paragraph{Synthetic  data.}
We first test {\tatonnement} on randomly generated instances. 
We draw budgets $\{ B_i \}_{i \in [n]}$ from the $[0, 1]$ uniform distribution and valuations $\{ v_{i, j} \}_{i \in [n], j \in [m]}$ from various distributions, 
and then normalize them to $\sum_{j \in [m]} v_{ij} = 1$ for each $i$.
In the LFM, we also normalize the budgets to $\sum_{i \in [n]} B_i = 1$.
In~\cref{fig:single-instance-error-norms-linear,fig:single-instance-error-norms-ql}, we plot $\|p^t - p^*\|^2$ across iterations for {\tatonnement} with different stepsize in the LFM and QLFM, respectively.
The transparent dotted lines represent the values of $\|p^t - p^*\|^2$ for all iterations, and the solid lines represent those values for uniformly spaced iterations.
See the rest of results for more instance sizes and distributions in~\cref{app:sec:additional-experiments}.

\begin{figure}[t]
    \centering
    \includegraphics[scale=.235]{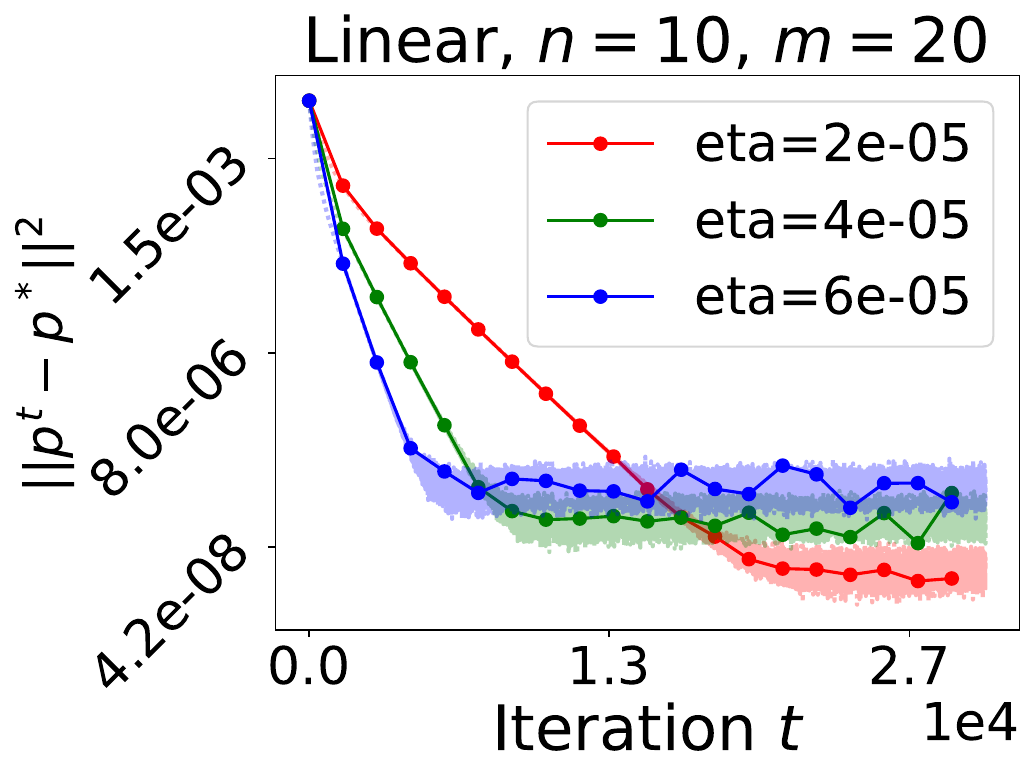}
    \includegraphics[scale=.235]{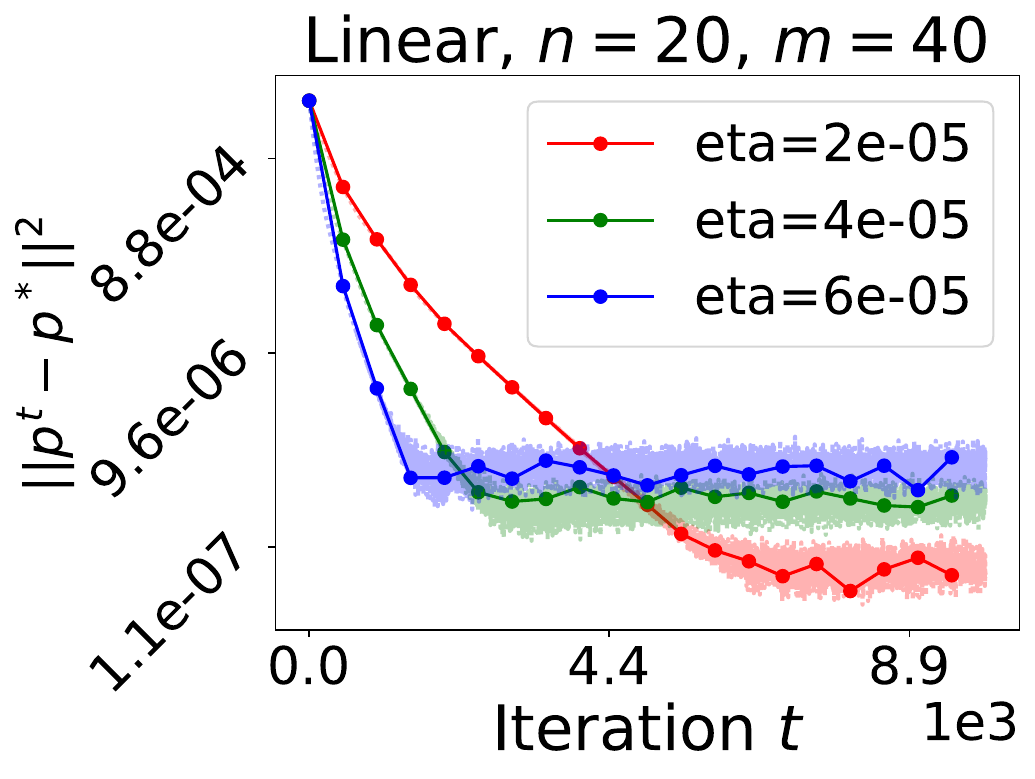}

    \includegraphics[scale=.235]{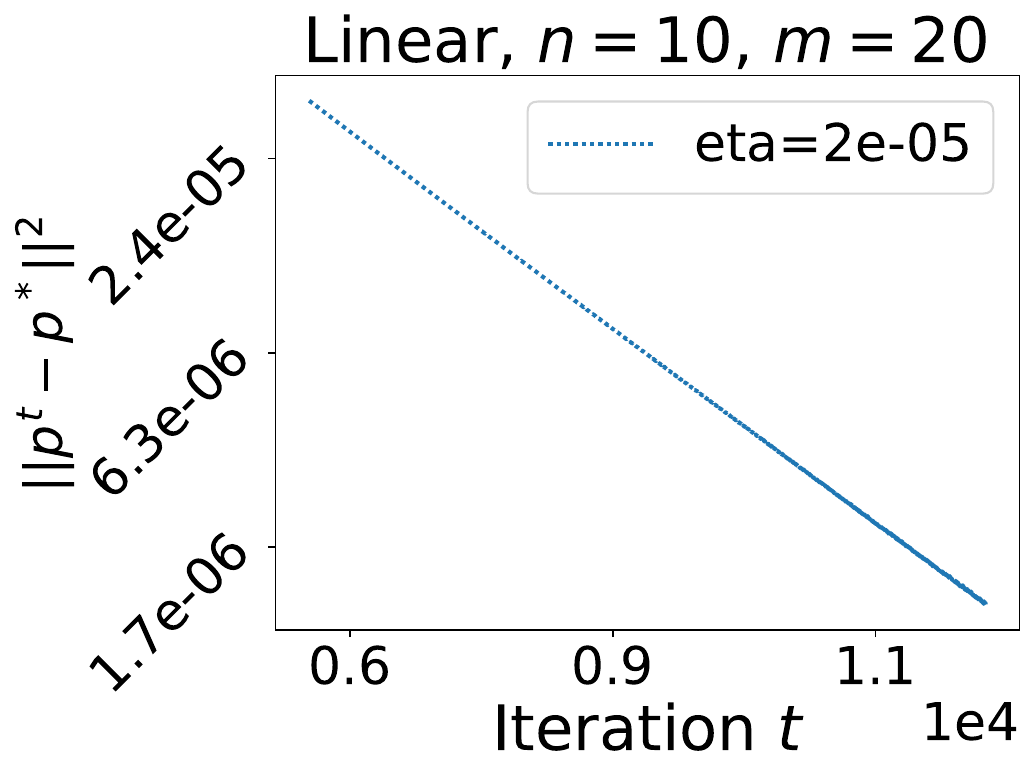}
    \includegraphics[scale=.235]{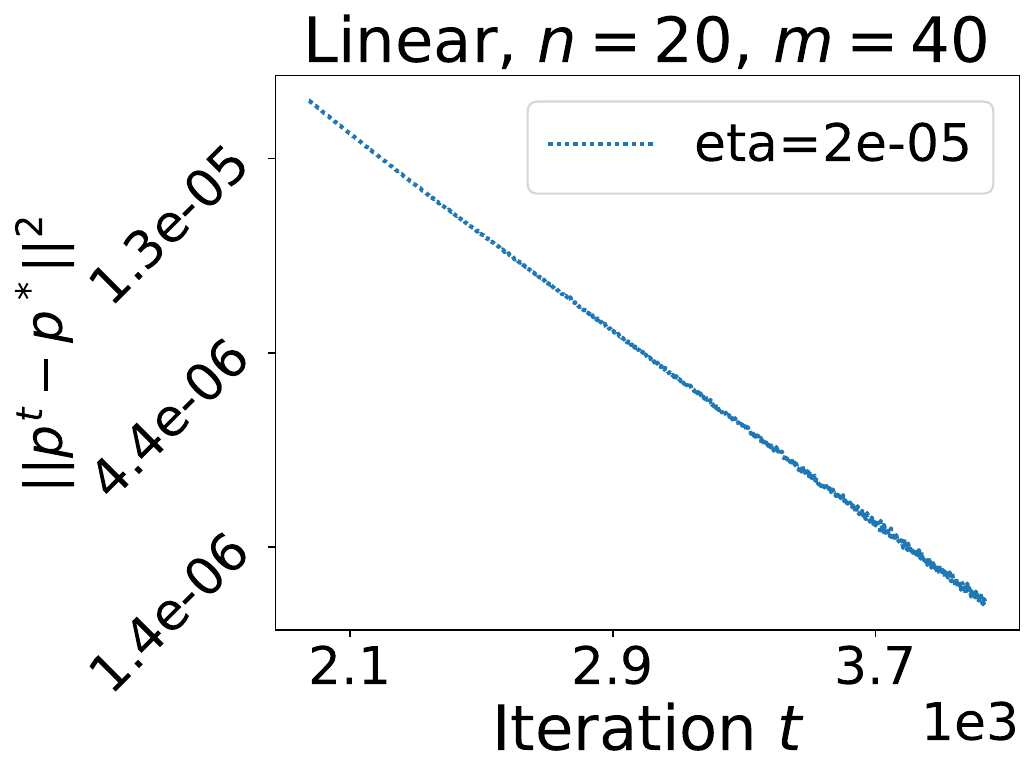}
    \caption{Convergence of squared error norms on random generated instances ($v$ is generated from the exponential distribution with scale $1$) of different sizes under linear utilities. 
    The bottom row zooms in on partial iterations to better show the initial linear convergence.
    }
    \label{fig:single-instance-error-norms-linear}
\end{figure}

\begin{figure}[t]
    \centering
    \includegraphics[scale=.235]{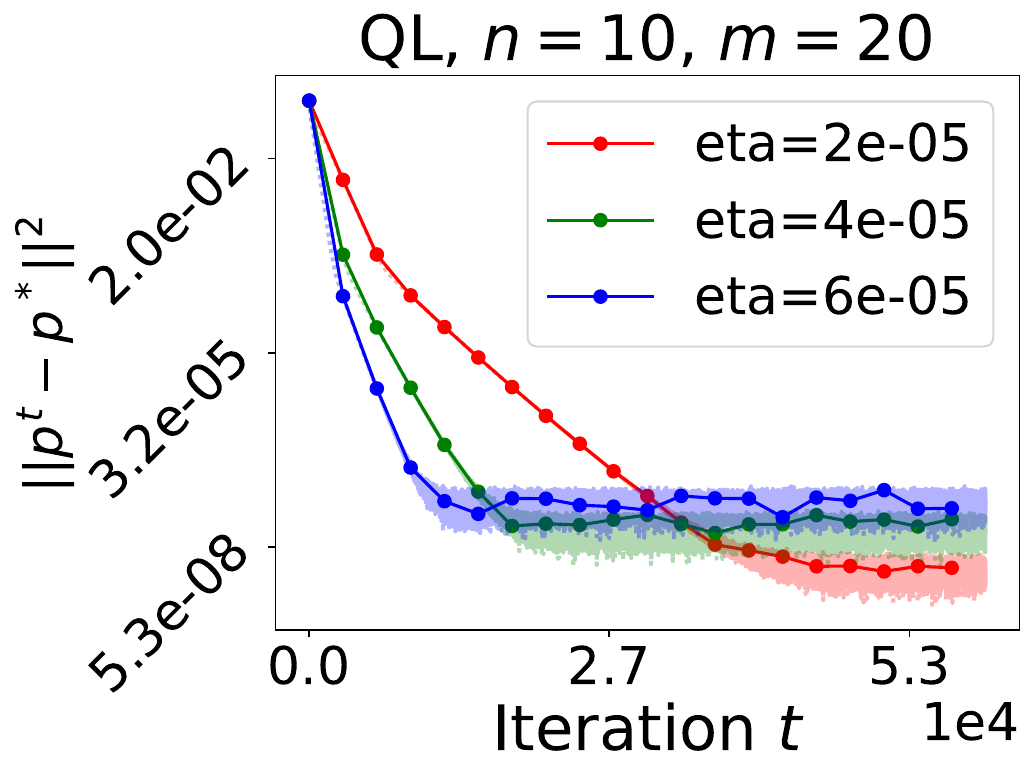}
    \includegraphics[scale=.235]{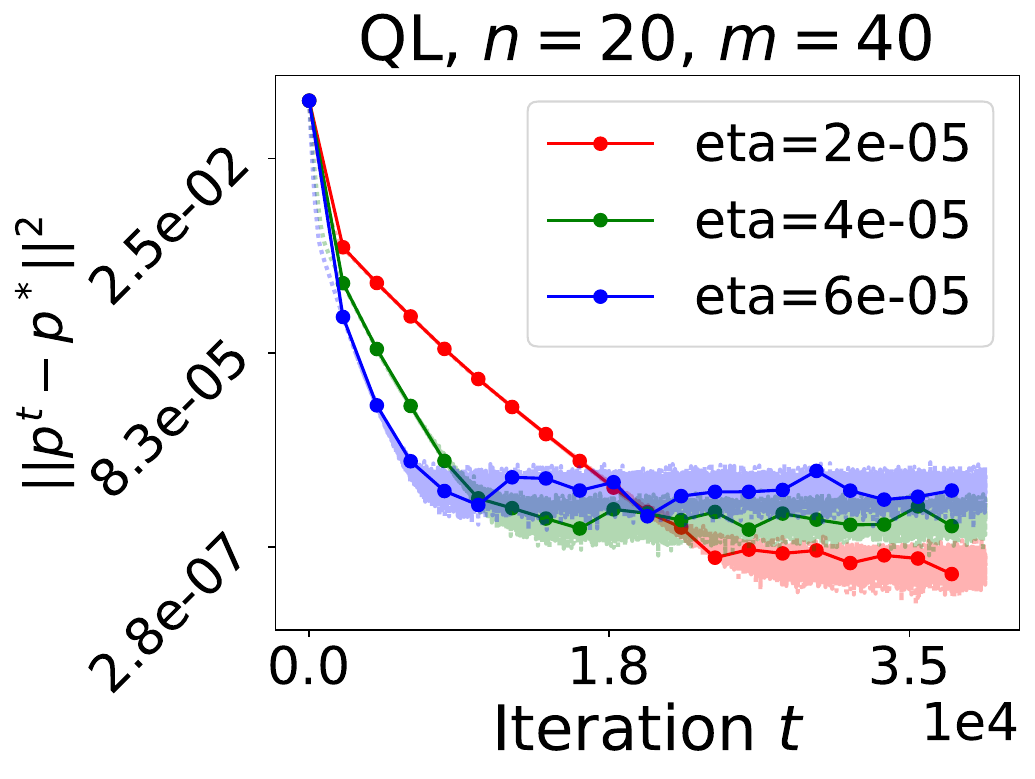}
    
    \includegraphics[scale=.235]{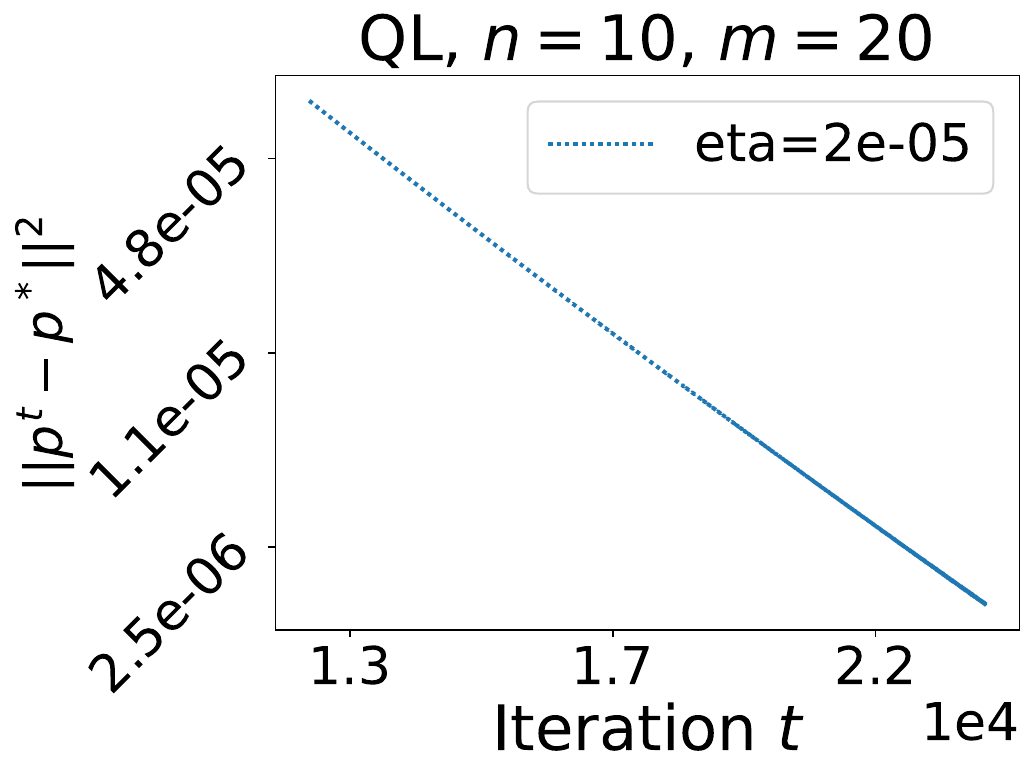}
    \includegraphics[scale=.235]{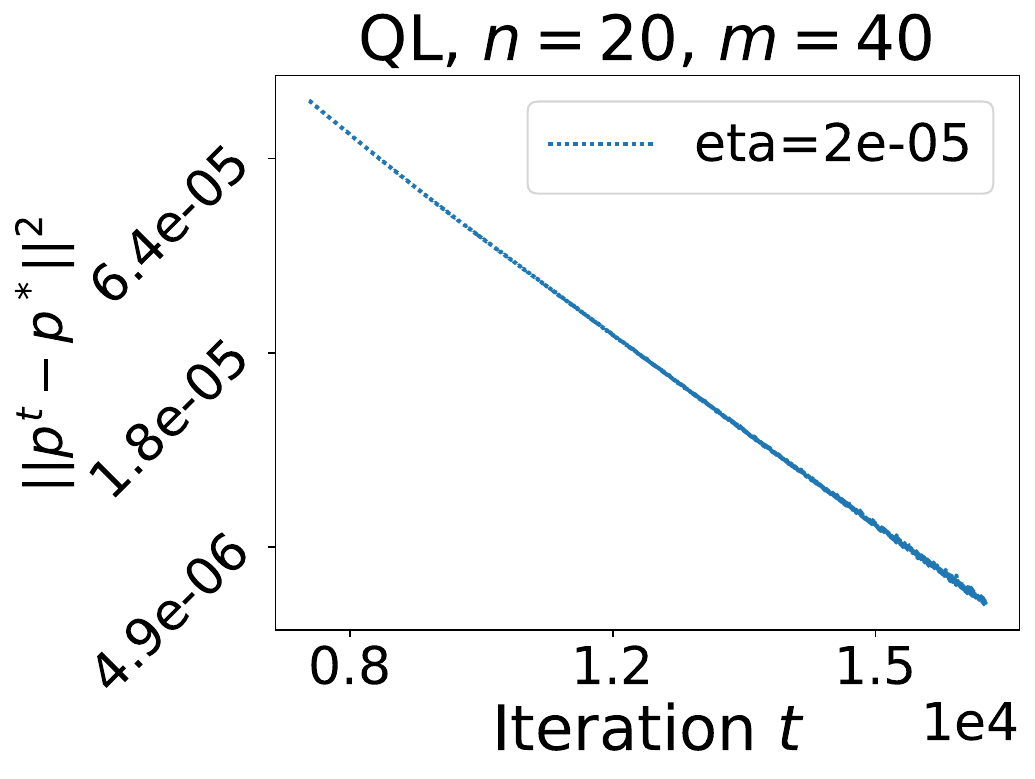}
    \caption{Convergence of squared error norms on random generated instances 
    of different sizes under QL utilities.
    }
    \label{fig:single-instance-error-norms-ql}
\end{figure}

\paragraph{Real data.}

We next test {\tatonnement} on a large-scale instance constructed from a movie rating dataset~\citep{Dooms13crowdrec,nan2023fast}.
In this instance, we regard each movie as an item and each rater as a buyer. 
The valuation of a buyer for an item is the rating she gives to the movie.
See a full description of this instance in~\cref{app:sec:additional-experiments}.
We run {\tatonnement} in both linear and quasi-linear settings and plot the results in~\cref{fig:real-data-experiments}.
\begin{figure}[t]
    \centering
    \includegraphics[scale=.235]{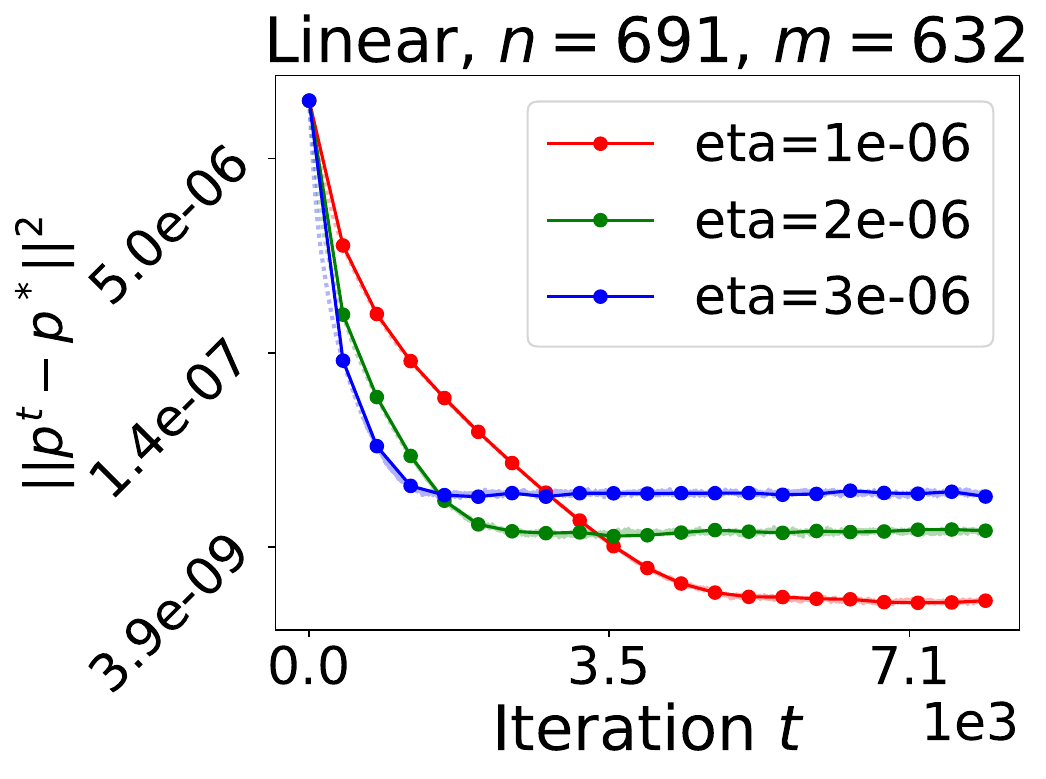}
    \includegraphics[scale=.235]{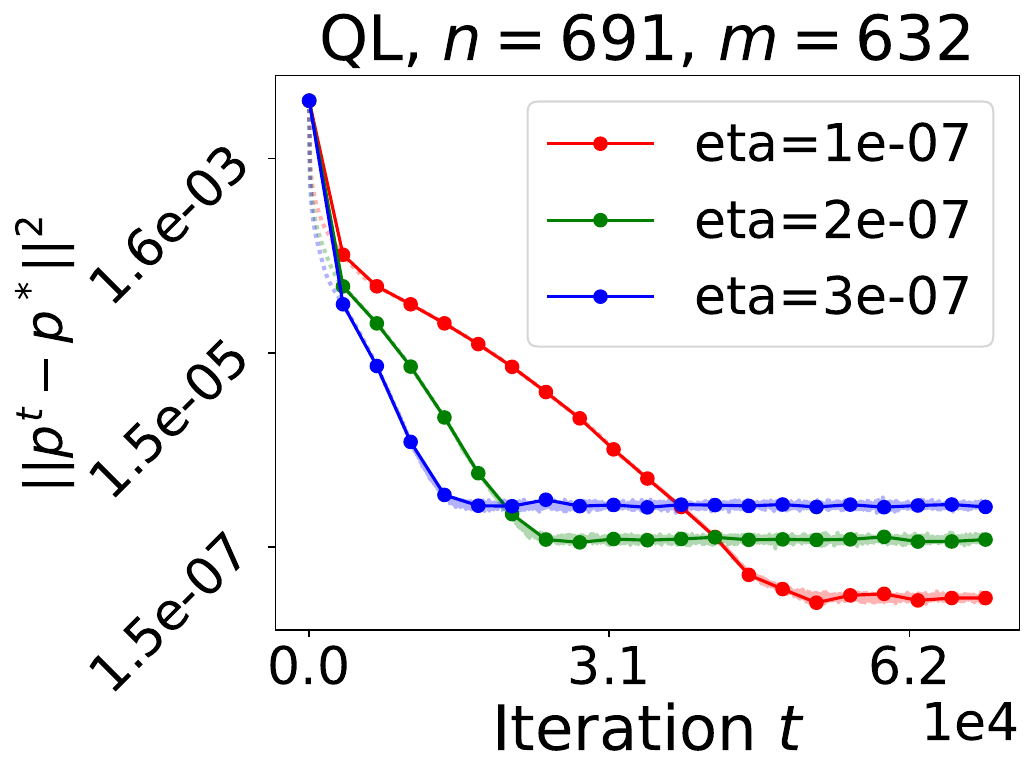}

    \includegraphics[scale=.235]{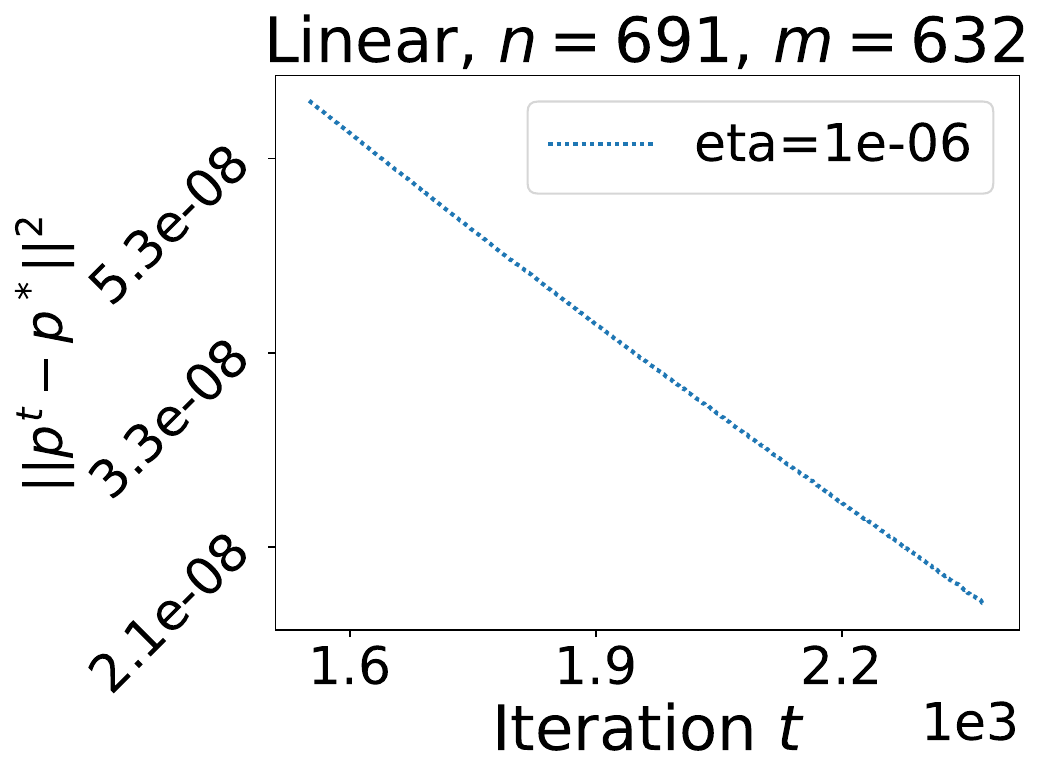}
    \includegraphics[scale=.235]{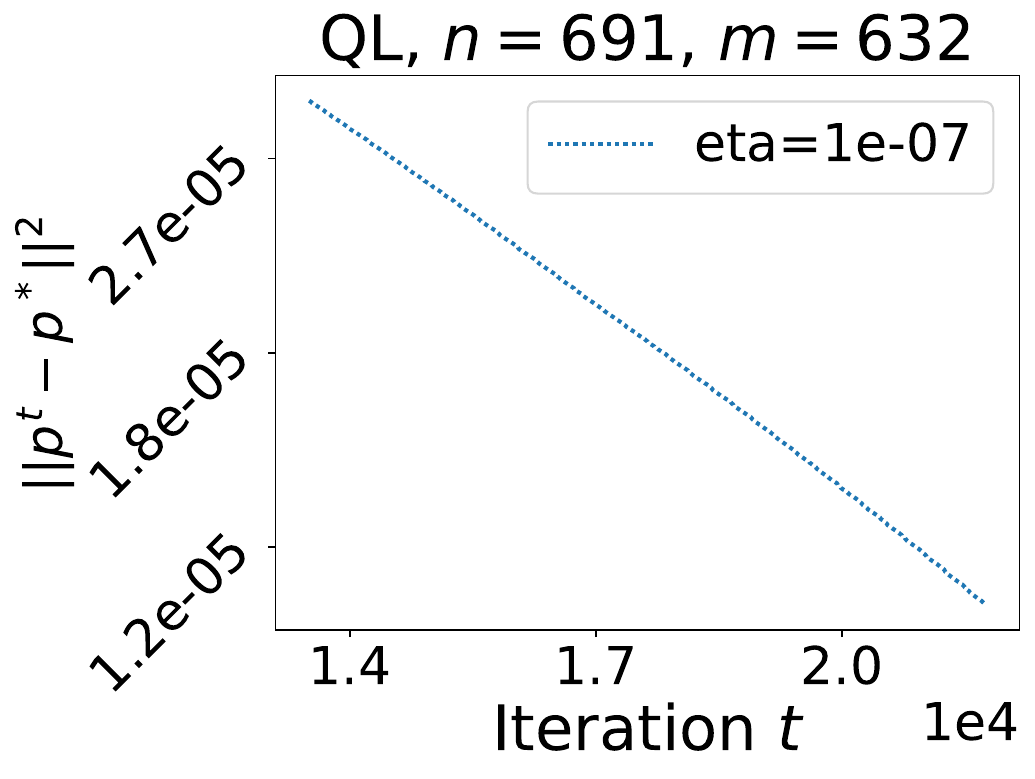}
    \caption{Convergence of squared error norms on our instances generated from a movie rating dataset.
    }
    \label{fig:real-data-experiments}
\end{figure}



\paragraph{Discussion.} 
As can be seen from~\cref{fig:single-instance-error-norms-linear,fig:single-instance-error-norms-ql,fig:real-data-experiments}, 
under both linear utilities and QL utilities, 
{\tatonnement} with all stepsizes converges to a small neighborhood around the equilibrium prices for all instances.
Moreover, larger stepsizes lead to faster initial convergence, 
but higher final error levels.
This is consistent with our theoretical results: 
when $\eta \ll \min_j \tilde{p}_j - 2m\eta$, 
then $\alpha \approx \min_j {p^*_j}/{2 B^2}$ is approximately a constant. 
Thus, $1 - 2\eta\alpha$ linearly decreases as $\eta$ increases;
and the final error level $e$ 
increases as $\eta$ increases.

In~\cref{fig:single-instance-error-norms-linear,fig:single-instance-error-norms-ql,fig:real-data-experiments}, 
We also zoom into the initial time steps of {\tatonnement}. 
As can be seen, 
the lines exhibit an approximately linear convergence pattern.
This demonstrates that our linear convergence results align with the practical performance of {\tatonnement}.
That is to say, 
we discover the ``true'' underlying convergence rate of t\^atonnement prior to hitting the oscillating neighborhood.

We also compare the performance of different variants of {\tatonnement} (e.g., those in~\cref{tab:comparison}) in~\cref{app:sec:compare-tatonnement}.

\section*{Ethical Statement}
There is no ethical concern in this work.

\section*{Acknowledgments}
This research was supported by the Office of Naval Research awards N00014-22-1-2530 and N00014-23-1-2374, and the National Science Foundation awards IIS-2147361 and IIS-2238960. The authors would like to thank the anonymous reviewers for their helpful comments and suggestions. 




\bibliography{refs}

\newpage
\appendix
\onecolumn

\FloatBarrier
\newpage



\begin{center}
    \huge \textbf{Appendices}
    \vspace{12pt}
\end{center}

\section{Direct equivalence between {\Tatonnement} and SubGD on CPF}
\label{app:sec:equiv-ttm-sgd}

In this section, we show a direct equivalence between {\tatonnement} process described in the main texts, without considering the first-order optimality of the EG-like programs. 

\begin{lemma}
    Let 
    \begin{equation*}
        f(x) = \min_j \mathbf{a}_j^\top x + b_j. 
    \end{equation*}
    Then, the subdifferential of $f(x)$ is 
    \begin{equation}
        \partial f(x) = \textnormal{conv}\left(\left\{ \mathbf{a}_{j^*} \left|\, \mathbf{a}_{j^*}^\top x + b_{j^*} = \min_j \mathbf{a}_j^\top x + b_j \right. \right\}\right).
    \end{equation}
\end{lemma}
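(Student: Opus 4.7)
The plan is to write $J^*(x) = \{j : \mathbf{a}_j^\top x + b_j = f(x)\}$ for the active-index set at $x$ and $S = \mathrm{conv}\{\mathbf{a}_{j^*} : j^* \in J^*(x)\}$ for the candidate set on the right, then prove $\partial f(x) = S$ by establishing two inclusions. Because $f$ is a pointwise minimum of affine functions, it is concave, so I will read $\partial f(x)$ in the supporting-hyperplane sense, i.e.\ as the set of $g$ satisfying $f(y) \le f(x) + g^\top(y-x)$ for all $y$; this is the interpretation under which the stated identity holds and is the concave analogue of the standard formula for the subdifferential of a pointwise maximum of affine functions.

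For the easy inclusion $S \subseteq \partial f(x)$, I would fix $g = \sum_{j^*} \lambda_{j^*}\, \mathbf{a}_{j^*}$ with $\lambda_{j^*} \ge 0$ and $\sum \lambda_{j^*} = 1$, substitute the equalities $\mathbf{a}_{j^*}^\top x + b_{j^*} = f(x)$ for $j^* \in J^*(x)$, and observe that
\[
f(x) + g^\top(y - x) \;=\; \sum_{j^*} \lambda_{j^*}\bigl(\mathbf{a}_{j^*}^\top y + b_{j^*}\bigr) \;\ge\; \sum_{j^*} \lambda_{j^*}\, f(y) \;=\; f(y),
\]
since each $\mathbf{a}_{j^*}^\top y + b_{j^*} \ge \min_j(\mathbf{a}_j^\top y + b_j) = f(y)$. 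This verifies the support inequality in one line.

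For the harder inclusion $\partial f(x) \subseteq S$, I would argue by contradiction. If $g \in \partial f(x) \setminus S$, I would let $g^{\star} = \mathrm{proj}_S(g)$ and set $d = g^{\star} - g \ne 0$; the standard projection inequality $(\mathbf{a} - g^{\star})^\top(g - g^{\star}) \le 0$ for every $\mathbf{a} \in S$ rearranges to $\mathbf{a}^\top d \ge (g^{\star})^\top d > g^\top d$, so in particular $\min_{j^* \in J^*(x)} \mathbf{a}_{j^*}^\top d > g^\top d$. The crux is then to identify the one-sided directional derivative as $f'(x; d) = \min_{j^* \in J^*(x)} \mathbf{a}_{j^*}^\top d$: since the finitely many inactive affine pieces $\mathbf{a}_j^\top x + b_j$ with $j \notin J^*(x)$ all strictly exceed $f(x)$, a uniform-gap continuity argument forces $J^*(x + \epsilon d) \subseteq J^*(x)$ for all sufficiently small $\epsilon > 0$, hence $f(x + \epsilon d) = f(x) + \epsilon \min_{j^* \in J^*(x)} \mathbf{a}_{j^*}^\top d$. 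Plugging $y = x + \epsilon d$ into the support inequality, dividing by $\epsilon$, and letting $\epsilon \downarrow 0$ gives $\min_{j^*} \mathbf{a}_{j^*}^\top d \le g^\top d$, contradicting the strict separation.

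The main obstacle I expect is the directional-derivative identification, which is the only place where more than algebra is used; once that is in hand, the Euclidean-projection step furnishes the separating direction and both inclusions close immediately. No step requires the ambient dimension, and the argument is insensitive to how ties among active pieces are resolved, so the proof applies uniformly to all $x$.
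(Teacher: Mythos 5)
Your proof is correct, and since the paper states this lemma without proof (treating it as a standard fact about the superdifferential of a pointwise minimum of affine pieces), it genuinely fills a gap rather than duplicating an argument. Both inclusions check out: the forward inclusion $S \subseteq \partial f(x)$ follows exactly as you write by substituting the active-set equalities and using $\mathbf{a}_{j}^\top y + b_{j} \geq f(y)$ pointwise; the reverse inclusion is the nontrivial part, and your combination of (i) the directional-derivative identity $f'(x;d) = \min_{j^* \in J^*(x)} \mathbf{a}_{j^*}^\top d$ via the uniform gap on the finitely many inactive pieces, and (ii) the Euclidean-projection inequality $(\mathbf{a}-g^\star)^\top(g-g^\star)\le 0$ to manufacture the separating direction $d = g^\star - g$ with $\|d\|^2 > 0$, is airtight. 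Two minor remarks that would tighten the writeup: you should observe explicitly that $S$ is compact and convex (a finite convex hull), so the projection $g^\star$ exists and is unique; and when you write $\mathbf{a}^\top d \ge (g^\star)^\top d$ for all $\mathbf{a}\in S$, the step $(g^\star)^\top d > g^\top d$ should be spelled out as $(g^\star - g)^\top d = \|d\|^2 > 0$. A marginally shorter alternative to the projection step is to appeal directly to strict separation of the point $g$ from the compact convex set $S$, which yields the same direction $d$ without computing a projection; the paper's own downstream usage (chain rule for $\varphi$, $\varphi^q$) would accept either route.
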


\paragraph{Linear Fisher market.} 
Denote $\frac{1}{\mathbf{v}_{ij}}$ is an $m$-dimensional vector whose 
$j$-th entry is $\frac{1}{v_{ij}}$ and all other entries are $0$.
Let $\phi_i(p) = \min_{j \in [m]} \frac{p_j}{v_{ij}} = \min_{j \in [m]} \frac{1}{\mathbf{v}_{ij}}^\top p$. 
Denote $J^*_i(p) = \left\{ j^* \left|\, \frac{p_{j^*}}{v_{ij^*}} = \min_{j \in [m]} \frac{p_j}{v_{ij}} \right. \right\}$.
It follows that 
\begin{equation*}
    \partial \phi_i(p) = \text{conv}\left( \left\{ \frac{1}{\mathbf{v}_{ij}} \Bigg|\, j \in J^*_i(p) \right\} \right)
    = \left\{ \left( \frac{\lambda_{i1}}{v_{i1}}, \ldots, \frac{\lambda_{im}}{v_{im}} \right) 
    \left|\, \sum_{j=1}^m \lambda_{ij} = 1, \lambda_{ij} \geq 0, \lambda_{ij} = 0 \text{ if } j \notin J^*_i(p) \right. \right\}. 
\end{equation*}

Because $\varphi = \sum_j p_j - \sum_i B_i \log{\phi_i(p)}$, by the chain rule of subgradient methods we have 
\begin{align}
    \partial \varphi(p) 
    &= \mathbf{1} - \sum_i \frac{B_i}{\phi_i(p)} \partial \phi_i(p) \nonumber \\ 
    &= \mathbf{1} - \sum_i \left\{ \frac{B_i}{\phi_i(p)} \left( \frac{\lambda_1}{v_{i1}}, \ldots, \frac{\lambda_m}{v_{im}} \right) \left|\, \sum_{j=1}^m \lambda_j = 1, \lambda_j \geq 0, \lambda_j = 0 \text{ if } j \notin J^*_i(p) \right. \right\}. 
\end{align}

Note that $B_i \lambda_{ij} = b_{ij}$ and $\phi_i(p) v_{ij} = p_j$ for $j \in J^*_i(p)$.
Thus, $\partial \varphi(p)$ is equivalent to the excess supply (negative excess demand).

\paragraph{Quasi-linear Fisher market.} 
Let $\mathbf{a}_{i0}$ be an $m$-dimensional zero vector and 
$\mathbf{a}_{ij} = \frac{1}{\mathbf{v_{ij}}}$ for $j \in [m]$. 
Let $b_{i0} = 1$ if $b_{ij} = 0$ for $j \in [m]$.
Let $\phi^q_i(p) = \min\left\{ \min_{j \in [m]} \frac{p_j}{v_{ij}}, 1 \right\} = \min_{j = 0, 1, 2, \ldots, m} \mathbf{a}_{ij}^\top p + b_{ij}$. 
Denote $J^*_i(p) = \left\{ j^* \left|\, \mathbf{a}_{ij^*}^\top p + b_{ij^*} = \min_{j = 0, 1, 2, \ldots, m} \mathbf{a}_{ij}^\top p + b_{ij} \right. \right\}$.
Here, $0 \in J^*_i(p)$ means ``do not spend money'' is an optimal choice.
It follows that 
\begin{equation*}
    \partial \phi^q_i(p) 
    = \text{conv}\left( \left\{ \mathbf{a}_{ij} \big|\, j \in J^*_i(p) \right\} \right)
    = \left\{ \left( \frac{\lambda_{i1}}{v_{i1}}, \ldots, \frac{\lambda_{im}}{v_{im}} \right) 
    \left|\, \sum_{j=1}^m \lambda_{ij} \leq 1, \lambda_{ij} \geq 0; \lambda_{ij} = 0 \text{ if } j \notin J^*_i(p) \right. \right\}. 
\end{equation*}

Because $\varphi^q = \sum_j p_j - \sum_i B_i \log{\phi_i(p)}$, by the chain rule of subgradient methods we have 
\begin{align}
    \partial \varphi^q(p) 
    &= \mathbf{1} - \sum_i \frac{B_i}{\phi^q_i(p)} \partial \phi^q_i(p) \nonumber \\ 
    &= \mathbf{1} - \sum_i \left\{ \frac{B_i}{\phi^q_i(p)} \left( \frac{\lambda_1}{v_{i1}}, \ldots, \frac{\lambda_m}{v_{im}} \right) \left|\, \sum_{j=1}^m \lambda_j \leq 1, \lambda_j \geq 0, \lambda_j = 0 \text{ if } j \notin J^*_i(p) \right. \right\}. 
\end{align}

Note that $B_i \lambda_{ij} = b_{ij}$ and $\phi^q_i(p) v_{ij} = p_j$ for $j \in [m] \cap J^*_i(p)$.
Thus, $\partial \varphi^q(p)$ is equivalent to the excess supply (negative excess demand).

\section{Proofs in Section~\ref{sec:conv-analysis}}
\label{app:sec:proofs-conv-analysis}

\paragraph{Proof of~\cref{lem:lower-bounds-prices-upper-bounds-demands-upper-bounds-prices,lem:lower-bounds-prices-tightness}.}

\begin{customlem}{1}
    Let $\lbp$ be an $m$-dimensional vector where 
    \begin{equation}
        \lbp_j = \frac{B_{\min}}{4m} \max_i  \frac{v_{ij}}{\norm{v_i}_\infty}, \quad \mbox{ for all } j \in [m]. 
    \end{equation}
    Assume that we adjust prices in the LFM with~\cref{eq:general-linear-ttm}, 
    starting from an initial price vector $p^0 \geq \lbp$, 
    with any stepsize $\eta < \frac{1}{2m} \min_j \lbp_j$. 
    Then, we have for all $t \geq 0$, 
    \begin{itemize}
        \item[(i)] $p^t_j \geq \lbp_j - 2m\eta$ for all $j \in [m]$; 
        \item[(ii)] $\En{z(p^t)} \leq \frac{B}{\min_j \tilde{p}_j - 2m\eta} + m$; 
        \item[(iii)] $p^t_j \leq \Big( 1 + \frac{\eta}{\tilde{p}_j - 2m\eta} \Big) B$ for all $j \in [m]$.
    \end{itemize}
\end{customlem} 
\begin{proof}
    \begin{itemize}
        \item[(i)]
        To prove~\cref{lem:lower-bounds-prices-upper-bounds-demands-upper-bounds-prices} $(i)$, we first introduce some notations. 
        Without loss of generality, we let $\pmb{b}(j)$ be the buyer with the minimum index such that
        \begin{equation}
            \pmb{b}(j) \in \Argmax_i \frac{v_{i j}}{\|v_i\|_\infty}.
            \label{eq:def-buyer}
        \end{equation}
        Intuitively, buyer $\pmb{b}(j)$ is the buyer who values item $j$ the most 
        relative to their largest valuation.
        We use $b_{ij}^t$ to denote the spend from buyer $i$ for item $j$ at the $t$-th iteration. 
        If $b_{ij}^t > 0$, we say item $j$ receives spend $b_{ij}^t$ from buyer $i$. 
        We denote $\kappa = \frac{B_{\min}}{4m}$ for simplicity.
    
        Then, we can show the following two facts.
        \begin{fact}
            For each item $j \in [m]$, 
            if $p_j \leq \tilde{p}_j$, 
            then for any item $j' \neq j$ which is demanded by buyer $\pmb{b}(j)$, 
            we have 
            $p_{j'} \leq \lbp_{j'}$. 
            \label{fact:j-prime}
        \end{fact}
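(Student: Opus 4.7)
The plan is to extract the inequality purely from the maximum bang-per-buck (MBB) principle applied to the distinguished buyer $\pmb{b}(j)$. First, because buyers in an LFM only purchase MBB items, if $\pmb{b}(j)$ demands $j'\ne j$ at the current price vector then $v_{\pmb{b}(j),j'}/p_{j'} \geq v_{\pmb{b}(j),j}/p_j$; this MBB comparison is the only ingredient I will use beyond the definitions. Rearranging it gives the key estimate $p_{j'} \le p_j\cdot v_{\pmb{b}(j),j'}/v_{\pmb{b}(j),j}$. Note $v_{\pmb{b}(j),j}>0$ must hold for this division to be legal: by the standing assumption no column of $v$ is all-zero, so $v_{ij}/\|v_i\|_\infty>0$ for at least one $i$, and $\pmb{b}(j)$ attains the maximum of that ratio over $i$.

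Next I would substitute the hypothesis $p_j\le \tilde p_j$. By the choice of $\pmb{b}(j)$, the definition $\tilde p_j = \kappa\,\max_i v_{ij}/\|v_i\|_\infty$ simplifies to $\tilde p_j = \kappa\, v_{\pmb{b}(j),j}/\|v_{\pmb{b}(j)}\|_\infty$, so substituting and canceling $v_{\pmb{b}(j),j}$ yields $p_{j'} \le \kappa\, v_{\pmb{b}(j),j'}/\|v_{\pmb{b}(j)}\|_\infty$. Finally I would compare this to $\tilde p_{j'}=\kappa\,\max_i v_{ij'}/\|v_i\|_\infty$, which dominates the previous right-hand side simply because the maximum over $i$ is at least the value at $i=\pmb{b}(j)$. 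Chaining the two inequalities gives $p_{j'}\le \tilde p_{j'}$, as claimed.

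The whole argument is one rearrangement followed by two substitutions, so I do not anticipate any real obstacle; the only item worth flagging is the $v_{\pmb{b}(j),j}>0$ check, which the standing nonzero-column assumption handles. Conceptually, the fact is just the observation that the threshold vector $\tilde p$ has been engineered to be compatible with the MBB price ratios of whichever buyer values item $j$ the most relative to her own largest valuation, so any $j'$ she currently demands inherits the low-price status of $j$.
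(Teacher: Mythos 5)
Your proof is correct and matches the paper's argument step for step: start from the MBB inequality $p_{j'}\le p_j\,v_{\pmb{b}(j)j'}/v_{\pmb{b}(j)j}$, substitute $p_j\le\tilde p_j$ using the fact that $\pmb{b}(j)$ attains the $\max_i$ in the definition of $\tilde p_j$, cancel $v_{\pmb{b}(j)j}$, and dominate the result by $\tilde p_{j'}$. Your additional remark that $v_{\pmb{b}(j)j}>0$ follows from the no-zero-columns assumption is a small detail the paper leaves implicit but is worth noting.
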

            \begin{proof}
            Since buyer $\pmb{b}(j)$ demands item $j'$, we have $p_{j'} \leq \frac{v_{\pmb{b}(j) j'}}{v_{\pmb{b}(j) j}} p_j$. 
            Then, it follows that 
            \begin{align} 
                p_{j'} \leq \frac{v_{\pmb{b}(j) j'}}{v_{\pmb{b}(j) j}} p_j \leq \frac{v_{\pmb{b}(j) j'}}{v_{\pmb{b}(j) j}} \lbp_j 
                &\stackrel{\text{\cref{eq:def-tilde-p}}}{=} \frac{v_{\pmb{b}(j) j'}}{v_{\pmb{b}(j) j}} \kappa \max_i \frac{v_{i j}}{\norm{v_i}_\infty} \nonumber \\ 
                &\stackrel{\text{\cref{eq:def-buyer}}}{=} \frac{v_{\pmb{b}(j) j'}}{v_{\pmb{b}(j) j}} \kappa \frac{v_{\pmb{b}(j) j}}{\norm{v_{\pmb{b}(j)}}_\infty} \nonumber \\ 
                &\;\;\, = \kappa \frac{v_{\pmb{b}(j) j'}}{\norm{v_{\pmb{b}(j)}}_\infty}
                \leq \kappa \max_i \frac{v_{i j'}}{\norm{v_i}_\infty} \stackrel{\text{\cref{eq:def-tilde-p}}}{=} \lbp_{j'}. 
                \label{eq:low-price-rule}
            \end{align}
            \end{proof}

        \begin{fact} 
            Let $t_j \geq 0$ be any time step. 
            For any item $j$, if $p^t_j \leq \tilde{p}_j$ for $t_j \leq t \leq t_j + 2m - 1$, 
            and $p^{t_j}_{j'} \geq \tilde{p}_{j'} - 2m\eta$ for all $j' \neq j$. 
            Then, 
            \begin{equation}
                \sum_{t=t_j}^{t_j+2m - 1} b_{\pmb{b}(j) j'}^t \leq 2 B_{\pmb{b}(j)} \quad \text{for all } j' \neq j.
            \end{equation}
            \label{fact:others-wont-get-too-much} 
        \end{fact}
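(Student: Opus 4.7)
The plan is to fix an arbitrary $j' \neq j$ and bound the cumulative spending $T_{j'} := \sum_{t=t_j}^{t_j+2m-1} b^t_{\pmb{b}(j), j'}$ via a ``last-spending-time'' argument that uses Fact 1 as a price ceiling. If buyer $\pmb{b}(j)$ never spends on $j'$ during the window, the claim is immediate, so assume otherwise and let $t^* \in [t_j, t_j+2m-1]$ denote the \emph{last} iteration at which $b^{t^*}_{\pmb{b}(j), j'} > 0$. Since $j'$ lies in buyer $\pmb{b}(j)$'s demand set at $t^*$ and $p^{t^*}_j \leq \tilde{p}_j$ by hypothesis, Fact 1 yields the crucial ceiling $p^{t^*}_{j'} \leq \tilde{p}_{j'}$.

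Next, I would establish the uniform lower bound $z^t_{j'} \geq b^t_{\pmb{b}(j), j'}/\tilde{p}_{j'} - 1$ for every $t \in [t_j, t^*-1]$. When $b^t_{\pmb{b}(j), j'} > 0$, Fact 1 gives $p^t_{j'} \leq \tilde{p}_{j'}$, so buyer $\pmb{b}(j)$'s individual demand satisfies $x^t_{\pmb{b}(j),j'} = b^t_{\pmb{b}(j), j'}/p^t_{j'} \geq b^t_{\pmb{b}(j), j'}/\tilde{p}_{j'}$, and aggregating gives the bound; when $b^t_{\pmb{b}(j), j'} = 0$, the bound reduces to the trivial $z^t_{j'} \geq -1$ that follows from unit supply. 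Having a single formula valid in both regimes is what makes the subsequent telescoping clean.

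Telescoping the price update from $t_j$ to $t^*$ and writing $T'_{j'} := \sum_{t=t_j}^{t^*-1} b^t_{\pmb{b}(j), j'}$, I obtain
\[
p^{t^*}_{j'} - p^{t_j}_{j'} = \eta \sum_{t=t_j}^{t^*-1} z^t_{j'} \geq \frac{\eta\, T'_{j'}}{\tilde{p}_{j'}} - (t^* - t_j)\eta.
\]
Plugging in $p^{t^*}_{j'} \leq \tilde{p}_{j'}$, $p^{t_j}_{j'} \geq \tilde{p}_{j'} - 2m\eta$, and $t^* - t_j \leq 2m - 1$ and rearranging yields $T'_{j'} \leq (4m-1)\tilde{p}_{j'}$. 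Because $\tilde{p}_{j'} \leq B_{\min}/(4m) \leq B_{\pmb{b}(j)}/(4m)$ (using $\max_i v_{ij'}/\|v_i\|_\infty \leq 1$ and $B_{\pmb{b}(j)} \geq B_{\min}$), this forces $T'_{j'} \leq B_{\pmb{b}(j)}$. Adding the single-iteration budget bound $b^{t^*}_{\pmb{b}(j), j'} \leq B_{\pmb{b}(j)}$ then gives $T_{j'} \leq 2B_{\pmb{b}(j)}$, as required.

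The main subtlety is conceptual: choosing $t^*$ to be the \emph{last} (not first or arbitrary) spending iteration is what makes the ceiling $p^{t^*}_{j'} \leq \tilde{p}_{j'}$ strong enough, because beyond $t^*$ no further spending on $j'$ occurs, so capping spending through $t^*-1$ suffices. Combined with the $-2m\eta$ slack in the initial price and the window length $2m$, the ceiling can absorb at most $(4m-1)\tilde{p}_{j'}$ units of ``demand pressure'' from $\pmb{b}(j)$ on $j'$ before the price would have to exceed $\tilde{p}_{j'}$, and the definition $\tilde{p}_{j'} \leq B_{\min}/(4m)$ is calibrated precisely so that this threshold matches $B_{\pmb{b}(j)}$. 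Everything else reduces to the two inequalities from Fact 1 and a one-line telescoping of the price update.
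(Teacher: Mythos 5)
Your proof is correct and uses the same essential ingredients as the paper's: the price ceiling from Fact~\ref{fact:j-prime} at spending iterations, telescoping the price update over the window, and the calibration $\tilde{p}_{j'} \leq B_{\min}/(4m) \leq B_{\pmb{b}(j)}/(4m)$. The only difference is presentational: the paper argues by contradiction, fixing $\hat{t}$ as the first iteration at which cumulative spending would exceed $2B_{\pmb{b}(j)}$ and showing the price would then have risen above $\tilde{p}_{j'}$, whereas you argue directly by taking $t^*$ to be the last spending iteration (which plays exactly the same role in guaranteeing the ceiling holds at the telescoping endpoint) and then adding the per-iteration budget bound. One tiny nit: the telescoping line should be an inequality ($p^{t^*}_{j'} - p^{t_j}_{j'} \geq \eta \sum z^t_{j'}$) rather than an equality, to account for the nonnegativity projection; this is in the favorable direction so the argument is unaffected.
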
 
            \begin{proof}
                We prove this fact by contradiction. 
                Suppose that, at time step $\hat{t} \leq t_j + 2m - 1$, 
                item $j'$ receives more than $2 B_{\pmb{b}(j)}$ spend from buyer $\pmb{b}(j)$ cumulatively for the first time. 
                Then we have $\sum_{t=t_j}^{\hat{t}-1} b_{\pmb{b}(j) j'}^t \leq 2 B_{\pmb{b}(j)} < \sum_{t=t_j}^{\hat{t}} b_{\pmb{b}(j) j'}^t$. 
                Since item $j'$ receives a spend of at most $B_{\pmb{b}(j)}$ from buyer $\pmb{b}(j)$ per iteration, 
                we have $\sum_{t=t_j}^{\hat{t} - 1} b_{\pmb{b}(j) j'}^t > B_{\pmb{b}(j)}$. 
                By~\cref{fact:j-prime} and $p^t_j \leq \tilde{p}_j$ for $t_j \leq t \leq t_j + 2m - 1$, 
                we know $p^t_{j'} \leq \lbp_{j'}$ for $t_j \leq t \leq t_j + 2m - 1$ such that $b^t_{\pmb{b}(j) j'} > 0$. 
                Thus, 
                $\sum_{t=t_j}^{\hat{t}-1} \frac{b^t_{\pmb{b}(j) j'}}{p^t_{j'}} 
                \geq \frac{1}{{\tilde{p}_{j'}}} \sum_{t=t_j}^{\hat{t}-1} b^t_{\pmb{b}(j) j'} 
                \geq \frac{4m}{B_{\pmb{b}(j)}} \sum_{t=t_j}^{\hat{t}-1} b^t_{\pmb{b}(j) j'} > 4m$,
                because $\lbp_{j'} \leq \frac{B_{\min}}{4m} \leq \frac{B_{\pmb{b}(j)}}{4m}$. 
                This leads to 
                \begin{equation}
                    p^{\hat{t}}_{j'} \geq p^{t_j}_{j'} + \eta \sum_{t = t_j}^{\hat{t} - 1} \left( \frac{b^t_{\pmb{b}(j) j'}}{p^t_{j'}} - 1 \right) 
                                    > p^{t_j}_{j'} + 4m\eta - (2m - 1)\eta 
                                    = p^{t_j}_{j'} + (2m + 1) \eta
                                    \geq \lbp_{j'} + \eta. 
                \end{equation} 
                This contradicts $p^{\hat{t}}_{j'} \leq \lbp_{j'}$, 
                which follows from~\cref{fact:j-prime} and $b^{\hat{t}}_{\pmb{b}(j) j'} > 0$. 
            \end{proof}
    
        Now we are ready to prove \cref{lem:lower-bounds-prices-upper-bounds-demands-upper-bounds-prices}.       
        We assume there exists a $T = \min\big\{ t \,\big\vert\, p^t_j < \tilde{p}_j - 2m\eta \text{ for some } j \big\} < \infty$, 
        otherwise we are done. In words, $T$ is the first time that some item $j$'s price goes below $\tilde{p}_j - 2m\eta$.
        For each item $j$ such that $p^T_j < \tilde{p}_j - 2m\eta$, 
        we let $t_j < T$ be the last time that its price went from above $\tilde{p}_j$ to below $\tilde{p}_j$. 
        It follows that $p^t_j \leq \tilde{p}_j$ for $t_j \leq t \leq T$.
        Since the price decreases by at most $\eta$ per iteration,
        $T \geq t_j + 2m$, and thus $p^t_j \leq \tilde{p}_j$ for $t_j \leq t \leq t_j + 2m$. 
        Note that, by definition of $T$, $p^{t_j}_{j'} \geq \tilde{p}_{j'} - 2m\eta$ for all $j' \neq j$.
        
        By~\cref{fact:others-wont-get-too-much} and $p^t_j \leq \tilde{p}_j$ for $t_j \leq t \leq t_j + 2m$, 
        $\sum_{j' \neq j} \sum_{t = t_j}^{t_j + 2m - 1} b^t_{\pmb{b}(j) j'} \leq 2(m-1) B_{\pmb{b}(j)}$. 
        Since from $t_j$ to $t_j + 2m - 1$, 
        the total spend from buyer $\pmb{b}(j)$ equals $2mB_{\pmb{b}(j)}$, 
        $\sum_{t = t_j}^{t_j + 2m - 1} b^t_{\pmb{b}(j) j} \geq 2 B_{\pmb{b}(j)}$. 
        Thereby, we have  
        \begin{align}
            p^{t_j + 2m}_j \geq p^{t_j}_j + \eta \sum_{t = t_j}^{t_j + 2m - 1} \Big( \frac{b^t_{\pmb{b}(j) j}}{p^t_j} - 1 \Big) 
                        &\geq p^{t_j}_j + \eta \sum_{t = t_j}^{t_j + 2m - 1} \frac{b^t_{\pmb{b}(j) j}}{\tilde{p}_j} - 2m \eta \nonumber \\ 
                        &\geq p^{t_j}_j + 8m \eta - 2m \eta \nonumber \\ 
                        &\geq \lbp_j + (6m-1)\eta, 
            \label{eq:conclusion-1}
        \end{align}
        where the first inequality follows because total demand across buyers is lower bounded by demand from buyer $\pmb{b}(j)$, 
        the second inequality follows from $p^t_j \leq \tilde{p}_j$ for $t_j \leq t \leq t_j + 2m$, 
        the third inequality follows from $\tilde{p}_j \leq \frac{B_{\pmb{b}(j)}}{4m}$ and $\sum_{t = t_j}^{t_j + 2m - 1} b^t_{\pmb{b}(j) j} \geq 2 B_{\pmb{b}(j)}$, 
        and the last inequality follows from $p_j^{t_j} \geq \lbp_j - \eta$ since the price decreases by at most $\eta$ per iteration. 
        
        \cref{eq:conclusion-1} contradicts $p^{t_j + 2m}_j \leq \tilde{p}_j$ and thus $T < \infty$ does not exist.
        By contradiction, we can conclude $p^t_j \geq \tilde{p}_j - 2m\eta \text{ for all } j$. 
        This completes the proof of~\cref{lem:lower-bounds-prices-upper-bounds-demands-upper-bounds-prices} $(i)$.    
        \item[(ii)]
        A direct consequence of \cref{lem:lower-bounds-prices-upper-bounds-demands-upper-bounds-prices} $(i)$ is that the demands are upper bounded, due to the lower bound on prices.
        Recall that $\sum_{i = 1}^n B_i = B$.
        At any time step $t$, 
        let $z(p^t) \in \RR^m$ be the excess demand vector, 
        then we have 
        \begin{equation}
            \En{z(p^t)} \leq \En{z(p^t)}_1 
                        = \sum_{j = 1}^m \lvert z_j(p^t) \rvert 
                        \leq \sum_{j = 1}^m \Big( \frac{\sum_{i = 1}^n b^t_{ij}}{p^t_j} + 1 \Big) 
                        \leq \frac{B}{\min_j \tilde{p}_j - 2m\eta} + m, 
        \end{equation}
        where the last inequality follows 
        by $p^t_j \geq \min_j \tilde{p}_j - 2m\eta$ for all $j \in [m]$ 
        and $\sum_j\sum_i b^t_{ij} = B$.
        \item[(iii)]
        Since the demands are upper bounded, it follows that the prices are also upper bounded.
        
        If there exists some $t$ at which $p^t_j > B$, then the price of item $j$ has to decrease at the $(t+1)$-th step because 
        $p^{t + 1}_j = p^t_j + \eta \left( x_j - 1 \right) 
        = p^t_j + \eta \Big( \frac{\sum_i b^t_{ij}}{p^t_j} - 1 \Big) 
        \leq p^t_j + \eta \Big( \frac{B}{p^t_j} - 1 \Big) < p^t_j$. 
    
        Then, we consider the largest possible value $p^t_j$ can attain, approaching from below $B$. 
        For any $p^{t-1}_j \leq B$, 
        then 
        $p^t_j = p^{t-1}_j + \eta \left( x_j - 1 \right) 
        \leq p^{t-1}_j + \eta \frac{\sum_i b^t_{ij}}{p^{t-1}_j}
        \leq B + \frac{\eta B}{p^{t-1}_j} 
        \leq \left( 1 + \frac{\eta}{\tilde{p}_j - 2m\eta} \right) B$. 
        Hence, \cref{lem:lower-bounds-prices-upper-bounds-demands-upper-bounds-prices} $(iii)$ follows.
    \end{itemize}
\end{proof}

\begin{customlem}{2}
    There is an instance for which $p^t = \lbp - (m-1)\eta$ for some item at some time step $t$. 
\end{customlem}
\begin{proof}
    Assume that there is one buyer and $m$ items. 
    This buyer has a unit budget and receives the same utility for each item. 
    Thus, $\lbp_j = \frac{1}{4m}$ for all $j$. 
    Let $p^0 = \lbp$. 
    Assume that at each iteration, the buyer always selects only one item out of all MBB items and spends her entire budget on that item. 
    Then, the price of every item, except the demanded item, will have a price decrease of $\eta$ at each iteration.
    Obviously, the last item to be demanded will have a price of $\lbp_j - (m-1)\eta$ at the $m$-th iteration.
\end{proof}

\paragraph{Strong convexity of $h$.}
\label{app:sec:strong-convexity}
Here, we compute the Hessian matrix of $h(p)$ and then determine the strong convexity modulus of $h(p)$.
First, note that 
\begin{align*}
    h(p) = \sum_j p_j - \sum_j \sum_i p^*_j x^*_{ij} \log{\left( \frac{p_j}{v_{ij}} \right)} 
    &= \sum_j p_j - \sum_j p^*_j \Big( \sum_i x^*_{ij} \log{p_j} - \sum_i x^*_{ij} \log{v_{ij}} \Big) \\
    &= \sum_j p_j - \sum_j p^*_j \Big( \log{p_j}  - \sum_i x^*_{ij} \log{v_{ij}} \Big) \\
    &= \sum_j (p_j - p^*_j \log{p_j})  - \sum_i \sum_j p^*_j x^*_{ij} \log{v_{ij}}.
\end{align*}
It follows that the Hessian matrix 
\begin{equation}
    \nabla^2 h(p) = 
    \begin{bmatrix}
        \frac{p^*_1}{p_1^2} & 0 & \cdots & 0 \\ 
        0 & \frac{p^*_2}{p_2^2} & \cdots & 0 \\ 
        \vdots & \vdots & \ddots & \vdots \\
        0 & 0 & \cdots & \frac{p^*_m}{p_m^2} 
    \end{bmatrix}
    \succeq \min_j \frac{p^*_j}{\bar{p}_j^2} \; \textbf{I}, 
\end{equation}
where $\textbf{I}$ denotes the $m$-dimensional identity matrix.
Equivalently, $h(p)$ is strongly convex with modulus $\mu = \min_j \frac{p^*_j}{\bar{p}_j^2}$.
The strong convexity modulus of $h^q(p)$ can be computed in the same way.

\paragraph{Proofs of~\cref{thm:linear-convergence} and~\cref{crl:time-complexity}.}

\begin{customthm}{1}
    Assume that we adjust prices in the linear Fisher market with~\cref{eq:general-linear-ttm}, 
    starting from an initial price vector $p^0 \geq \tilde{p}$, 
    with any stepsize $\eta < \frac{1}{2m} \min_j \tilde{p}_j$, where $\tilde{p}_j$ is defined          
    in~\cref{lem:lower-bounds-prices-upper-bounds-demands-upper-bounds-prices}. 
    Then, we have 
    \begin{equation*}
        \En{p^t - p^*}^2 
        \leq (1 - 2\eta \alpha)^t \En{p^0 - p^*}^2 + e, \quad \text{for all } t \geq 0,
    \end{equation*}
    where $\alpha$ is defined in~\cref{lem:qg} 
    and 
    \begin{equation*}
        e = \frac{\eta G^2}{2\alpha} = \frac{\eta}{2\alpha} \Big( \frac{B}{\min_j \tilde{p}_j - 2m\eta} + m \Big)^2.
    \end{equation*}
\end{customthm} 

\begin{proof}
    Let $g^t \in \partial \varphi(p^t)$. 
    Since $\partial \varphi(p^t)$ is equal to the excess supply $- z^t$, 
    by~\cref{lem:lower-bounds-prices-upper-bounds-demands-upper-bounds-prices} $(ii)$,
    we have $\En{g^t} = \En{z^t} \leq G := \frac{B}{\min_j \tilde{p}_j - 2m\eta} + m$.
    First we derive a descent inequality: 
    \begin{align}
        \En{p^{t+1} - p^*}^2 
        &= \En{\Pi_{\RR^m_{\geq 0}}\left( p^t - \eta g^t \right) - p^*}^2 \nonumber \\ 
        &\leq \En{p^t - \eta g^t - p^*}^2 \tag{non-expansiveness} \nonumber \\ 
        &= \En{p^t - p^*}^2 - 2\eta\inp{g^t}{p^t - p^*} + \eta^2 \En{g^t}^2 \nonumber \\ 
        &\leq \En{p^t - p^*}^2 - 2\eta\left( \varphi(p^t) - \varphi(p^*) \right) + \eta^2 G^2 \tag{convexity and $G$-bounded subgradients} \nonumber \\ 
        &\leq \En{p^t - p^*}^2 - 2\eta\alpha \En{p^t - p^*}^2 + \eta^2 G^2 \tag{\cref{lem:qg}} \nonumber \\ 
        &= (1 - 2\eta \alpha) \En{p^t - p^*}^2 + \eta^2 G^2. 
    \end{align} 
    Equivalently, we have 
    \begin{equation}
        \En{p^{t+1} - p^*}^2 - \frac{\eta G^2}{2\alpha} \leq (1 - 2\eta \alpha) \left( \En{p^t - p^*}^2 - \frac{\eta G^2}{2\alpha} \right). 
    \end{equation} 
    By induction, we obtain that $\En{p^t - p^*}^2 - e \leq (1 - 2\eta \alpha)^t \left( \En{p^0 - p^*}^2 - e \right) \leq (1 - 2\eta \alpha)^t \En{p^0 - p^*}^2$ where $e = \frac{\eta G^2}{2\alpha}$.
\end{proof} 

\begin{customcrl}{1}
    For a given $\epsilon > 0$, 
    let $\eta \leq 
        \min\left\{ \frac{\min_j \tilde{p}_j}{4m}, 
        \frac{2 \epsilon \min_j p^*_j}{9 B^2 (\frac{2B}{\min_j \tilde{p}_j} + m)^2} \right\}$.
    Then, starting from any $p^0 \geq \tilde{p}$ such that $\sum_{j = 1}^m p^0_j = B$, 
    {\tatonnement} with stepsize $\eta$ generates a price vector $p$ 
    such that $\En{p - p^*}^2 \leq \epsilon$ 
    in $\mathcal{O}(\frac{1}{\epsilon}\log{\frac{1}{\epsilon}})$ iterations.
    If $\epsilon \geq \frac{9 B^2 \min_j \tilde{p}_j}{8m \min_j p^*_j} (\frac{2B}{\min_j \tilde{p}_j} + m)^2$, 
    the time complexity is on the order of $\mathcal{O}(\log{\frac{1}{\epsilon}})$.
\end{customcrl}

\begin{proof}
    Since $\eta \leq \frac{\min_{j'} \tilde{p}_{j'}}{4m} \leq \frac{\tilde{p}_j}{4m}$ for all $j \in [m]$, 
    we have $\frac{\eta}{\tilde{p}_j - 2m\eta} \leq \frac{1}{2m}$ and thus $\alpha \geq \frac{2 \min_j p^*_j}{9 B^2}$.
    It follows that 
    \begin{equation}
        e = \frac{\eta}{2\alpha} \left( \frac{B}{\min_j \tilde{p}_j - 2m\eta} + m \right)^2 
        \leq \frac{9 \eta B^2}{4 \min_j p^*_j} \left( \frac{2B}{\min_j \tilde{p}_j} + m \right)^2
        \leq \frac{\epsilon}{2}, 
        \label{eq:upper-bound-epsilon}
    \end{equation}
    where the inequality follows by $\eta \leq \frac{\min_j \tilde{p}_j}{4m}$ 
    and the last equality follows by $\eta \leq \frac{2 \epsilon \min_j p^*_j}{9 B^2 (\frac{2B}{\min_j \tilde{p}_j} + m)^2}$.
    After $t = \frac{1}{2\eta\alpha} \log{\frac{8 B^2}{\epsilon}}$ iterations, we have 
    \begin{equation}
        t \log{(1 - 2\eta\alpha)} + \log{\En{p^0 - p^*}^2} 
        \leq -2\eta\alpha\cdot t + \log{4B^2} 
        = - \log{\frac{8 B^2}{\epsilon}} + \log{4B^2} 
        = \log{\frac{\epsilon}{2}}, 
    \end{equation}
    where the inequality follows because $\log(1 + x) \leq x, \;\forall\, x > -1$ and $\En{p^0 - p^*} \leq 2B$.
    Thus, by~\cref{thm:linear-convergence} $\En{p^t - p^*}^2 \leq (1 - 2\eta \alpha)^t \En{p^0 - p^*}^2 + e \leq \frac{\epsilon}{2} + \frac{\epsilon}{2} = \epsilon$. 
    
    Since $\alpha \geq \frac{2 \min_j p^*_j}{9 B^2}$, $t \leq \frac{9 B^2}{4 \eta \min_j p^*_j} \log(\frac{8 B^2}{\epsilon})$. 
    If $\epsilon$ is small such that $\frac{2 \epsilon \min_j p^*_j}{9 B^2 (\frac{2B}{\min_j \tilde{p}_j} + m)^2} \leq \frac{\min_j \tilde{p}_j}{4m}$, 
    $\eta$ is of the order $\mathcal{O}(\epsilon)$, hence 
    the time complexity is of the order $\mathcal{O}(\frac{1}{\epsilon}\log{\frac{1}{\epsilon}})$.
    Otherwise, $\eta$ is independent of $\epsilon$, hence the time complexity is of the order $\mathcal{O}(\log{\frac{1}{\epsilon}})$.
\end{proof}

\section{More Detailed Discussion on the LFM}
\label{app:sec:more-detailed-discussion}

\paragraph{Approximate equilibrium implied by~\cref{thm:linear-convergence}.}

We say a Fisher market reaches an $\epsilon$-approximate equilibrium if 
allocations $x_i\in \RR^m_{\geq 0} \; \forall \; i \in [n]$ and prices $p \in \RR^m_{\geq 0}$ satisfy 
\begin{itemize}
    \item Approximate buyer optimality: $\inp{v_i}{x_i} \geq (1 - \epsilon) \inp{v_i}{y_i}$ for all $y_i$ such that $\inp{p}{y_i} \leq (1 - \epsilon) B_i$, and $\inp{p}{x_i} \leq B_i$ for all $i \in [n]$,
    \item Approximate market clearance: $1 - \epsilon \leq \sum_{i=1}^n x_{ij} \leq 1$ for all $j \in [m]$.
\end{itemize}

We note that this last-iterate convergence in $p$ implies an approximate equilibrium. 
\begin{theorem}
    For any $p$ such that $\En{p - p^*} \leq \epsilon \min_j p^*_j$, there exists an allocation $x$ such that $(p, x)$ constitute an $\epsilon$-approximate equilibrium. 
\end{theorem}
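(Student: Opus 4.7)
The plan is to construct the allocation by scaling down the exact equilibrium allocation $x^*$ associated with $p^*$. First I would convert the Euclidean bound on $\En{p - p^*}$ into a componentwise multiplicative bound: since $|p_j - p^*_j| \leq \En{p - p^*} \leq \epsilon \min_k p^*_k \leq \epsilon p^*_j$, we obtain $(1-\epsilon) p^*_j \leq p_j \leq (1+\epsilon) p^*_j$ for every $j \in [m]$. This is the only place the hypothesis on $p$ enters; everything else is driven by the exact equilibrium properties of the pair $(p^*, x^*)$.

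Next, define $x_i := \frac{1}{1+\epsilon}\, x^*_i$ for each buyer $i$. Market clearance is immediate: for each item $j$, $\sum_i x_{ij} = \frac{1}{1+\epsilon} \sum_i x^*_{ij} = \frac{1}{1+\epsilon} \leq 1$, and $\frac{1}{1+\epsilon} \geq 1-\epsilon$ since $(1-\epsilon)(1+\epsilon) = 1 - \epsilon^2 \leq 1$. Budget feasibility under $p$ follows from the componentwise upper bound:
\begin{equation*}
\inp{p}{x_i} \;=\; \tfrac{1}{1+\epsilon}\inp{p}{x^*_i} \;\leq\; \tfrac{1+\epsilon}{1+\epsilon}\inp{p^*}{x^*_i} \;=\; B_i,
\end{equation*}
where we used $\inp{p^*}{x^*_i} = B_i$ at the exact equilibrium.

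For approximate buyer optimality, take any $y_i \geq 0$ with $\inp{p}{y_i} \leq (1-\epsilon) B_i$. Using the componentwise lower bound $p_j \geq (1-\epsilon) p^*_j$, we get $\inp{p^*}{y_i} \leq \tfrac{1}{1-\epsilon}\inp{p}{y_i} \leq B_i$, so $y_i$ lies in the budget set of buyer $i$ at the exact prices $p^*$. Optimality of $x^*_i$ at $p^*$ then gives $\inp{v_i}{x^*_i} \geq \inp{v_i}{y_i}$, and therefore $\inp{v_i}{x_i} = \tfrac{1}{1+\epsilon}\inp{v_i}{x^*_i} \geq \tfrac{1}{1+\epsilon}\inp{v_i}{y_i} \geq (1-\epsilon)\inp{v_i}{y_i}$, which is precisely the approximate-optimality condition.

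The argument is elementary, so there is no substantial obstacle; the only things to verify carefully are that the Euclidean-to-componentwise conversion uses $\min_k p^*_k \leq p^*_j$ (which requires strict positivity of equilibrium prices, a standard consequence of the no-zero-column assumption on $v$), and that the elementary inequality $\tfrac{1}{1+\epsilon} \geq 1-\epsilon$, invoked twice above, holds for all $\epsilon \geq 0$. Notably, this argument is independent of the QG modulus and the step-size conditions that drive \cref{thm:linear-convergence}; it merely translates closeness of prices to the existence of an approximate equilibrium pair.
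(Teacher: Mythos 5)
Your proof is correct and follows essentially the same strategy as the paper's: convert the Euclidean bound to the componentwise multiplicative bound $(1-\epsilon)p^*_j \leq p_j \leq (1+\epsilon)p^*_j$, construct the allocation by scaling down the exact equilibrium allocation $x^*$, and verify the three approximate-equilibrium conditions using exact optimality of $x^*$ at $p^*$. The only difference is cosmetic: the paper scales item-by-item, setting $x_{ij} = x^*_{ij}$ when $p_j \leq p^*_j$ and $x_{ij} = (1-\epsilon)x^*_{ij}$ otherwise, whereas you apply the uniform factor $\tfrac{1}{1+\epsilon}$; both yield the same three bounds via the elementary inequality $(1-\epsilon)(1+\epsilon) \leq 1$, so neither construction buys anything substantive over the other.
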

\begin{proof}
    $\En{p - p^*} \leq \epsilon \min_j p^*_j$ implies $(1 - \epsilon) p^*_j \leq p_j \leq (1 + \epsilon) p^*_j$ for all $j$. 
    Let $x^*$ be any equilibrium allocation associated with $p^*$. 
    For any $i$, let $x_{ij} = x^*_{ij}$ if $p_j \leq p^*_j$ and $x_{ij} = (1 - \epsilon) x^*_{ij}$ otherwise.
    Then, one can check: 
    \begin{itemize}
        \item[i.] $\inp{v_i}{x_i} \geq (1 - \epsilon) \inp{v_i}{x^*_i} \geq (1 - \epsilon) \inp{v_i}{y_i}$ for all $y_i$ such that $\inp{p}{y_i} \leq (1 - \epsilon) B_i$; 
        \item[ii.] $\sum_j p_j x_{ij} \leq \sum_{j: p_j \leq p^*_j} p^*_j x^*_{ij} + \sum_{j: p_j > p^*_j} (1 + \epsilon) p^*_j \cdot (1 - \epsilon) x^*_{ij} \leq \sum_j p^*_j x^*_{ij} = B_i$; 
        \item[iii.] $1 - \epsilon \leq \sum_i (1 - \epsilon) x^*_{ij} \leq \sum_i x_{ij} \leq \sum_i x^*_{ij} = 1$.
    \end{itemize}
    This shows that $(p, x)$ constitutes an $\epsilon$-approximate equilibrium.
\end{proof}



\paragraph{Nearly-optimal last-iterate convergence rate without regularity conditions.} 
\citet{zamani2023exact} showed a nearly-optimal last-iterate convergence rate without regularity conditions. 
Assuming that the function $\varphi(p)$ has $G$-bounded subgradients (i.e. any subgradient $g$ satisfies $\En{g} \leq G$) 
and $\En{p^0 - p^*} \leq R$, 
they show that a sequence of iterates $\{ p^t \}_{t = 0}^T$ generated by the (projected) subgradient method with a constant stepsize $\frac{1}{\sqrt{T}}$ satisfies 
\begin{equation} 
    \varphi(p^T) - \varphi(p^*) \leq \frac{GR}{\sqrt{T}} \left( \frac{5}{4} + \frac{1}{4}\log{T} \right). 
    \label{eq:zamani-glineur-result}
\end{equation} 
Note that smaller constant stepsizes than $\frac{1}{\sqrt{T}}$ do not guarantee the above convergence.
As discussed before, for any $p^0 \in \RR^m_{\geq 0}$ 
such that $\sum_{j = 1}^m p^0_j = B$, we have $\En{p^0 - p^*} \leq \En{p^0 - p^*}_1 \leq \En{p^0}_1 + \En{p^*}_1 = 2B$.
Also, by~\cref{lem:lower-bounds-prices-upper-bounds-demands-upper-bounds-prices} $(ii)$, 
we have $\En{g^t} \leq \frac{B}{\min_j \tilde{p}_j - 2m\eta} + m \leq \frac{2 B}{\min_j \tilde{p}_j} + m$ for any $g^t \in \partial \varphi(p^t)$ for all $t \geq 0$ 
if $\eta \leq \frac{\min_j \tilde{p}_j}{4m}$. 
Let $G = \frac{2 B}{\min_j \tilde{p}_j} + m$ and $R = 2B$. 
By combining our results with the results in~\citet{zamani2023exact}, 
we obtain that after 
$T = \frac{G^2 R^2}{\epsilon^2}\log^2\left(\frac{G R}{\epsilon}\right)$
iterations, 
{\tatonnement} 
with stepsize 
$\eta = \frac{1}{\sqrt{T}} = \frac{\epsilon}{G R \log(\frac{G R}{\epsilon})}$ 
generates a price vector $p$ satisfying 
\begin{equation}
    \varphi(p) - \varphi(p^*) \leq \epsilon,
    \label{eq:rate-from-zamani-glineur-result}
\end{equation} 
if $\epsilon \leq G R \min\{ e^{-4}, \frac{\min_j \tilde{p}_j}{m} \}$.

Using the subgradient inequality and our subgradient bound, it is also possible to use \cref{thm:linear-convergence} to directly derive the duality gap convergence rate obtained in \cref{eq:rate-from-zamani-glineur-result}. 
Conversely, one can apply the quadratic growth condition and the rate in \cref{eq:rate-from-zamani-glineur-result} to get a rate of convergence on $\En{p^t - p^*}^2$, however, this yields a rate of $\mathcal{\tilde O}(\frac{1}{\sqrt{T}})$, which is worse than the $\mathcal{\tilde O}(\frac{1}{T})$ rate in \cref{thm:linear-convergence}. Moreover, it does not allow one to obtain the linear rate of convergence to a neighborhood of the equilibrium, as in \cref{thm:linear-convergence}. 

We show a proof for the above time complexity result.
\begin{proof} 
    Since $\epsilon \leq G R \min\{ e^{-4}, \frac{\min_j \tilde{p}_j}{m} \}$, 
    we have $\log(\frac{G R}{\epsilon}) \geq 4$ and $\frac{\epsilon}{G R} \leq \frac{\min_j \tilde{p}_j}{m}$. 
    Thus, 
    \begin{equation}
        \eta = \frac{1}{\sqrt{T}} = \frac{\epsilon}{G R \log(\frac{G R}{\epsilon})} \leq \frac{\min_j \tilde{p}_j}{4 m}. 
    \end{equation}
    By~\cref{lem:lower-bounds-prices-upper-bounds-demands-upper-bounds-prices} $(ii)$, 
    we have $\En{g^t} \leq G$ for any $g^t \in \partial \varphi(p^t)$ for all $t \geq 0$. 
    By~\cref{eq:zamani-glineur-result}, we have that after $T$ iterations, 
    \begin{align}
        \varphi(p^T) - \varphi(p^*) \leq \frac{G R}{\sqrt{T}} \left( \frac{5}{4} + \frac{1}{4} \log{T} \right) &\leq \epsilon \frac{\frac{5}{4} + \frac{1}{2} \log(\frac{G R}{\epsilon}) + \frac{1}{2} \log{\log(\frac{G R}{\epsilon})}}{\log(\frac{G R}{\epsilon})} \nonumber \\ 
        &\leq \epsilon \left( \frac{5}{16} + \frac{1}{2} + \frac{1}{2 e} \right) \leq \epsilon, 
    \end{align}
    where the third inequality follows by $\log(\frac{G R}{\epsilon}) \geq 4$ and $\frac{\log{x}}{x} \leq \frac{1}{e}$ for all $x > 0$.
\end{proof}

\paragraph{Discussion on constant versus adaptive stepsize settings.} 

In this paper, we focus on the constant stepsize setting because it aligns better with the natural economic dynamics. 
From the view of economics, the stepsize in tâtonnement corresponds to a ``speed of adjustment" parameter, 
which is a simple characteristic of the market and commonly seen as a constant. 
See, for example,~\citet{bala1992chaotic} (page 5, after Eq. (3.1)) and ~\citet{cole2008fast} (page 4, after Eq. (2)). 
If we allow for adaptive stepsizes, then there are results showing that, under a quadratic growth condition, 
the subgradient descent method will converge to the exact optimal solution 
(e.g., decaying stepsizes in~\citet{johnstone2020faster}, Polyak stepsizes in~\citet{grimmer2019general}, etc.). 
As a result, tâtonnement converges to the exact market equilibrium under such adaptive stepsize schemes.

\section{Proofs in Section~\ref{sec:ql}}
\label{app:sec:quasi-linear-proofs}

First, we show that the {\tatonnement} process in the QLFM generates a sequence of prices with lower and upper bounds, as long as the stepsize is not too large. 
Lower bounds on prices in the QLFM can be obtained almost identically to how we prove~\cref{lem:lower-bounds-prices-upper-bounds-demands-upper-bounds-prices} since QLFM behaves the same as LFM when the price is small. 
Analogous to the LFM case, the upper bounds on the excess demand and prices follow as well.
Recall that, given an item $j$, we use $\pmb{b}(j)$ to denote the buyer with the minimum index such that
\begin{equation*}
    \pmb{b}(j) \in \Argmax_i \frac{v_{i j}}{\|v_i\|_\infty}.
\end{equation*}

\begin{lemma}
    Let $v_{\min} = \min_j v_{\pmb{b}(j) j}$. 
    Let $\hat{p}$ be an $m$-dimensional price vector where 
    \begin{equation}
        \hat{p}_j = \min\left\{ \frac{B_{\min}}{4m}, \frac{v_{\min}}{2} \right\} 
        \max_i \frac{v_{ij}}{\En{v_i}_\infty}, \quad \mbox{ for all } j \in [m].
        \label{eq:def-hat-p-j-ql}
    \end{equation}
    Assume that we adjust prices in the QLFM with~\cref{eq:general-linear-ttm}, 
    starting from any initial price $p^0 \geq \hat{p}$, 
    with any stepsize $\eta < \frac{1}{2m} \min_j \hat{p}_j$. 
    Then, we have for all $t \geq 0$, 
    \begin{itemize}
        \item[(i)] $p^t_j \geq \hat{p}_j - 2m\eta$ for all $j \in [m]$; 
        \item[(ii)] $\En{z(p^t)} \leq \frac{B}{\min_j \hat{p}_j - 2m\eta} + m$; 
        \item[(iii)] $p^t_j \leq \min\{ \max_i v_{i j}, B \} + \frac{\eta B}{\hat{p}_j - 2m\eta}$ for all $j \in [m]$.
    \end{itemize}
    \label{lem:lower-bounds-prices-upper-bounds-demands-upper-bounds-prices-quasi-linear}
\end{lemma}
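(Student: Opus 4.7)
The plan is to mirror the three-part structure of the LFM proof of \cref{lem:lower-bounds-prices-upper-bounds-demands-upper-bounds-prices}, with the key new ingredient being the verification that in the relevant low-price regime, buyer $\pmb{b}(j)$ still chooses to \emph{spend} (rather than hold her budget as money) on item $j$. The clean observation that drives this is that $\hat{p}_j \le \tfrac{v_{\min}}{2}\cdot\tfrac{v_{\pmb{b}(j)j}}{\En{v_{\pmb{b}(j)}}_\infty} \le \tfrac{v_{\pmb{b}(j)j}}{2}$, since $v_{\min} \le v_{\pmb{b}(j)j}$ and $\En{v_{\pmb{b}(j)}}_\infty \ge v_{\pmb{b}(j)j}$. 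Consequently, whenever $p_j$ sits in the band $[\,\hat{p}_j - 2m\eta,\ \hat{p}_j\,]$ (taking $\eta$ small), we have $p_j \le v_{\pmb{b}(j)j}$, so the MBB ratio $v_{\pmb{b}(j)j}/p_j$ exceeds $1$ and buyer $\pmb{b}(j)$'s quasi-linear demand puts strictly positive weight on item $j$ (among her MBB bundle).

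For part (i), I would replicate the contradiction argument of \cref{lem:lower-bounds-prices-upper-bounds-demands-upper-bounds-prices} verbatim, substituting $\hat{p}$ for $\tilde{p}$. Specifically, I would re-prove the two supporting facts: (Fact A) if $p_j \le \hat{p}_j$ and buyer $\pmb{b}(j)$ demands $j' \ne j$, then $p_{j'} \le \hat{p}_{j'}$ --- this is purely a consequence of MBB equalization $p_{j'}/v_{\pmb{b}(j)j'} = p_j/v_{\pmb{b}(j)j}$ and the definition of $\hat{p}$, identical to the LFM argument; and (Fact B) over any $2m$-length window where $p_j$ stays below $\hat{p}_j$, buyer $\pmb{b}(j)$'s cumulative spend on any single $j' \ne j$ is at most $2B_{\pmb{b}(j)}$. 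The proof of Fact B is the LFM proof again, but now requires the new observation that within the window $\pmb{b}(j)$ actually spends (on MBB items, including $j$) rather than holding money; this is precisely where $\hat{p}_j \le v_{\pmb{b}(j)j}/2$ is invoked. The crucial accounting that her total spend per step equals $B_{\pmb{b}(j)}$ then proceeds without change, and contradiction yields $p^t_j \ge \hat{p}_j - 2m\eta$ for all $t$.

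For part (ii), I would use that total spending in QLFM satisfies $\sum_{i,j} b^t_{ij} \le B$ (with inequality only when some buyers hold money). Combined with $d_{ij} = b_{ij}/p_j$ and the lower bound from (i), we get $\En{z(p^t)}_1 \le \sum_j(\sum_i b^t_{ij}/p^t_j + 1) \le B/(\min_j \hat{p}_j - 2m\eta) + m$, giving the $\ell_2$ bound by $\En{\cdot} \le \En{\cdot}_1$.

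For part (iii), this is the piece where QLFM differs most from LFM, since QLFM caps prices at buyer valuations. The plan is: if $p^t_j > \max_i v_{ij}$, then every buyer's MBB ratio on $j$ is strictly below $1$, so no one demands item $j$, the excess supply is $1$, and $p^{t+1}_j < p^t_j$. Similarly, if $p^t_j > B$ then demand on $j$ is at most $B/p^t_j < 1$ and again $p$ decreases. Thus the price only increases from values below $\min\{\max_i v_{ij}, B\}$, and the single-step increase is bounded by $\eta \sum_i d_{ij} \le \eta B/(\hat{p}_j - 2m\eta)$ using (i). The main obstacle across the whole argument is confirming that the spending/accounting in Fact B carries over despite buyers potentially holding money in other time steps; this is handled by noting that only the buyer $\pmb{b}(j)$'s behavior during the critical window matters, and in that window her MBB set contains $j$ with favorable ratio, so she spends the full $B_{\pmb{b}(j)}$ per step.
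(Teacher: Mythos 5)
Your proposal is correct and mirrors the paper's proof essentially step for step: same contradiction argument for part (i) with the two supporting facts carried over from the LFM case, same $\ell_1$-to-$\ell_2$ bound via total spending $\le B$ for part (ii), and the same "price must decrease when above $\min\{\max_i v_{ij}, B\}$" argument for part (iii). The one key new ingredient you correctly identified — that $\hat{p}_j \le v_{\pmb{b}(j)j}/2 < v_{\pmb{b}(j)j}$ forces buyer $\pmb{b}(j)$ to spend her entire budget when $p_j \le \hat{p}_j$ — is exactly what the paper invokes, though strictly speaking it is used in the post-Fact accounting (showing cumulative spend over the $2m$-window equals $2m B_{\pmb{b}(j)}$) rather than inside the proof of Fact B itself; that is a negligible misattribution.
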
 
\begin{proof}

    \begin{itemize}
        \item[(i)] Recall that, given an item $j$, we use $\pmb{b}(j)$ to denote the buyer with the minimum index such that
        \begin{equation*}
            \pmb{b}(j) \in \Argmax_i \frac{v_{i j}}{\|v_i\|_\infty}.
        \end{equation*}

        As in the proof of~\cref{lem:lower-bounds-prices-upper-bounds-demands-upper-bounds-prices} $(i)$, we can show two facts analogous to~\cref{fact:j-prime,fact:others-wont-get-too-much}.
        Moreover, $\hat{p}_j < v_{\min} \max_i \frac{v_{ij}}{\En{v_i}_\infty} \leq v_{\pmb{b}(j) j}$ guarantees buyer $\pmb{b}(j)$ wants to spend all her money if $p_j \leq \hat{p}_j$.
        Other parts of the proof are analogous to the proof of~\cref{lem:lower-bounds-prices-upper-bounds-demands-upper-bounds-prices} $(i)$.
        We show a complete proof as follows.

        First, analogous to the proof of~\cref{lem:lower-bounds-prices-upper-bounds-demands-upper-bounds-prices}, we can show the following two facts.
        \begin{fact}
            For any two items $j, j' \in [m]$, 
            if $p_j \leq \hat{p}_j$ and item $j'$ is demanded by buyer $j$, 
            then 
            $p_{j'} \leq \hat{p}_{j'}$. 
            \label{fact:j-prime-ql}
        \end{fact}
            \begin{proof}
            We denote $\kappa' = \min\left\{ \frac{B_{\min}}{4m}, v_{\min} \right\} = \min\left\{ \frac{\min_i B_i}{4m}, \min_j v_{\pmb{b}(j) j} \right\}$ for simplicity.
            Since buyer $j$ demands item $j'$, we have $p_{j'} \leq \frac{v_{\pmb{b}(j) j'}}{v_{\pmb{b}(j) j}} p_j$. 
            Then, it follows that 
            \begin{align} 
                p_{j'} \leq \frac{v_{\pmb{b}(j) j'}}{v_{\pmb{b}(j) j}} p_j \leq \frac{v_{\pmb{b}(j) j'}}{v_{\pmb{b}(j) j}} \hat{p}_j 
                &\stackrel{\text{by~\cref{eq:def-hat-p-j-ql}}}{=} \frac{v_{\pmb{b}(j) j'}}{v_{\pmb{b}(j) j}} \kappa' \max_i \frac{v_{i j}}{\En{v_i}_{\infty}} \nonumber \\ 
                &\stackrel{\text{by~\cref{eq:def-buyer}}}{=} \frac{v_{\pmb{b}(j) j'}}{v_{\pmb{b}(j) j}} \kappa' \frac{v_{\pmb{b}(j) j}}{\En{v_{\pmb{b}(j)}}_{\infty}} \nonumber \\ 
                &\;\;\;\,\, = \kappa' \frac{v_{\pmb{b}(j) j'}}{\En{v_{\pmb{b}(j)}}_{\infty}}
                \leq \kappa' \max_i \frac{v_{i j'}}{\En{v_i}_{\infty}} = \hat{p}_{j'}. 
                \label{eq:low-price-rule-ql}
            \end{align}
            \end{proof}

        \begin{fact} 
            Let $t_j \geq 0$ be any time step. 
            For any item $j$, if $p^t_j \leq \hat{p}_j$ for $t_j \leq t \leq t_j + 2m - 1$, 
            and $p^{t_j}_{j'} \geq \hat{p}_{j'} - 2m\eta$ for all $j' \neq j$. 
            Then, 
            \begin{equation}
                \sum_{t=t_j}^{t_j+2m - 1} b_{\pmb{b}(j) j'}^t \leq 2 B_{\pmb{b}(j)} \quad \text{for all } j' \neq j.
            \end{equation}
            \label{fact:others-wont-get-too-much-ql} 
        \end{fact} 
            \begin{proof}
                We prove this fact by contradiction. 
                Suppose that, at time step $\hat{t} \leq t_j + 2m - 1$, 
                item $j'$ receives more than $2 B_{\pmb{b}(j)}$ spend from buyer $\pmb{b}(j)$ cumulatively for the first time. 
                Then we have $\sum_{t=t_j}^{\hat{t}-1} b_{\pmb{b}(j) j'}^t \leq 2 B_{\pmb{b}(j)} < \sum_{t=t_j}^{\hat{t}} b_{\pmb{b}(j) j'}^t$. 
                Since item $j'$ receives a spend of at most $B_{\pmb{b}(j)}$ from buyer $\pmb{b}(j)$ per iteration, 
                we have $\sum_{t=t_j}^{\hat{t} - 1} b_{\pmb{b}(j) j'}^t > B_{\pmb{b}(j)}$. 
                By~\cref{fact:j-prime-ql} and $p^t_j \leq \hat{p}_j$ for $t_j \leq t \leq t_j + 2m - 1$, 
                we know $p^t_{j'} \leq \hat{p}_{j'}$ for $t_j \leq t \leq t_j + 2m - 1$ such that $b^t_{\pmb{b}(j) j'} > 0$. 
                Thus, 
                $\sum_{t=t_j}^{\hat{t}-1} \frac{b^t_{\pmb{b}(j) j'}}{p^t_{j'}} 
                \geq \sum_{t=t_j}^{\hat{t}-1} \frac{b^t_{\pmb{b}(j) j'}}{\hat{p}_{j'}}
                \geq \frac{4m}{B_j} \sum_{t=t_j}^{\hat{t}-1} b^t_{\pmb{b}(j) j'}
                > 4m$, because $\hat{p}_{j'} \leq \frac{B_{\min}}{4 m} \leq \frac{B_{\pmb{b}(j)}}{4 m}$ 
                and $\sum_{t=t_j}^{\hat{t} - 1} b_{\pmb{b}(j) j'}^t > B_{\pmb{b}(j)}$. 
                This leads to 
                \begin{equation}
                    p^{\hat{t}}_{j'} \geq p^{t_j}_{j'} + \eta \sum_{t = t_j}^{\hat{t} - 1} \big( \frac{b^t_{\pmb{b}(j) j'}}{p^t_{j'}} - 1 \big) 
                                    > p^{t_j}_{j'} + 4m\eta - (2m - 1)\eta 
                                    = p^{t_j}_{j'} + (2m + 1) \eta
                                    \geq \hat{p}_{j'} + \eta. 
                \end{equation} 
                This contradicts $p^{\hat{t}}_{j'} \leq \hat{p}_{j'}$, 
                which follows from~\cref{fact:j-prime-ql} and $b^{\hat{t}}_{\pmb{b}(j) j'} > 0$. 
            \end{proof}

        Now we are ready to prove \cref{lem:lower-bounds-prices-upper-bounds-demands-upper-bounds-prices-quasi-linear} $(i)$.
        We suppose that there exists a $T = \min\big\{ t \,\big\vert\, p^t_j < \hat{p}_j - 2m\eta \text{ for some } j \big\} < \infty$, 
        otherwise we are done. 
        In words, $T$ is the first time that some item $j$'s price goes below $\hat{p}_j - 2m\eta$.
        For each item $j$ such that $p^T_j < \hat{p}_j - 2m\eta$, 
        we let $t_j < T$ be the \emph{last} time that its price went from above $\hat{p}_j$ to below $\hat{p}_j$. 
        It follows that $p^t_j \leq \hat{p}_j$ for $t_j \leq t \leq T$.
        Since the price decreases by at most $\eta$ per iteration,
        $T \geq t_j + 2m$, and thus $p^t_j \leq \hat{p}_j$ for $t_j \leq t \leq t_j + 2m$. 
        Note that, by definition of $T$, $p^{t_j}_{j'} \geq \hat{p}_{j'} - 2m\eta$ for all $j' \neq j$.

        By~\cref{fact:others-wont-get-too-much-ql}, 
        since $p^t_j \leq \hat{p}_j$ for $t_j \leq t \leq t_j + 2m$ 
        and $p^{t_j}_{j'} \geq \hat{p}_{j'} - 2m\eta$ for all $j' \neq j$, 
        we have 
        $\sum_{j' \neq j} \sum_{t = t_j}^{t_j + 2m - 1} b^t_{\pmb{b}(j) j'} \leq 2(m-1) B_{\pmb{b}(j)}$.
        Since $p^t_j \leq \hat{p}_j \leq \frac{v_{\min}}{2} < v_{\pmb{b}(j) j}$ from $t_j$ to $t_j + 2m - 1$, 
        the total spend from buyer $\pmb{b}(j)$ equals to $2m B_{\pmb{b}(j)}$. 
        Hence, we have 
        $\sum_{t = t_j}^{t_j + 2m - 1} b^t_{\pmb{b}(j) j} \geq 2 B_{\pmb{b}(j)}$. 
        Thereby, we have  
        \begin{align}
            p^{t_j + 2m}_j \geq p^{t_j}_j + \eta \sum_{t = t_j}^{t_j + 2m - 1} \Big( \frac{b^t_{\pmb{b}(j) j}}{p^t_j} - 1 \Big) 
                           \geq p^{t_j}_j + \eta \sum_{t = t_j}^{t_j + 2m - 1} \frac{b^t_{\pmb{b}(j) j}}{\hat{p}_j} - 2m \eta 
                          \geq p^{t_j}_j + 8m \eta - 2m \eta 
                          &\geq \hat{p}_j + (6m-1)\eta, 
            \label{eq:conclusion-1-ql}
        \end{align}
        where the first inequality follows because total demand across buyers is lower bounded by demand from buyer $\pmb{b}(j)$, 
        the second inequality follows from $p^t_j \leq \hat{p}_j$ for $t_j \leq t \leq t_j + 2m$, 
        the third inequality follows from $\hat{p}_j \leq \frac{B_{\pmb{b}(j)}}{4m}$ and $\sum_{t = t_j}^{t_j + 2m - 1} b^t_{\pmb{b}(j) j} \geq 2 B_{\pmb{b}(j)}$, 
        and the last inequality follows from $p_j^{t_j} \geq \hat{p}_j - \eta$ since the price decreases by at most $\eta$ per iteration. 
        
        Therefore, \cref{eq:conclusion-1-ql} contradicts $p^{t_j + 2m}_j \leq \hat{p}_j$ and thus $T < \infty$ does not exist.
        By contradiction, we can conclude $p^t_j \geq \hat{p}_j - 2m\eta$ for all $j$ and all $t \geq 0$.

        \item[(ii)] 
        Then, the part $(ii)$ follows. 
        \begin{equation}
            \En{z(p^t)} \leq \En{z(p^t)}_1 
                        = \sum_{j = 1}^m \lvert z_j(p^t) \rvert 
                        \leq \sum_{j = 1}^m \Big( \frac{\sum_{i = 1}^n b^t_{ij}}{p^t_j} + 1 \Big) 
                        \leq \frac{B}{\min_j \hat{p}_j - 2m\eta} + m, 
        \end{equation}
        where the last inequality follows 
        because $p^t_j \geq \min_j \hat{p}_j - 2m\eta$ for all $j$ 
        and $\sum_{j = 1}^m \sum_{i = 1}^n b^t_{ij} \leq B$.
        \item[(iii)] 
        If at time step $t$, $p^t_j > \min\{ \max_i v_{i j} , B \}$, then the price of item $j$ has to decrease at the $(t+1)$-th step. This is because: $1)$ if $p^t_j > \max_i v_{i j}$, then no buyer wants to spend money on item $j$ thus the total demand on this time step has to be $0$; $2)$ if $p^t_j > B$, then $p^{t + 1}_j = p^t_j + \eta \left( x_j - 1 \right) 
        = p^t_j + \eta \Big( \frac{\sum_{i = 1}^n b^t_{ij}}{p^t_j} - 1 \Big) 
        \leq p^t_j + \eta \Big( \frac{B}{p^t_j} - 1 \Big) < p^t_j$. 

        Then, we consider the largest possible value $p^t_j$ can attain, approaching from below $\min\{ \max_i v_{i j} , B \}$. 
        For any $p^{t-1}_j \leq \min\{ \max_i v_{i j} , B \}$, 
        then 
        $p^t_j = p^{t-1}_j + \eta \left( x_j - 1 \right) 
        \leq p^{t-1}_j + \eta \frac{\sum_i b^t_{ij}}{p^{t-1}_j}
        \leq \min\{ \max_i v_{i j} , B \} + \eta \frac{B}{p^{t-1}_j} 
        \leq \min\{ \max_i v_{i j} , B \} + \frac{\eta}{\hat{p}_j - 2m\eta} B
        $. 
        Hence, the part $(iii)$ follows.
    \end{itemize}

\end{proof}

Next, we show that $\varphi^q(p)$ also satisfies QG, by considering an auxiliary function $h^q(p)$ analogous to $h(p)$ in the proof of~\cref{lem:qg}.
\begin{lemma}
    The convex function $\varphi^q(p)$
    satisfies the quadratic growth condition with modulus 
    $$\alpha = \min_j 
    \frac{p^*_j}{2 \big(\min\{ \max_i v_{i j} , B \} + \frac{\eta B}{\hat{p}_j - 2m\eta} \big)^2},$$ 
    where $p^*$ denotes the unique equilibrium price vector and $\hat{p}$ is defined in~\cref{lem:lower-bounds-prices-upper-bounds-demands-upper-bounds-prices-quasi-linear}. 
    \label{lem:qg-quasi-linear}
\end{lemma}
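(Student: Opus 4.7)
I will mirror the argument used for Lemma 3, adapting it to the extra ``$\min\{\cdot,1\}$'' appearing in $\varphi^q$, which reflects the fact that QL buyers may leave part of their budget unspent. Let $(p^*, x^*, y^*)$ be an equilibrium, with $y^*_i = B_i - \sum_j p^*_j x^*_{ij} \ge 0$ denoting leftover budget, and $\beta^*_i = \min\{\min_k p^*_k / v_{ik}, 1\}$ the corresponding dual variable. The key object is the auxiliary function
\begin{equation*}
    h^q(p) = \sum_{j=1}^m p_j - \sum_{j=1}^m \sum_{i=1}^n p^*_j x^*_{ij} \log\!\left(\frac{p_j}{v_{ij}}\right),
\end{equation*}
which is strictly convex, with gradient $\nabla_{p_j} h^q(p) = 1 - (p^*_j / p_j) \sum_i x^*_{ij}$. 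Using the QLFM market-clearing condition $\sum_i x^*_{ij} = 1$ for $p^*_j > 0$, first-order conditions show that $p^*$ is the unique minimizer of $h^q$.

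The plan is then to sandwich $\varphi^q$ by $h^q$: to show (i) $\varphi^q(p) \geq h^q(p)$ for all $p \geq 0$, and (ii) $\varphi^q(p^*) = h^q(p^*)$. For (ii), split the buyers according to whether $\beta^*_i < 1$ or $\beta^*_i = 1$. If $\beta^*_i < 1$, then $y^*_i = 0$, so $\sum_j p^*_j x^*_{ij} = B_i$, and for every $j$ with $x^*_{ij} > 0$ we have $p^*_j/v_{ij} = \beta^*_i$, giving $\sum_j p^*_j x^*_{ij} \log(p^*_j/v_{ij}) = B_i \log \beta^*_i$. If $\beta^*_i = 1$, then for every $j$ with $x^*_{ij} > 0$ we have $p^*_j/v_{ij} = 1$, so both sides are zero and again equal $B_i \log \beta^*_i$. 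Summing gives $h^q(p^*) = \varphi^q(p^*)$.

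For (i), I do a per-buyer case analysis on the sign of $\log \beta_i(p)$, where $\beta_i(p) = \min\{\min_k p_k/v_{ik}, 1\}$. In the ``cheap'' case $\beta_i(p) < 1$, we have $\log\beta_i(p) < 0$, and since $\log(p_j/v_{ij}) \geq \log \beta_i(p)$ for all $j$,
\begin{equation*}
    -B_i \log \beta_i(p) + \sum_{j} p^*_j x^*_{ij} \log(p_j/v_{ij}) \geq -B_i \log\beta_i(p) + (B_i - y^*_i)\log\beta_i(p) = -y^*_i \log\beta_i(p) \geq 0.
\end{equation*}
In the ``expensive'' case $\beta_i(p) = 1$, each $\log(p_j/v_{ij}) \geq 0$ so the buyer's contribution is trivially nonnegative. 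Summing over $i$ yields $\varphi^q(p) - h^q(p) \geq 0$. Combining (i) and (ii),
$\varphi^q(p) - \varphi^q(p^*) \geq h^q(p) - h^q(p^*)$.

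Finally, invoke Lemma 5 part (iii): along any tâtonnement trajectory (and on the domain we care about) $p_j$ is bounded above by $\bar p^q_j := \min\{\max_i v_{ij}, B\} + \frac{\eta B}{\hat p_j - 2m\eta}$. On this box the Hessian $\nabla^2 h^q(p) = \operatorname{diag}(p^*_j / p_j^2)$ satisfies $\nabla^2 h^q(p) \succeq \min_j \{p^*_j / (\bar p^q_j)^2\}\, I$, so $h^q$ is $\mu$-strongly convex with $\mu = \min_j p^*_j/(\bar p^q_j)^2$. Strong convexity implies $\tfrac{\mu}{2}$-quadratic growth at $p^*$, which transfers to $\varphi^q$ via the sandwich and gives exactly $\alpha = \min_j p^*_j/(2(\bar p^q_j)^2)$, matching the stated constant. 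The main subtlety, and the only place the QL case genuinely differs from the linear one, is step (i): the $\log \min\{\cdot,1\}$ breaks the one-line bound used in Lemma 3 and forces the two-case argument above, with the unspent-budget term $y^*_i$ playing a critical role in making the inequality go the right way.
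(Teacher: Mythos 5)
Your proof is correct and follows the same approach as the paper: you define the same auxiliary function $h^q$ weighted by the equilibrium spends $p^*_j x^*_{ij}$, establish the sandwich $\varphi^q(p)\geq h^q(p)$ with equality at $p^*$ via the QLFM KKT conditions, and then transfer strong convexity of $h^q$ on the box $p\leq\bar p^q_j$ (from the QL price upper bound) to quadratic growth of $\varphi^q$. Your per-buyer two-case split on $\beta_i(p)$ is simply an unpacking of the paper's single algebraic step of substituting $B_i=\sum_j p^*_j x^*_{ij}+y^*_i$ and discarding the nonnegative term $-\sum_i y^*_i\log\min\{\min_k p_k/v_{ik},1\}$, with the leftover budget $y^*_i$ playing the same role in both.
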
 
\begin{proof} 
    Let $(p^*, x^*, y^*)$ be a market equilibrium in the QLFM and
    \begin{equation}
        h^q(p) = \sum_{j = 1}^m p_j - \sum_{i = 1}^n \sum_{j = 1}^m p^*_j x^*_{ij} \log{\left( \frac{p_j}{v_{ij}} \right)}.
    \end{equation} 
    By KKT conditions of the EG-like primal and dual problems for QLFM, we have $\inp{p^*}{x^*_i} + y^*_i = B_i$
    and $y^*_i > 0$ implies $\min_{k \in [m]} \frac{p_k}{v_{ik}} \geq 1$ for all $i \in [n]$.
    Thereby, we have 
    \begin{align} 
        \varphi^q(p) 
        &= \sum_{j = 1}^m p_j - \sum_{i = 1}^n \left( \sum_{j = 1}^m p^*_j x^*_{ij} + y_i^* \right) \log{ \left( \min\left\{ \min_{k \in [m]} \frac{p_k}{v_{ik}}, 1 \right\} \right) } \nonumber \\ 
        &= \sum_{j = 1}^m p_j - \sum_{i = 1}^n \sum_{j = 1}^m p^*_j x^*_{ij} \log{ \left( \min\left\{ \min_{k \in [m]} \frac{p_k}{v_{ik}}, 1 \right\} \right) } - \sum_i y_i^* \log{ \left( \min\left\{ \min_{k \in [m]} \frac{p_k}{v_{ik}}, 1 \right\} \right) }  \nonumber \\ 
        &= \sum_{j = 1}^m p_j - \sum_{i = 1}^n \sum_{j = 1}^m p^*_j x^*_{ij} \log{ \left( \min\left\{ \min_{k \in [m]} \frac{p_k}{v_{ik}}, 1 \right\} \right) } 
        \geq h^q(p).
    \end{align} 
    Also, $\varphi^q(p^*) = h^q(p^*)$ since 
    $x^*_{ij} > 0$ implies $\frac{p^*_j}{v_{ij}} = \min_{k \in [m]} \frac{p^*_k}{v_{ik}} \leq 1$. 
    Analogous to the LFM case, $\nabla_{p_j} h^q(p) = 1 - \frac{p^*_j}{p_j} \sum_{i = 1}^n x^*_{ij} = 1 - \frac{p^*_j}{p_j}$ for all $j \in [m]$, 
    thus $p = p^*$ is the minimizer of $h^q(p)$. 
    Therefore, $\varphi^q(p) - \varphi^q(p^*) \geq h^q(p) - h^q(p^*) \geq \alpha \En{p - p^*}^2$, 
    where $\alpha = \min_j \frac{p^*_j}{2 \bar{p}_j^2}$.
\end{proof}

\begin{customthm}{2}
    Assume that we adjust prices in the QLFM with~\cref{eq:general-linear-ttm}, 
    starting from any initial price $p^0 \geq \hat{p}$, with stepsize $\eta < \frac{1}{2m} \min_j \hat{p}_j$, where $\hat{p}_j$ is defined          
    in~\cref{lem:lower-bounds-prices-upper-bounds-demands-upper-bounds-prices-quasi-linear}. 
    Then, we have 
    \begin{equation}
        \En{p^t - p^*}^2 
        \leq (1 - 2\eta \alpha)^t  
            \En{p^0 - p^*}^2 + e \quad \text{for all } t \geq 0, 
    \end{equation}
    where $\alpha$ is defined in~\cref{lem:qg-quasi-linear} and 
    \begin{equation*}
        e = \frac{\eta G^2}{2\alpha} = \frac{\eta}{2\alpha} \left( \frac{B}{\min_j \hat{p}_j - 2 m\eta} + m \right)^2.
    \end{equation*}
\end{customthm} 
\begin{proof}
    Since $g^t \in \partial \varphi(p^t)$ is equal to the excess supply $- z^t$, 
    by~\cref{lem:lower-bounds-prices-upper-bounds-demands-upper-bounds-prices-quasi-linear} $(ii)$, 
    we have $\En{g^t} = \En{z^t} \leq G := \frac{B}{\min_j \hat{p}_j - 2m\eta} + m$.
    By~\cref{lem:qg-quasi-linear}, the quadratic growth parameter associated with $\varphi^q(p)$ is $\alpha = \min_j 
    \frac{p^*_j}{2 \big(\min\{ \max_i v_{i j} , B \} + \frac{\eta B}{\hat{p}_j - 2m\eta} \big)^2}$.
    The rest of the proof follows the same steps as in the proof of~\cref{thm:linear-convergence}.
\end{proof} 

\begin{customcrl}{2}
    For a given $\epsilon > 0$, 
    let 
    $\eta \leq 
        \min\{ \frac{\min_j \hat{p}_j}{4m}, 
        \frac{2 \epsilon \min_j p^*_j}{9 B^2 (\frac{2B}{\min_j \hat{p}_j} + m)^2} \}$.
    Then, starting from any $p^0 \geq \hat{p}$ such that $\sum_j p^0_j = B$, 
    {\tatonnement} with stepsize $\eta$ generates a price vector $p$ 
    such that $\En{p - p^*}^2 \leq \epsilon$ 
    in $\mathcal{O}(\frac{1}{\epsilon}\log{\frac{1}{\epsilon}})$ iterations.
    If $\epsilon \geq \frac{9 B^2 \min_j \hat{p}_j}{8m \min_j p^*_j} (\frac{2B}{\min_j \hat{p}_j} + m)^2$, 
    the time complexity is on the order of $\mathcal{O}(\log{\frac{1}{\epsilon}})$.
    \label{crl:time-complexity-ql}
\end{customcrl}
\begin{proof}
    Since $\eta \leq \frac{\min_{j'} \hat{p}_{j'}}{4m} \leq \frac{\hat{p}_j}{4m}$ for all $j \in [m]$, 
    we have $\frac{\eta}{\hat{p}_j - 2m\eta} \leq \frac{1}{2m}$ and thus 
    \begin{equation}
        \alpha \geq \min_j 
        \frac{p^*_j}{2 \big(B + \frac{\eta B}{\hat{p}_j - 2m\eta} \big)^2} \geq \frac{2 \min_j p^*_j}{9 B^2}.
    \end{equation}
    It follows that 
    \begin{equation}
        e = \frac{\eta}{2\alpha} \left( \frac{B}{\min_j \hat{p}_j - 2m\eta} + m \right)^2 
        \leq \frac{9 \eta B^2}{4 \min_j p^*_j} \left( \frac{2B}{\min_j \hat{p}_j} + m \right)^2
        = \frac{\epsilon}{2}, 
        \label{eq:upper-bound-epsilon}
    \end{equation}
    where the inequality follows by $\eta \leq \frac{\min_{j'} \hat{p}_{j'}}{4m}$ 
    and the last equality follows by 
    $\eta \leq \frac{2 \epsilon \min_j p^*_j}{9 B^2 (\frac{2B}{\min_j \hat{p}_j} + m)^2}$. 
    After $t = \frac{1}{2\eta\alpha} \log{\frac{8 B^2}{\epsilon}}$ iterations, we have 
    \begin{equation}
        t \log{(1 - 2\eta\alpha)} + \log{\En{p^0 - p^*}^2} 
        \leq -2\eta\alpha\cdot t + \log{4B^2} 
        = - \log{\frac{8 B^2}{\epsilon}} + \log{4B^2} 
        = \log{\frac{\epsilon}{2}}, 
    \end{equation}
    where the inequality follows because $\log(1 + x) \leq x, \;\forall\, x > -1$ and $\En{p^0 - p^*} \leq 2B$.
    Thus, by~\cref{thm:linear-convergence} $\En{p^t - p^*}^2 \leq (1 - 2\eta \alpha)^t \En{p^0 - p^*}^2 + e \leq \frac{\epsilon}{2} + \frac{\epsilon}{2} = \epsilon$. 
    
    Since $\alpha \geq \frac{2 \min_j p^*_j}{9 B^2}$, $t \leq \frac{9 B^2}{4 \eta \min_j p^*_j} \log(\frac{8 B^2}{\epsilon})$. 
    If $\epsilon$ is small such that $\frac{2 \epsilon \min_j p^*_j}{9 B^2 (\frac{2B}{\min_j \hat{p}_j} + m)^2} \leq \frac{\min_j \hat{p}_j}{4m}$, 
    $\eta$ is of the order $\mathcal{O}(\epsilon)$, hence 
    the time complexity is of the order $\mathcal{O}(\frac{1}{\epsilon}\log{\frac{1}{\epsilon}})$.
    Otherwise, $\eta$ is independent of $\epsilon$, hence the time complexity is of the order $\mathcal{O}(\log{\frac{1}{\epsilon}})$.
\end{proof}

\section{Numerical Experiments Details and More Results}
\label{app:sec:additional-experiments}

\paragraph{Real data description.}

For the real data experiments, we implement {\tatonnement} on a market instance used in~\citet{nan2023fast}.
Different from their unit budgets, we consider randomly generated budgets from the uniform distribution on $[0, 1]$. 
For the linear Fisher market, we normalize the budgets such that the total budget is $1$; for the quasi-linear Fisher market, we use the non-normalized budgets, as money has value outside the current market under QL setting.
We assign each item a unit supply. 

The valutions in the market instance is generated by using a movie rating dataset collected from twitter called Movietweetings~\citep{Dooms13crowdrec} (``snapshots 200K'').
Here, users are viewed as the buyers, movies as items, and ratings as valuations. 
As in~\citet{nan2023fast}, since the data in the original dataset is sparse, we remove users and movies with too few entries.
We then complete the matrix by using the matrix completion software fancyimpute~\citep{fancyimpute}.
We normalize the utilities such that the sum of utilities of each buyer is $1$.
Finally, the resulting instance has $n = 691$ buyers and $m = 632$ items.
We use random seed $1$ to generate budgets for buyers.

In LFM setting, we test the convergence of {\tatonnement} with three stepsizes: $\eta = 1 \times 10^{-6}, 2 \times 10^{-6}, 3 \times 10^{-6}$; 
in QLFM setting, we test the convergence with three stepsizes: $\eta = 1 \times 10^{-7}, 2 \times 10^{-7}, 3 \times 10^{-7}$.
We run $8000$ and $70000$ iterations for LFM and QLFM, respectively, to guarantee {\tatonnement} converges to a small neighborhood of the equilibrium price.

\paragraph{More detailed description of synthetic data experiments.}

In the synthetic data experiments, we consider five different distributions of utilities:
\begin{itemize}
    \item \textbf{Uniform}: $v$ is generated from the uniform distribution on $[0,1]$.
    \item \textbf{Log-normal}: $v$ is generated from the log-normal distribution associated with the standard normal distribution $\mathcal{N}(0, 1)$.
    \item \textbf{Exponential}: $v$ is generated from the exponential distribution with the scale parameter $1$.
    \item \textbf{Truncated normal}: $v$ is generated from the truncated normal distribution associated with $\mathcal{N}(0, 1)$, and truncated at ${10}^{-3}$ and $10$ standard deviations.
    \item \textbf{Uniform integers}: $v$ is generated from the uniform distribution on $\{1,\ldots,100\}$.
\end{itemize}

For each distribution, we generate four instances of different sizes: 
$(n, m) = (10, 20), (20, 40), (30, 60), (40, 80)$, where $n$ is the number of buyers and $m$ is the number of items.
We use random seed $1$ to generate all budgets and valuation matrices.
For all synthetic instances, 
we test the convergence of {\tatonnement} with three stepsizes: $\eta = 2 \times 10^{-5}, 4 \times 10^{-5}, 6 \times 10^{-5}$.
We run $200000$ iterations for each instance to guarantee {\tatonnement} converges to a small neighborhood of the equilibrium price.
Note that we only show a part of iterations in the plots for better visualization.
To more clearly show linear convergence, we zoom in on the ``straight line'' portion of the plots (the initial part of all iterations) in each figure.

\paragraph{More details on the experiments.}

In our experiments, we choose the MBB item with the smallest index when there is more than one MBB item.
In all instances, we run {\tatonnement} starting with $p^0_j = \frac{1}{m}$ for all $j \in [m]$, where $m$ is the number of items.
We use the squared error norm to measure the convergence because it is consistent with our theoretical results.
All experiments are conducted on a personal computer using Python 3.9.12.
For our hyperparameter settings, we tried around $10$ stepsizes ($1 \times 10^{-7} - 1 \times 10^{-4}$) for each setting and chose those that can clearly show the convergence.

\begin{figure}[t]
    \centering
    \includegraphics[scale=.23]{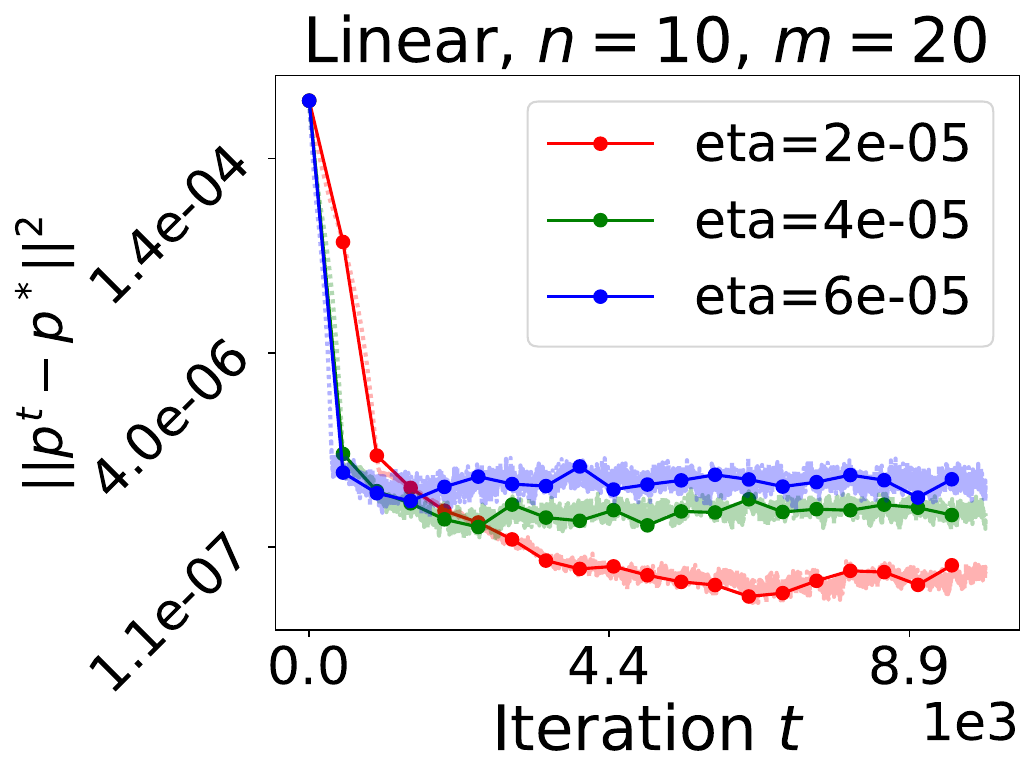}
    \includegraphics[scale=.23]{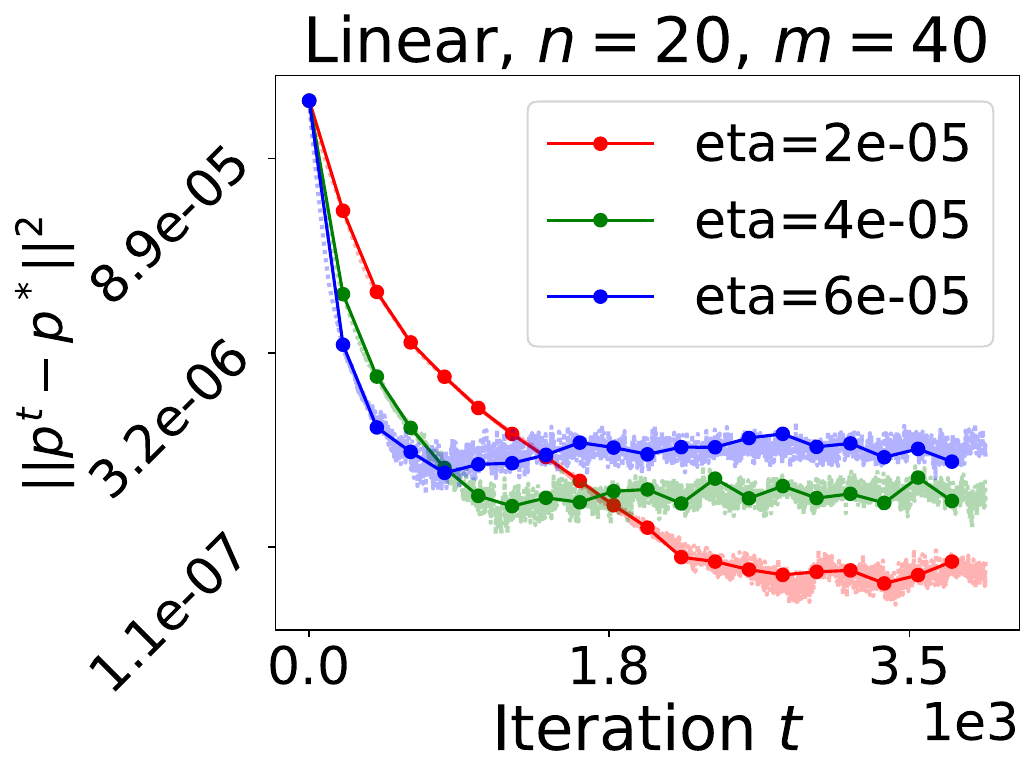}
    \includegraphics[scale=.23]{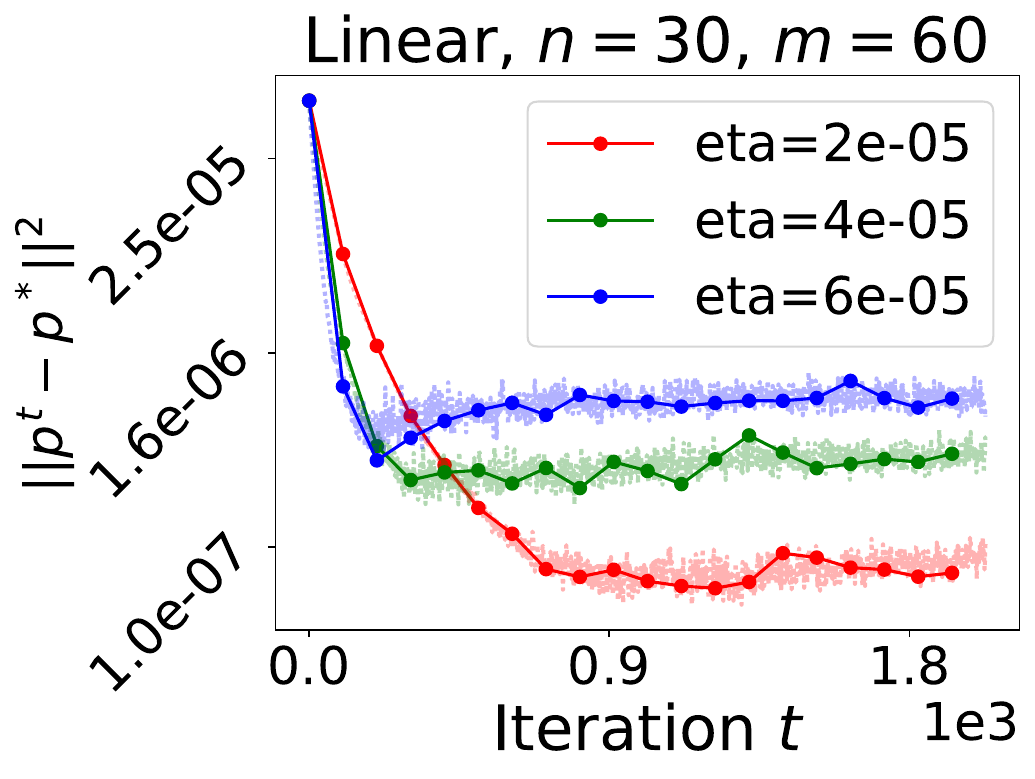}
    \includegraphics[scale=.23]{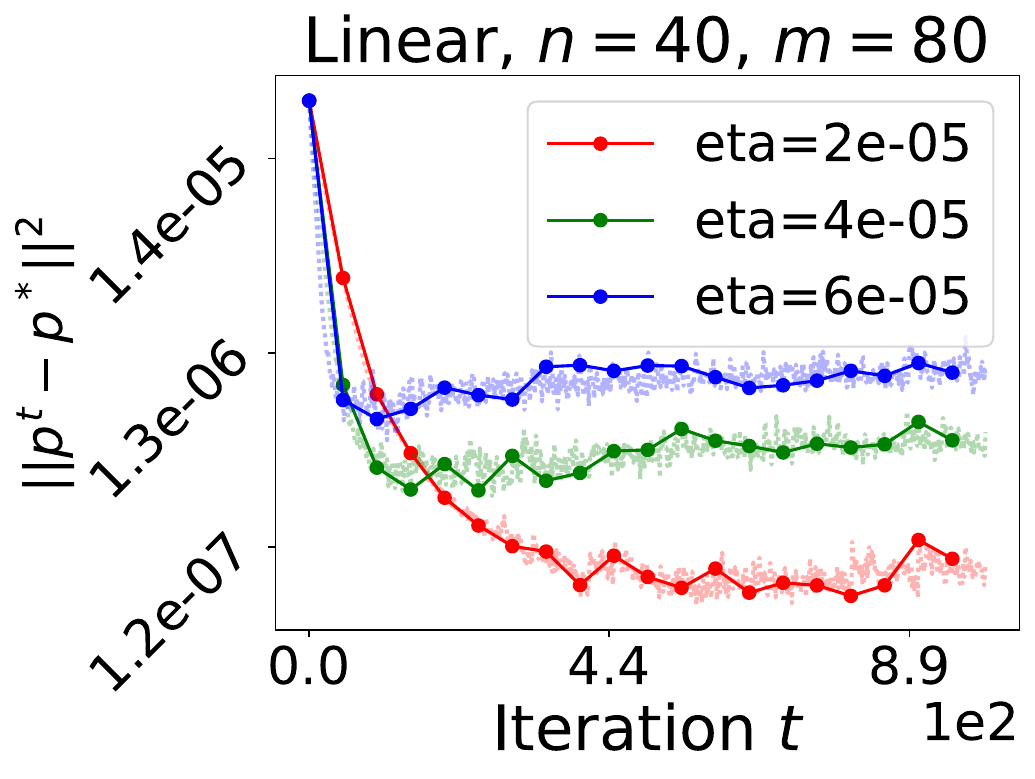}

    \includegraphics[scale=.23]{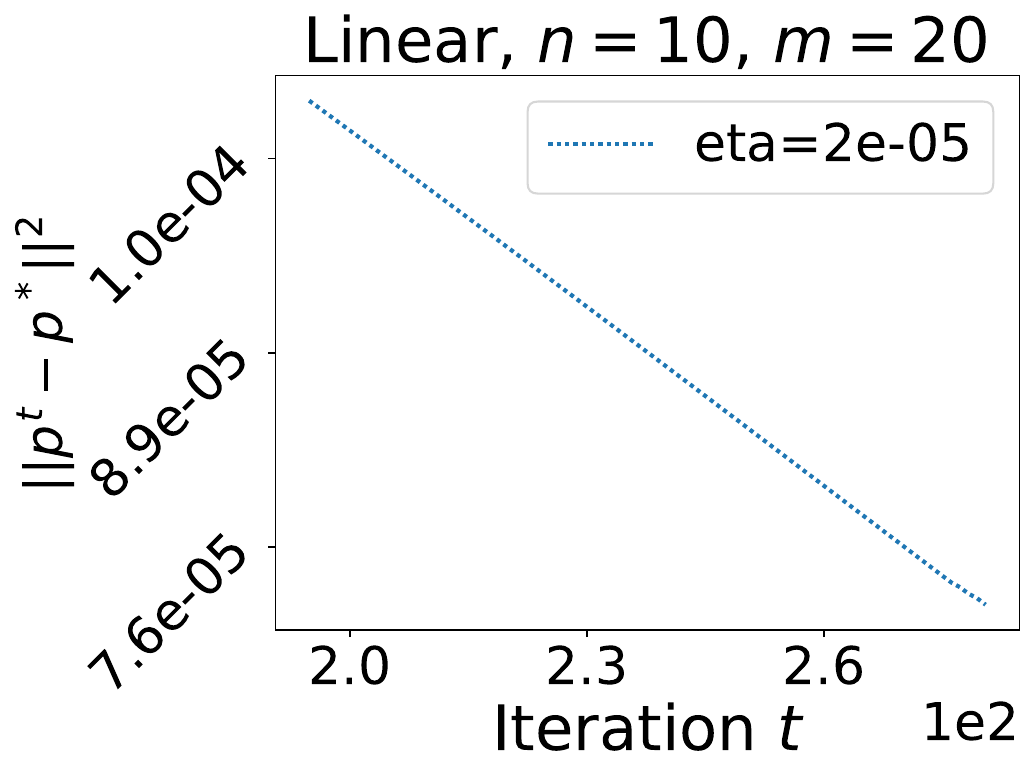}
    \includegraphics[scale=.23]{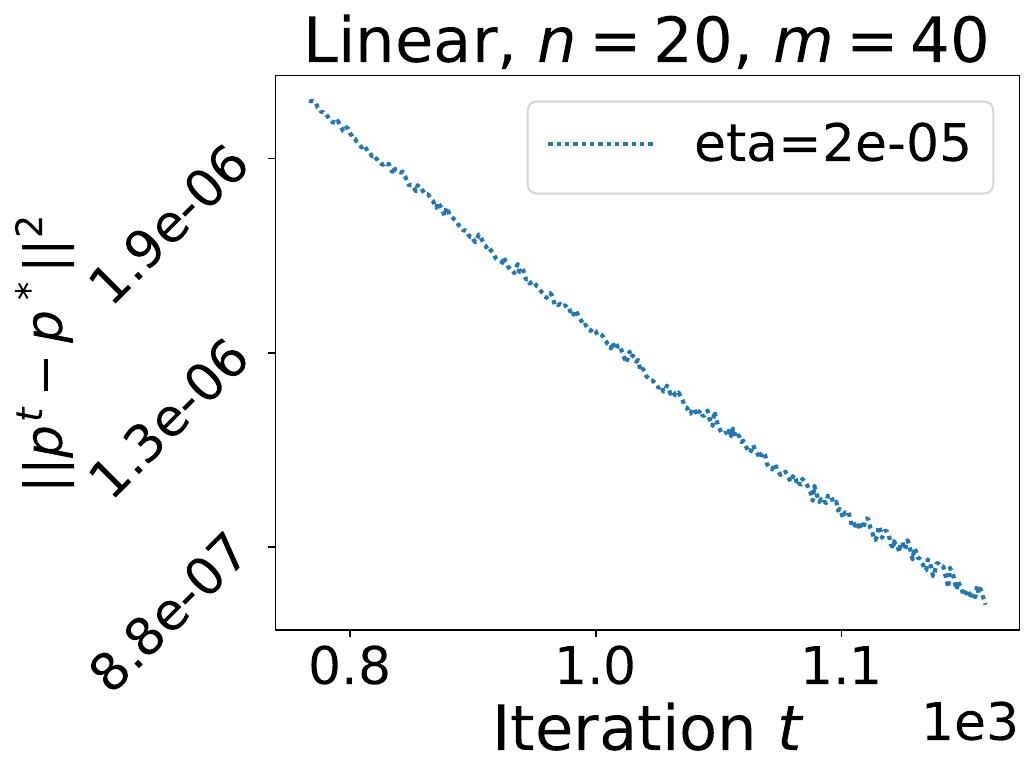}
    \includegraphics[scale=.23]{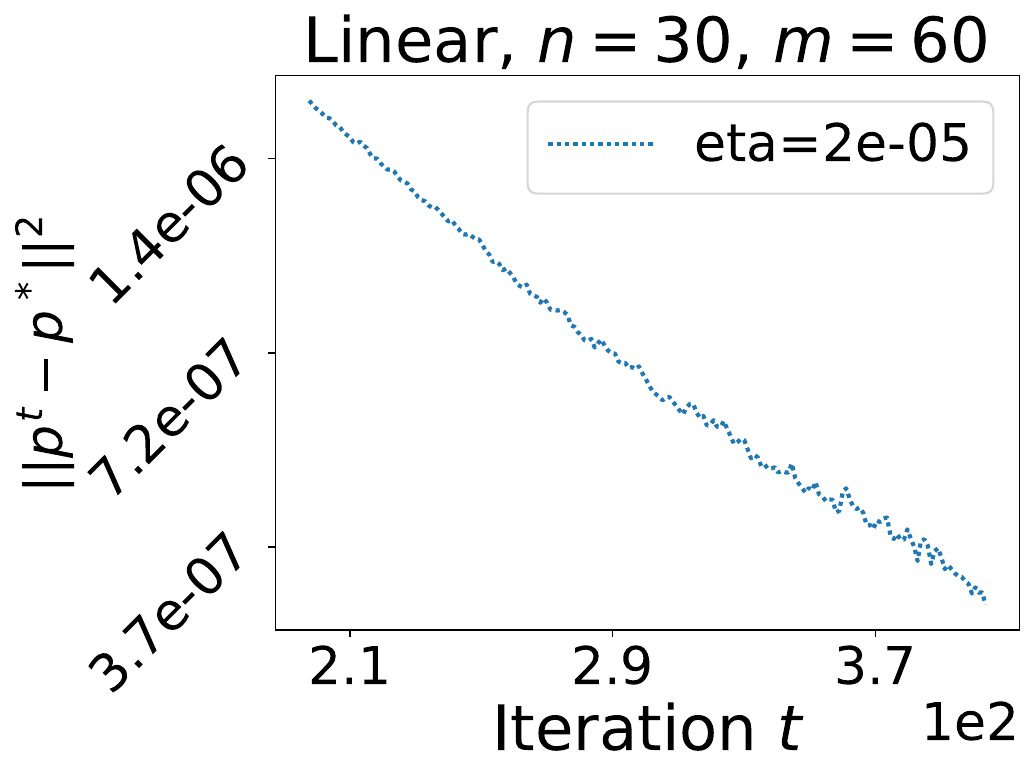}
    \includegraphics[scale=.23]{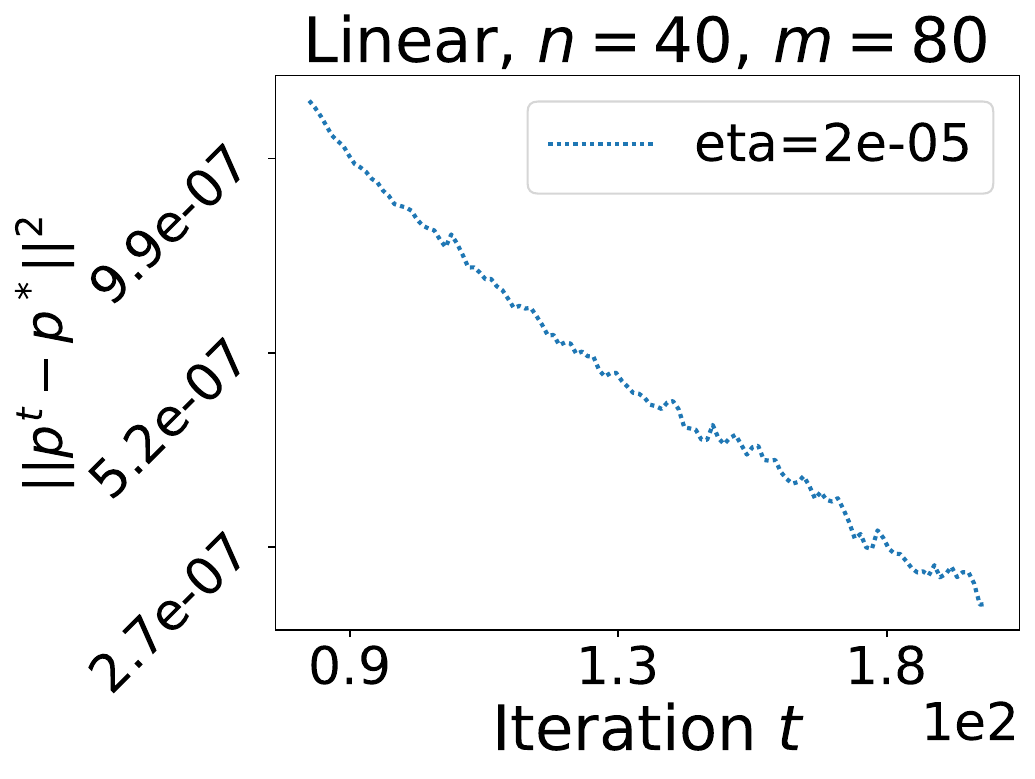}
    \caption{Convergence of squared error norms on random generated instances ($v$ is generated from the uniform distribution $[0,1)$) of different sizes under linear utilities. 
    }
    \label{fig:single-instance-uniform-error-norms-linear}
\end{figure}

\begin{figure}[t]
    \centering
    \includegraphics[scale=.23]{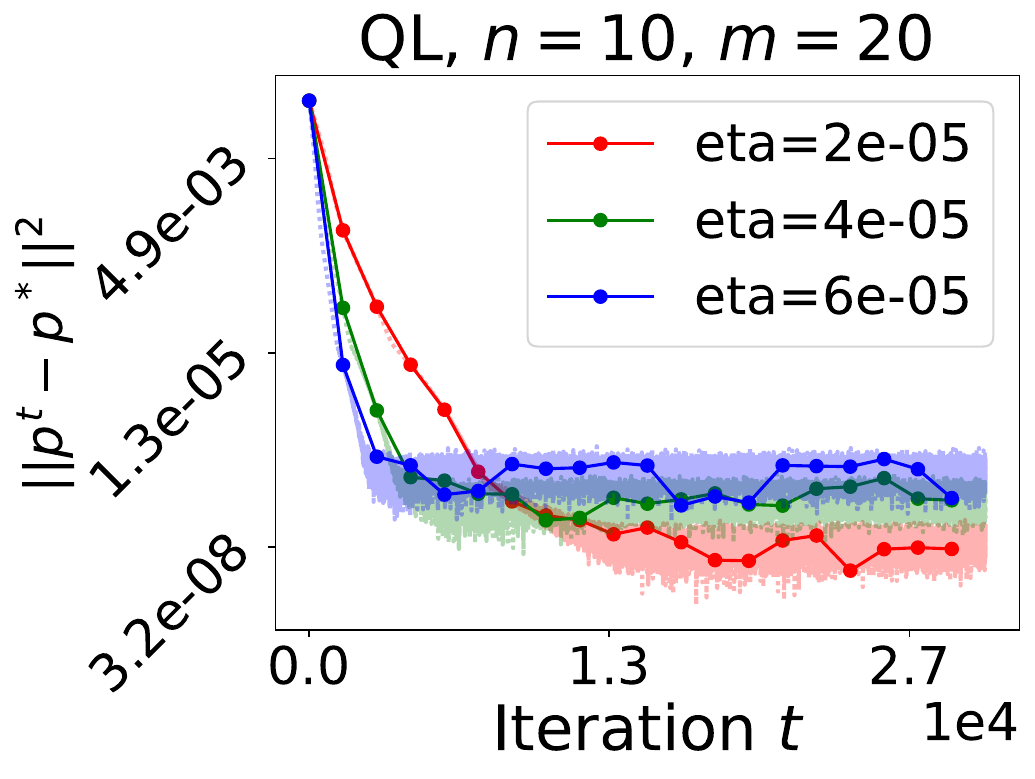}
    \includegraphics[scale=.23]{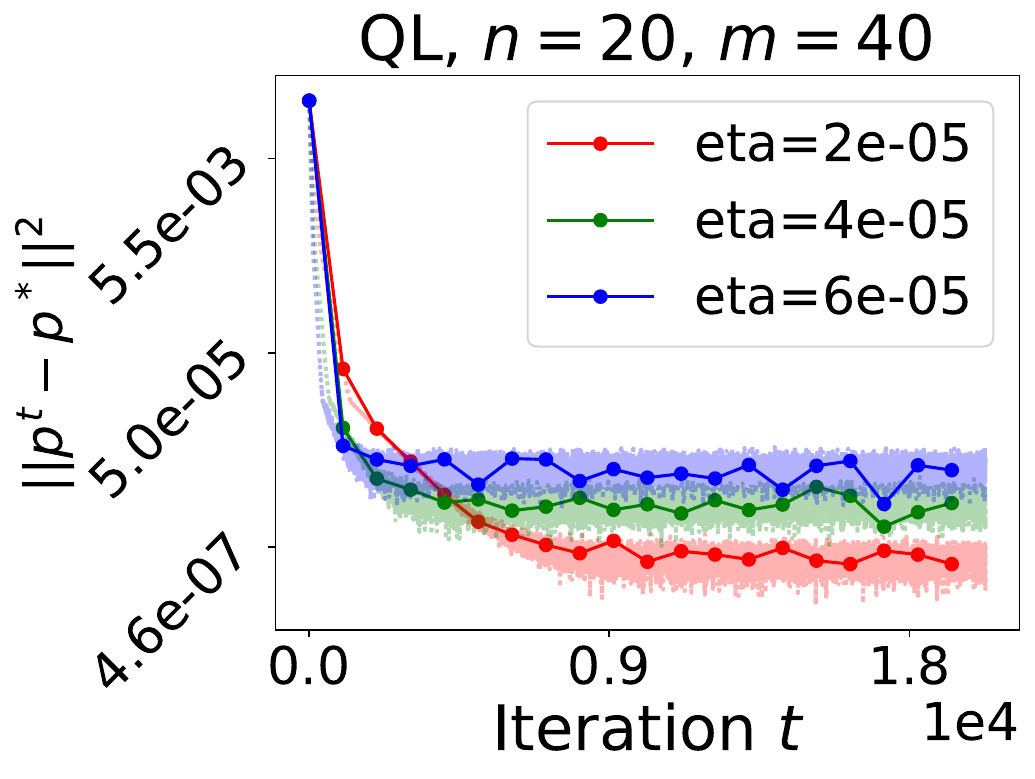}
    \includegraphics[scale=.23]{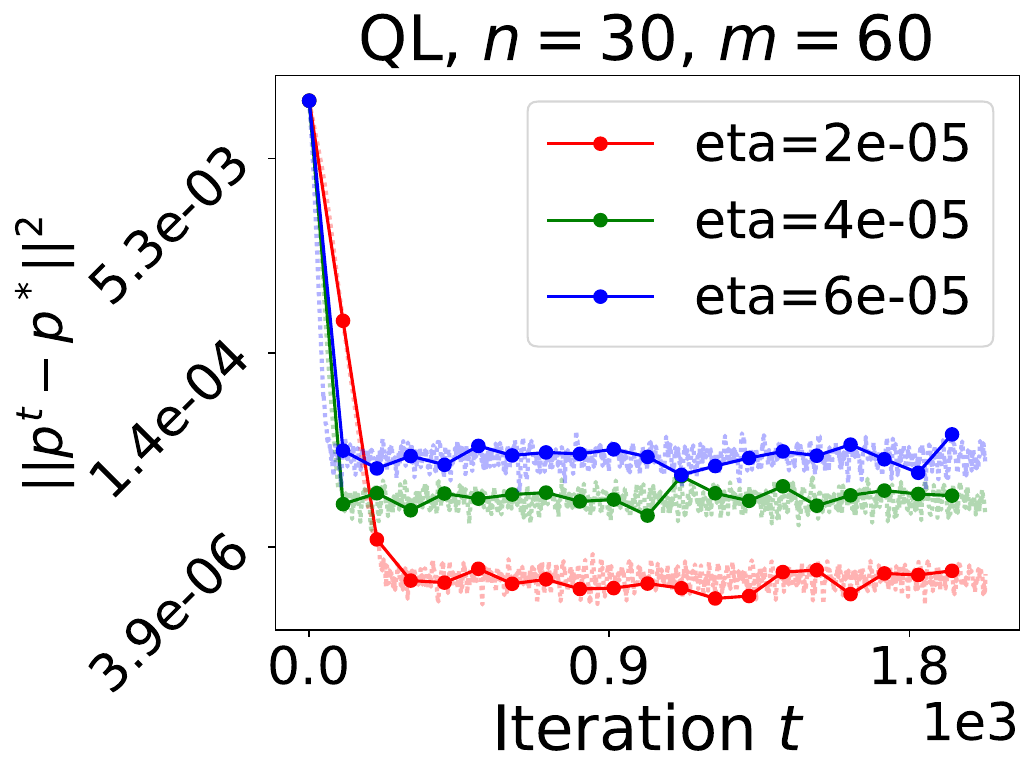}
    \includegraphics[scale=.23]{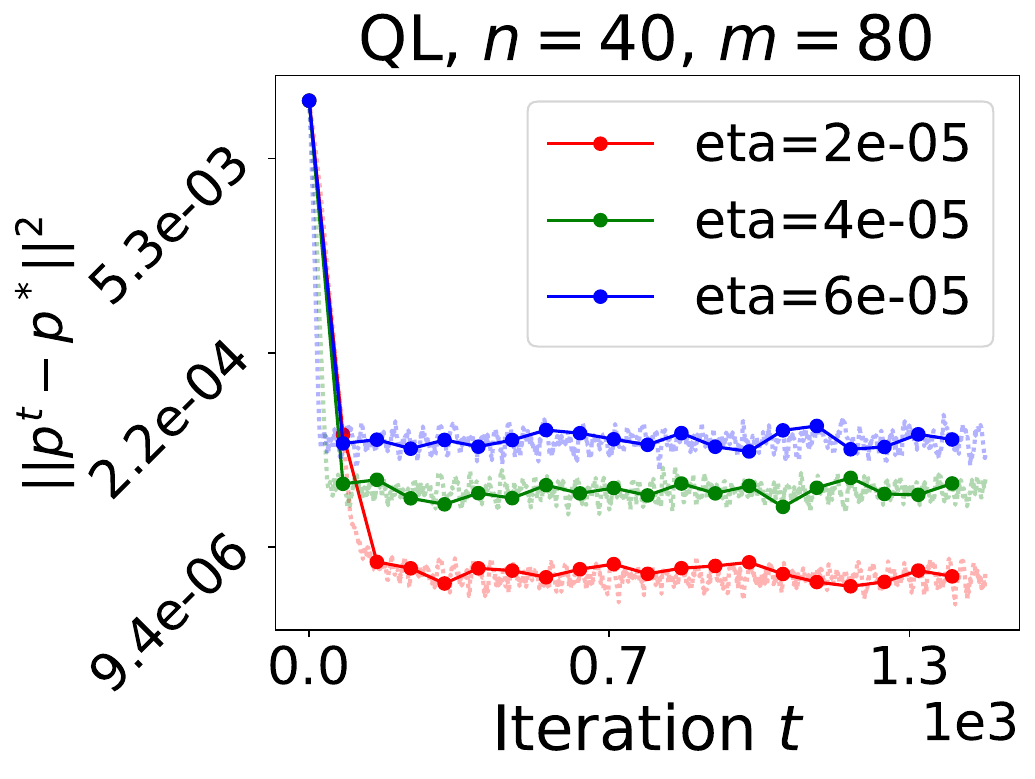}

    \includegraphics[scale=.23]{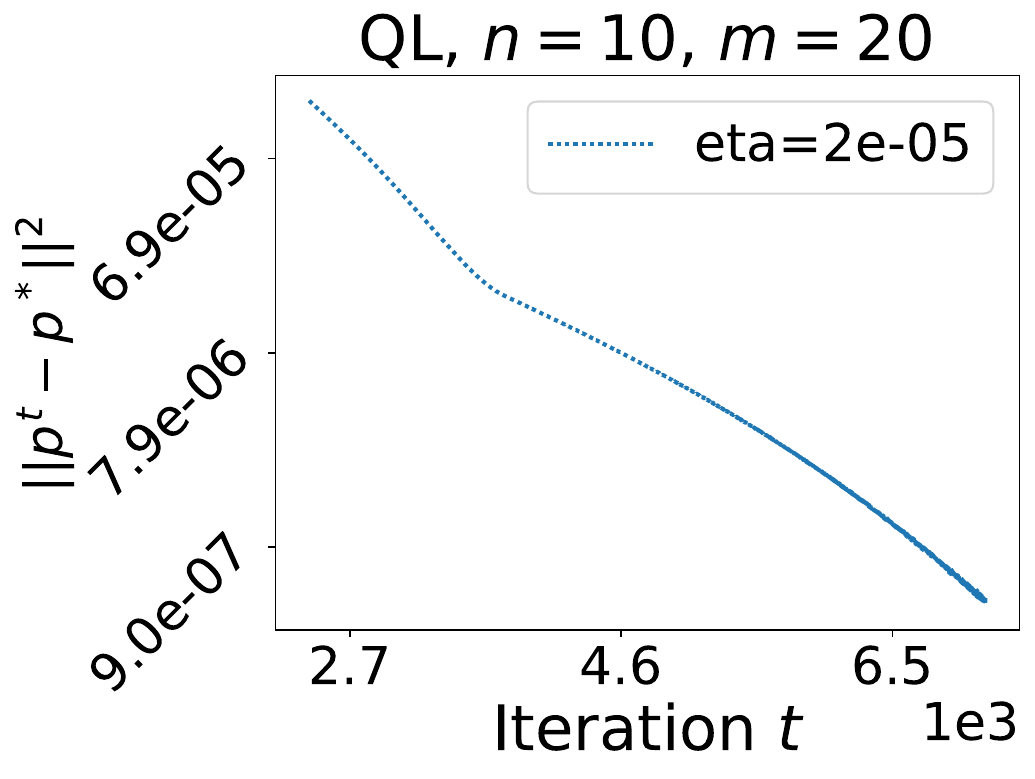}
    \includegraphics[scale=.23]{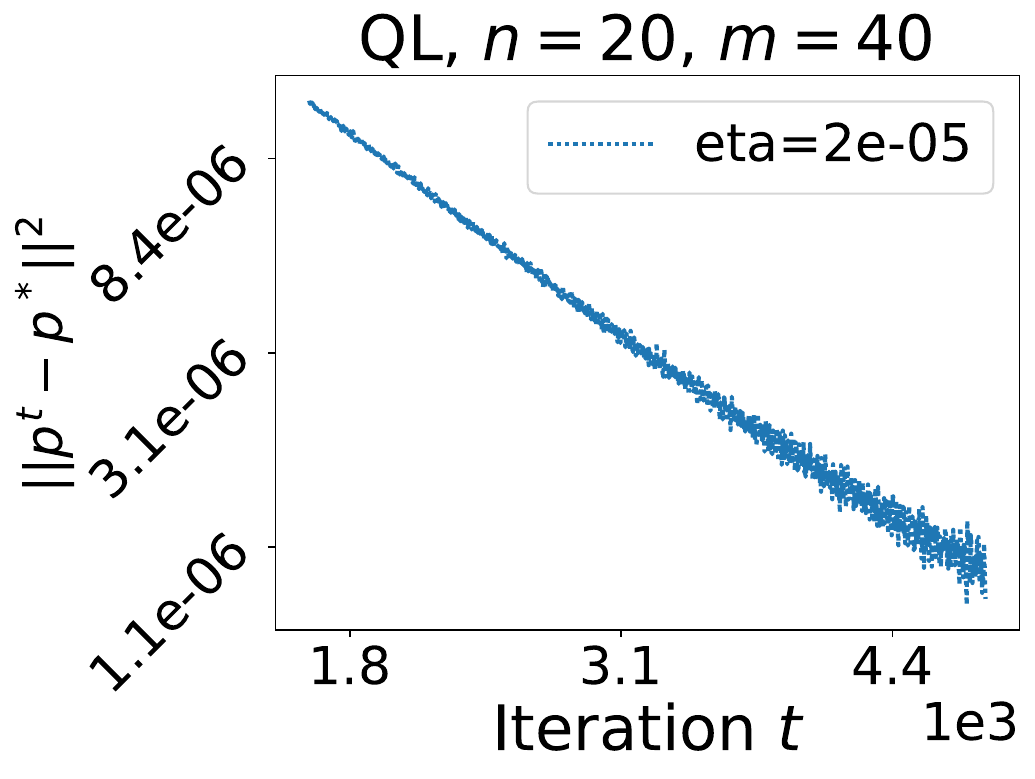}
    \includegraphics[scale=.23]{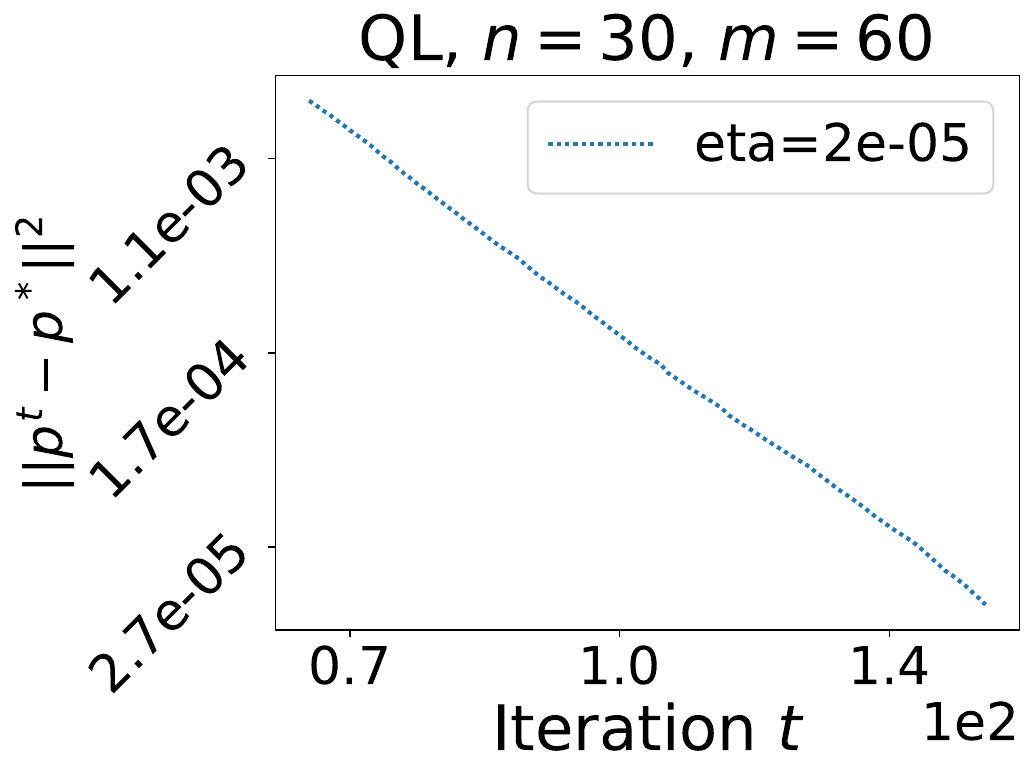}
    \includegraphics[scale=.23]{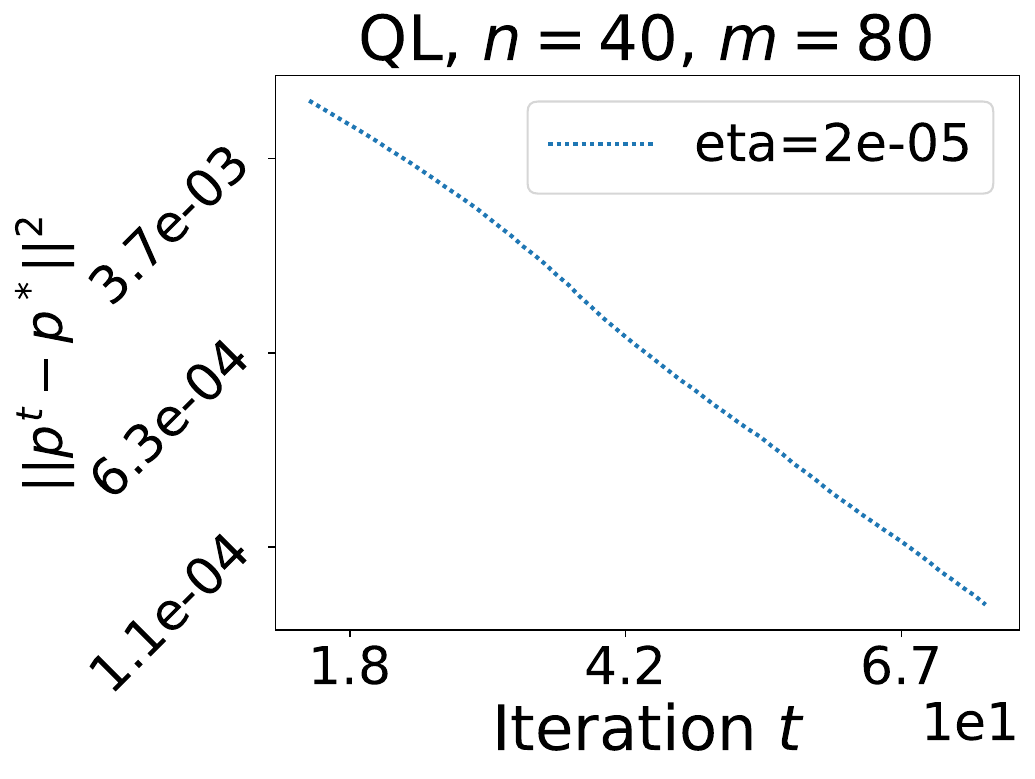}
    \caption{Convergence of squared error norms on random generated instances ($v$ is generated from the uniform distribution $[0,1)$) of different sizes under quasi-linear utilities. 
    }
    \label{fig:single-instance-uniform-error-norms-ql}
\end{figure}

\begin{figure}[t]
    \centering
    \includegraphics[scale=.23]{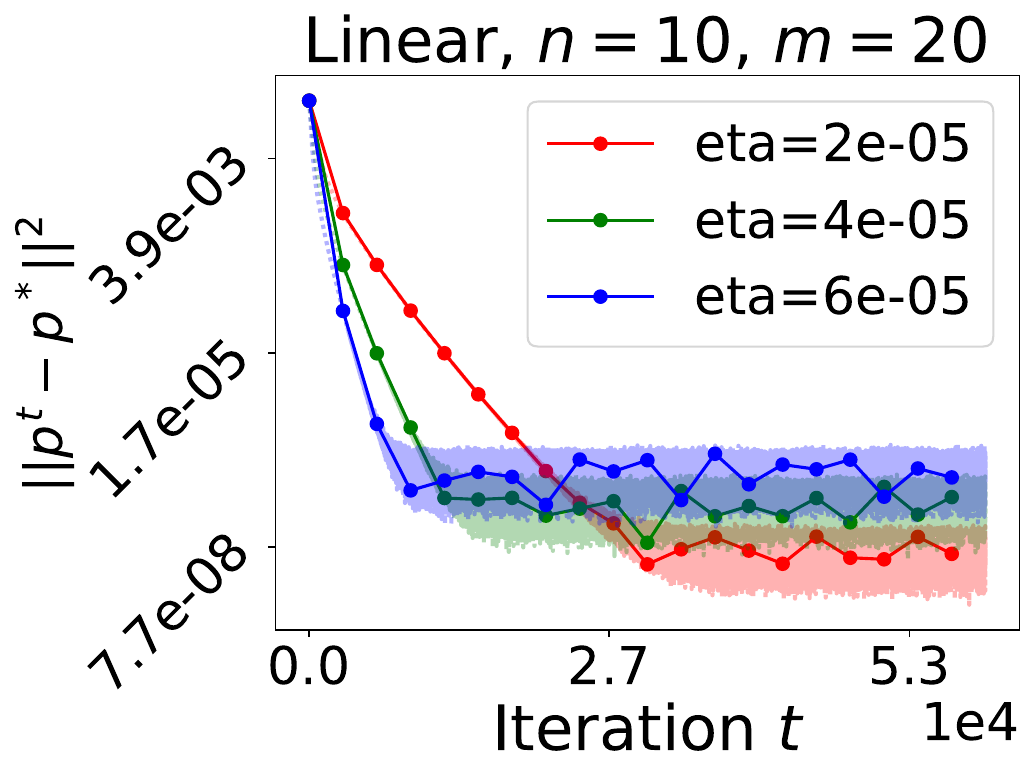}
    \includegraphics[scale=.23]{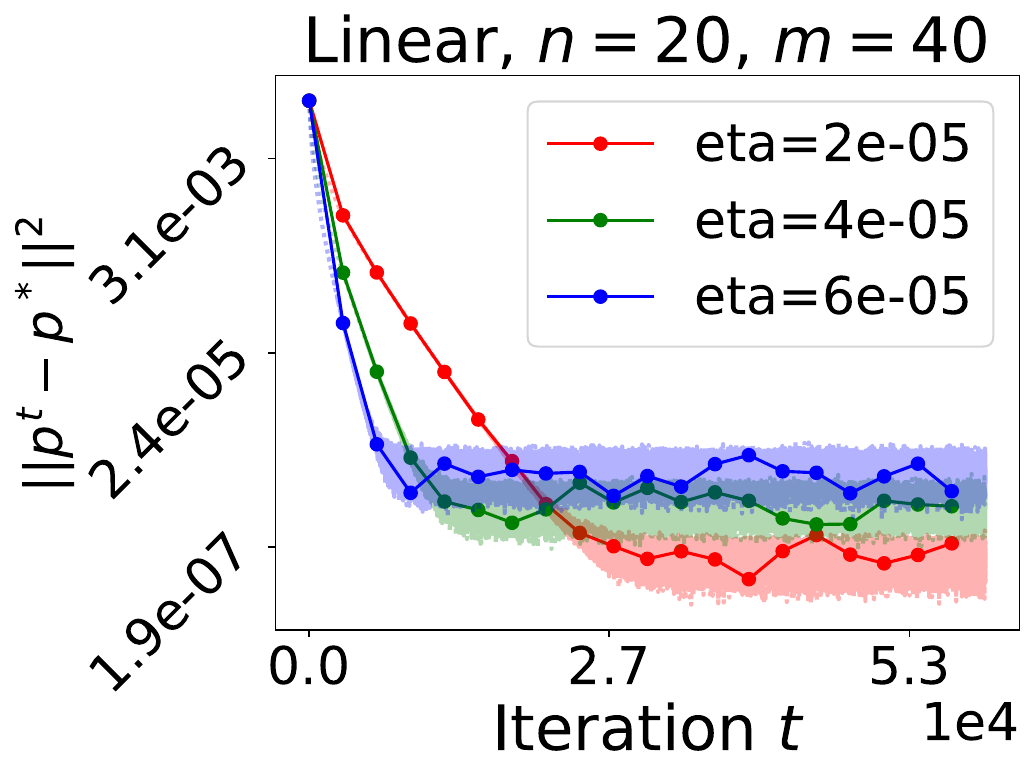}
    \includegraphics[scale=.23]{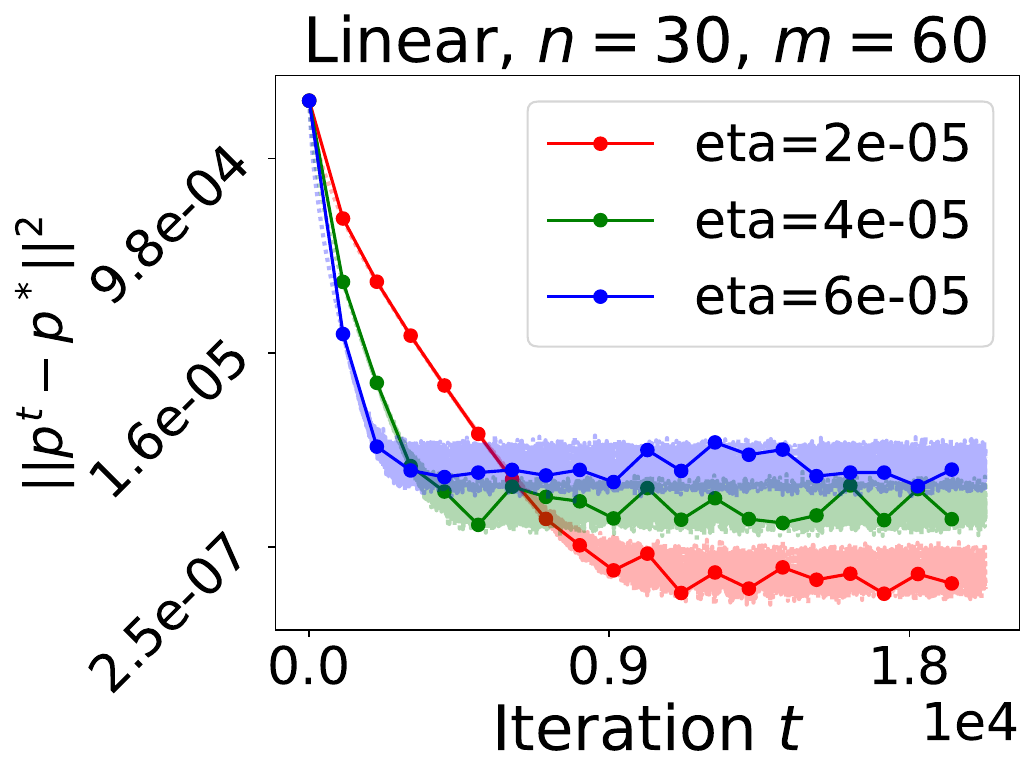}
    \includegraphics[scale=.23]{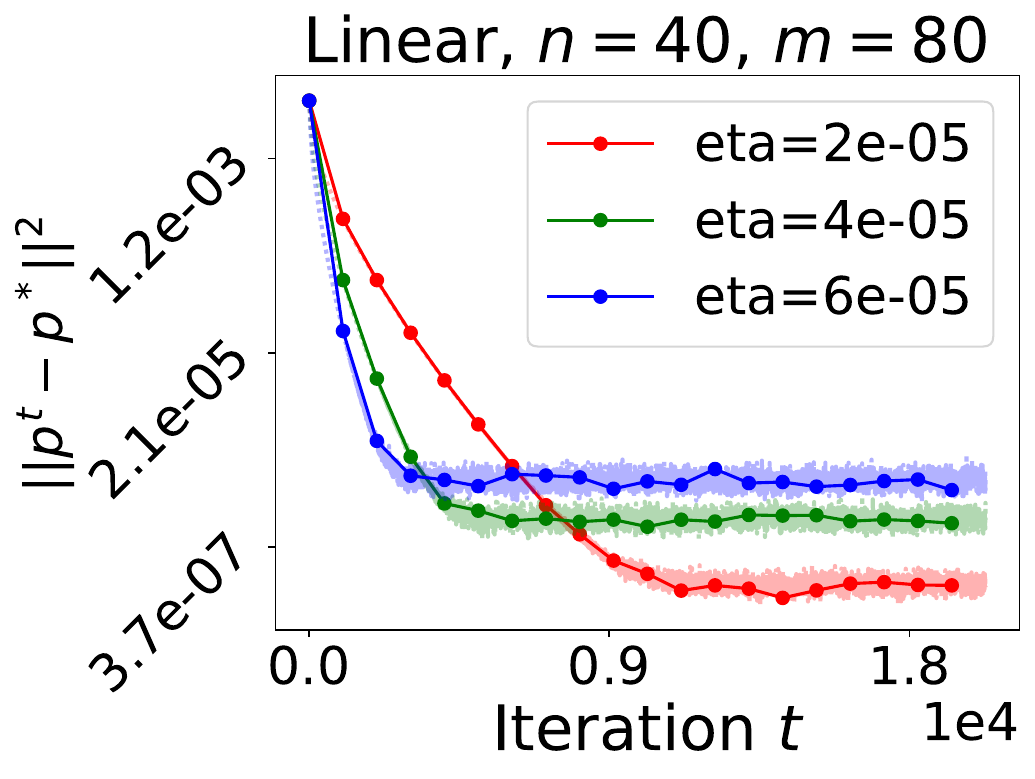}

    \includegraphics[scale=.23]{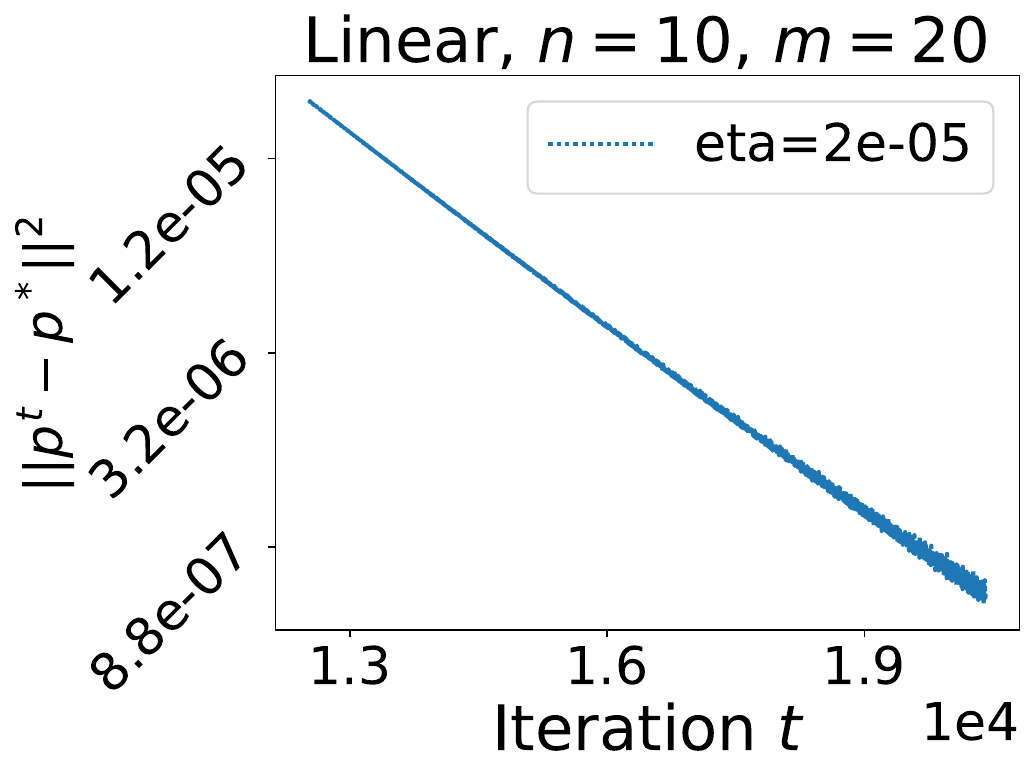}
    \includegraphics[scale=.23]{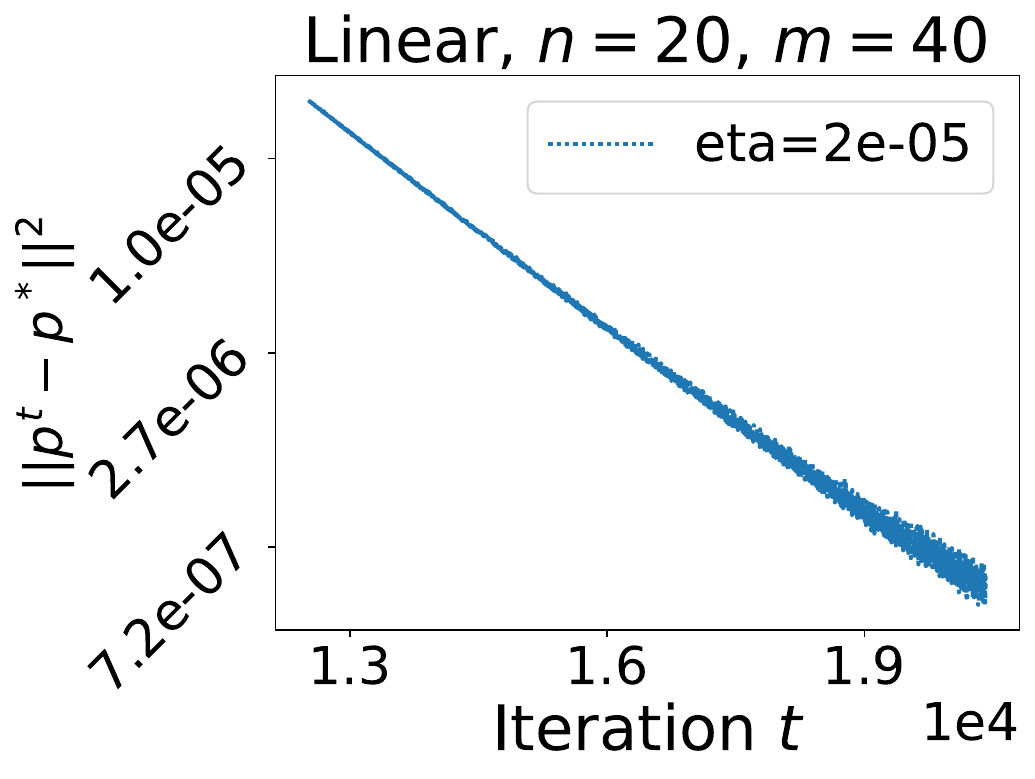}
    \includegraphics[scale=.23]{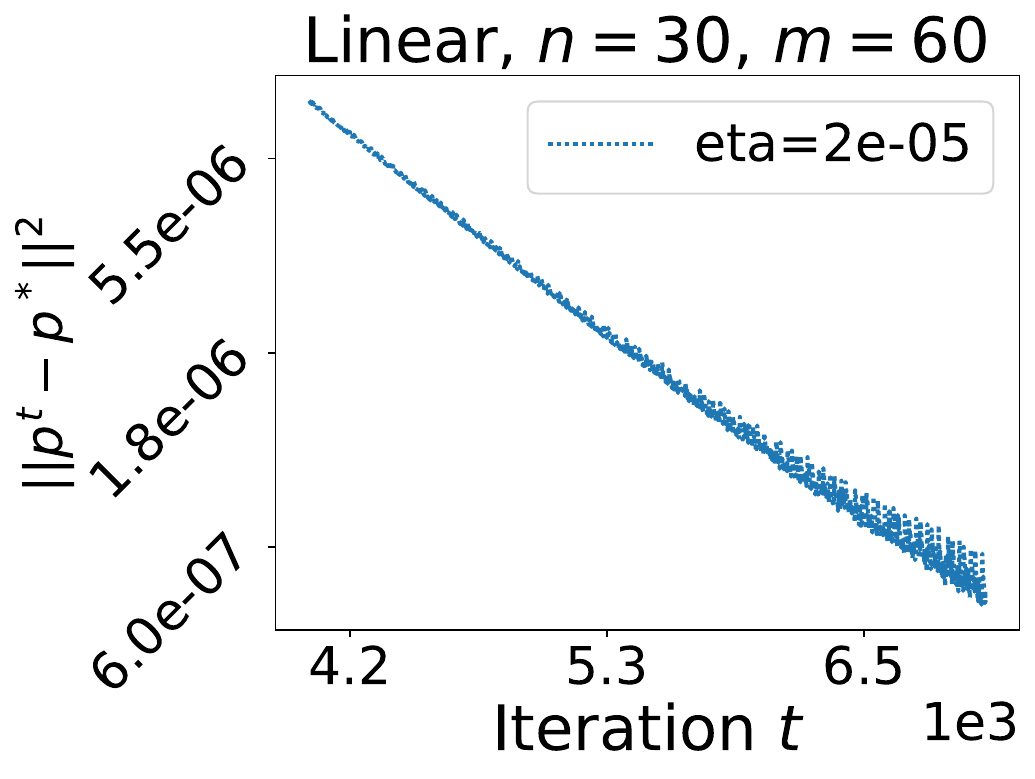}
    \includegraphics[scale=.23]{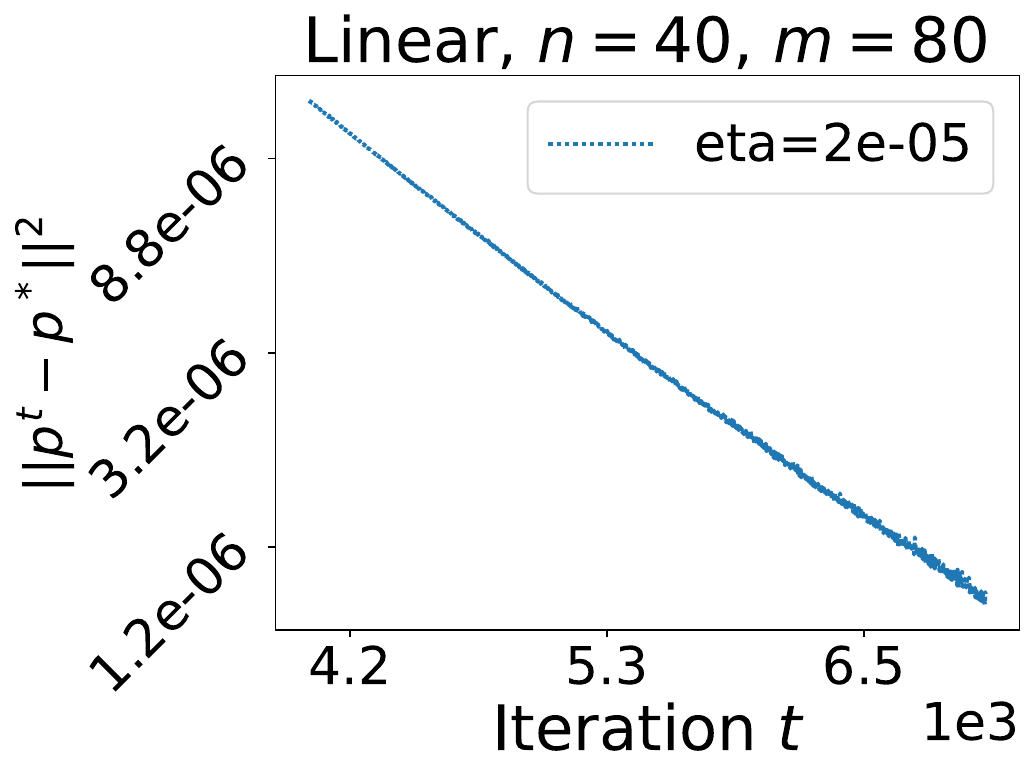}
    \caption{Convergence of squared error norms on random generated instances ($v$ is generated from the log-normal distribution associated with the standard normal distribution $\mathcal{N}(0, 1)$) of different sizes under linear utilities. 
    }
    \label{fig:single-instance-lognormal-error-norms-linear}
\end{figure}

\begin{figure}[t]
    \centering
    \includegraphics[scale=.23]{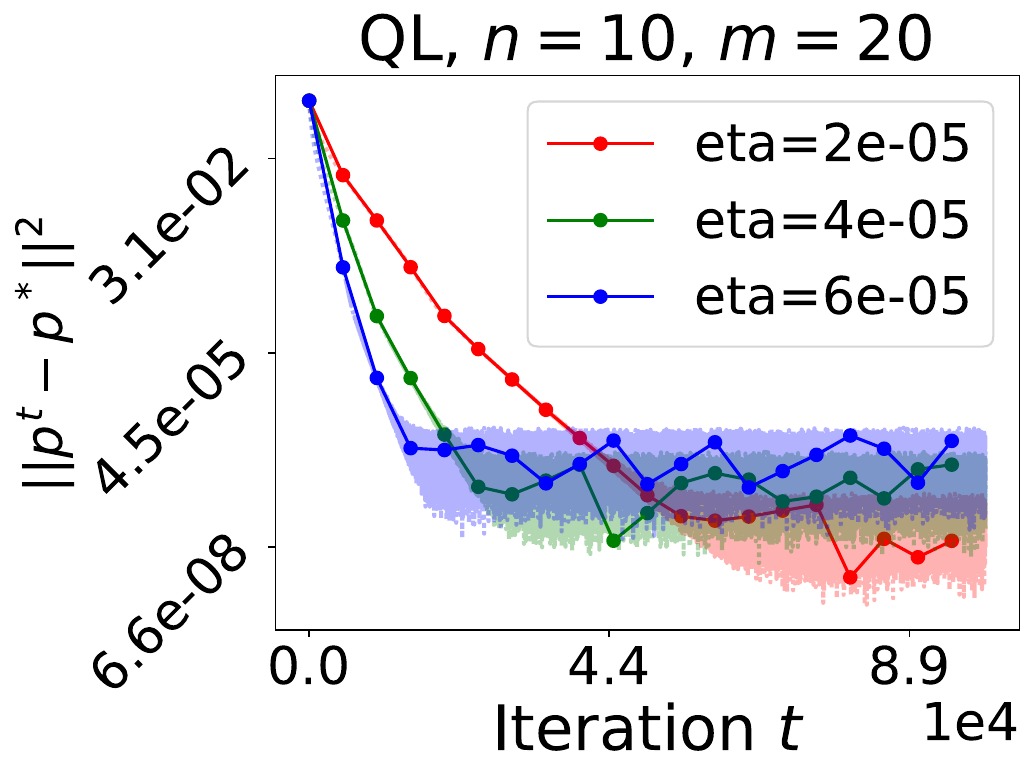}
    \includegraphics[scale=.23]{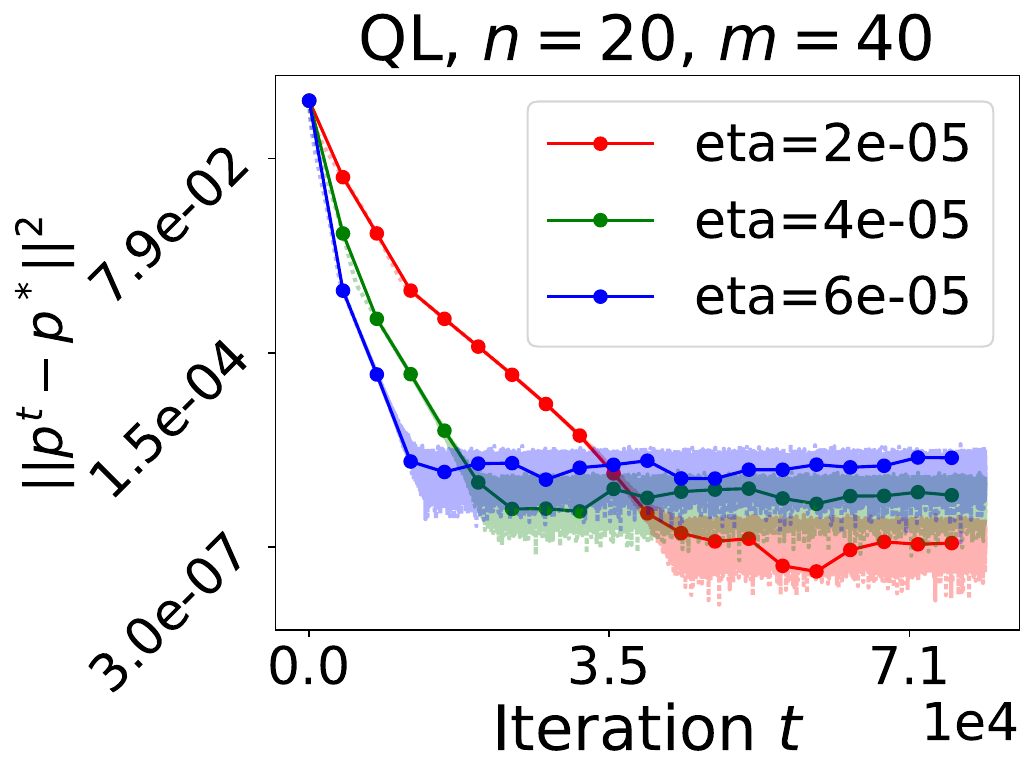}
    \includegraphics[scale=.23]{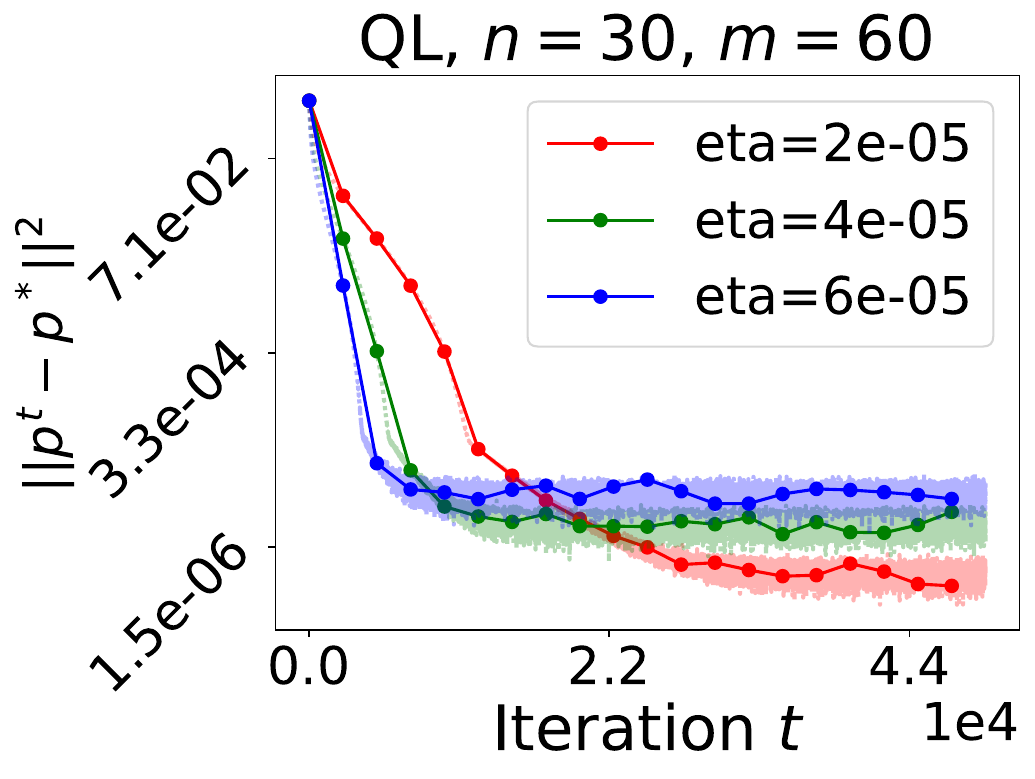}
    \includegraphics[scale=.23]{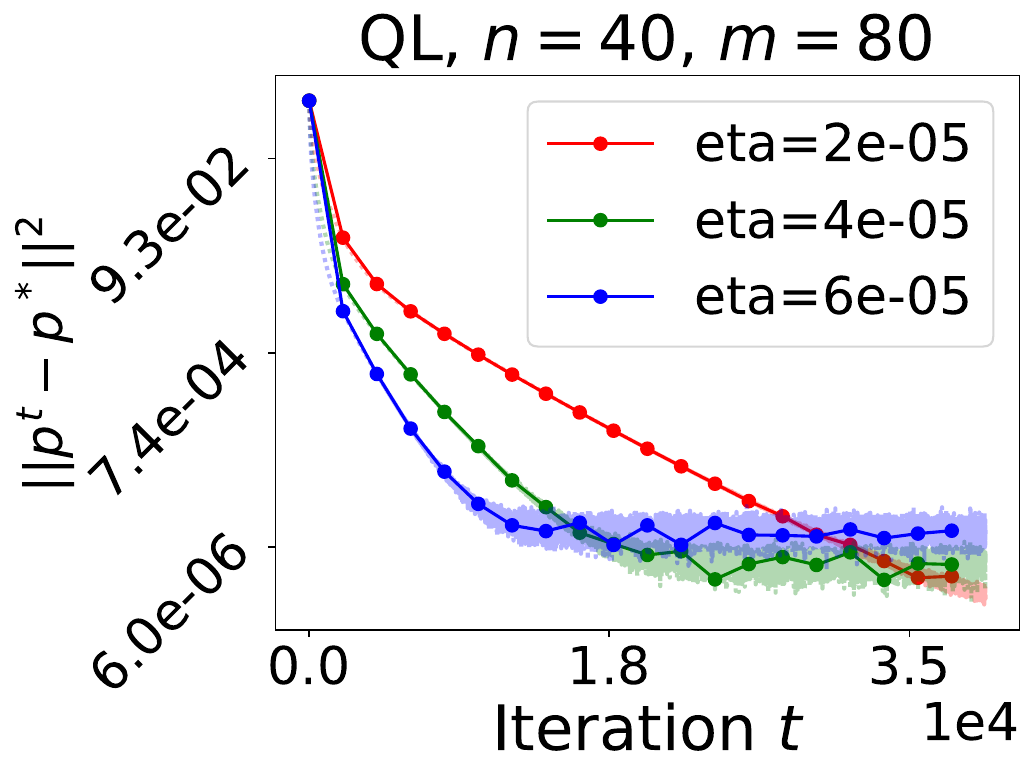}

    \includegraphics[scale=.23]{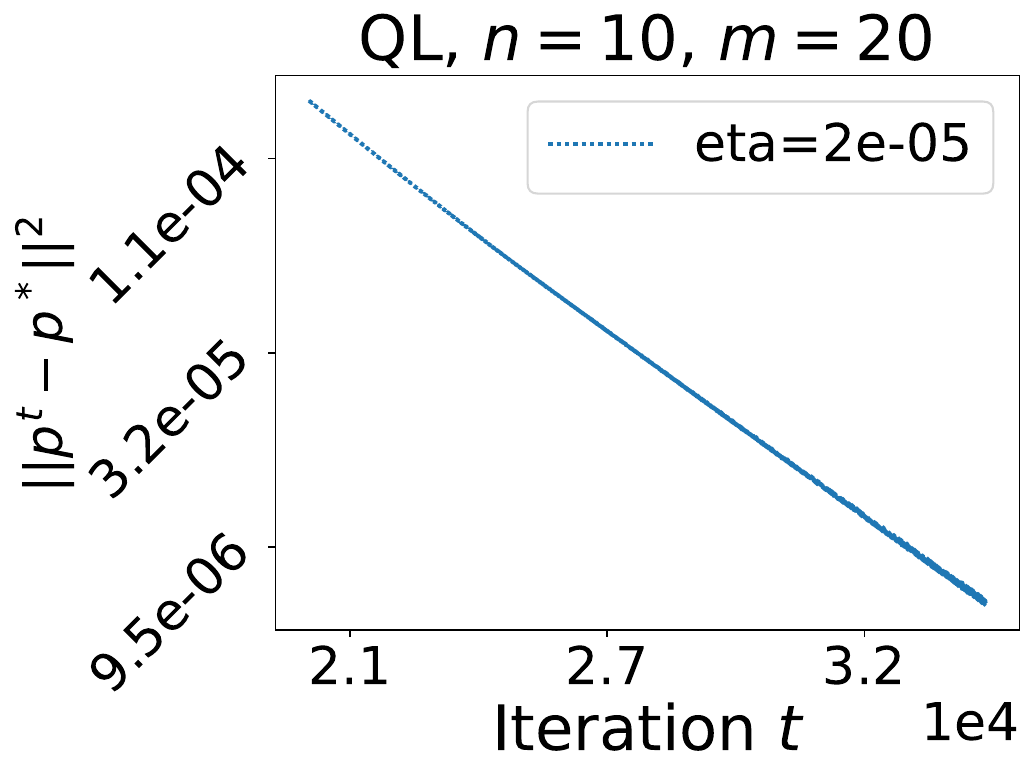}
    \includegraphics[scale=.23]{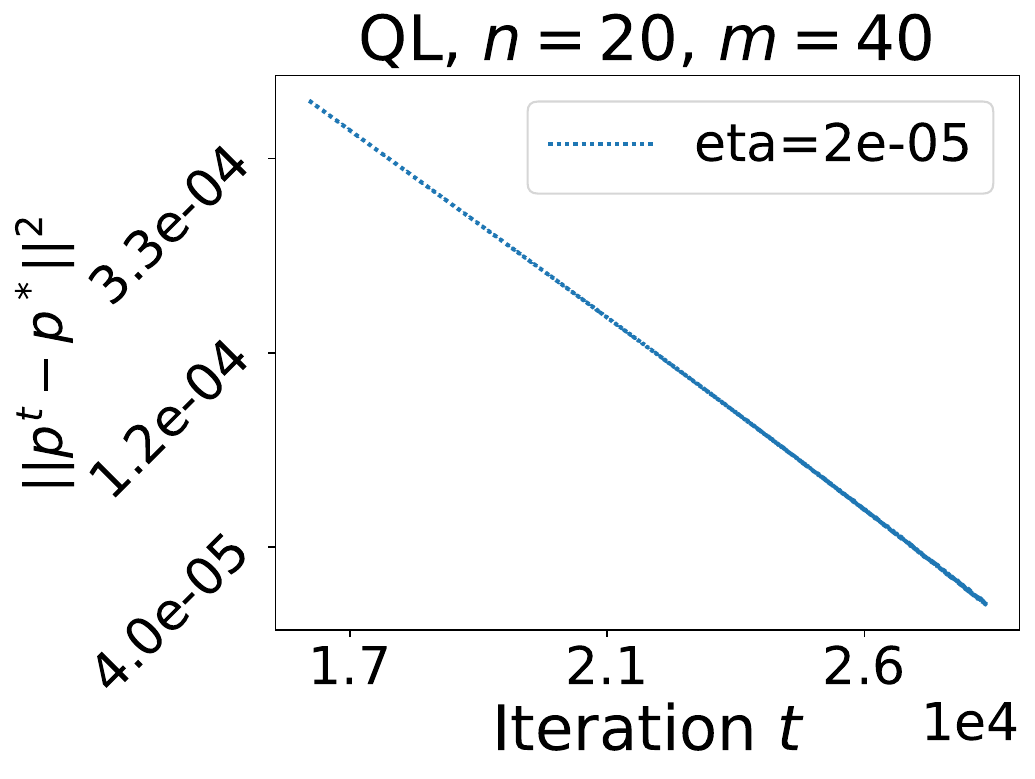}
    \includegraphics[scale=.23]{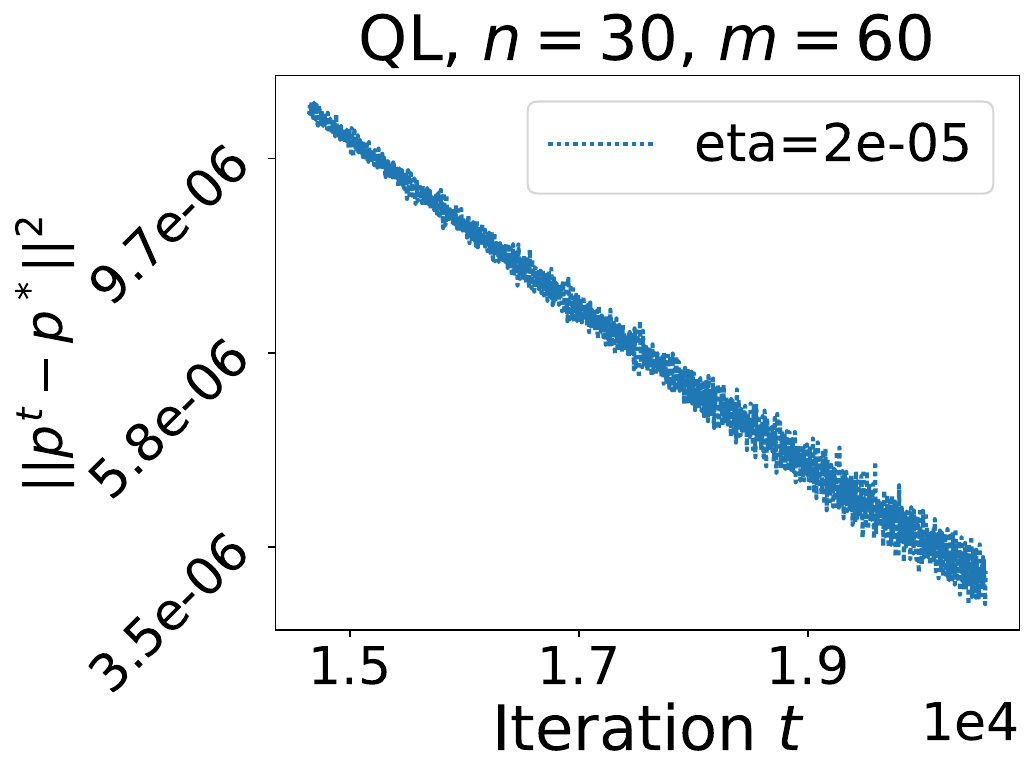}
    \includegraphics[scale=.23]{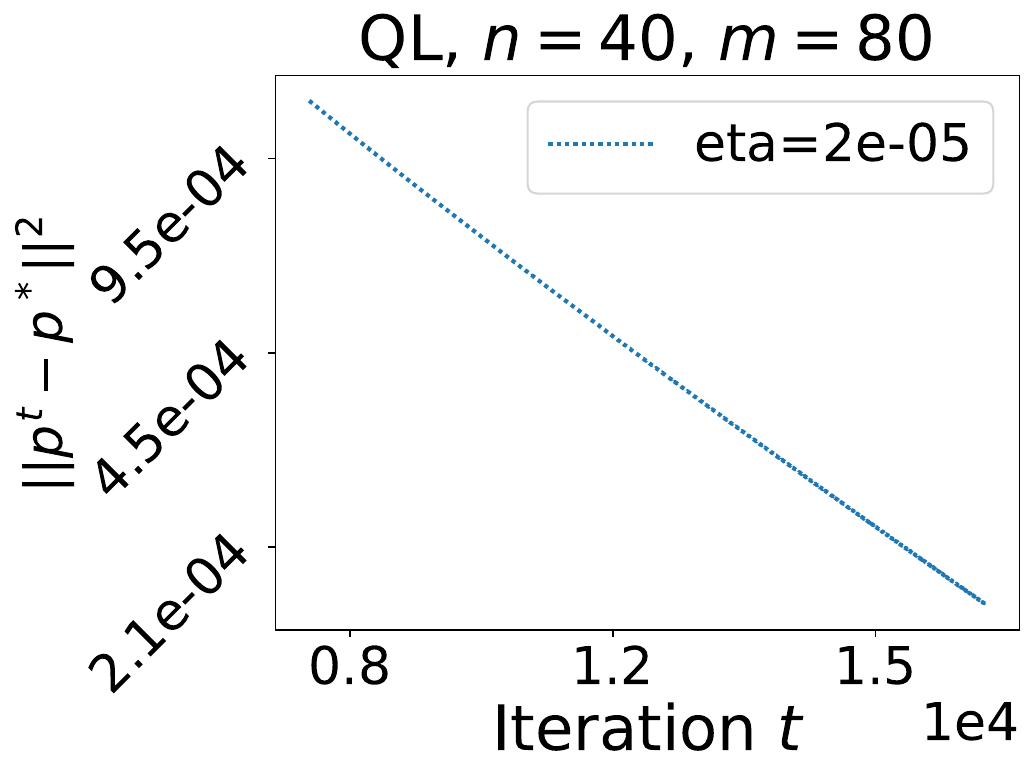}
    \caption{Convergence of squared error norms on random generated instances ($v$ is generated from the log-normal distribution associated with the standard normal distribution $\mathcal{N}(0, 1)$) of different sizes under quasi-linear utilities. 
    }
    \label{fig:single-instance-lognormal-error-norms-ql}
\end{figure}

\begin{figure}[t]
    \centering
    \includegraphics[scale=.23]{plots/linear/single-instance/exponential/error-norm-sq-vs-t-different-stepsizes-1-10-20.pdf}
    \includegraphics[scale=.23]{plots/linear/single-instance/exponential/error-norm-sq-vs-t-different-stepsizes-1-20-40.pdf}
    \includegraphics[scale=.23]{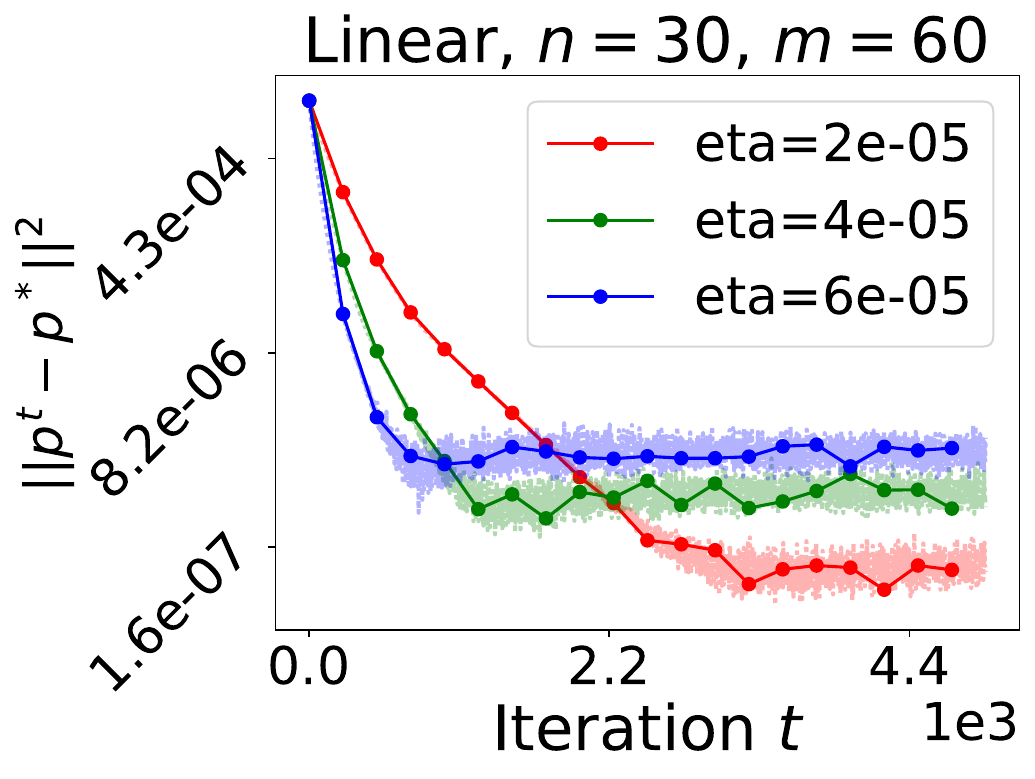}
    \includegraphics[scale=.23]{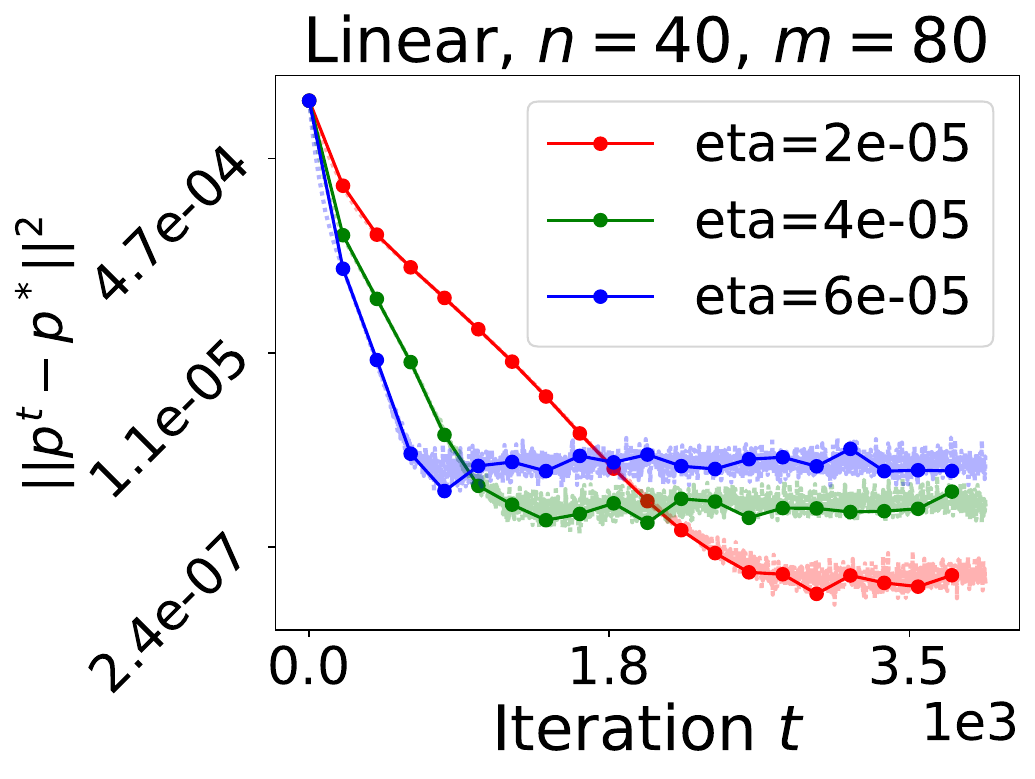}

    \includegraphics[scale=.23]{plots/linear/single-instance/exponential/error-norm-sq-vs-t-straight-line-1-10-20-2e-05.pdf}
    \includegraphics[scale=.23]{plots/linear/single-instance/exponential/error-norm-sq-vs-t-straight-line-1-20-40-2e-05.pdf}
    \includegraphics[scale=.23]{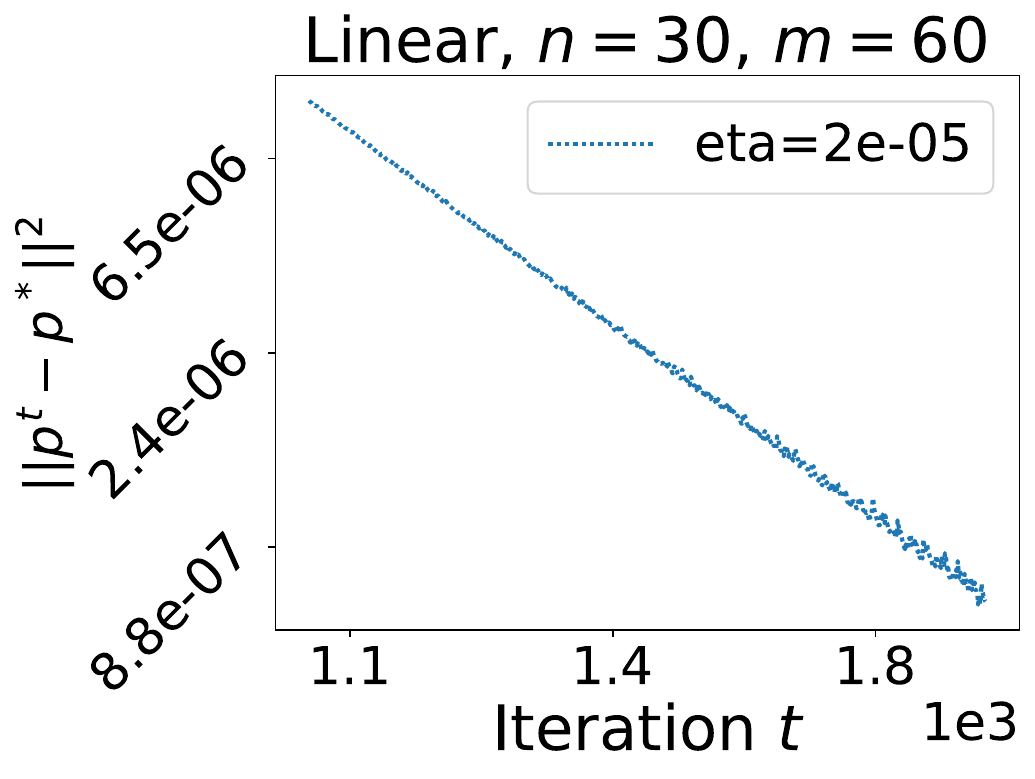}
    \includegraphics[scale=.23]{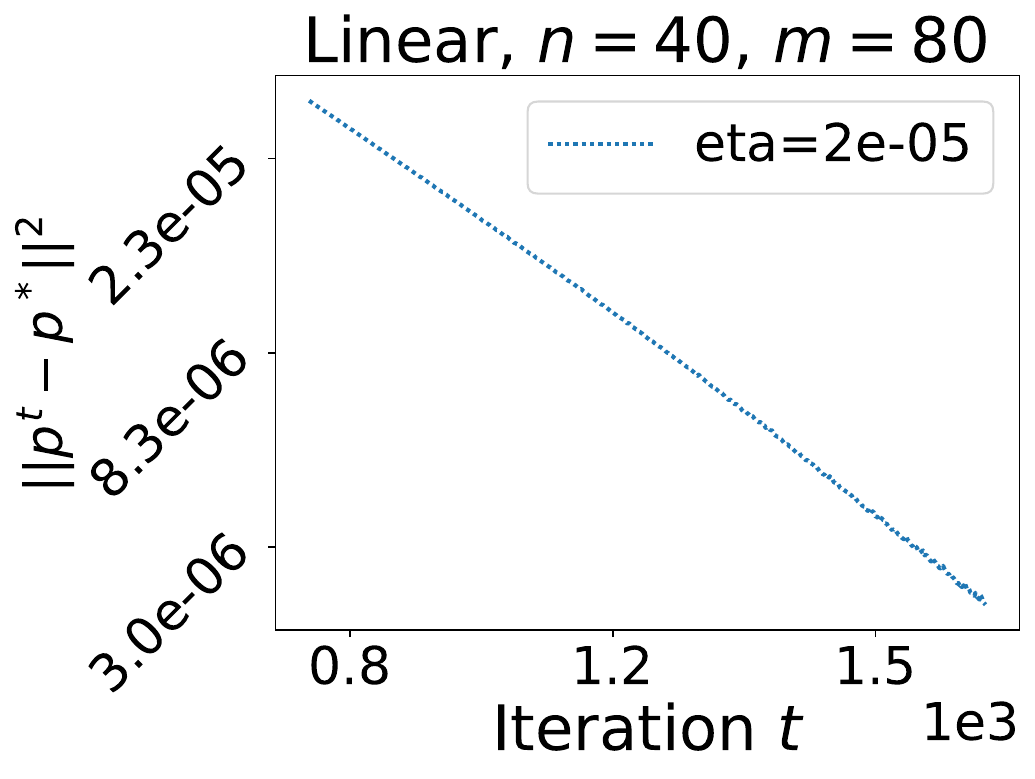}
    \caption{Convergence of squared error norms on random generated instances ($v$ is generated from the exponential distribution with the scale parameter $1$) of different sizes under linear utilities. 
    }
    \label{fig:single-instance-exponential-error-norms-linear}
\end{figure}

\begin{figure}[t]
    \centering
    \includegraphics[scale=.23]{plots/ql/single-instance/exponential/error-norm-sq-vs-t-different-stepsizes-1-10-20.pdf}
    \includegraphics[scale=.23]{plots/ql/single-instance/exponential/error-norm-sq-vs-t-different-stepsizes-1-20-40.pdf}
    \includegraphics[scale=.23]{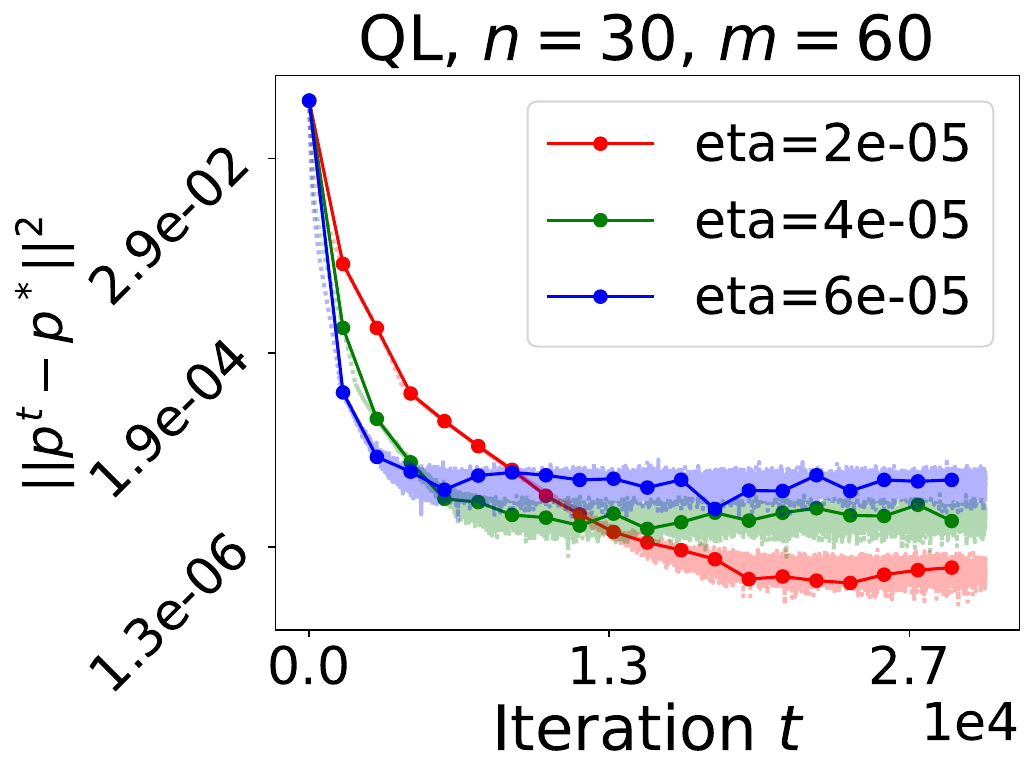}
    \includegraphics[scale=.23]{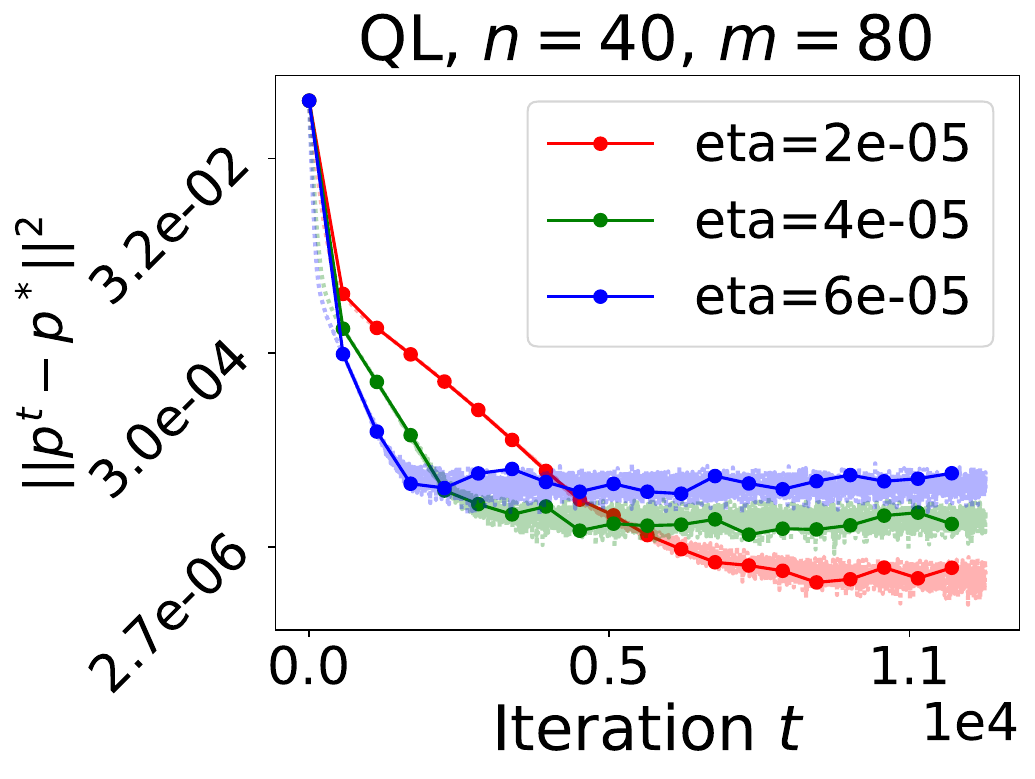}

    \includegraphics[scale=.23]{plots/ql/single-instance/exponential/error-norm-sq-vs-t-straight-line-1-10-20-2e-05.pdf}
    \includegraphics[scale=.23]{plots/ql/single-instance/exponential/error-norm-sq-vs-t-straight-line-1-20-40-2e-05.pdf}
    \includegraphics[scale=.23]{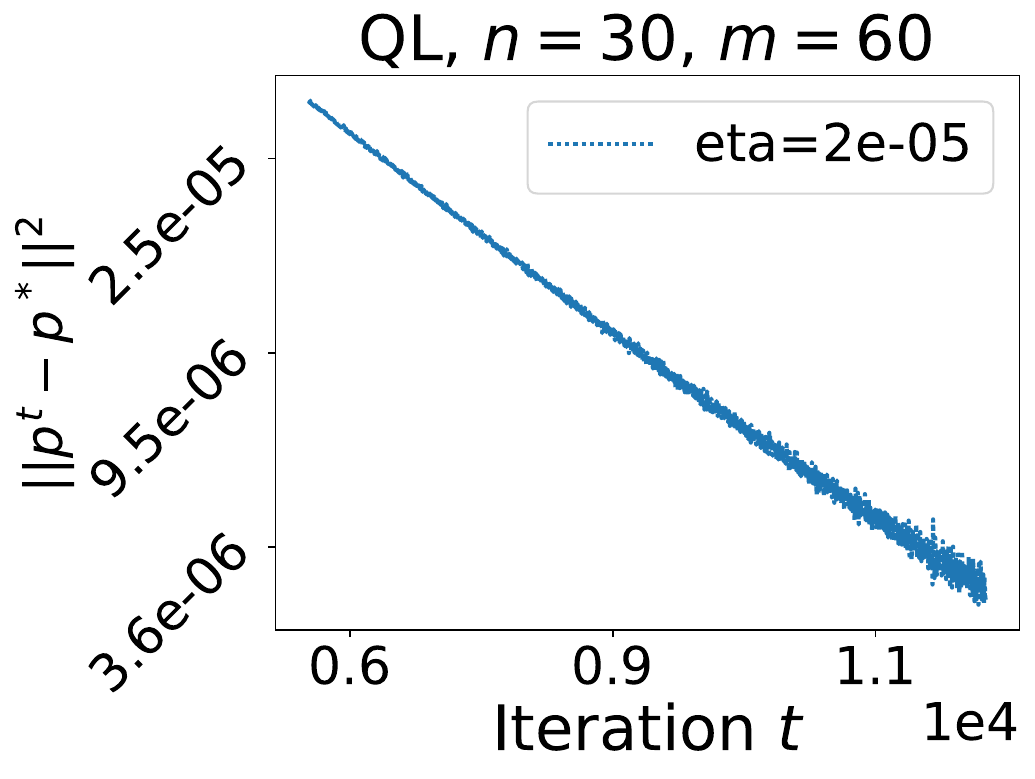}
    \includegraphics[scale=.23]{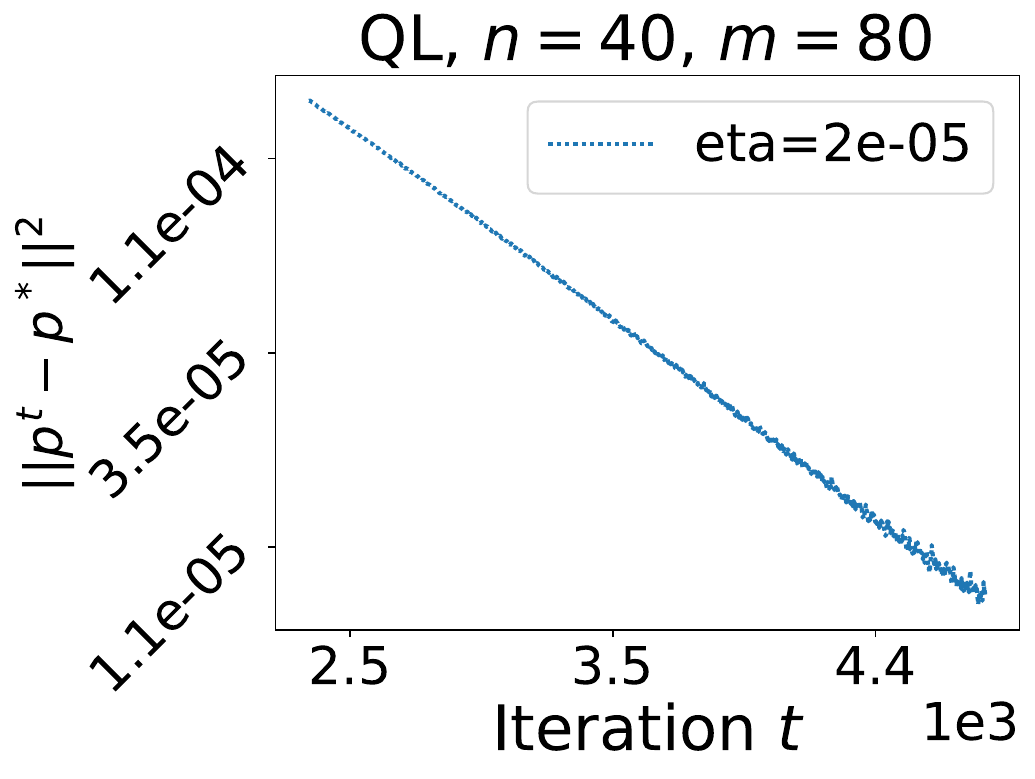}
    \caption{Convergence of squared error norms on random generated instances ($v$ is generated from the exponential distribution with the scale parameter $1$) of different sizes under quasi-linear utilities. 
    }
    \label{fig:single-instance-exponential-error-norms-ql}
\end{figure}

\begin{figure}[t]
    \centering
    \includegraphics[scale=.23]{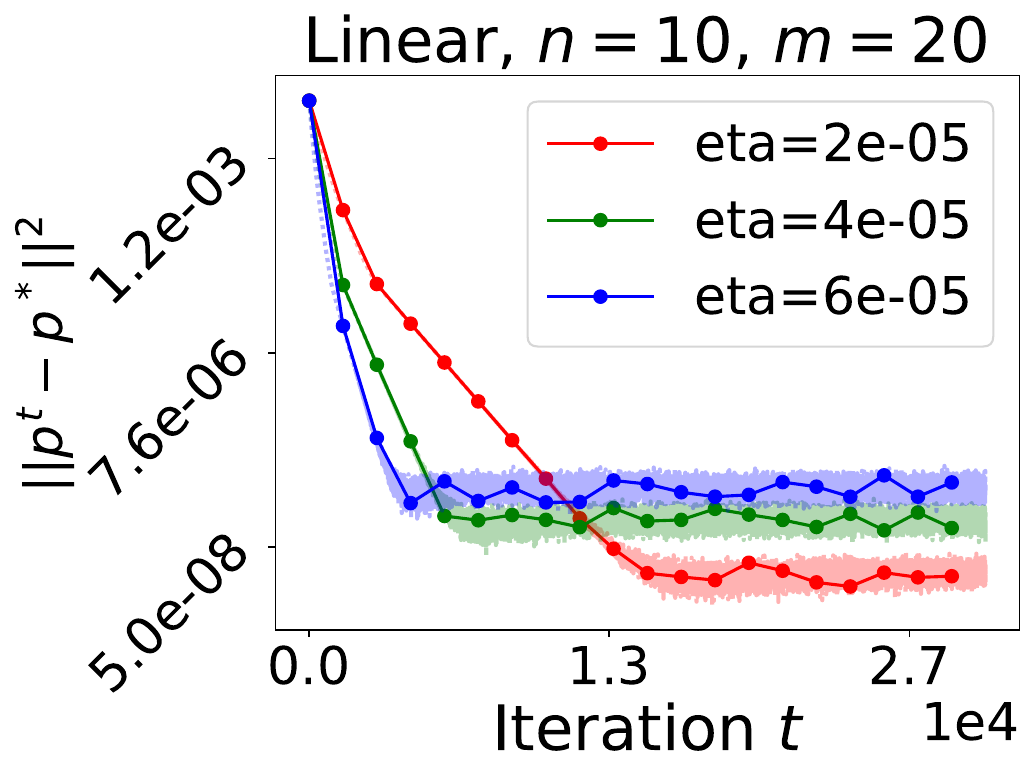}
    \includegraphics[scale=.23]{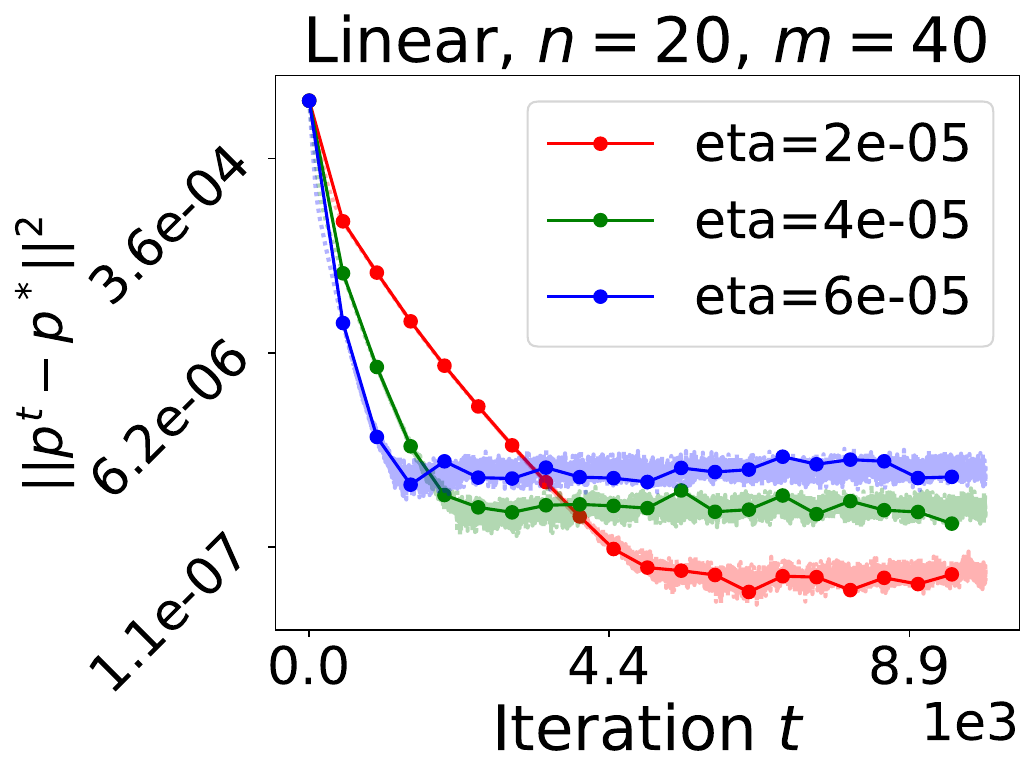}
    \includegraphics[scale=.23]{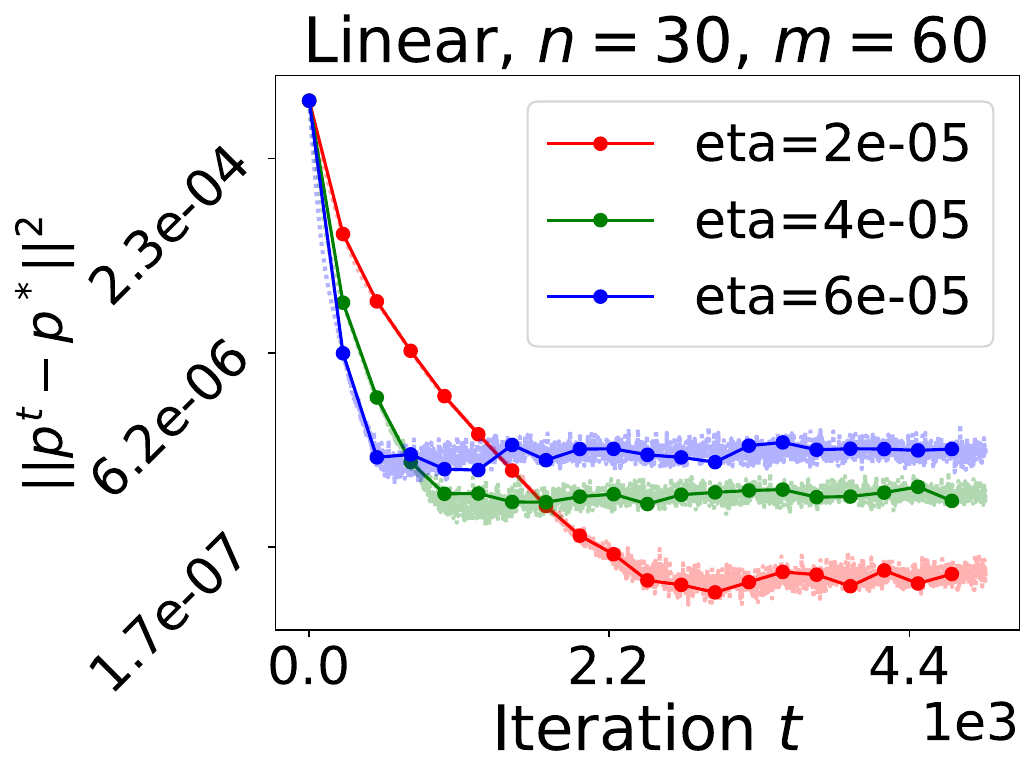}
    \includegraphics[scale=.23]{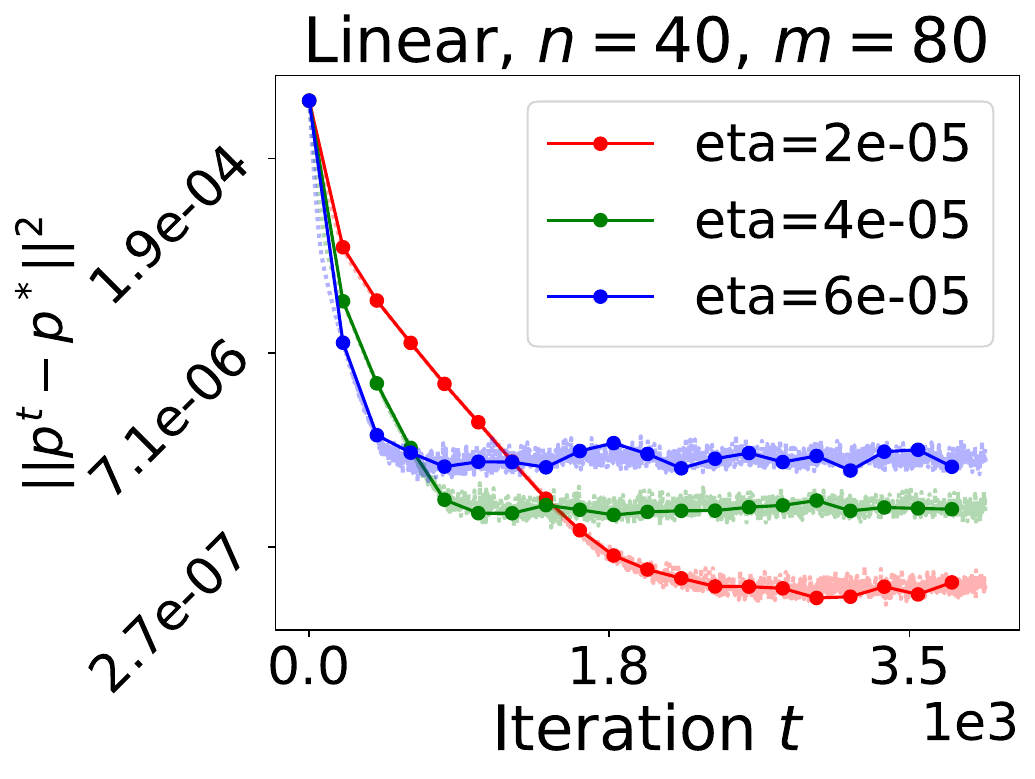}

    \includegraphics[scale=.23]{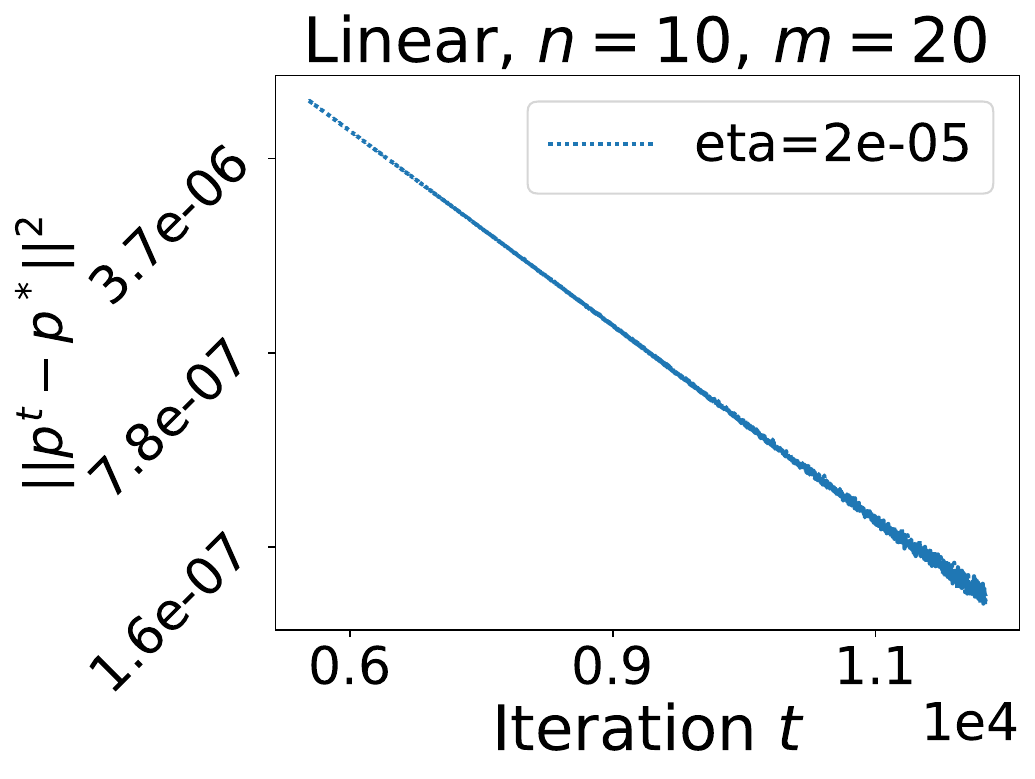}
    \includegraphics[scale=.23]{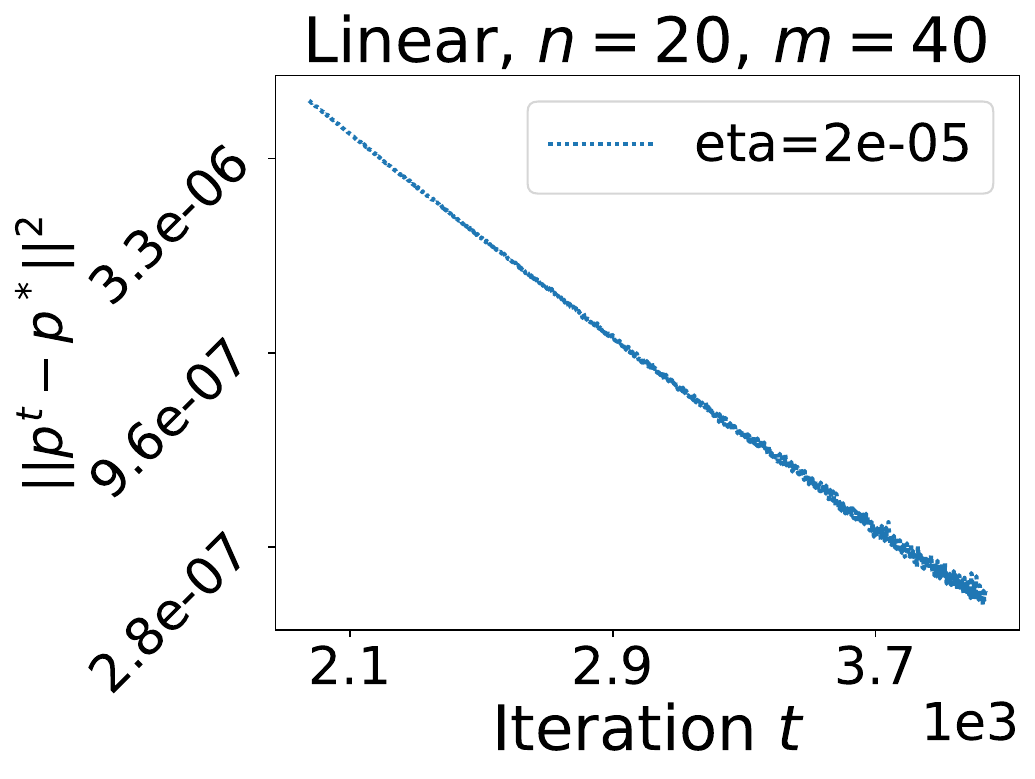}
    \includegraphics[scale=.23]{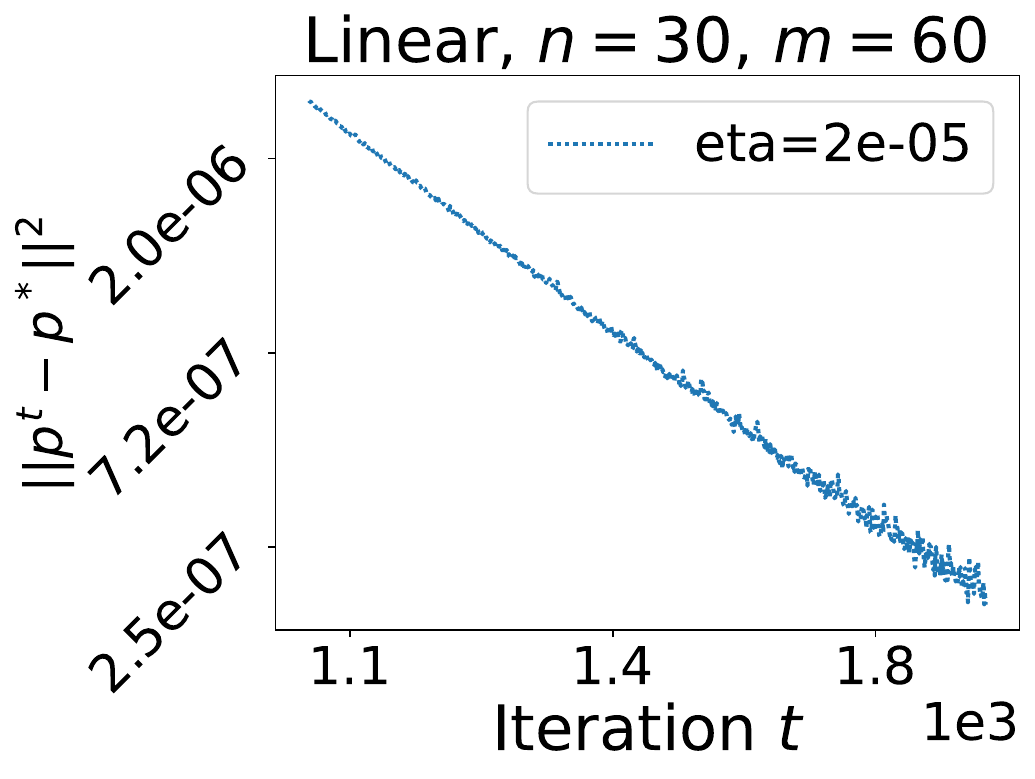}
    \includegraphics[scale=.23]{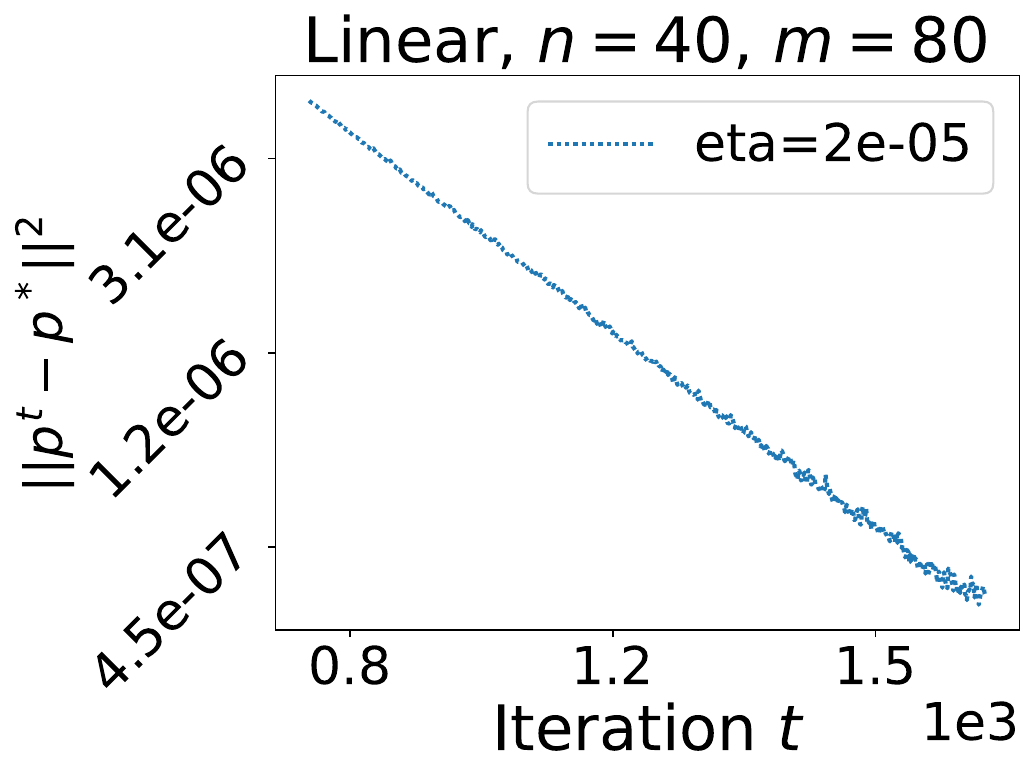}
    \caption{Convergence of squared error norms on random generated instances ($v$ is generated from the truncated normal distribution associated with $\mathcal{N}(0, 1)$, and truncated at ${10}^{-3}$ and $10$ standard deviations from $0$) of different sizes under linear utilities. 
    }
    \label{fig:single-instance-truncnorm-error-norms-linear}
\end{figure}

\begin{figure}[t]
    \centering
    \includegraphics[scale=.23]{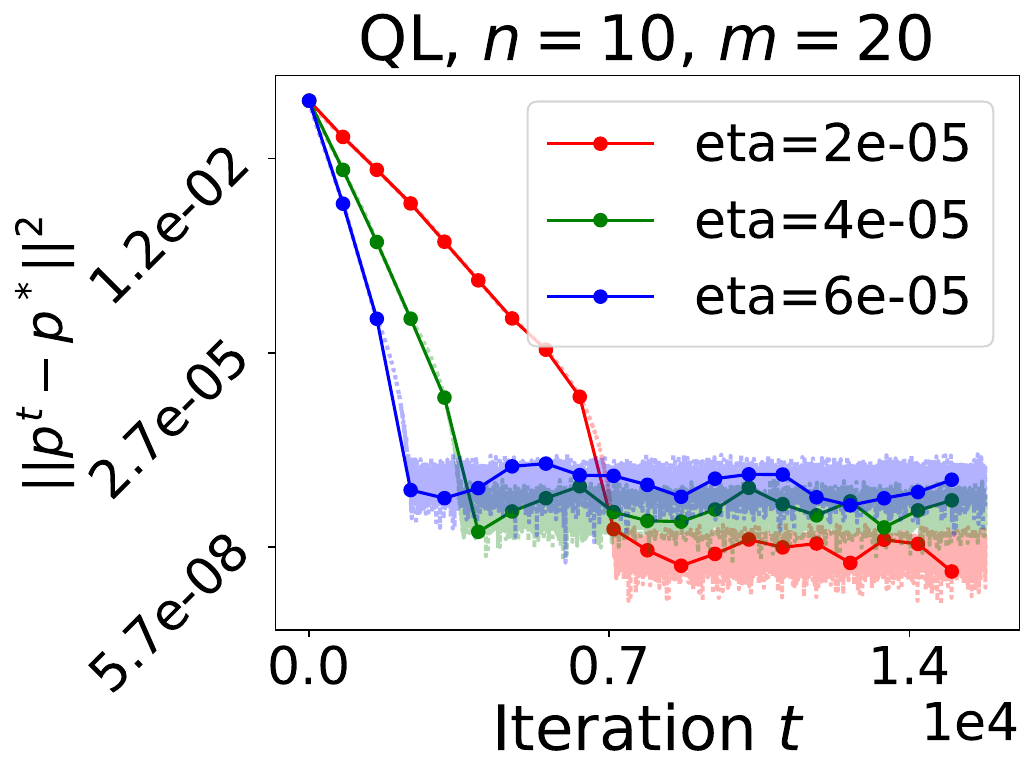}
    \includegraphics[scale=.23]{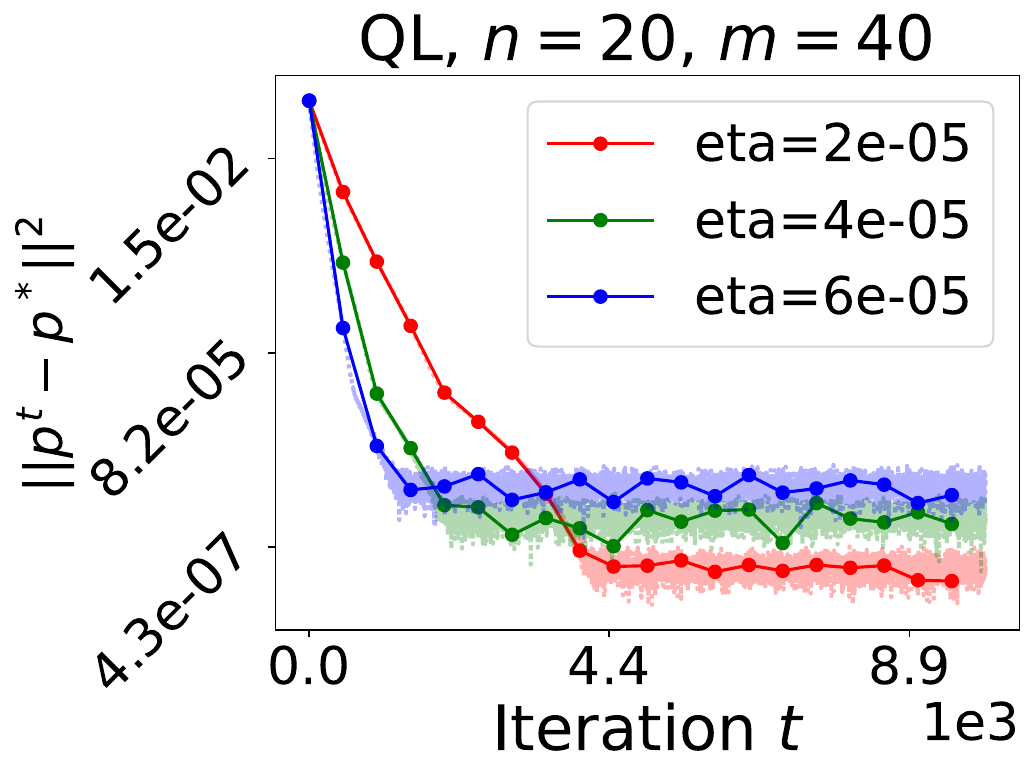}
    \includegraphics[scale=.23]{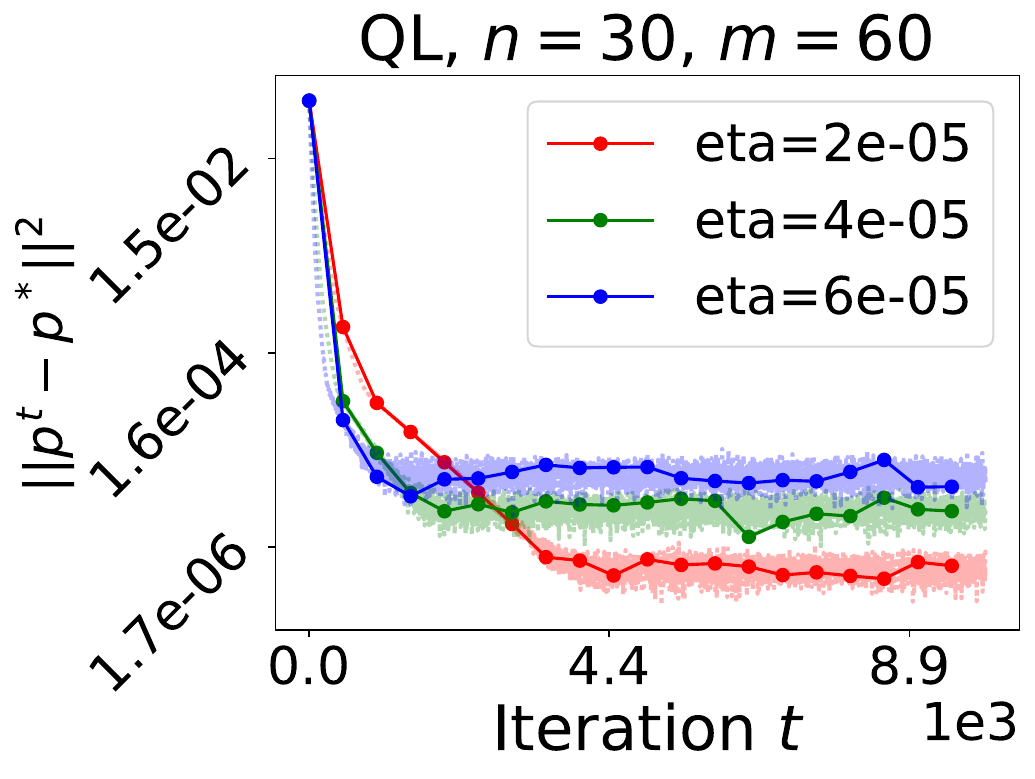}
    \includegraphics[scale=.23]{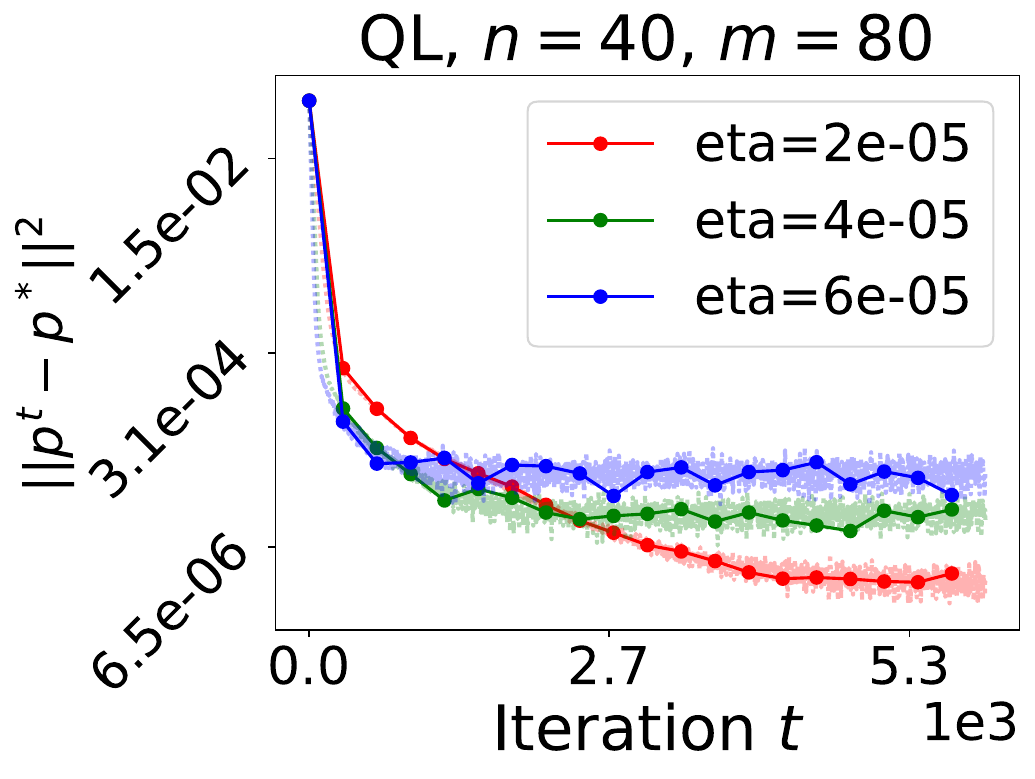}

    \includegraphics[scale=.23]{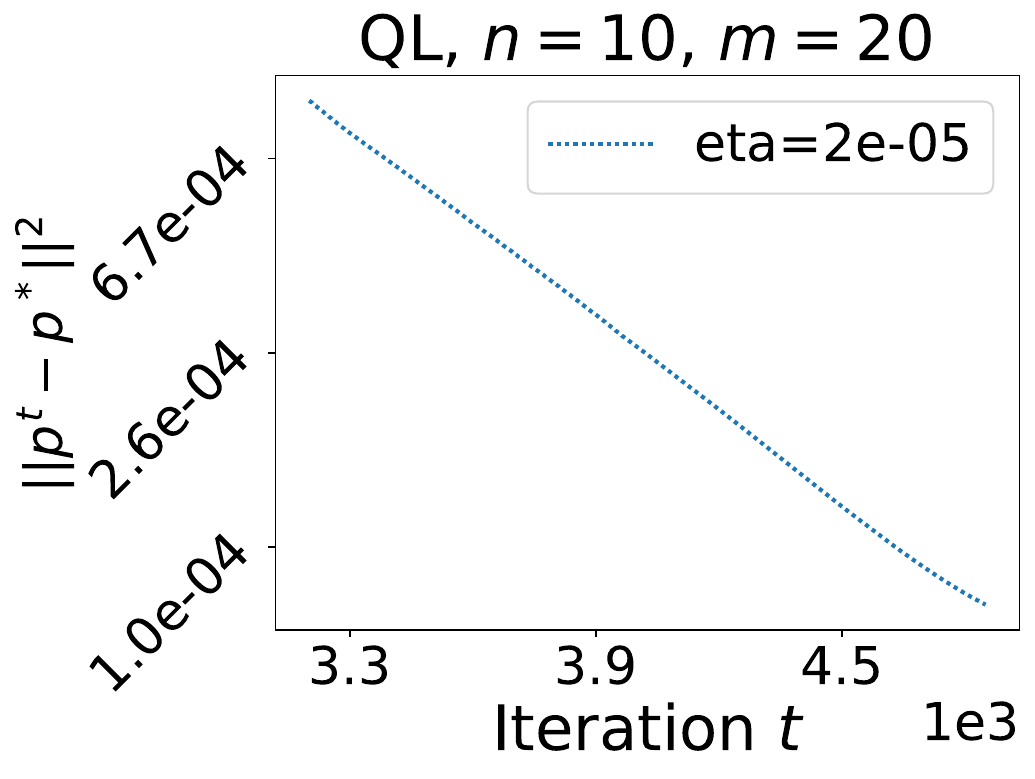}
    \includegraphics[scale=.23]{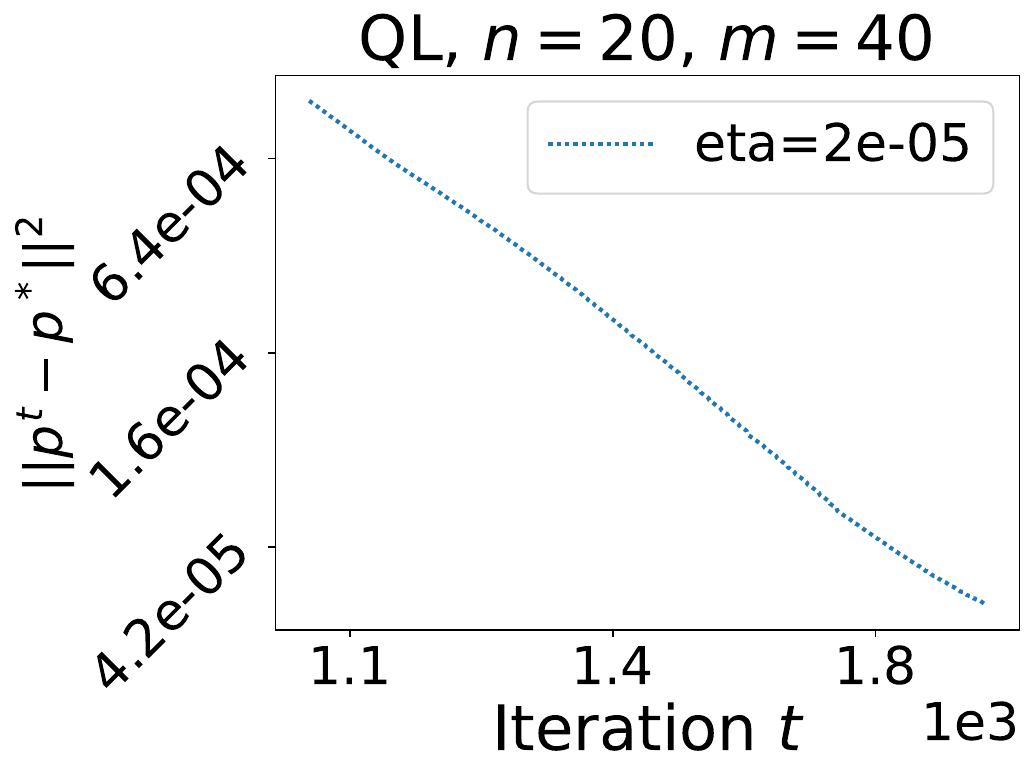}
    \includegraphics[scale=.23]{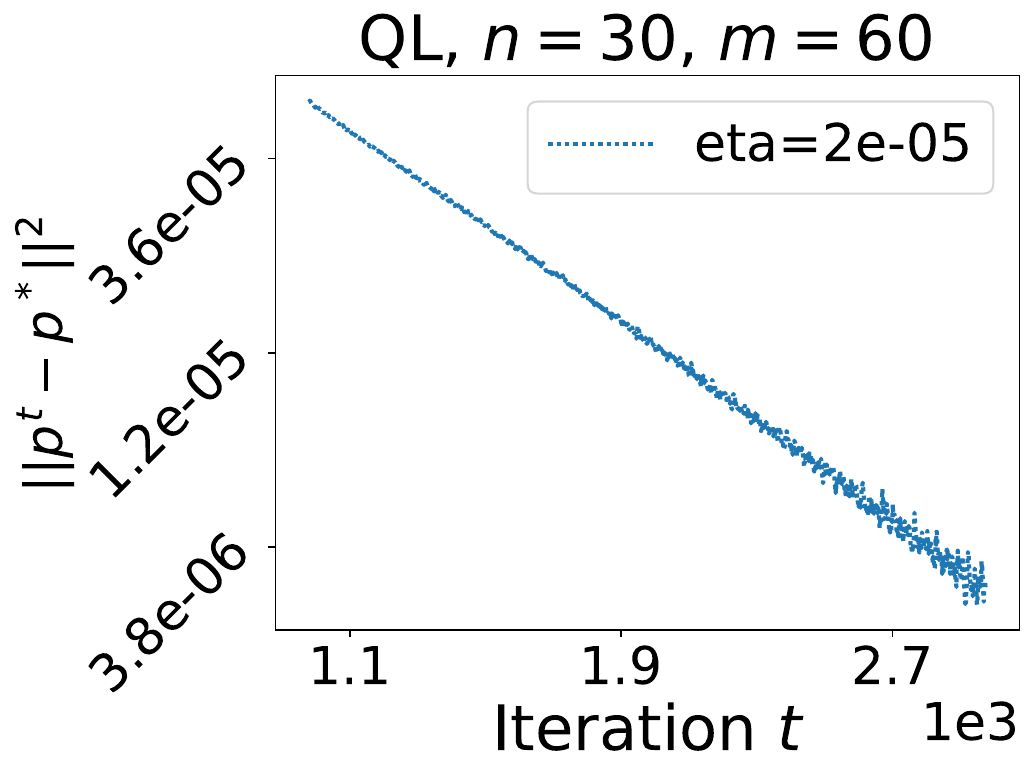}
    \includegraphics[scale=.23]{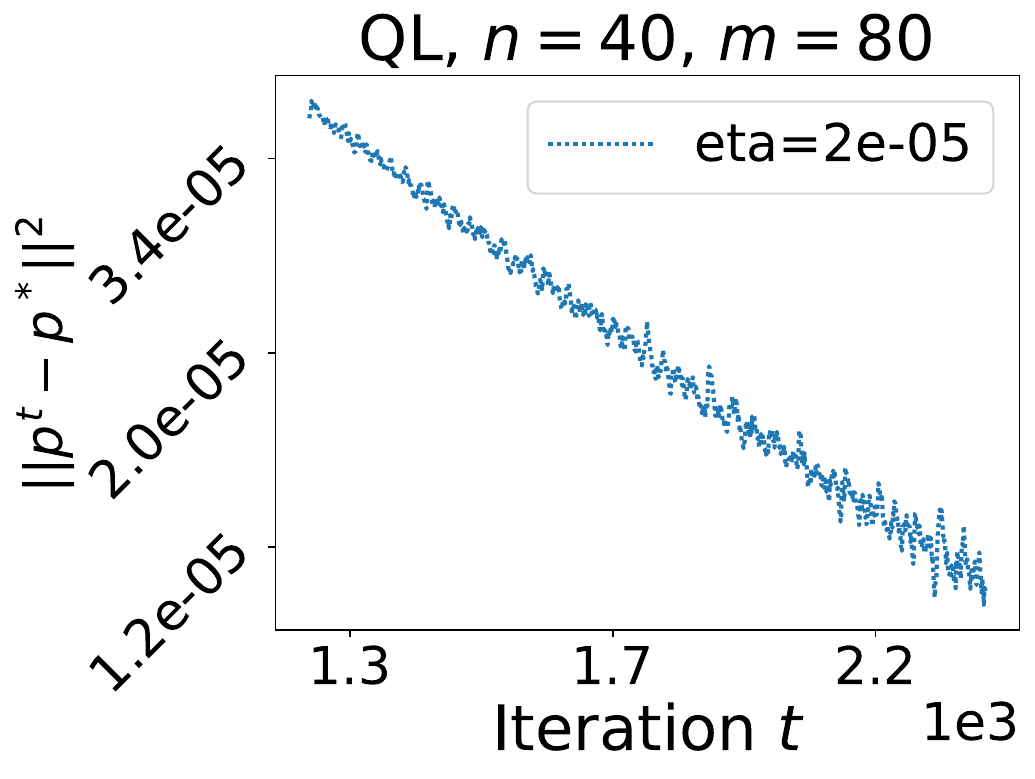}
    \caption{Convergence of squared error norms on random generated instances ($v$ is generated from the truncated normal distribution associated with $\mathcal{N}(0, 1)$, and truncated at ${10}^{-3}$ and $10$ standard deviations from $0$) of different sizes under quasi-linear utilities. 
    }
    \label{fig:single-instance-truncnorm-error-norms-ql}
\end{figure}

\begin{figure}[t]
    \centering
    \includegraphics[scale=.23]{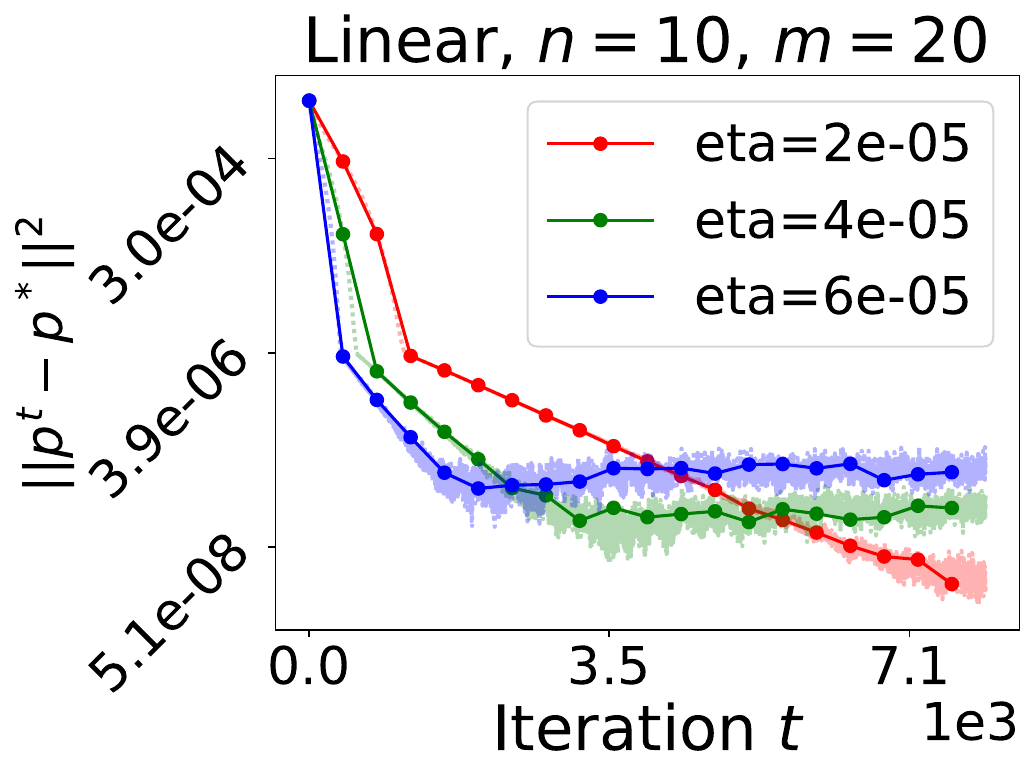}
    \includegraphics[scale=.23]{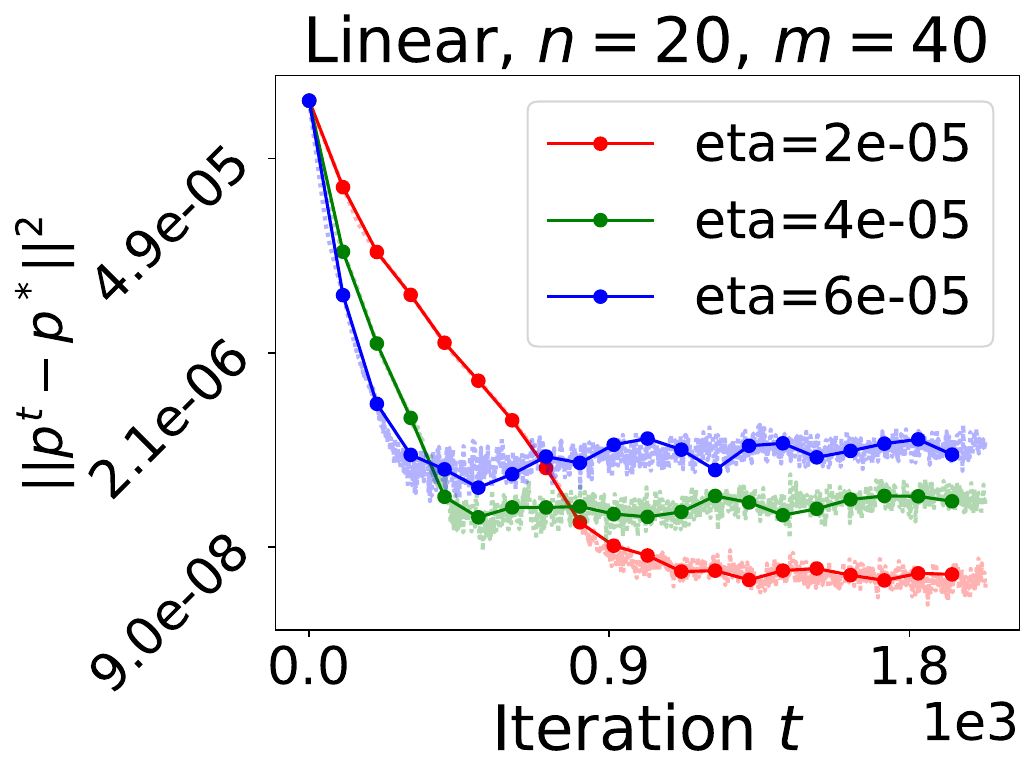}
    \includegraphics[scale=.23]{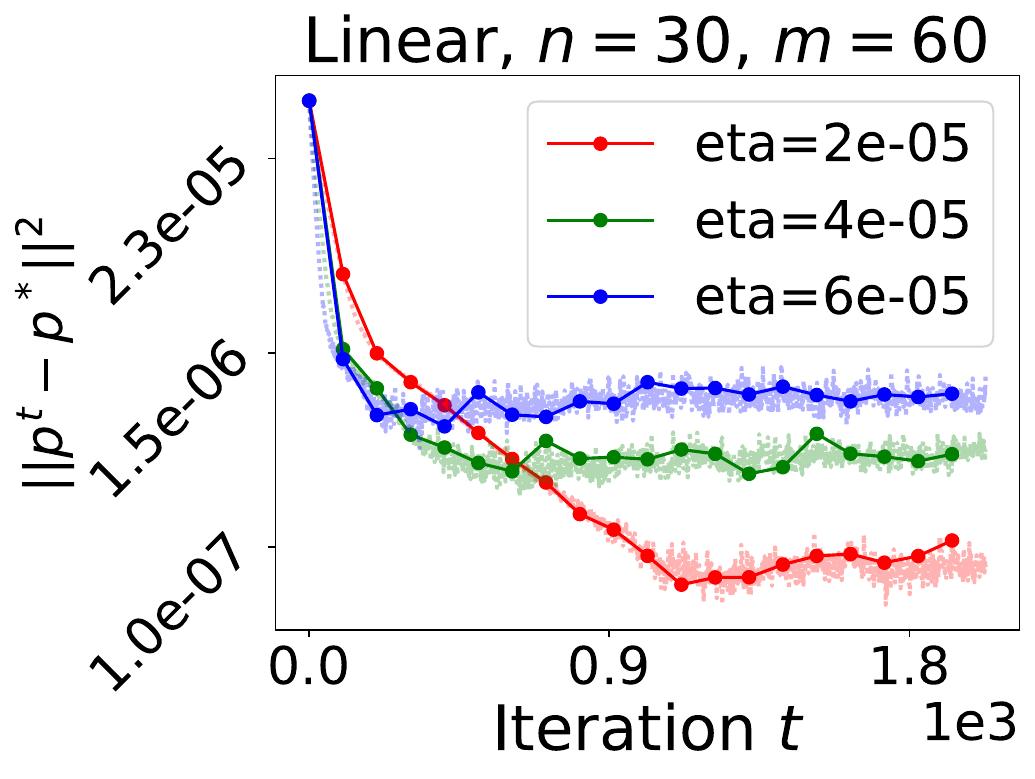}
    \includegraphics[scale=.23]{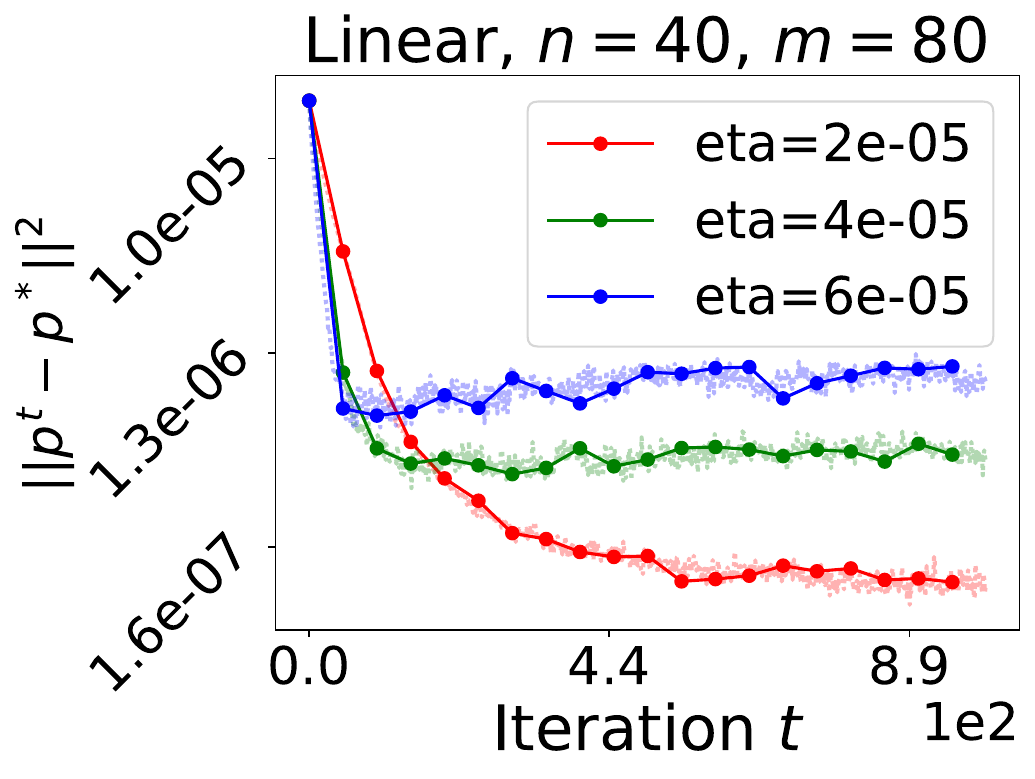}

    \includegraphics[scale=.23]{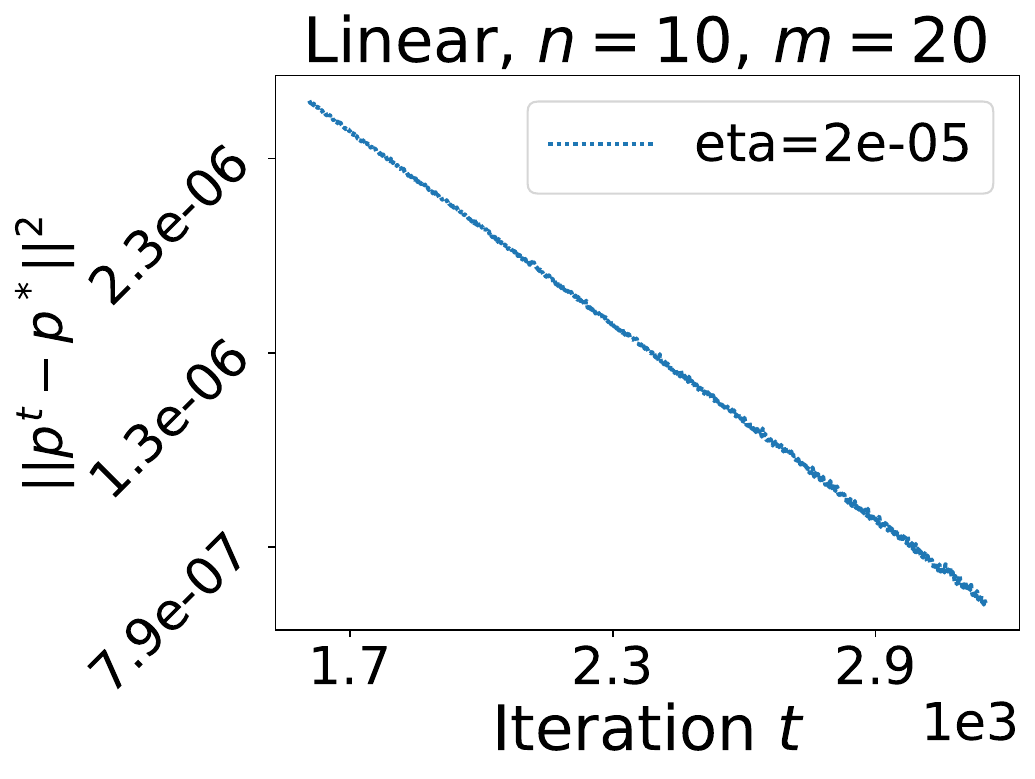}
    \includegraphics[scale=.23]{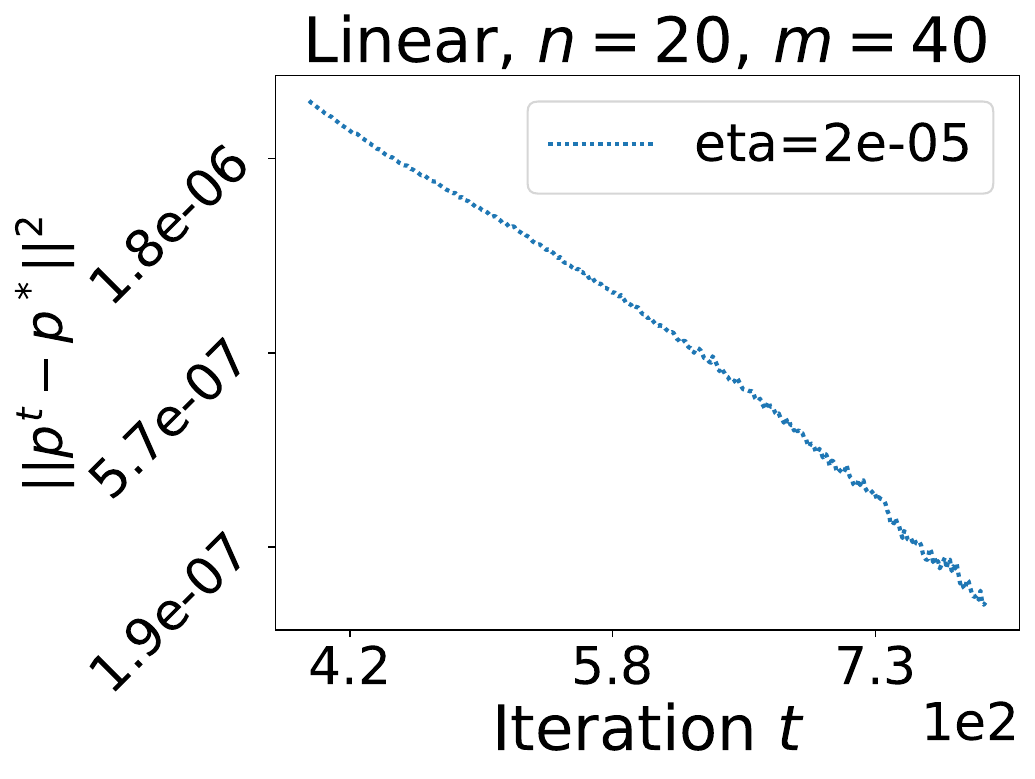}
    \includegraphics[scale=.23]{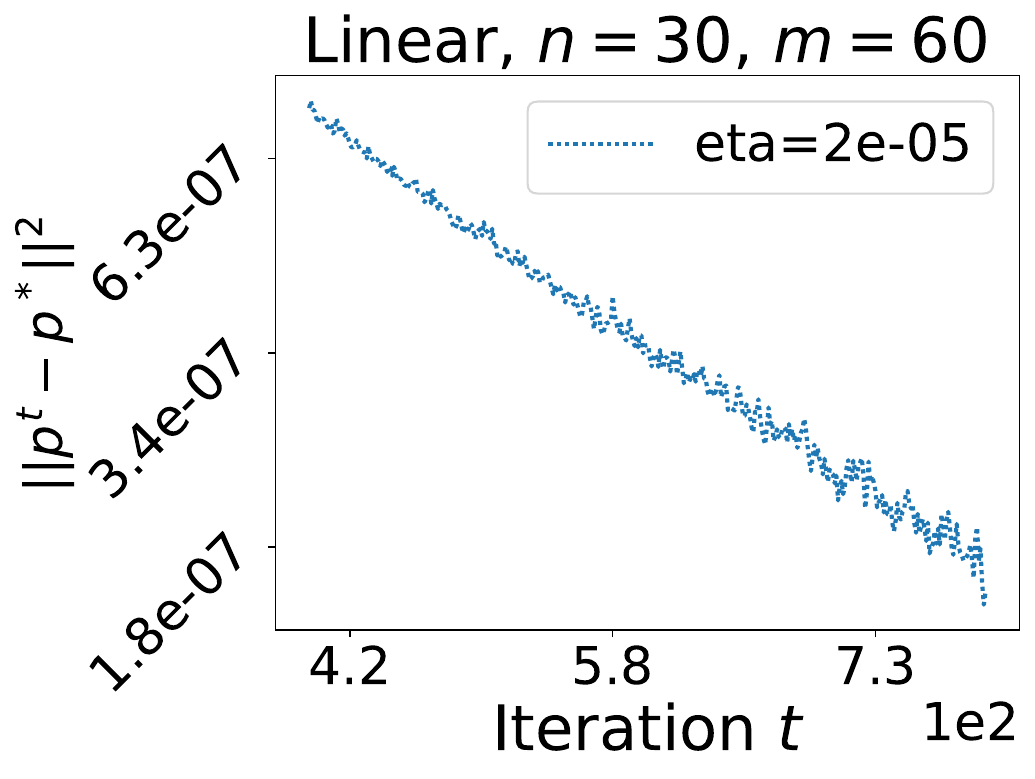}
    \includegraphics[scale=.23]{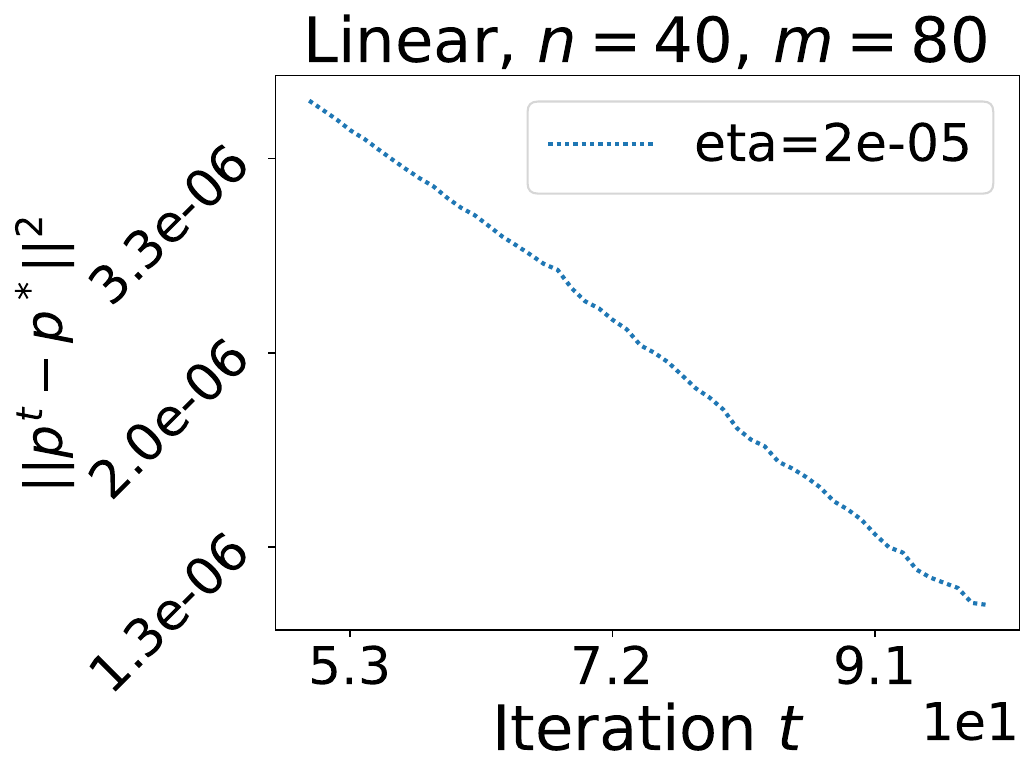}
    \caption{Convergence of squared error norms on random generated instances ($v$ is generated from the uniform random integers on $\{1,\ldots,100\}$) of different sizes under linear utilities. 
    }
    \label{fig:single-instance-randint-error-norms-linear}
\end{figure}

\begin{figure}[t]
    \centering
    \includegraphics[scale=.23]{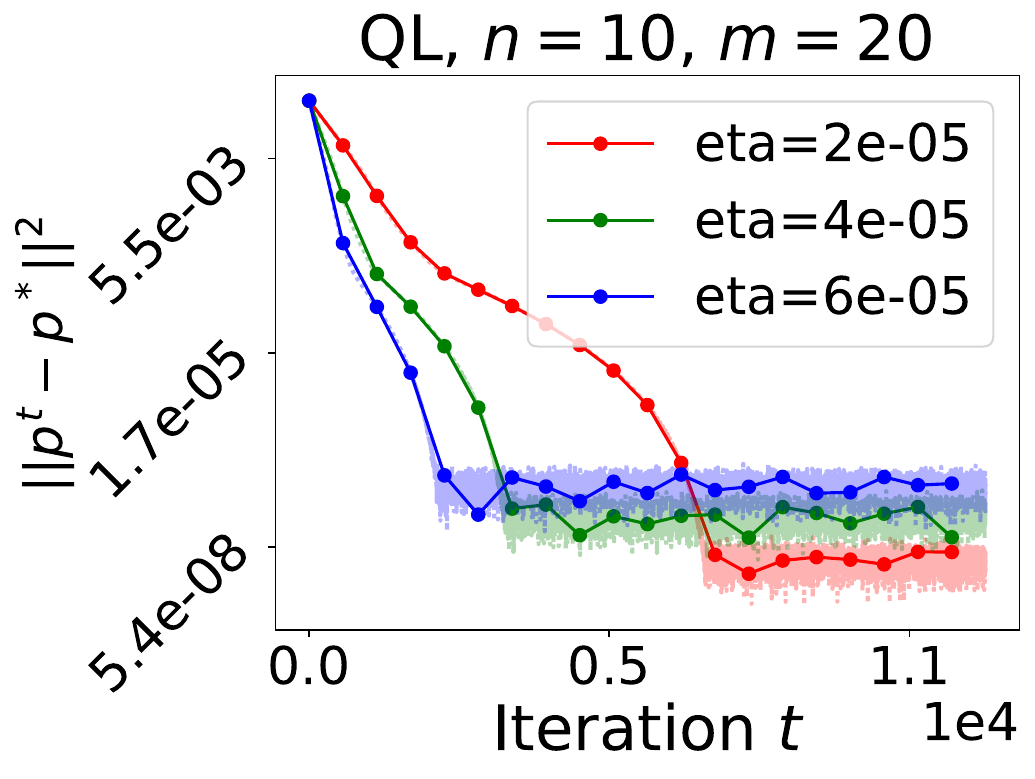}
    \includegraphics[scale=.23]{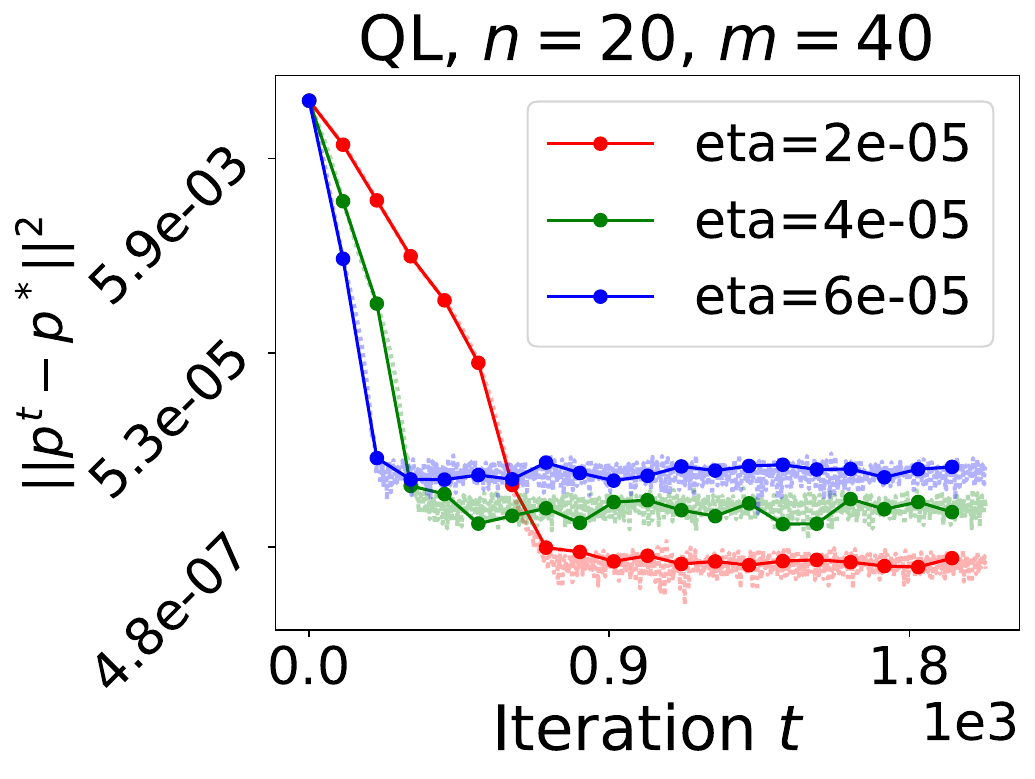}
    \includegraphics[scale=.23]{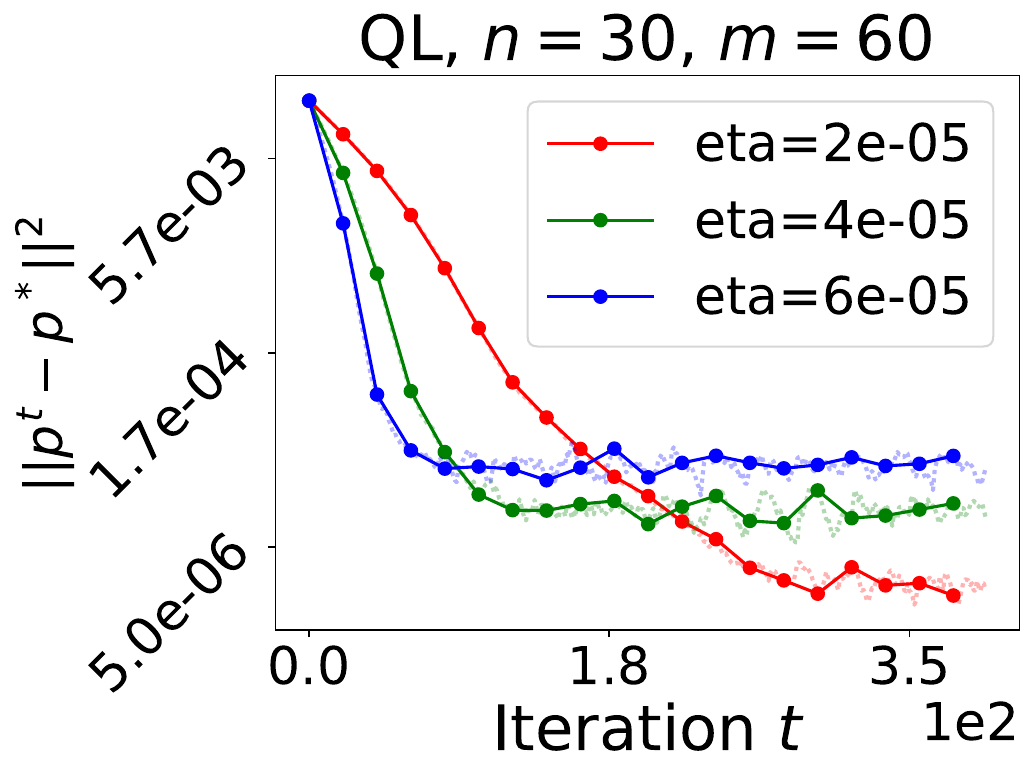}
    \includegraphics[scale=.23]{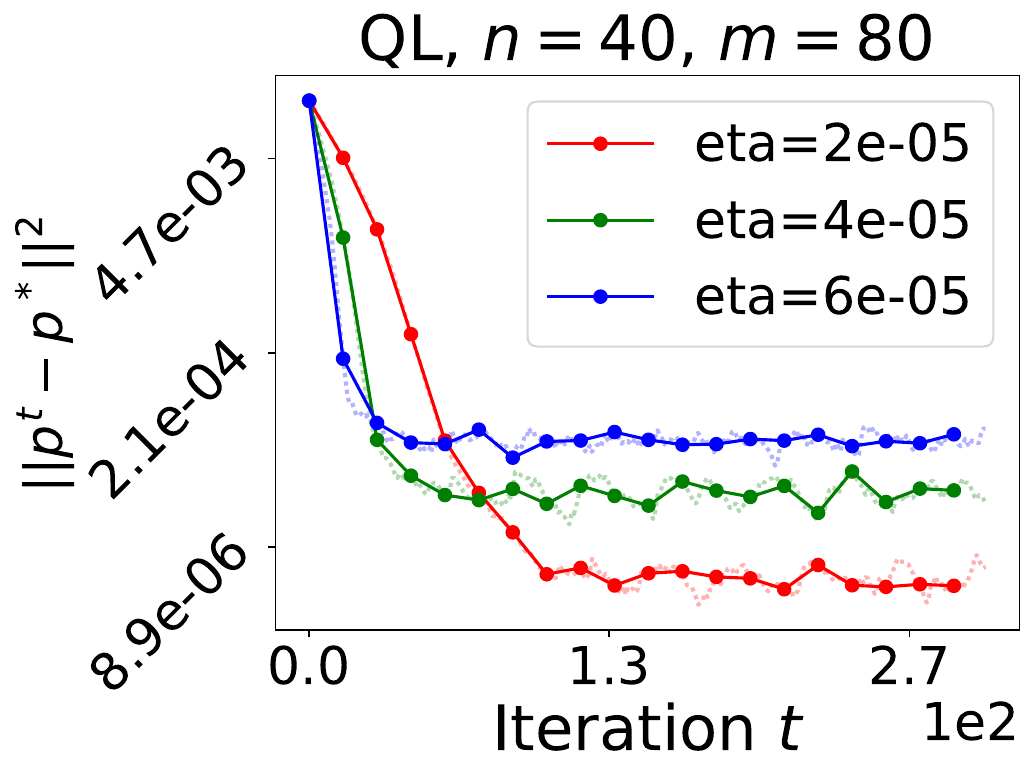}

    \includegraphics[scale=.23]{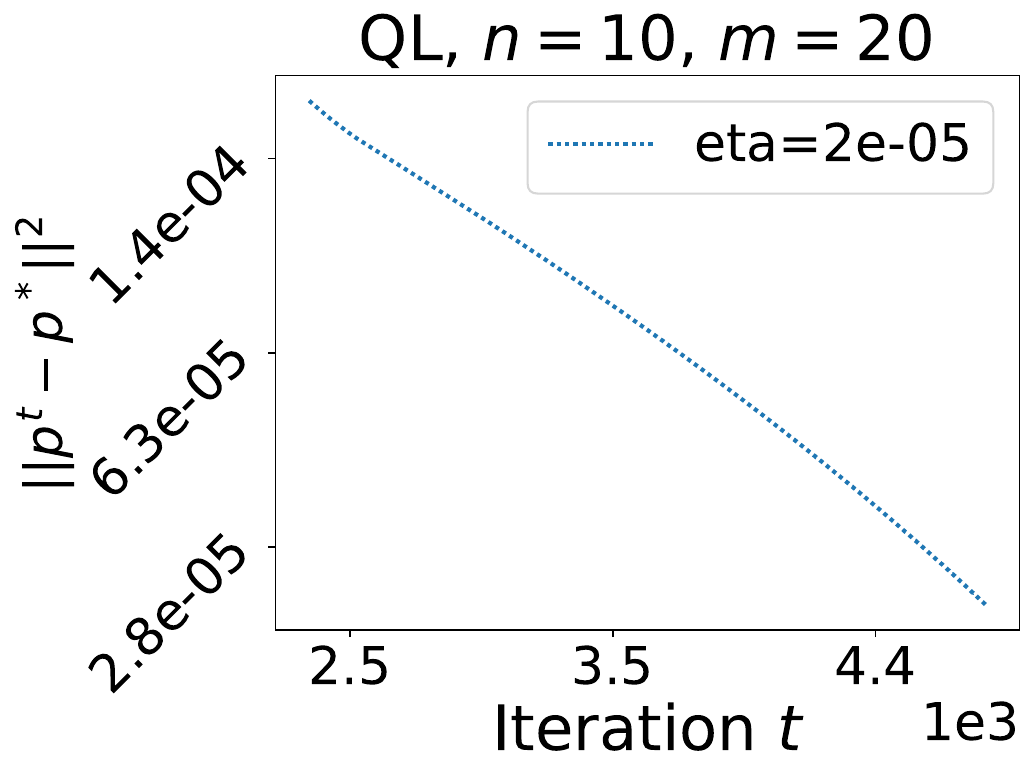}
    \includegraphics[scale=.23]{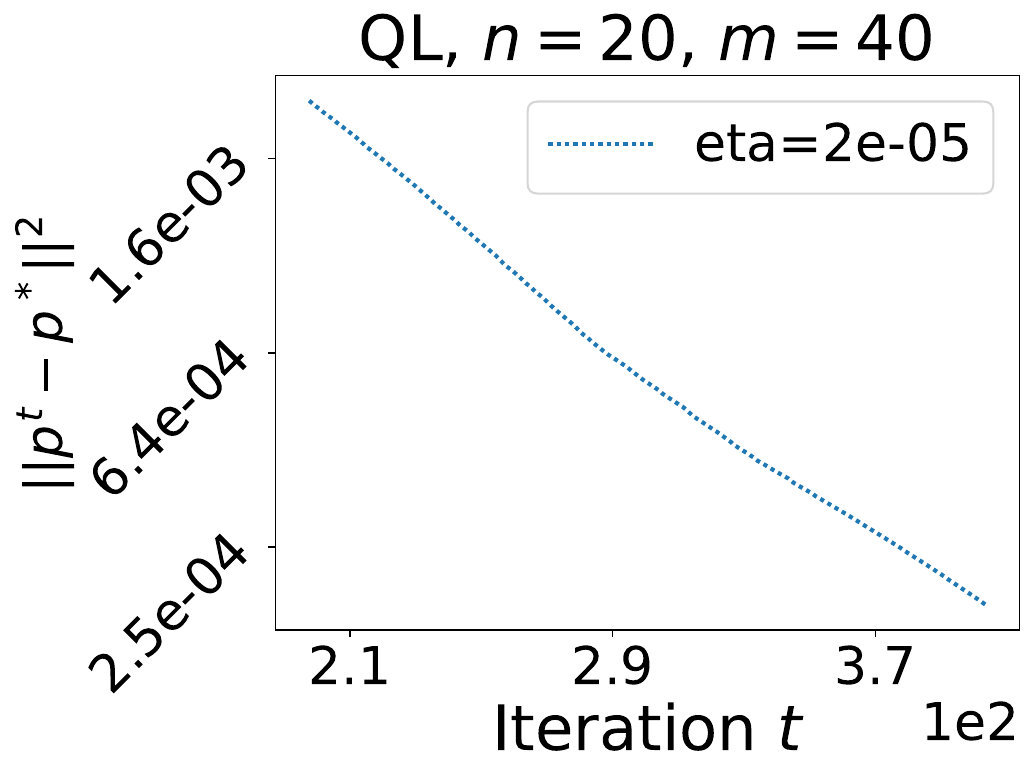}
    \includegraphics[scale=.23]{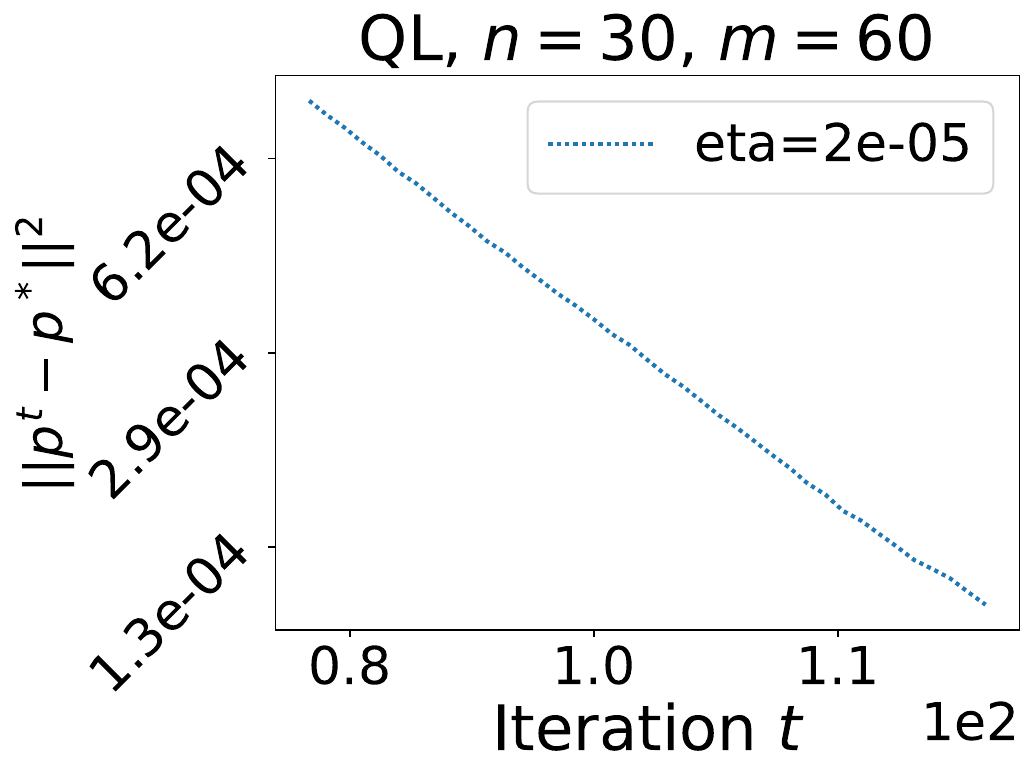}
    \includegraphics[scale=.23]{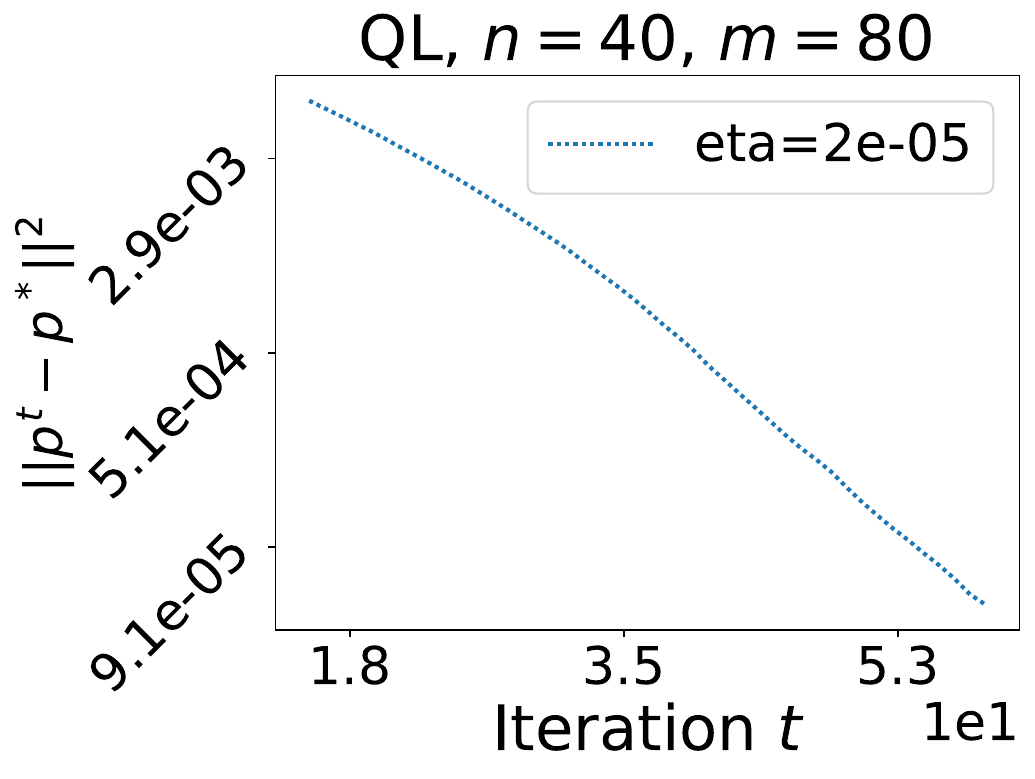}
    \caption{Convergence of squared error norms on random generated instances ($v$ is generated from the uniform random integers on $\{1,\ldots,100\}$) of different sizes under quasi-linear utilities. 
    }
    \label{fig:single-instance-randint-error-norms-ql}
\end{figure}

\section{Comparison of different types of Tâtonnement.}
\label{app:sec:compare-tatonnement}

We compare the convergence of the coordinate {\tatonnement} and the standard {\tatonnement} on the real data instance.
See~\cref{fig:compare-ttm-uniform,fig:compare-ttm-lognormal,fig:compare-ttm-exponential,fig:compare-ttm-truncnorm,fig:compare-ttm-randint} for the results on synthetic indtances 
and see~\cref{fig:compare-ttm-real} for the results on the real data instance.

\paragraph{Experimental details.} We mainly compare the convergence of 
\begin{itemize}
    \item Additive {\tatonnement}: $p^{t+1}_j = p^t_j + \eta z(p^t)$; 
    \item Multiplicative {\tatonnement}: $p^{t+1}_j = p^t_j (1 + \eta z(p^t))$; 
    \item Entropic {\tatonnement}: $p^{t+1}_j = p^t_j \exp(\eta z(p^t))$. 
\end{itemize}
Here, $\eta$ is the stepsize, and $z(p^t)$ is the excess demand vector at time $t$. 
To be fair, we use a constant stepsize $\eta$ for additive {\tatonnement} and $\eta' = \frac{m\eta}{\norm{B}_1}$ for the multiplicative and entropic {\tatonnement} to ensure the stepsize is of the similar scale as the additive {\tatonnement}. 
In particular, we use $\eta = 2 \times 10^{-5}$ for random generated instances and 
$\eta = 2 \times 10^{-6}$ for the real data instance for the linear Fisher market; 
we use $\eta = 2 \times 10^{-5}$ for random generated instances and 
$\eta = 2 \times 10^{-7}$ for the real data instance for the quasi-linear Fisher market.

\paragraph{Discussion.} As shown in the figures, the three types of {\tatonnement} have similar convergence behaviors: 
they all converge to the equilibrium price in a linear rate, and then oscillate around the equilibrium price. 
The multiplicative and entropic {\tatonnement} are almost the same, which is as expected since multiplicative {\tatonnement} is an approximation of entropic {\tatonnement} when the stepsize is small and the excess demand is bounded. 
In the linear Fisher market, 
the additive {\tatonnement} has a slightly slower convergence rate than the other two in some instances; 
in the quasi-linear Fisher market, 
the additive {\tatonnement} has a faster convergence rate than the other two in most instances.
This is an interesting observation, and we leave the more structural comparison of different types of {\tatonnement} for future work. 
Note that the convergence speed of {\tatonnement}s might change as we tune the stepsize. 

\paragraph{More variants of Tâtonnement.} As shown in~\cref{tab:comparison}, 
there are some modified versions of the above three types of {\tatonnement} in the literature. 
There are two common modifications: 
\begin{itemize}
    \item[(1)] Upper bounded the excess demand vector: $z(p^t) = \min\{z(p^t), 1\}$; 
    \item[(2)] Lower bounded the price vector with a small value: $p^{t+1}_j = \max\{p^t_j + \eta z(p^t), \epsilon\}$, where $\epsilon$ is a small value.
\end{itemize}
Via our experiments, we find that the convergence of {\tatonnement} with $(2)$ is similar to the standard {\tatonnement}. 
However, {\tatonnement} with $(1)$ usually lead to a larger final error norm.

\begin{figure}[t]
    \centering
    \includegraphics[scale=.23]{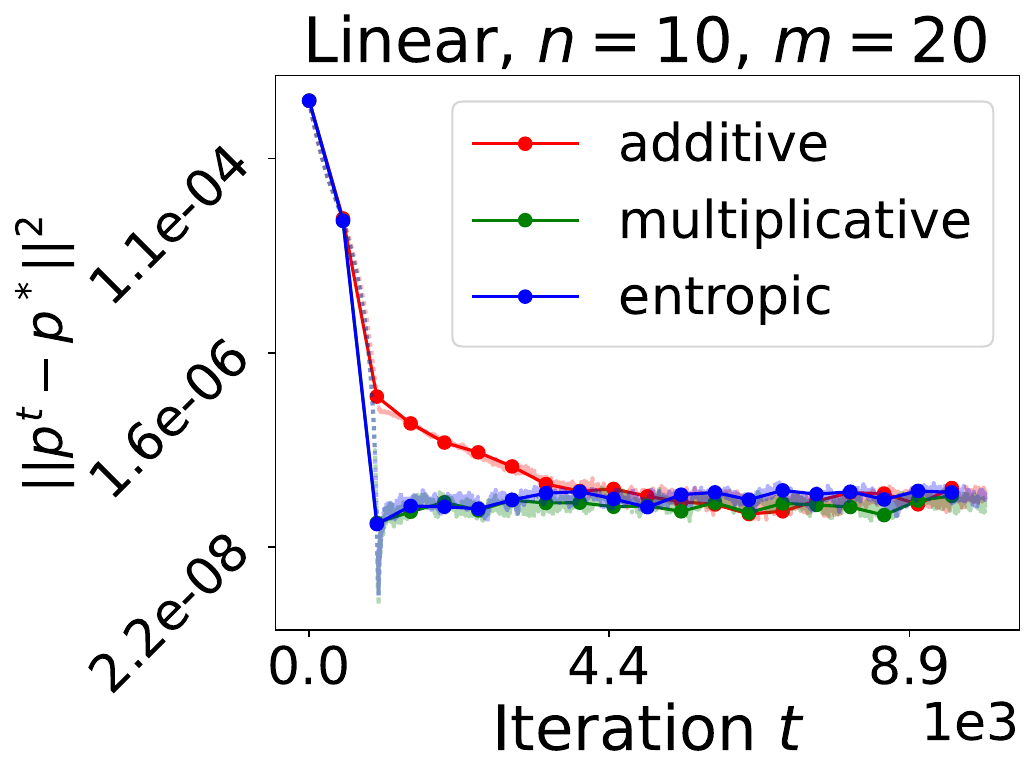}
    \includegraphics[scale=.23]{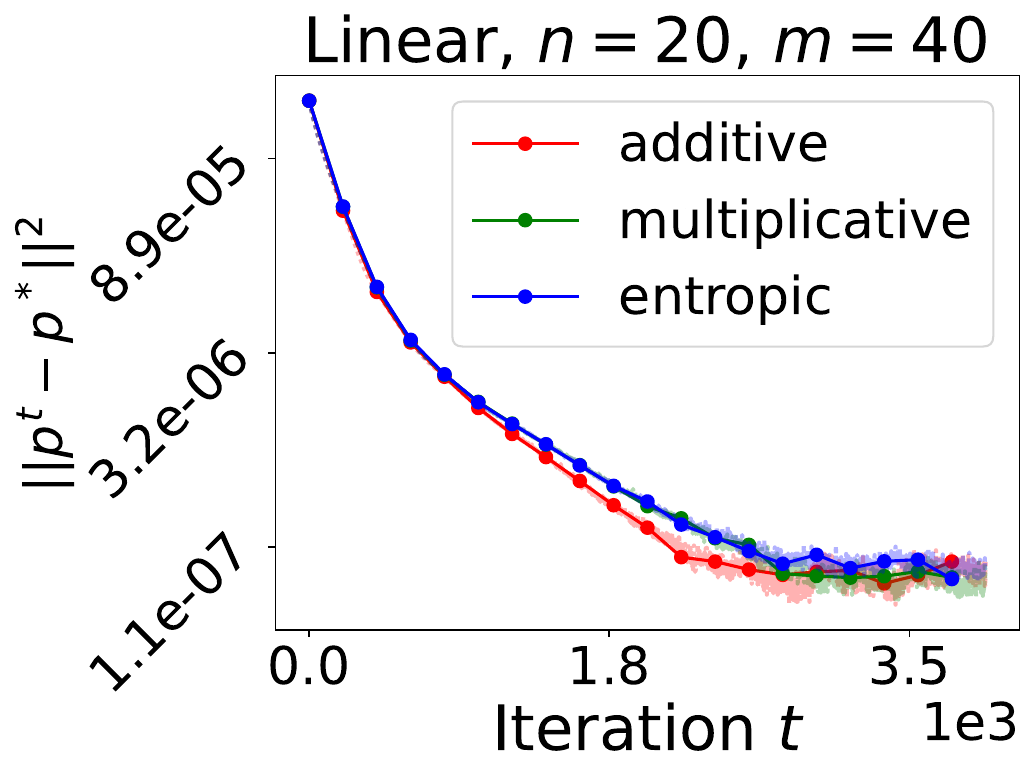}
    \includegraphics[scale=.23]{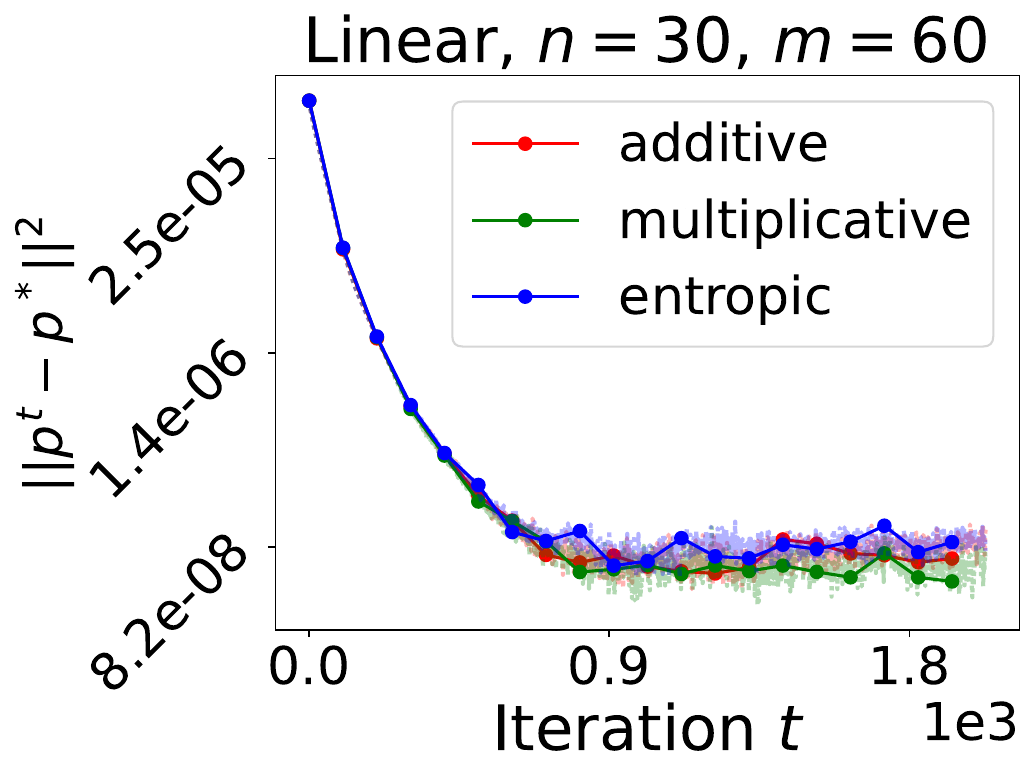}
    \includegraphics[scale=.23]{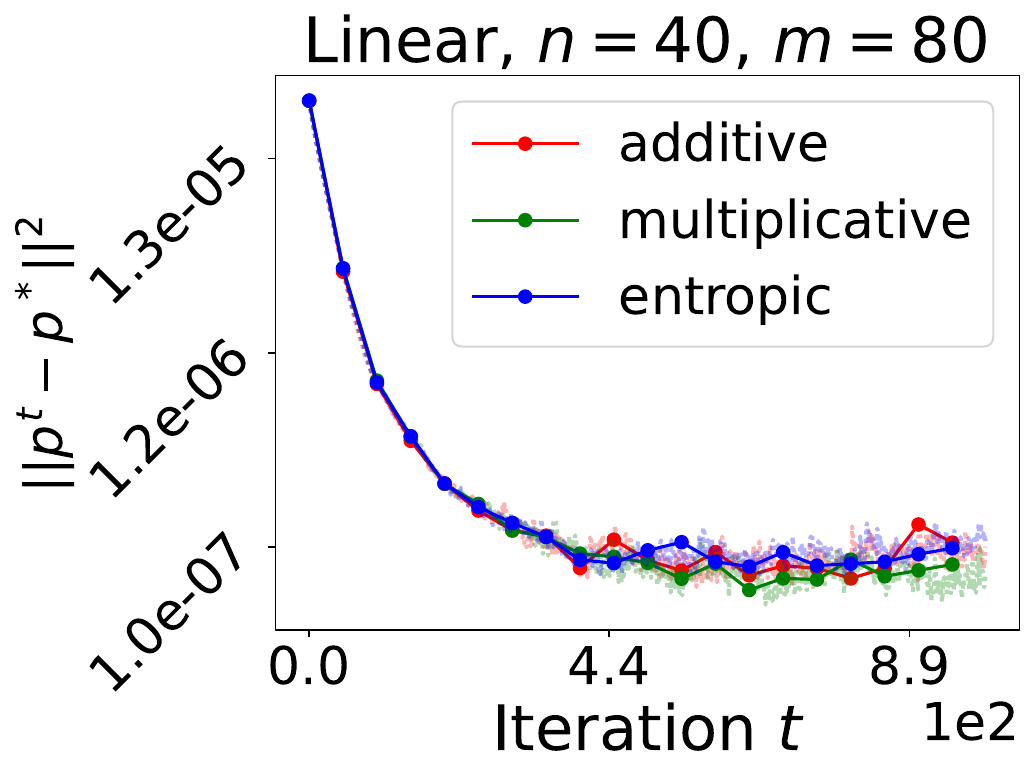}

    \includegraphics[scale=.23]{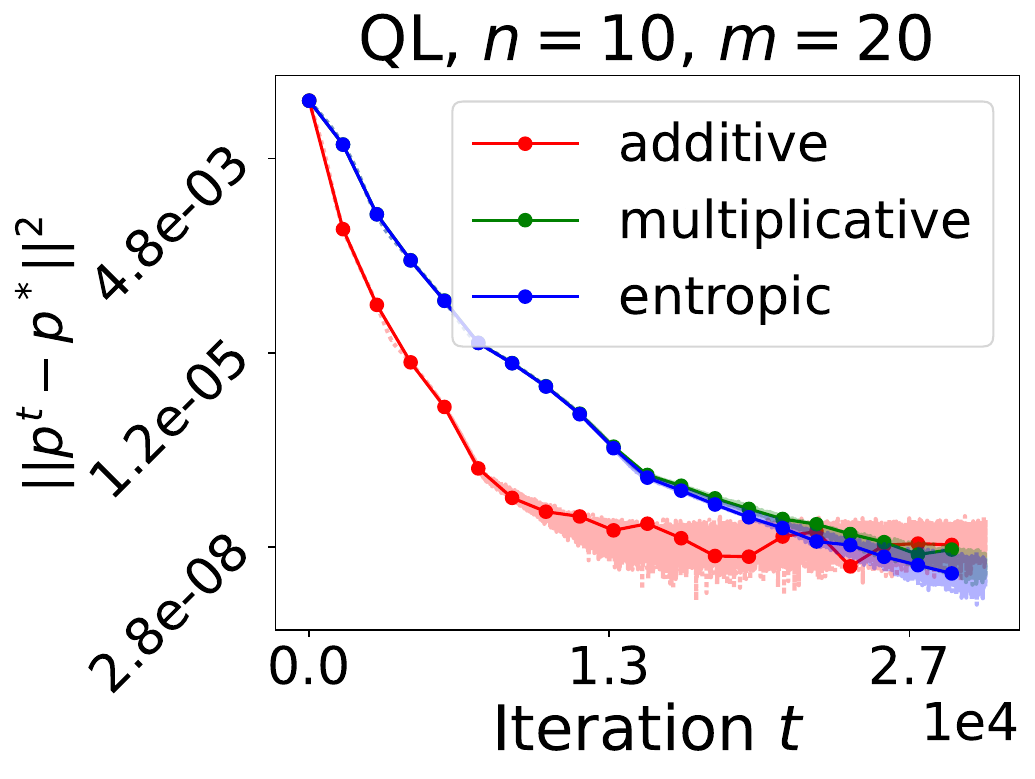}
    \includegraphics[scale=.23]{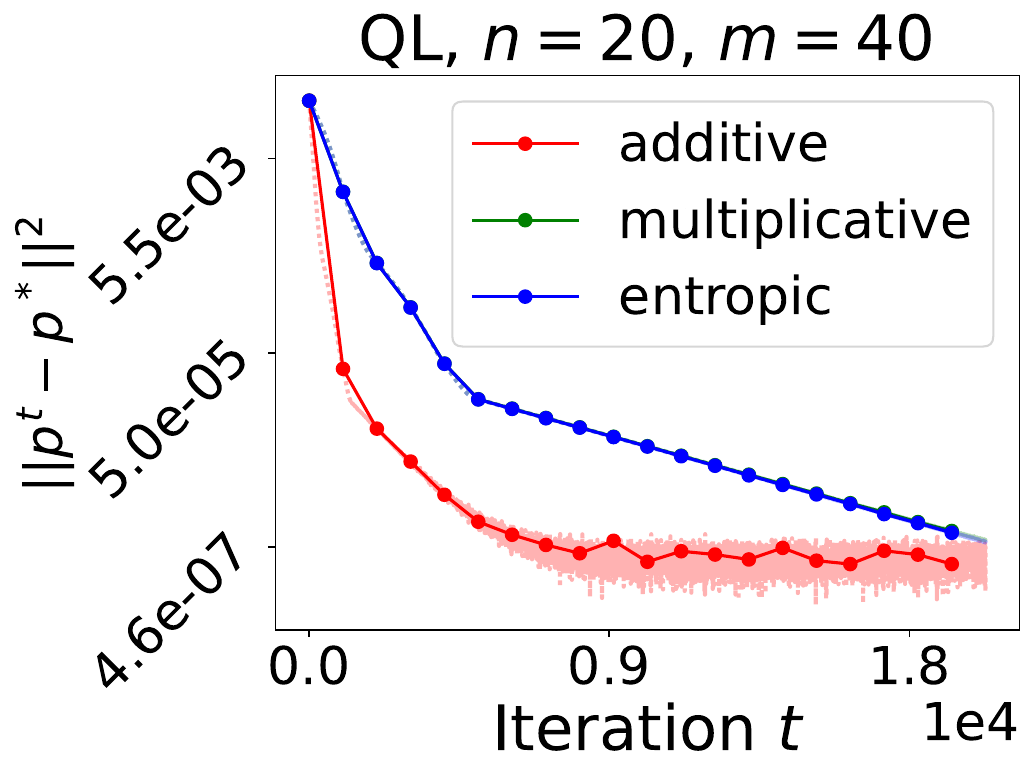}
    \includegraphics[scale=.23]{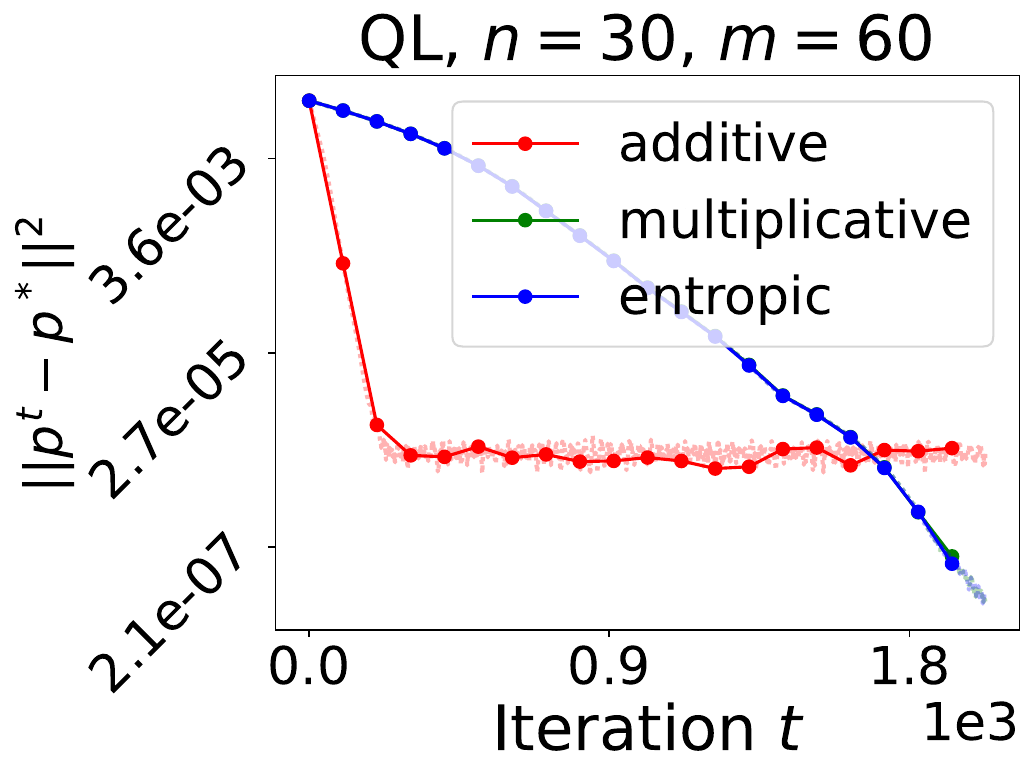}
    \includegraphics[scale=.23]{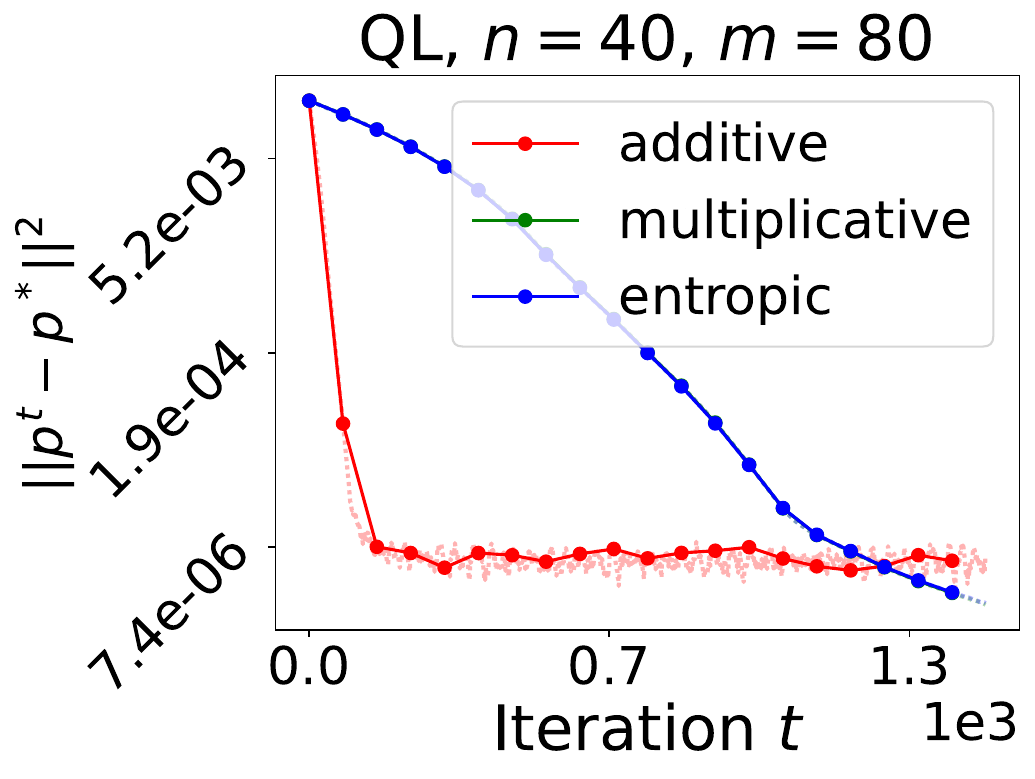}
    
    \caption{
        Comparison of different types of Tâtonnement on random generated instances ($v$ is generated from the uniform distribution $[0,1)$) of different sizes under linear utilities (upper row) and quasi-linear utilities (lower row). 
    }
    \label{fig:compare-ttm-uniform}
\end{figure}

\begin{figure}[t]
    \centering
    \includegraphics[scale=.23]{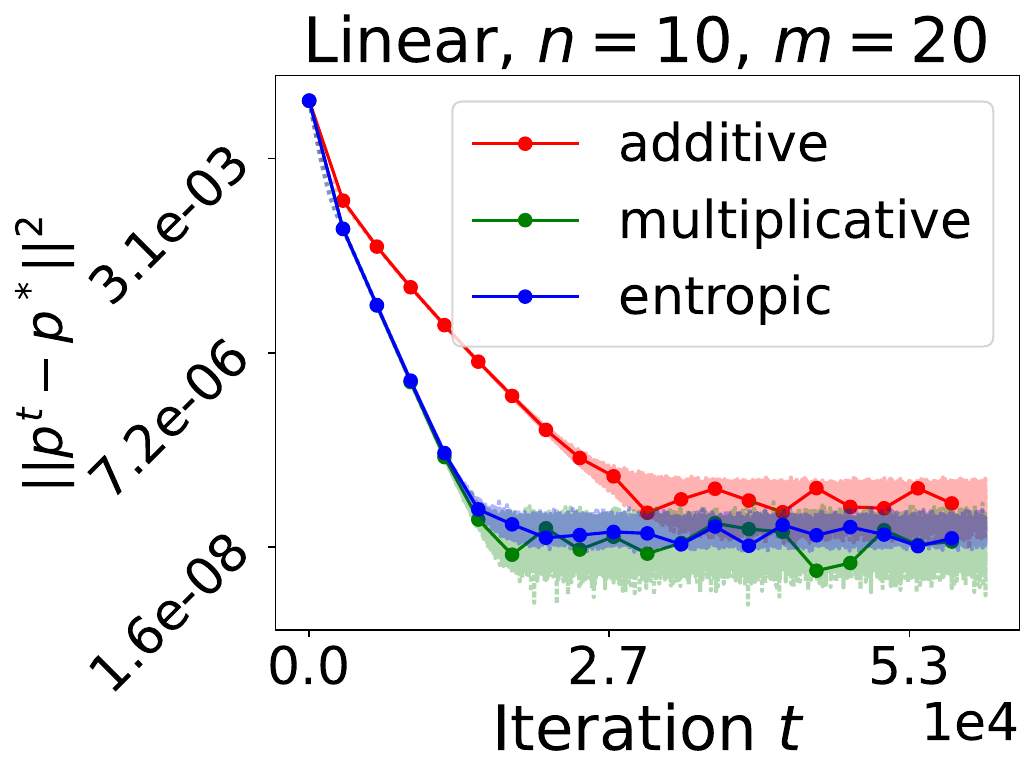}
    \includegraphics[scale=.23]{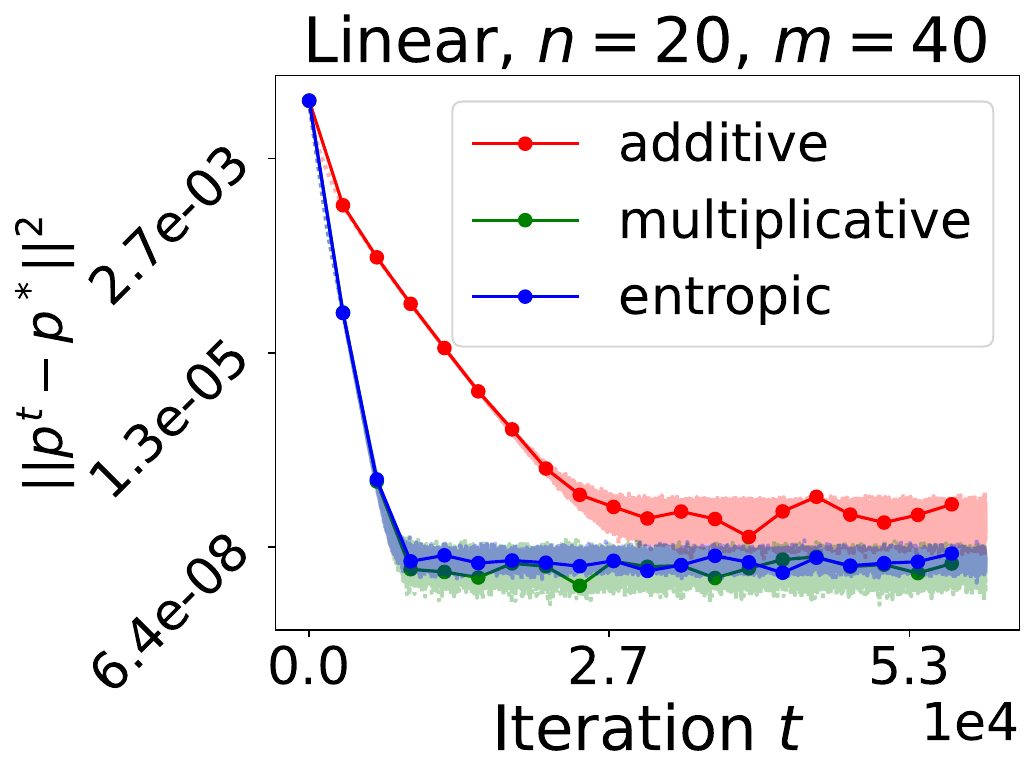}
    \includegraphics[scale=.23]{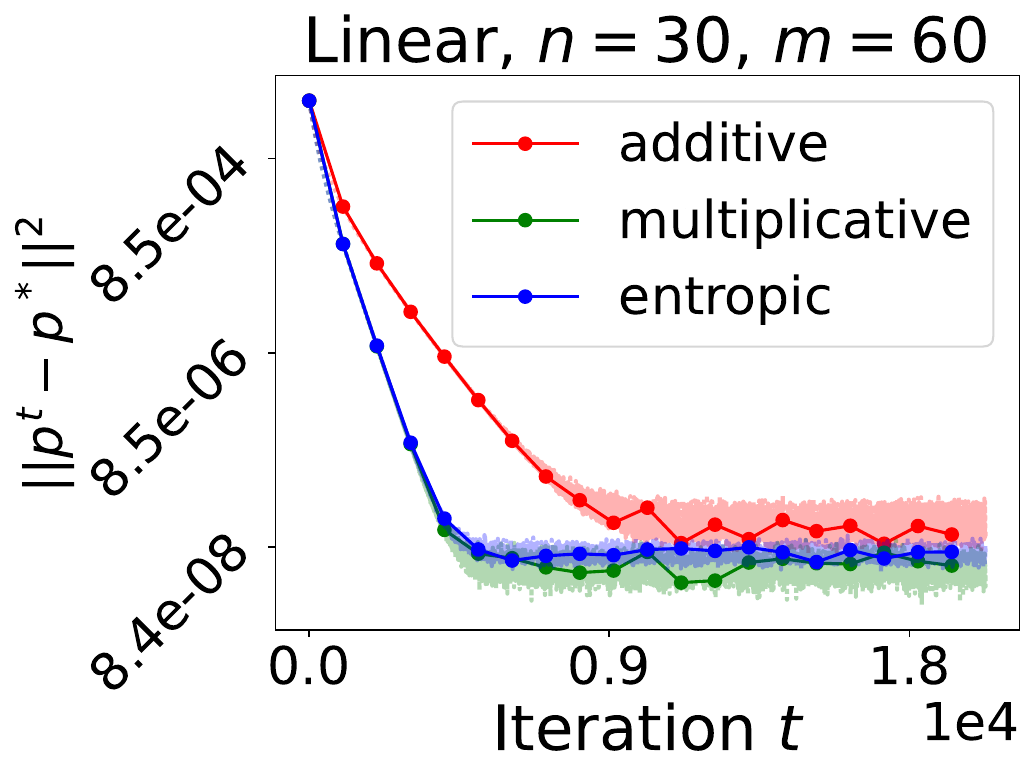}
    \includegraphics[scale=.23]{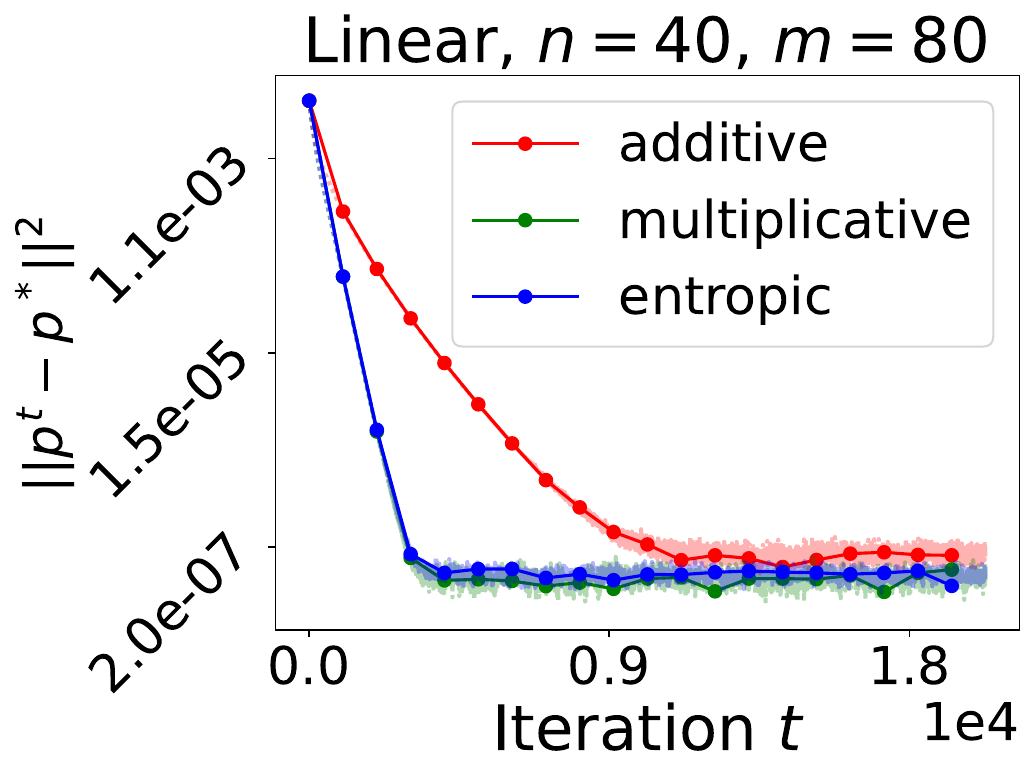}

    \includegraphics[scale=.23]{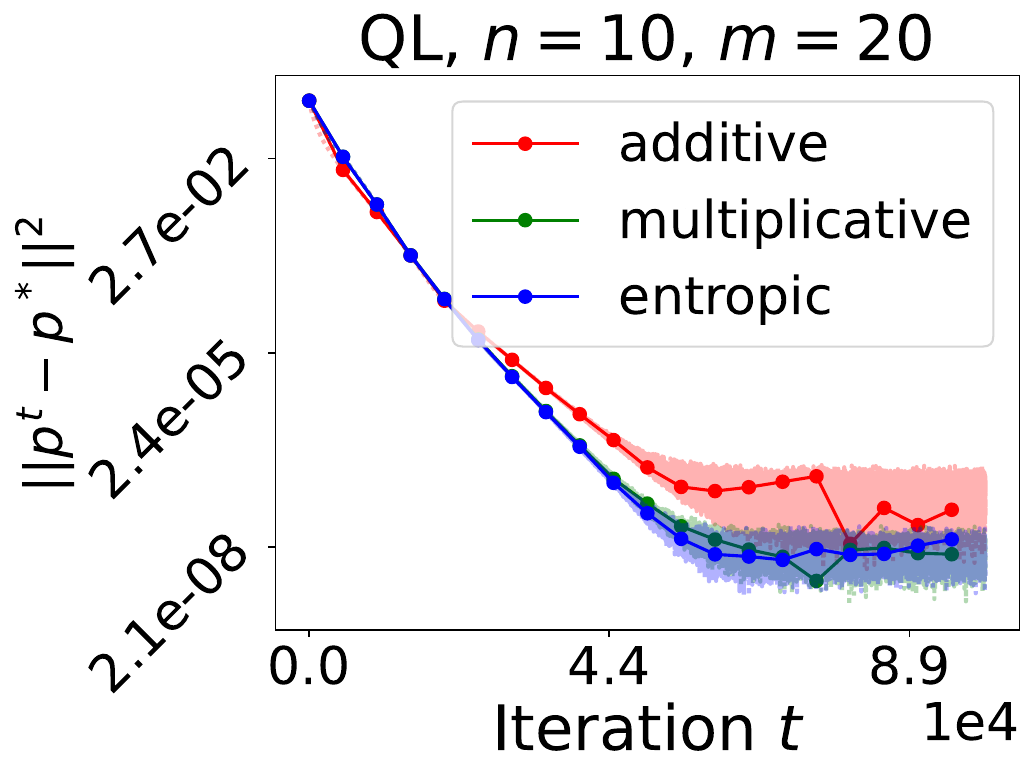}
    \includegraphics[scale=.23]{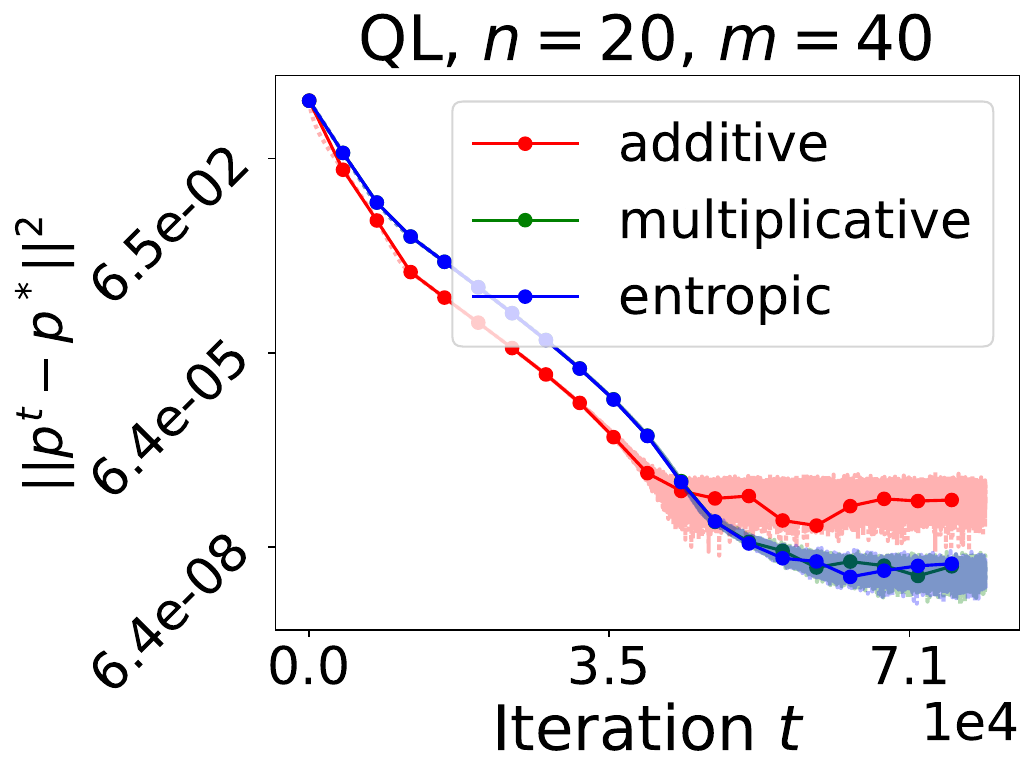}
    \includegraphics[scale=.23]{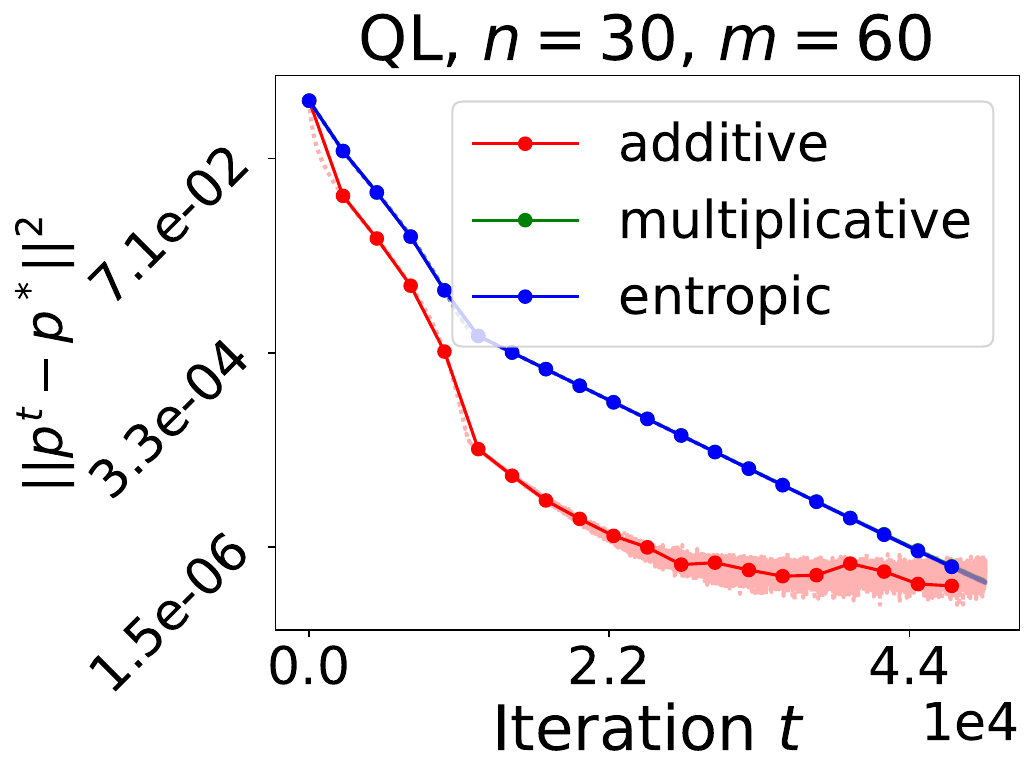}
    \includegraphics[scale=.23]{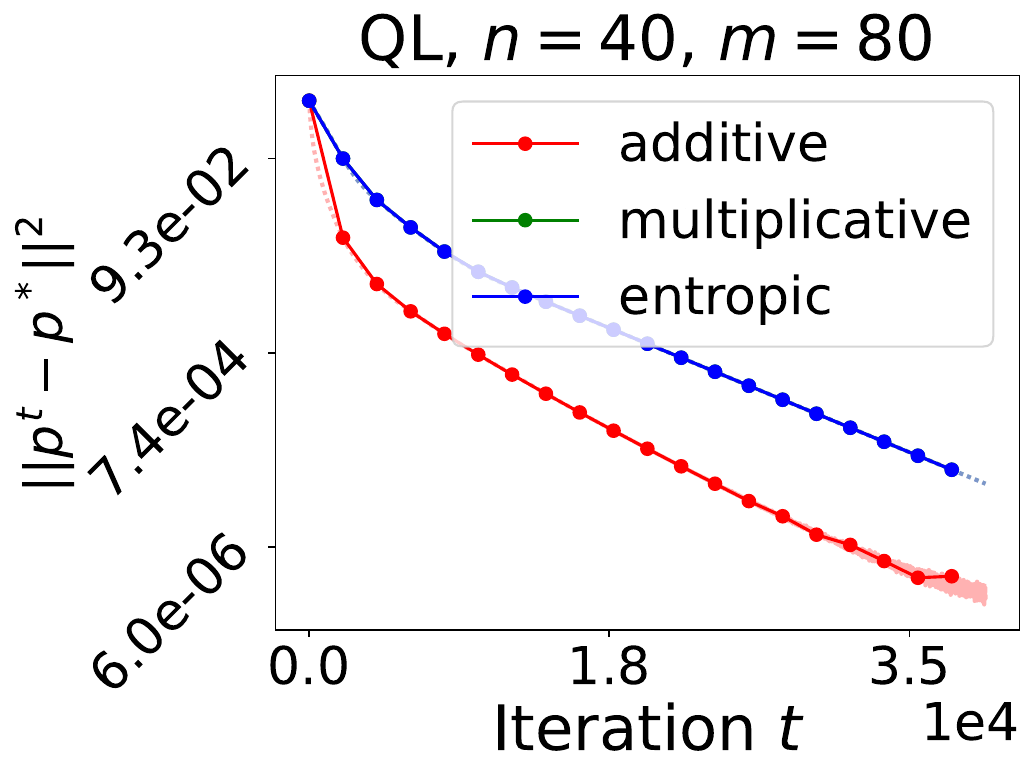}
    
    \caption{
        Comparison of different types of Tâtonnement on random generated instances ($v$ is generated from the log-normal distribution associated with the standard normal distribution $\mathcal{N}(0, 1)$) of different sizes under linear utilities (upper row) and quasi-linear utilities (lower row). 
    }
    \label{fig:compare-ttm-lognormal}
\end{figure}

\begin{figure}[t]
    \centering
    \includegraphics[scale=.23]{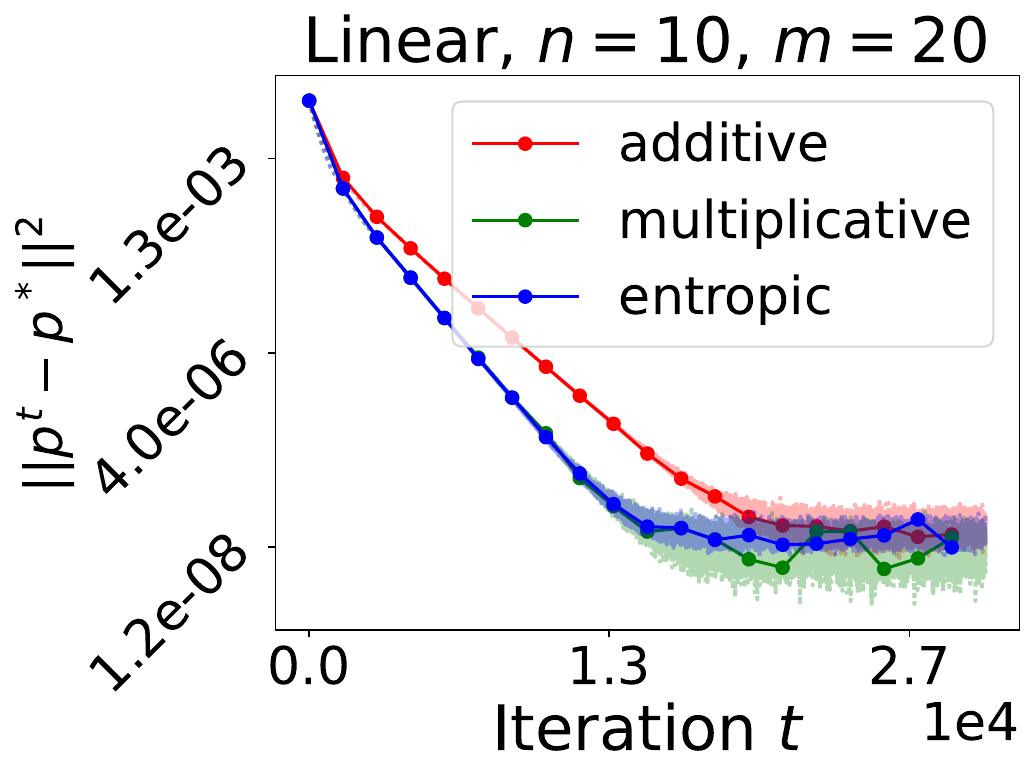}
    \includegraphics[scale=.23]{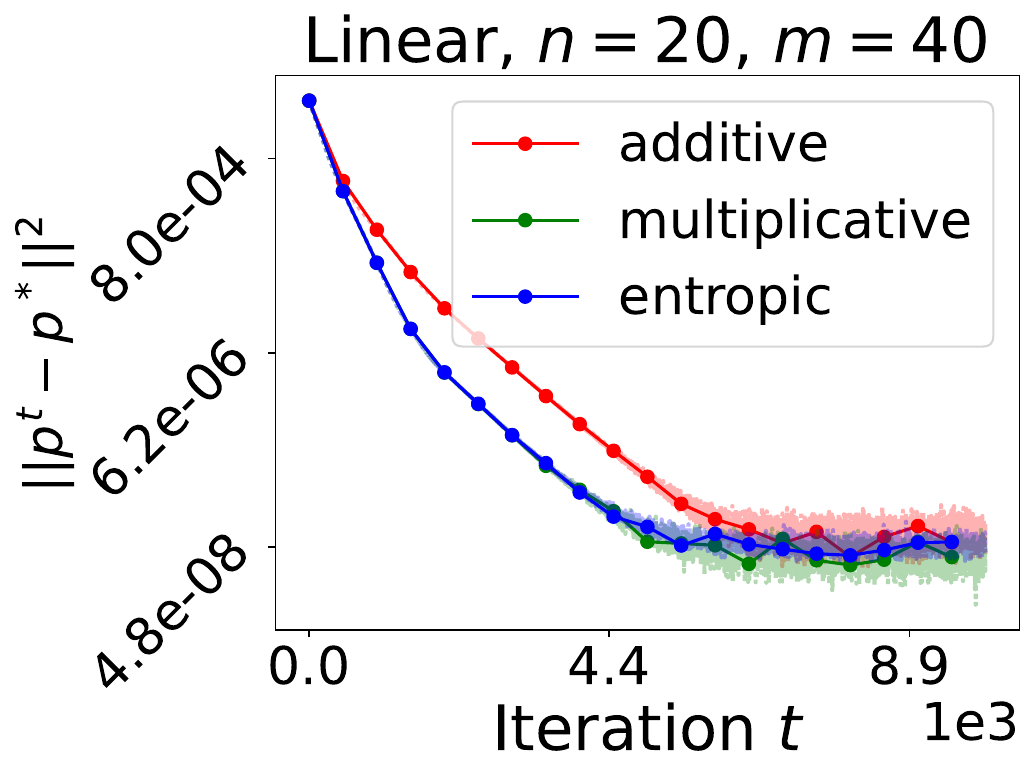}
    \includegraphics[scale=.23]{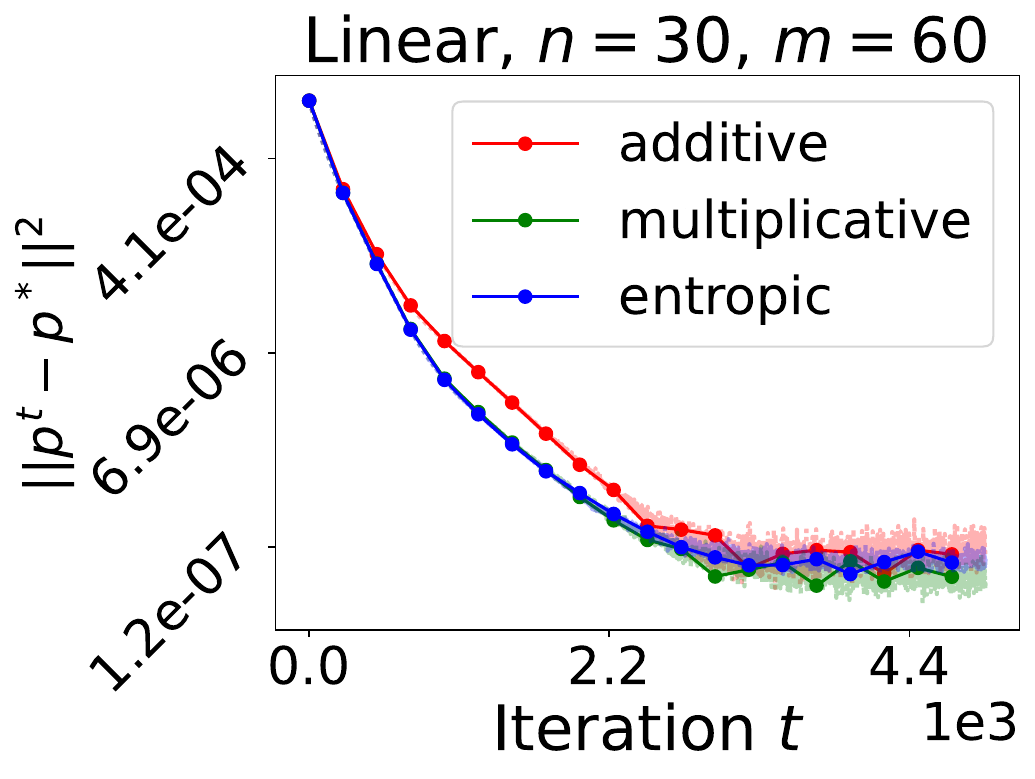}
    \includegraphics[scale=.23]{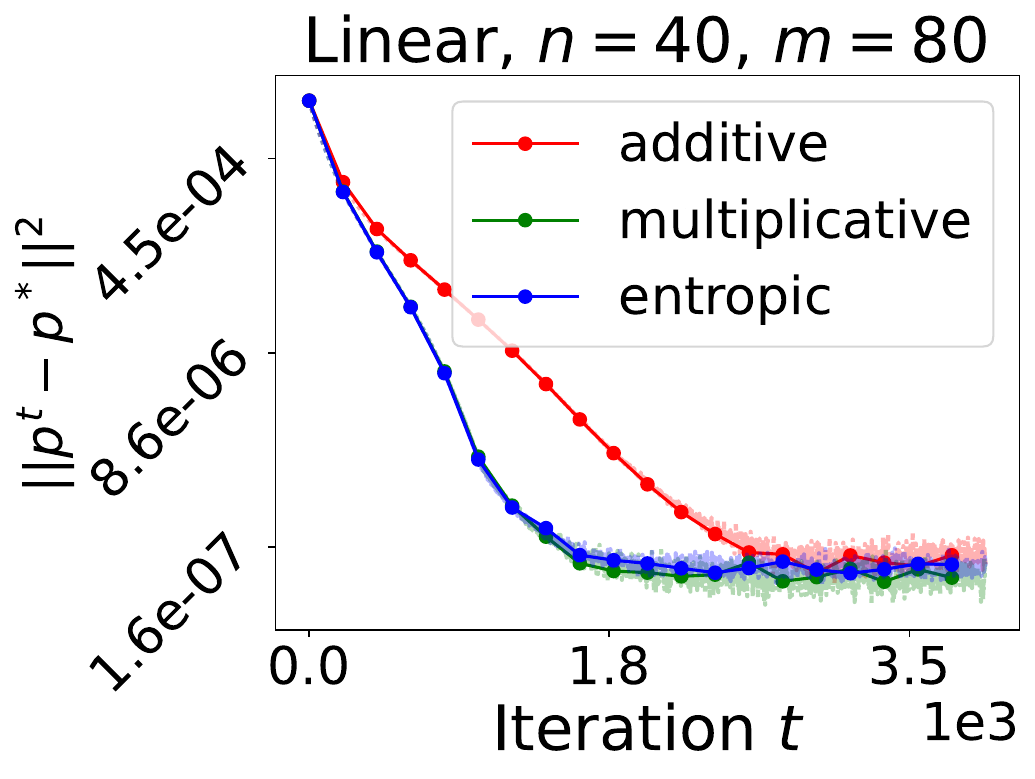}

    \includegraphics[scale=.23]{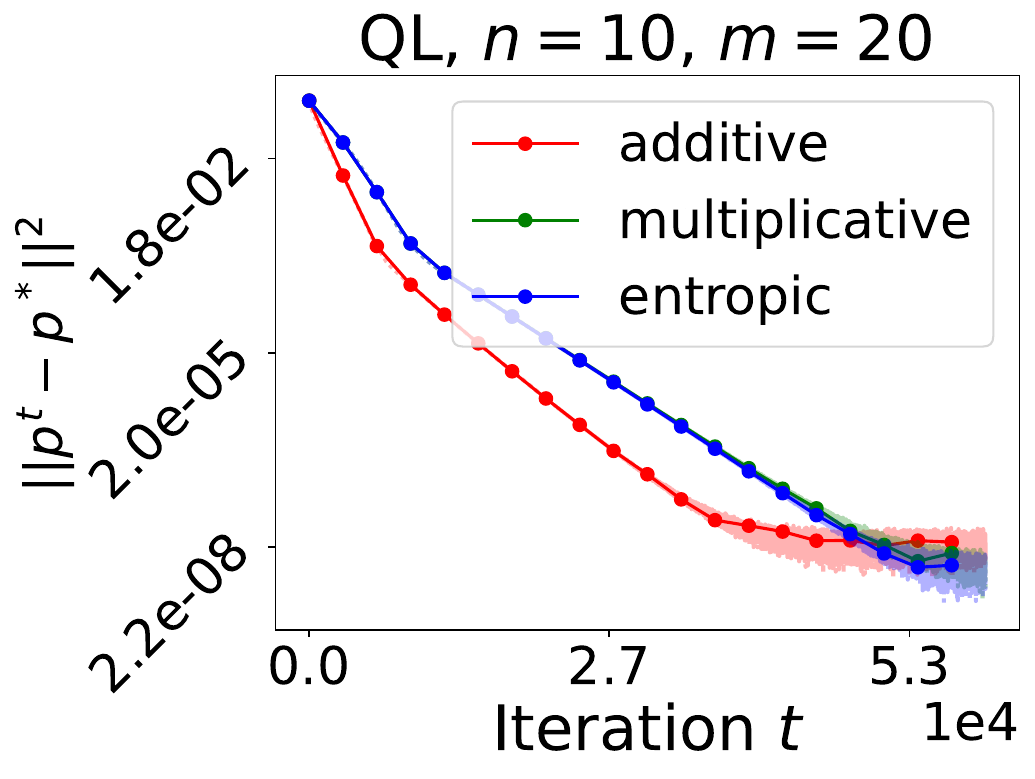}
    \includegraphics[scale=.23]{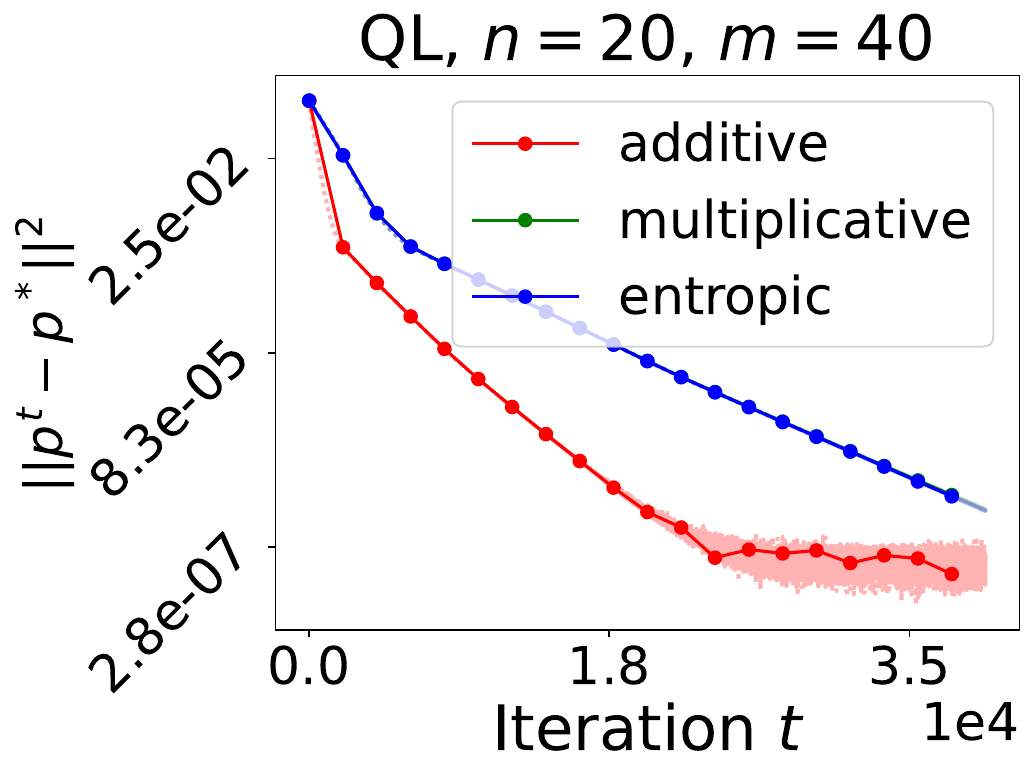}
    \includegraphics[scale=.23]{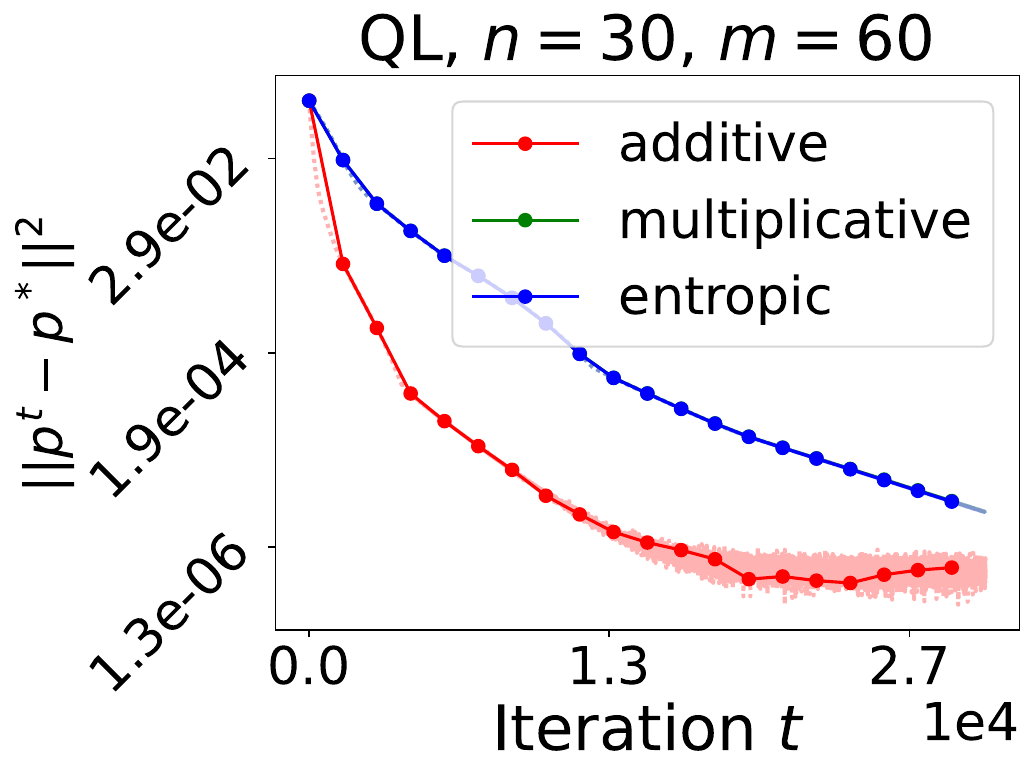}
    \includegraphics[scale=.23]{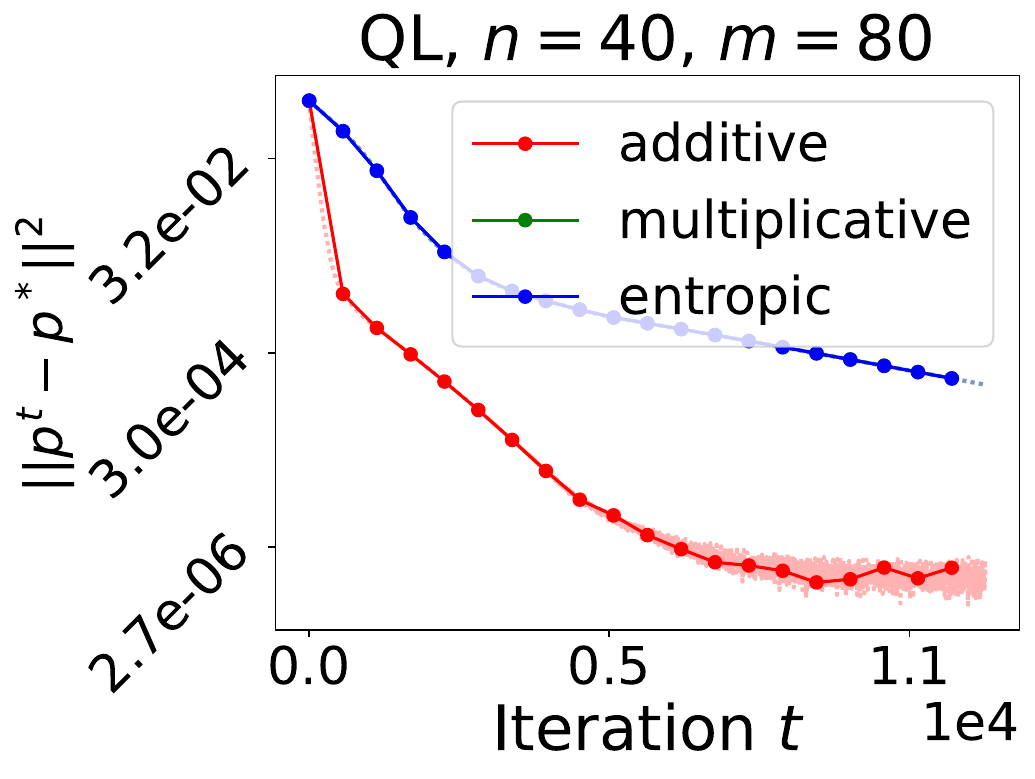}
    
    \caption{
        Comparison of different types of Tâtonnement on random generated instances ($v$ is generated from the exponential distribution with the scale parameter $1$) of different sizes under linear utilities (upper row) and quasi-linear utilities (lower row). 
    }
    \label{fig:compare-ttm-exponential}
\end{figure}

\begin{figure}[t]
    \centering
    \includegraphics[scale=.23]{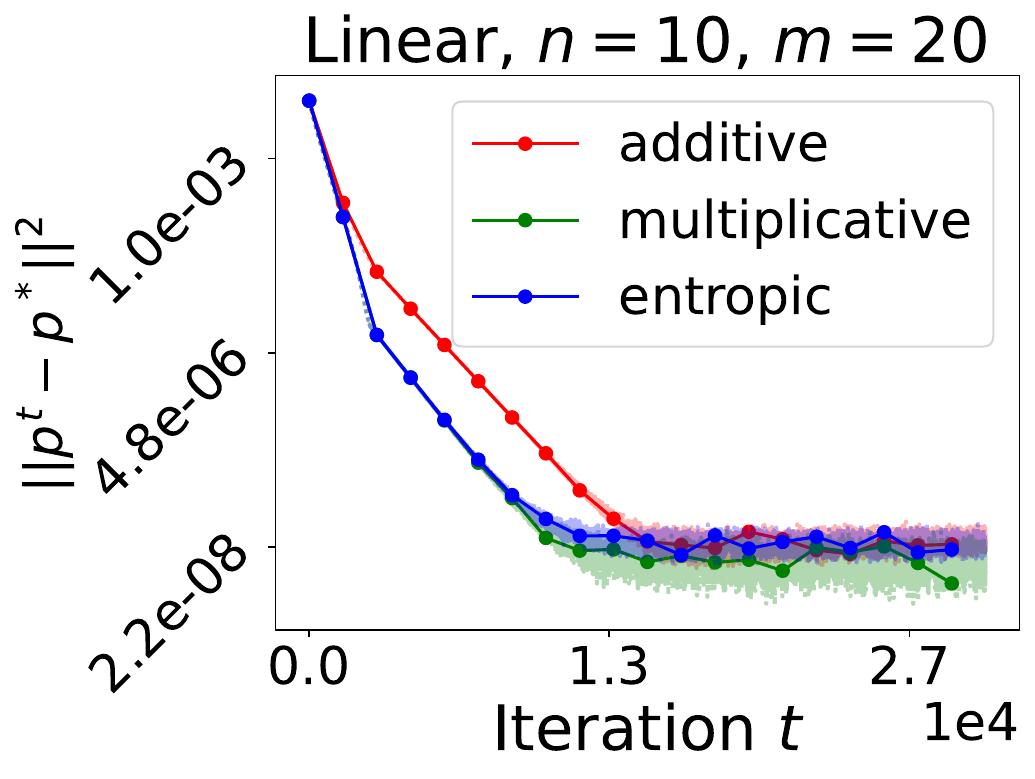}
    \includegraphics[scale=.23]{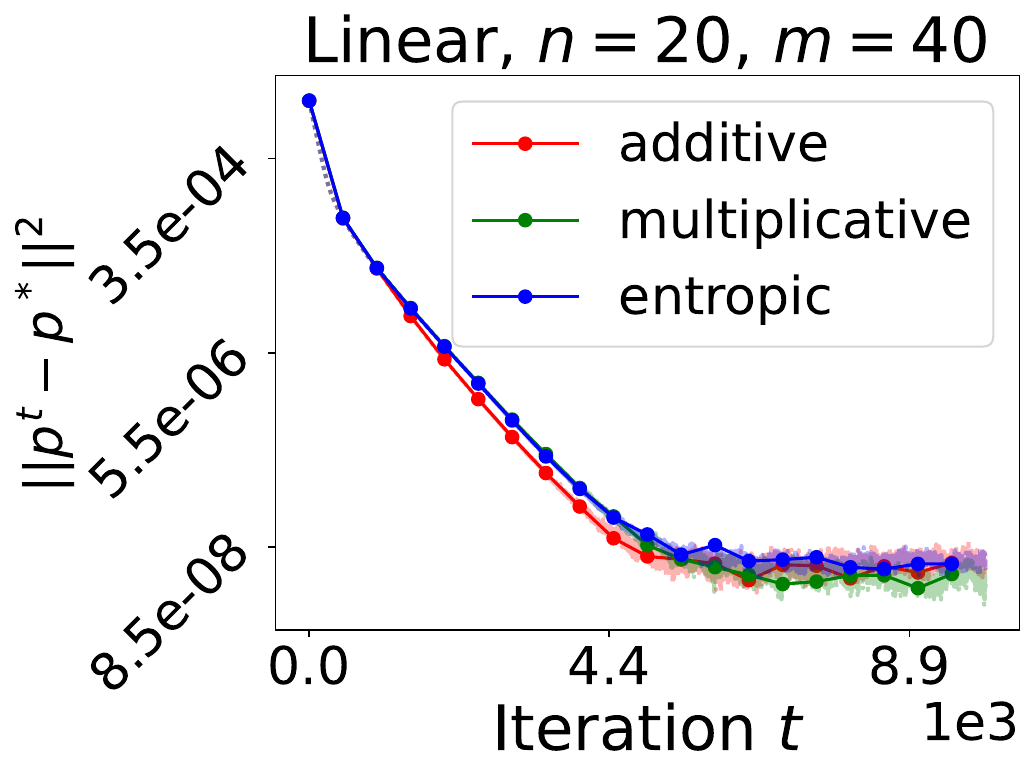}
    \includegraphics[scale=.23]{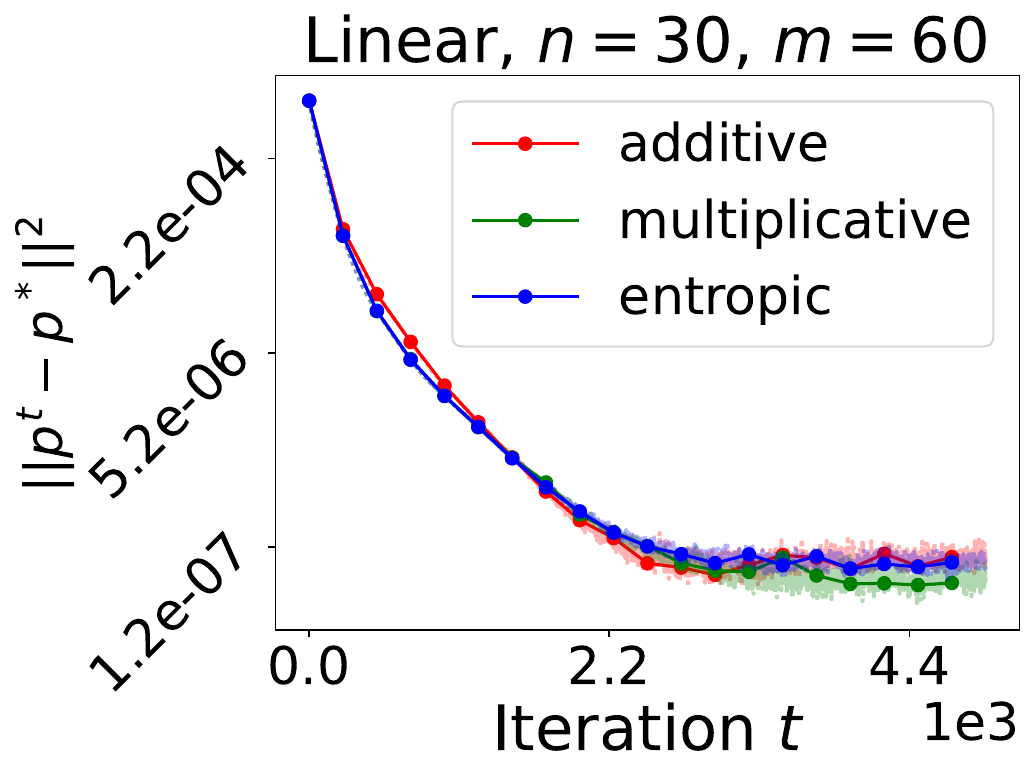}
    \includegraphics[scale=.23]{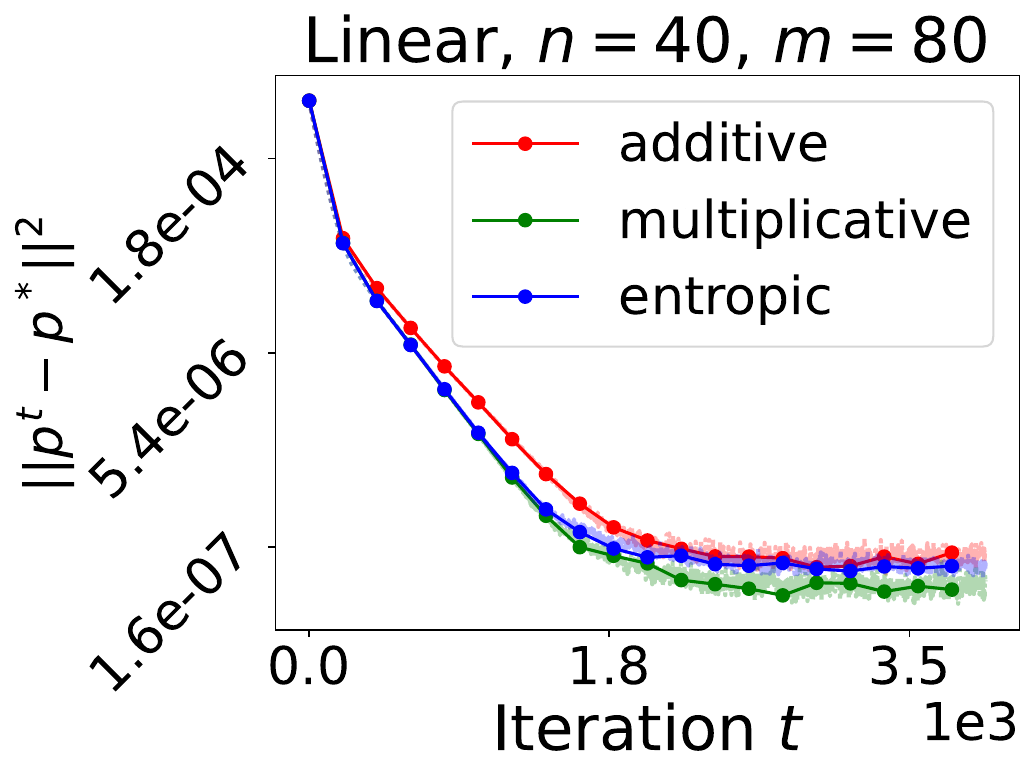}

    \includegraphics[scale=.23]{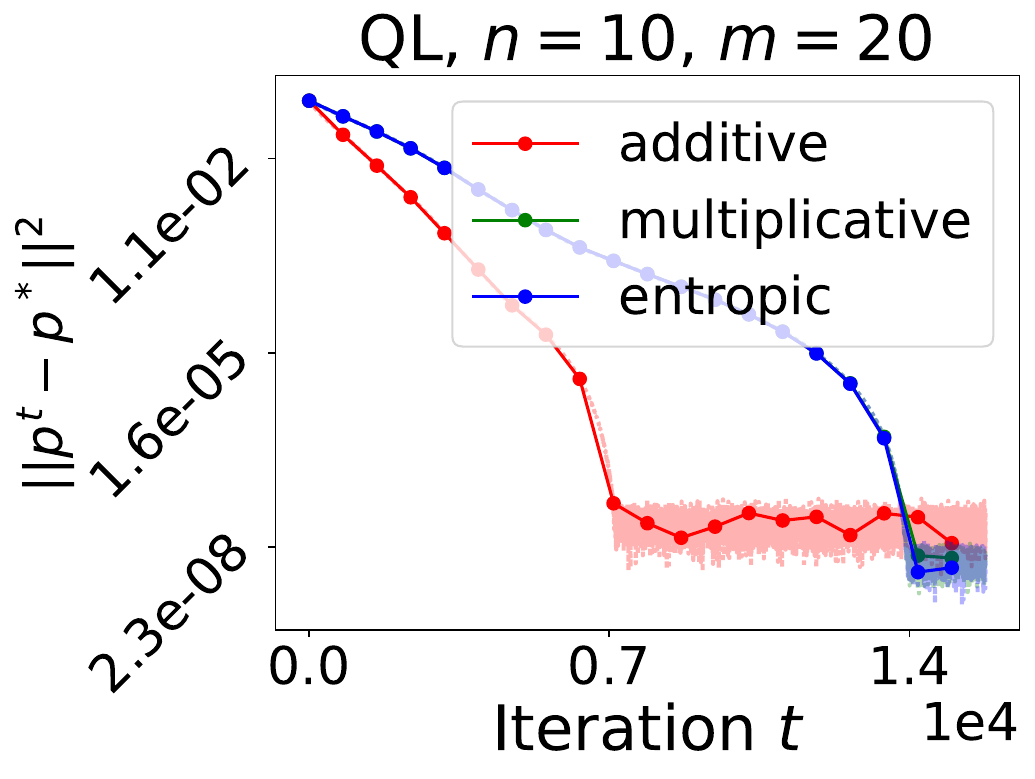}
    \includegraphics[scale=.23]{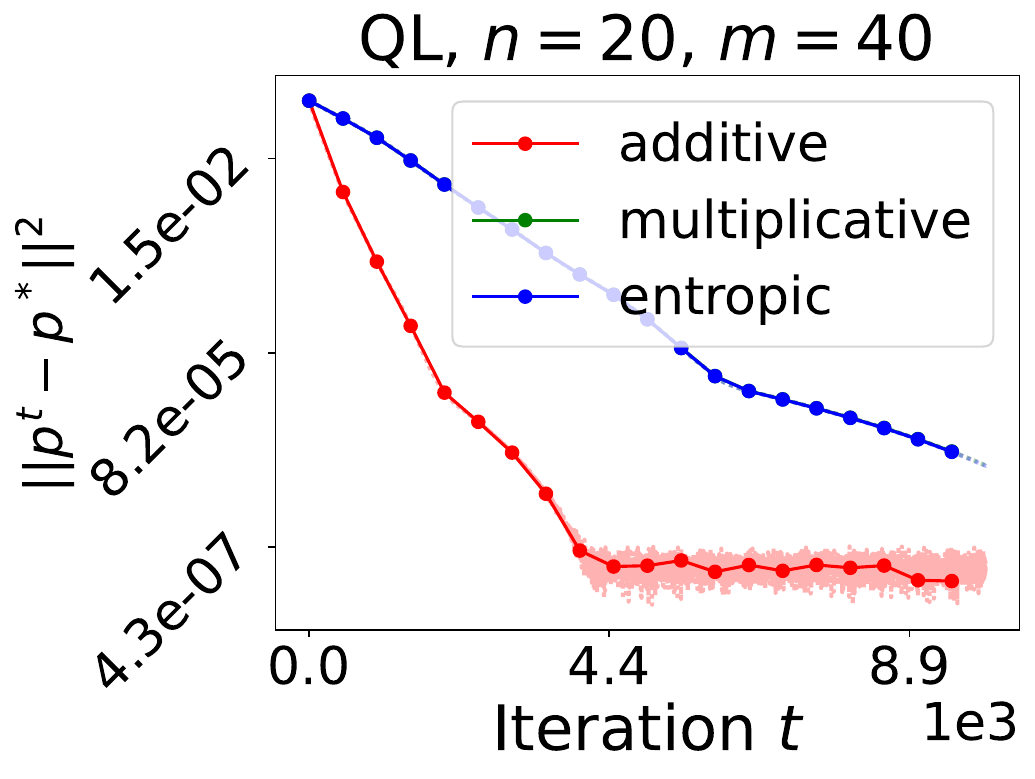}
    \includegraphics[scale=.23]{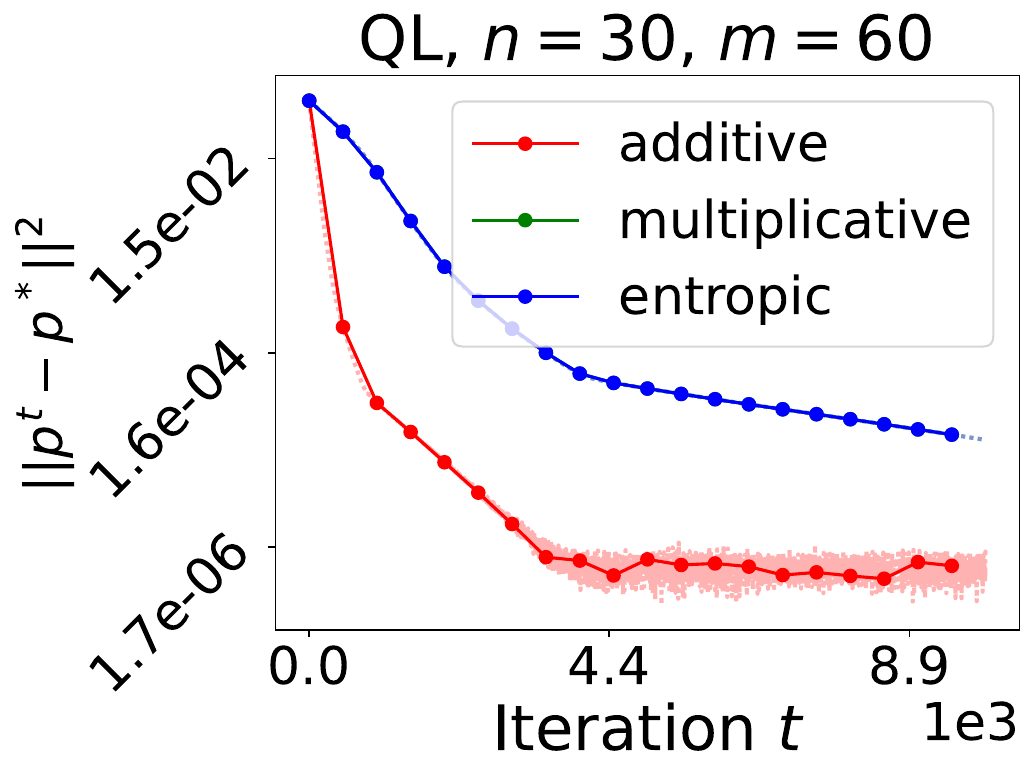}
    \includegraphics[scale=.23]{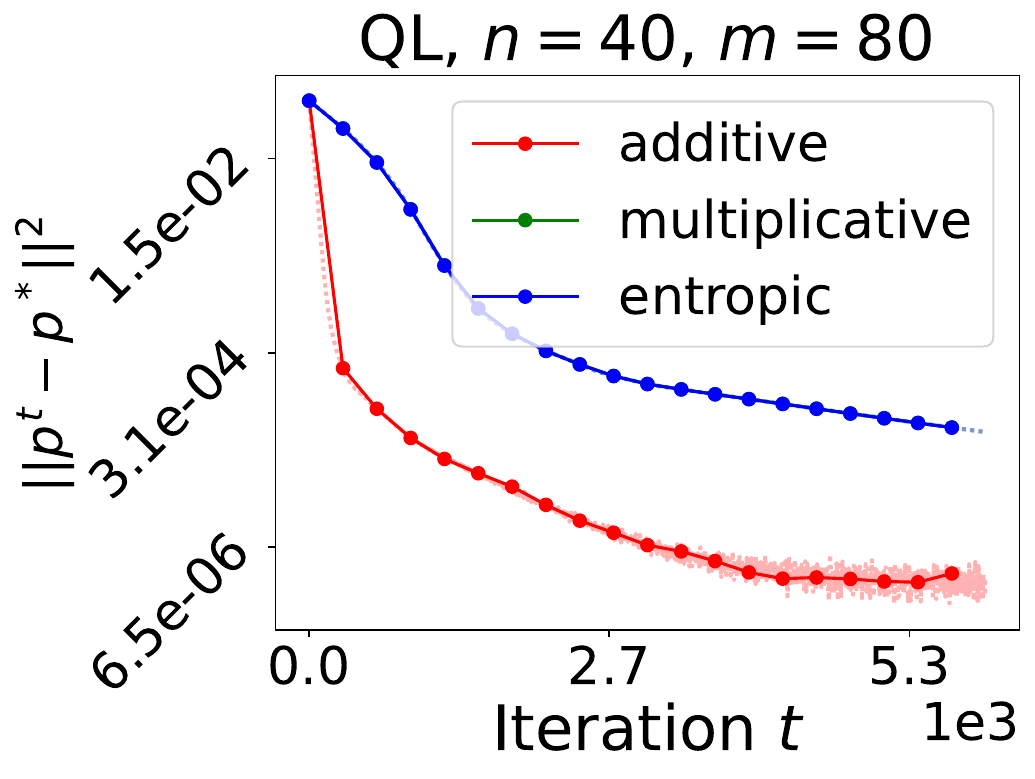}
    
    \caption{
        Comparison of different types of Tâtonnement on random generated instances ($v$ is generated from the truncated normal distribution associated with $\mathcal{N}(0, 1)$, and truncated at ${10}^{-3}$ and $10$ standard deviations from $0$) of different sizes under linear utilities (upper row) and quasi-linear utilities (lower row). 
    }
    \label{fig:compare-ttm-truncnorm}
\end{figure}

\begin{figure}[t]
    \centering
    \includegraphics[scale=.23]{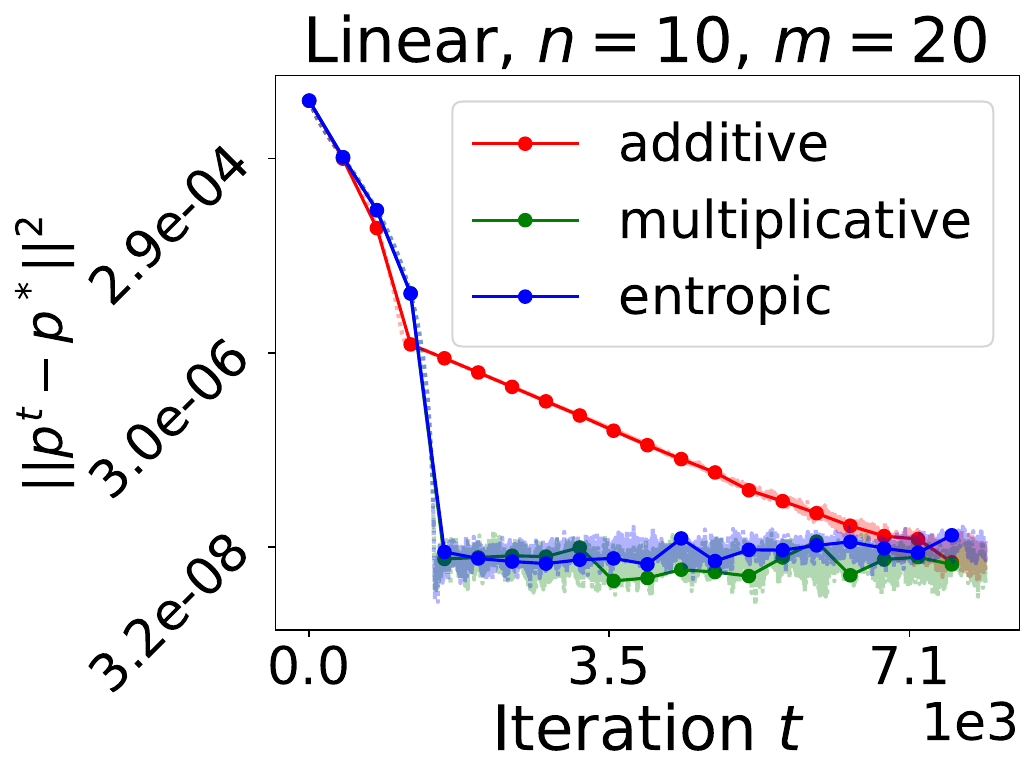}
    \includegraphics[scale=.23]{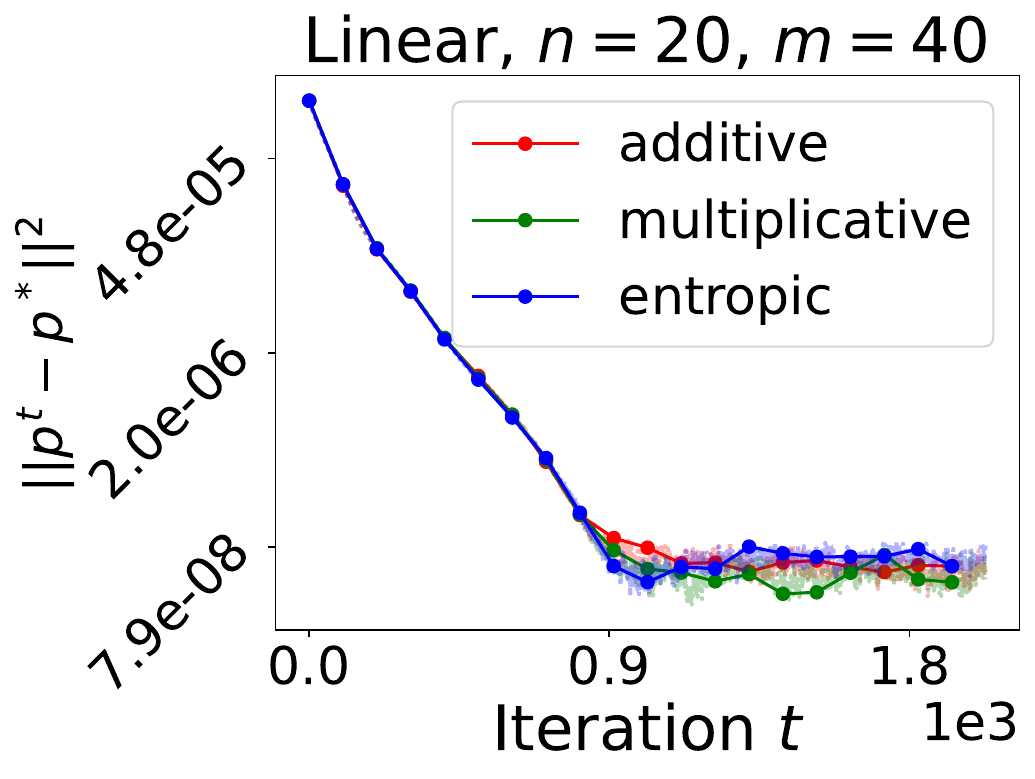}
    \includegraphics[scale=.23]{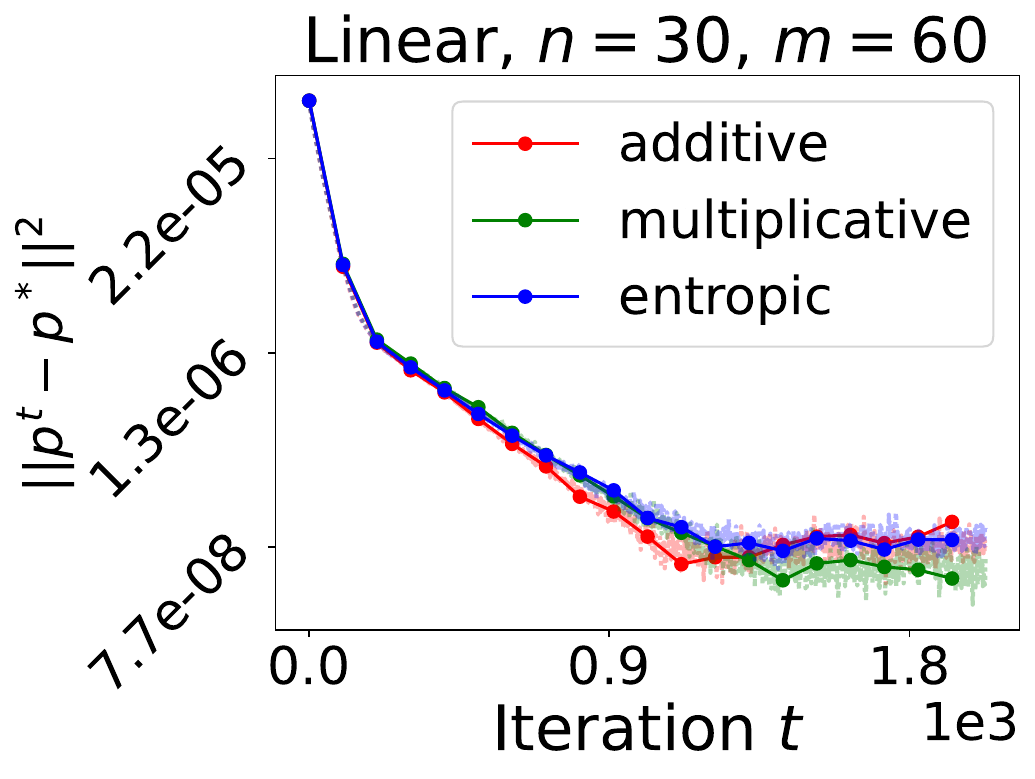}
    \includegraphics[scale=.23]{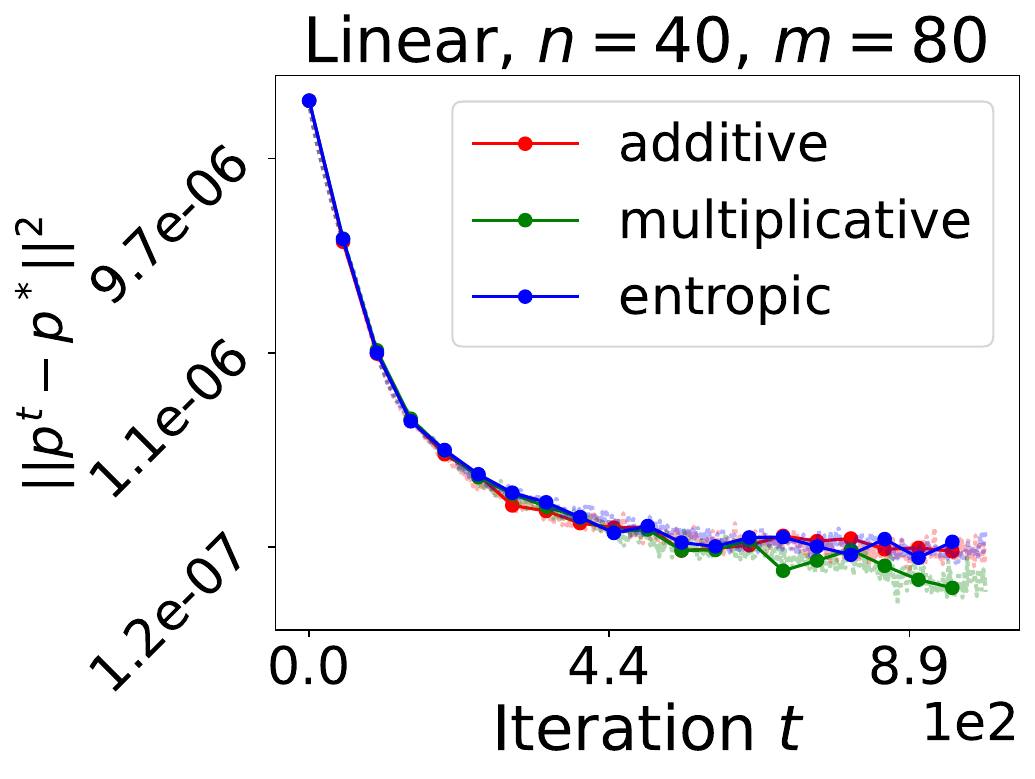}

    \includegraphics[scale=.23]{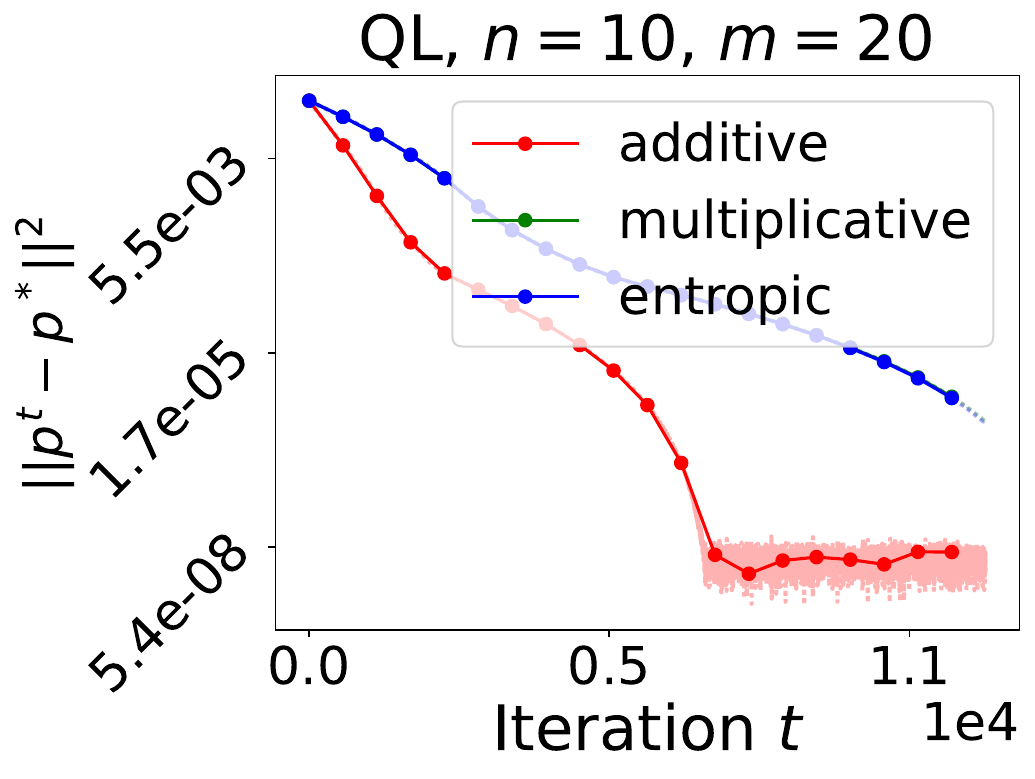}
    \includegraphics[scale=.23]{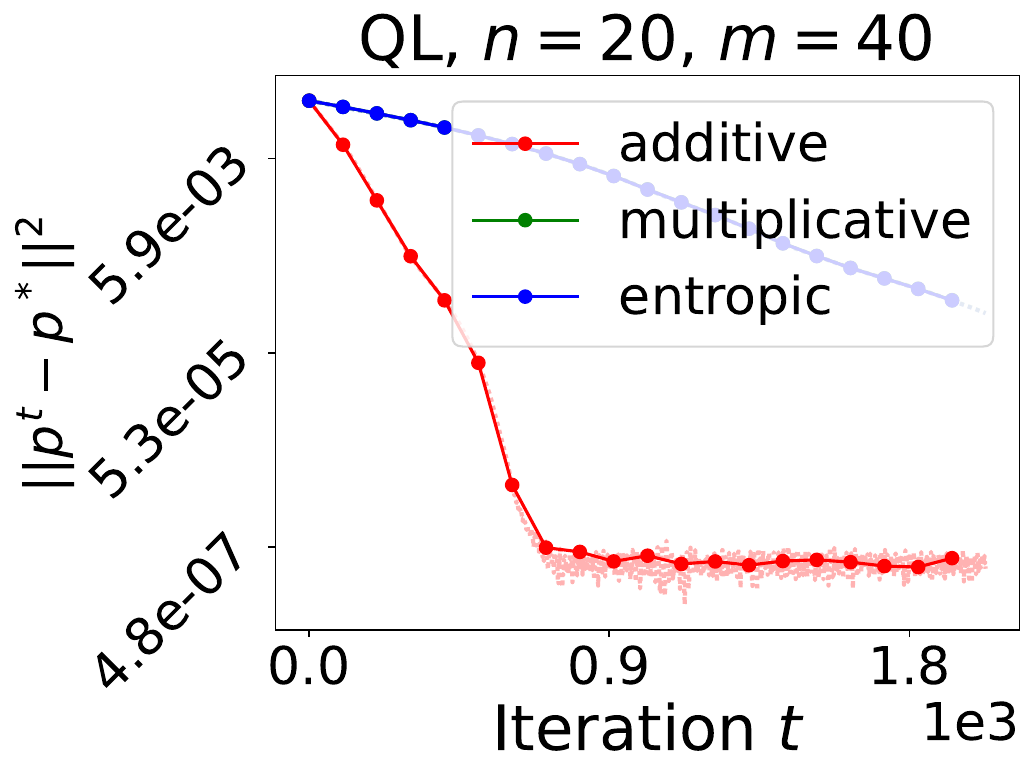}
    \includegraphics[scale=.23]{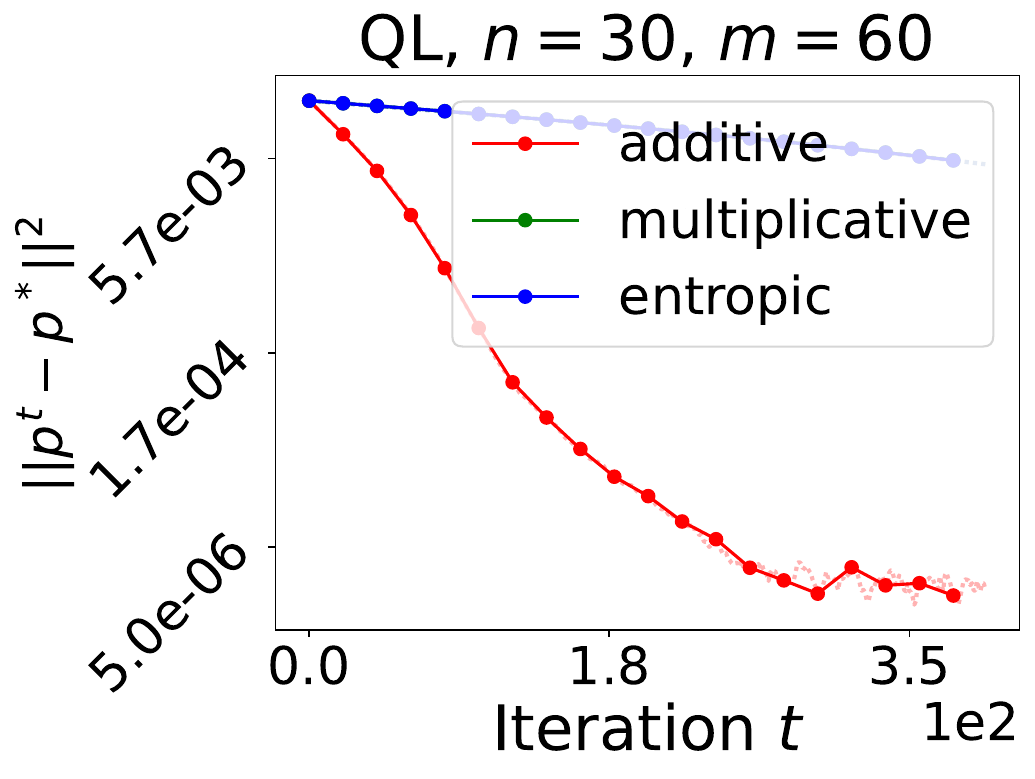}
    \includegraphics[scale=.23]{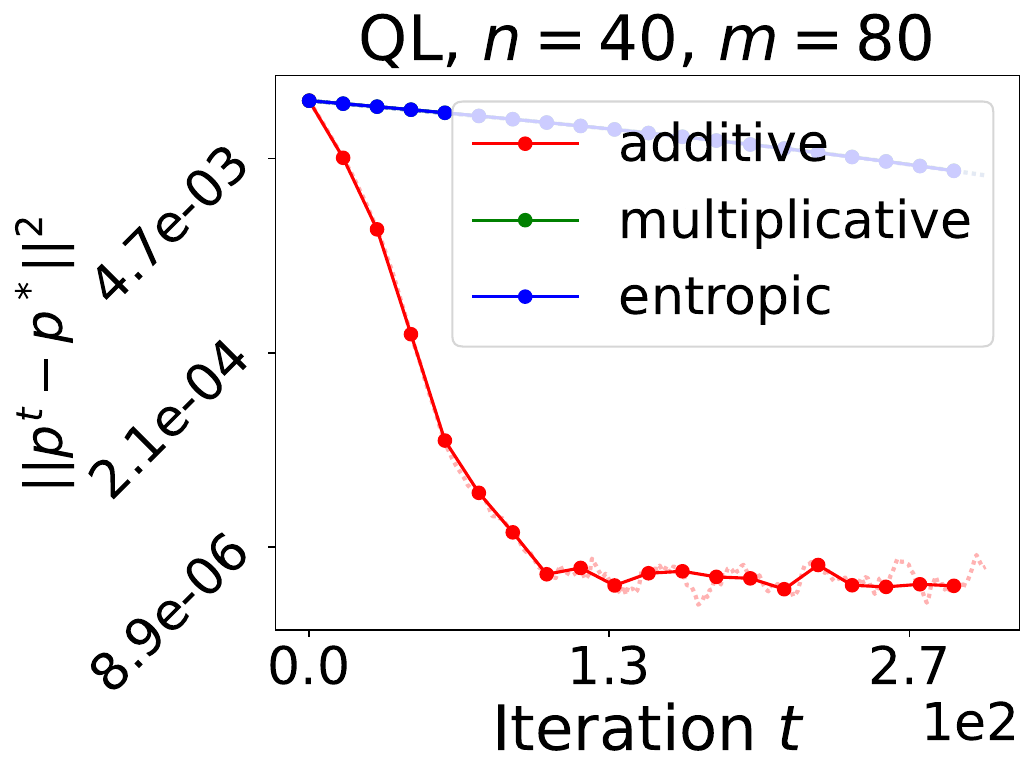}
    
    \caption{
        Comparison of different types of Tâtonnement on random generated instances ($v$ is generated from the uniform random integers on $\{1,\ldots,100\}$) of different sizes under linear utilities (upper row) and quasi-linear utilities (lower row). 
    }
    \label{fig:compare-ttm-randint}
\end{figure}

\begin{figure}[t]
    \centering
    \includegraphics[scale=.23]{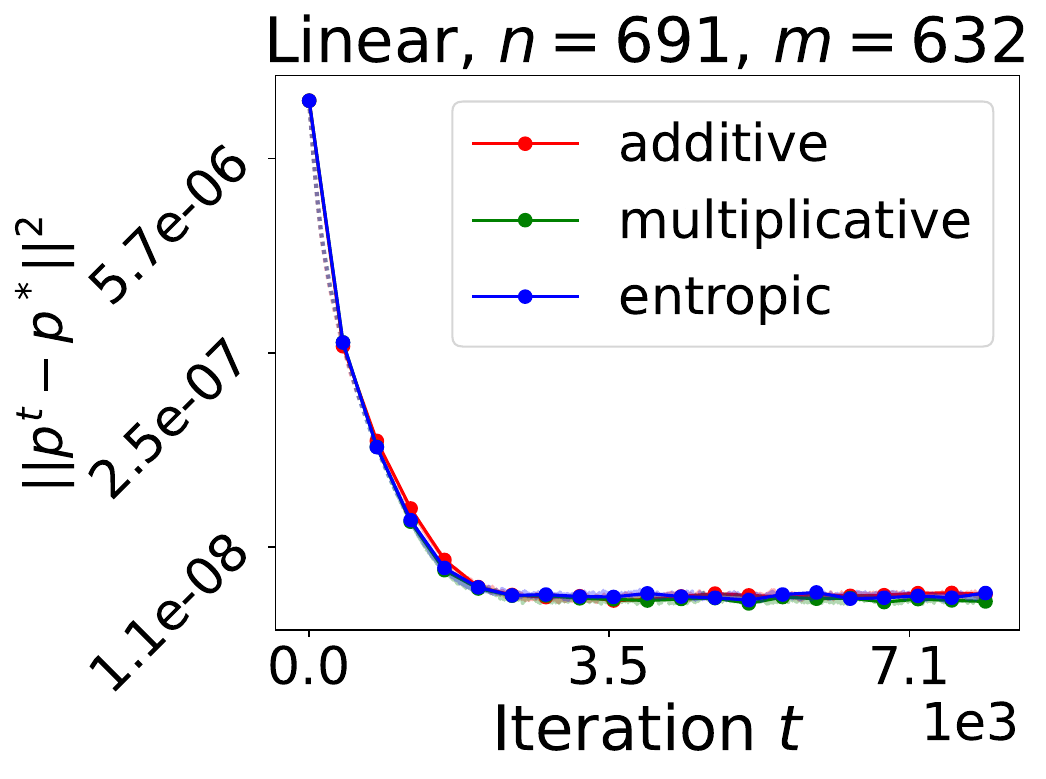}
    \hspace{3cm}
    \includegraphics[scale=.23]{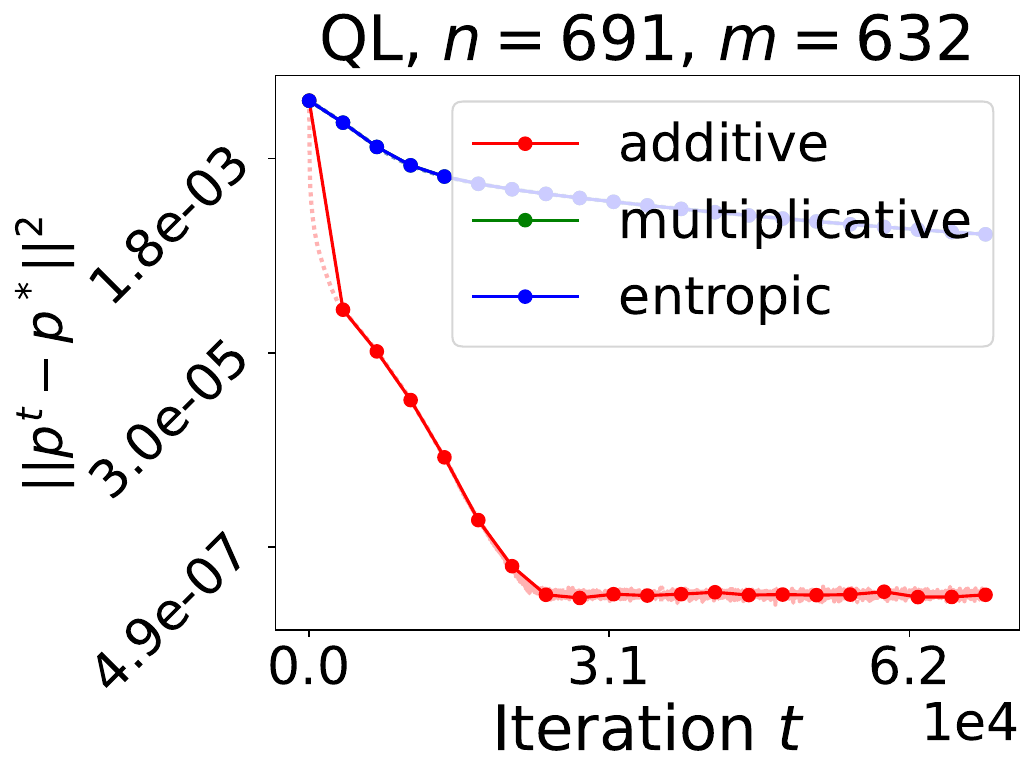}
    \caption{
        Comparison of different types of Tâtonnement on the real data instance under linear utilities (left) and quasi-linear utilities (right).
    }
    \label{fig:compare-ttm-real}
\end{figure}

\end{document}